\newcommand{\de}{\ensuremath{\delta}}
\newcommand{\pit}{\ensuremath{\mathsf{PIT}}}
\newcommand{\nexp}{\ensuremath{\mathsf{NEXP}}}
\newcommand{\ntime}{\ensuremath{\mathsf{NTIME}}}
\newcommand{\sft}{\ensuremath{\mathsf{sf}}}
\newcommand{\ubar}[1]{\underaccent{\bar}{#1}}
\newcounter{tagg}[section]
\newcommand{\eqtag}[1]{\stepcounter{tagg}\tag{\thesection.\thetagg} \label{#1}}
\newcommand{\Nin}{\ensuremath{\calN^{\sf in}}}
\title{Polynomial-Time PIT from (Almost) Necessary Assumptions}
\author{Robert Andrews\thanks{Cheriton School of Computer Science, University of Waterloo. Email: \texttt{randrews@uwaterloo.ca.}} \and Deepanshu Kush\thanks{Department of Computer Science, University of Toronto. Email: {\tt deepkush@cs.toronto.edu.}} \and Roei Tell\thanks{Department of Computer Science, University of Toronto. Email: {\tt roei@cs.toronto.edu.}}}
\date{April 8, 2025}
\begin{document}
	
	\pagenumbering{gobble}
	
	\maketitle
	
	\begin{abstract}
		The celebrated result of Kabanets and Impagliazzo (Computational Complexity, 2004) showed that $\mathsf{PIT}$ algorithms imply circuit lower bounds, and vice versa. Since then it has been a major challenge to understand the precise connections between $\mathsf{PIT}$ and lower bounds. In particular, a main goal has been to understand which lower bounds suffice to obtain efficient $\mathsf{PIT}$ algorithms, and how close are they to lower bounds that are necessary for the conclusion.
		
		We construct polynomial-time $\mathsf{PIT}$ algorithms from lower bounds that are, up to relatively minor remaining gaps, necessary for the existence of such algorithms. That is, we prove that these lower bounds are, up to the mentioned minor gaps, both sufficient and necessary for polynomial-time $\mathsf{PIT}$, over fields of characteristic zero. Over sufficiently large finite fields, we show a similar result wherein the $\mathsf{PIT}$ algorithm runs in time $n^{\log^{(c)}(n)}$, i.e. a power of $c$-iterated log for an arbitrarily large constant $c>1$.
		
		The key to these improvements is studying $\mathsf{PIT}$ versus lower bounds in the uniform setting, in which we focus on proving lower bounds for uniform arithmetic circuits and their variants (and on deducing algorithms from such lower bounds). Indeed, by working in this setting we obtain results that are significantly tighter than previously known results concerning polynomial-time $\mathsf{PIT}$ vs lower bounds, and are in fact also tighter than known hardness-vs-randomness connections in the Boolean setting.
		
		Our results are obtained by combining recent techniques from Boolean hardness vs randomness, and in particular the generator of Chen and Tell (FOCS 2021), with the algebraic hitting-set generator of Guo, Kumar, Saptharishi, and Solomon (SIAM J. Computing 2022) along with the bootstrapping ideas of Agrawal, Ghosh, and Saxena (STOC 2018) and of Kumar, Saptharishi, and Tengse (SODA 2019).
		
	\end{abstract}
	
	\newpage
	
	\pagenumbering{roman}
	
	\setcounter{tocdepth}{2}
	\tableofcontents
	
	\newpage
	\pagenumbering{arabic}
	
	\newpage 
	
	\section{Introduction} \label{sec:int}
	
	Polynomial identity testing ($\pit$) is a central problem in algebraic complexity. In this paper we focus on what is arguably the most well-studied variant: We are given as input a description of an arithmetic circuit $C(x_1,...,x_n)$ over $\F$ of size $\poly(n)$ computing a nonzero polynomial of degree $\poly(n)$, and our goal is to find $\vec{\alpha}$ such that $C(\vec{\alpha})\ne0$.\footnote{Indeed, this is the {\sf search} version of $\pit$, whereas the decision version calls for deciding whether or not $C$ is indeed nonzero. The distinction between the two will not be crucial in our work (see~\Cref{rem:int:formulation}).} Of course, a trivial randomized algorithm can just pick $\vec{\alpha}$ at random (assuming $\F$ is large enough, or taking an extension field $\K\supseteq\F$), and when we refer to solving $\pit$ we mean solving it by \emph{deterministic} algorithms (i.e., derandomizing the trivial algorithm).
	
	\subsection{The question of $\pit$ versus lower bounds} \label{sec:int:bg}
	
	One reason for the importance of $\pit$ is its connection to lower bounds. The study of this connection was initiated by Kabanets and Impagliazzo~\cite{KI04}, who proved that $\pit$ algorithms imply circuit lower bounds, and vice versa. Specifically, they showed that a $\pit$ algorithm running in polynomial time (or even in sub-exponential time) implies that \emph{either} there is a problem in $\ntime[2^{\poly(n)}]$ that is hard for polynomial-sized Boolean circuits, \emph{or} the permanent is hard for polynomial-sized arithmetic circuits. They also showed partial converses; for example, if there is an exponential-time computable polynomial over $\Z$ that does not have sub-exponential sized arithmetic circuits, we can solve $\pit$ over $\Z$ in quasipolynomial time. 
    
	The result of~\cite{KI04} revealed a connection between $\pit$ and lower bounds, but it is far from the end of story: The lower bounds that are \emph{implied} by $\pit$ algorithms in this result are significantly weaker than lower bounds that \emph{suffice} for $\pit$ algorithms (due to the disjunctive conclusion, and to the fact that one of the lower bounds in the disjunction is only in $\nexp$). Thus, the result put forward a major open question:
	
	\begin{problem} \label{prob:int:main}
		What are the connections between $\pit$ and lower bounds? Specifically, what lower bounds \emph{suffice} to obtain efficient $\pit$ algorithms, and how close are they to lower bounds that are \emph{necessary} for the conclusion?
	\end{problem}
	
	Let us briefly survey some of the progress that has been made on~\Cref{prob:int:main} (for a recent survey, see~\textcite{KS19}). One line of works made progress by studying circuit lower bounds in non-standard models of computation. Specifically, Jansen and Santhanam~\cite{JS12} defined a hybrid model combining arithmetic complexity and Boolean complexity, dubbed $a\cdot\mathcal{C}$ for a Boolean class $\calC$ (see also~\cite{KP11}). They showed an equivalence between sub-exponential time $\pit$ algorithms and lower bounds for arithmetic circuits against a problem computable in this model with linearly many non-uniform advice bits. \textcite{CIKK15} showed another equivalence, between subexponential-time $\pit$ algorithms that work on average, and average-case circuit lower bounds in the model from~\cite{JS12} (without advice), for a notion of average-case complexity called ``robustly often containment''.
	
	A different line of attack on~\Cref{prob:int:main} focused on the direction ``hardness implies $\pit$'', trying to obtain a stronger conclusion; namely, the goal in these works is to obtain a faster $\pit$ algorithm, compared to the quasipolynomial time (or sub-exponential time) algorithms considered in the results above. For example, Kumar, Saptharishi, and Tengse~\cite{KST19} (following Agrawal, Ghosh, and Saxena~\cite{AGS18}) showed how to bootstrap relatively slow hitting-set generators (i.e., ``black-box'' $\pit$ algorithms) for circuits over few variables into hitting-set generators for circuits over many variables that yield $\pit$ in almost polynomial time, i.e. time $n^{2^{2^{O(\log^*(n))}}}$ for $n$-sized circuits. And in another influential work, Guo, Kumar, Saptharishi, and Solomon~\cite{GKSS22} deduced a $\pit$ algorithm that indeed runs in polynomial time, from a strong hardness assumption. (Jumping ahead, our results will crucially build on their work.)
	
	This line of work can be viewed as the arithmetic analogue of ``hardness vs randomness'' results from the Boolean setting. Moreover, many of the underlying techniques for proving the arithmetic results above are inspired by the techniques for proving the analogous Boolean results. However, and in contrast to what one may expect, the hardness-vs-randomness connections in arithmetic complexity so far have been \emph{less tight} than the Boolean results. (For example, even the equivalences in~\cite{JS12,CIKK15} concern sub-exponential time algorithms, need non-uniform advice bits, and/or refer to average-case $\pit$ and circuit lower bounds.) This counters our expectation that progress on arithmetic complexity should be easier.
	
	\paragraph{Our contributions, in a gist.}
	In this work, we close the gaps above almost entirely. That is, we prove that certain natural lower bounds are both \emph{sufficient and necessary for solving $\pit$ in polynomial time and in the worst case}, up to relatively minor remaining gaps. Compared to the results above, we refer to polynomial-time $\pit$, rather than sub-exponential time; we refer to the standard notion of solving $\pit$ in the worst-case; and the hard problem is computable without non-uniform advice. The lower bounds that we study are for models of computation that are relatively standard and well-studied.
	
	The key to these improvements is studying \emph{$\pit$ vs lower bounds in the uniform setting}, and in particular focusing on lower bounds for uniform arithmetic circuits (i.e., for circuit families that can be printed by a Turing machine). We stress that our algorithms still solve $\pit$ on any given circuit (i.e., in the standard worst-case sense), and it is only the lower bounds that are different from the past, and need to hold only against uniform circuits. It is easy to show that the relevant lower bounds are necessary for $\pit$, and the more surprising direction of our results is constructing $\pit$ algorithms from these lower bounds. 
	
	This means that proving lower bounds for uniform families is a potential path towards obtaining $\pit$ algorithms. In addition, this is an approach for tackling~\Cref{prob:int:main} (i.e., connecting $\pit$ to lower bounds for uniform circuits) that yields very tight connections. The connections we show are significantly tighter than previously known connections for the arithmetic setting, and moreover, they are also tighter than the connections known in the Boolean setting (thus closing another long-standing gap, i.e. between hardness-vs-randomness in the arithmetic setting and in the Boolean one).

	Technically, analogous to the fact that the results of~\cite{KI04,JS12,CIKK15} were inspired by developments in Boolean hardness-vs-randomness machinery in the late 1990's (i.e., by~\cite{NisanW94,IKW02}), our results build on very recent developments in Boolean hardness-vs-randomness (see, e.g.,~\cite{CT23b} for a survey), while also leveraging  the state-of-the-art hitting-set generators for arithmetic circuits (i.e., leveraging ideas and techniques from~\cite{AGS18,KST19,GKSS22}). We elaborate on our techniques in~\Cref{sec:tech}.

	\subsection{Lower bounds for uniform arithmetic circuits} \label{sec:int:uni}
	
	Part of our contribution is to carefully delineate the notions that we believe (and demonstrate) are useful for studying $\pit$ vs uniform lower bounds. To set the stage for presenting our results, let us spell out some of these notions in advance, while providing pointers to precise definitions.
	
	\paragraph{Uniform randomized circuits.}

	A uniform circuit family is a sequence of circuits $\{C_n\in\F[x_1,...,x_n]\}_{n\in\N}$ such that there is an efficient Turing machine that gets input $1^n$ and prints $C_n$. The notion of ``efficient'' varies across definitions, as well as the question of printing field elements (see~\Cref{sec:pre:uniform}). The study of uniform circuits has a long history in the Boolean setting (see, e.g., the survey by Allender~\cite{Alle89}, and recent results such as~\cite{CK12,SW13,san23,DPT24}). In arithmetic complexity, suitable definitions were put forward by von zur Gathen~\cite{vzGathen_survey} and by Eberly~\cite{Ebe89} in the 1980's, and more recently, lower bounds for uniform arithmetic circuits of bounded depth were proved by~\textcite{KP09}, following Allender~\cite{Allender99} (see also strengthenings in~\cite{JS13,CKK14}).

	The lower bounds that turn out to be essentially equivalent to $\pit$ are for uniform \emph{randomized} circuits (in fact, for randomized networks -- see below).\footnote{This is analogous to the results of~\cite{CT21}, who showed bidirectional connections between derandomization (i.e., $pr\BPP=pr\P$) and lower bounds for randomized algorithms. More generally, this follows the blueprint suggested by Kozen~\cite{koz80}, wherein simulation of a class turns out to be equivalent to lower bounds for the class (see also~\cite{nir03,nir06}).} It is not a-priori obvious how to define randomized arithmetic circuits in general. We bypass this question by focusing on a nice subclass of randomized circuits: namely, deterministic circuits equipped with ``$\pit$ gates'' (i.e., gates that receive a description of an arithmetic circuit, and solve $\pit$ for that circuit). This subclass represents a limited use of randomness by the circuit -- i.e., only for solving $\pit$ -- and we prove that lower bounds against this subclass suffice for constructing $\pit$ algorithms.
	
	\paragraph{Arithmetic networks.}
	Our results focus on a stronger variant of arithmetic circuits, called \emph{arithmetic networks}; these were first formally defined by von zur Gathen~\cite{vzGathen_survey}. Networks get arithmetic inputs and produce arithmetic outputs. Their computation consists of standard arithmetic operations $\set{+,\times}$, and they can also ``route'' the results of the arithmetic computation, using a limited set of Boolean gates. Specifically, the Boolean gates can test whether an arithmetic element (i.e., the value of a gate, computed on the input) is zero, and depending on the outcome of the computation, they choose which arithmetic sub-circuit the arithmetic computation will proceed on. We stress that the Boolean gates have no access to the content of arithmetic values, except for zero-testing; intuitively, all they can do is ``route'' arithmetic values around the circuit according to zero-tests. See~\Cref{def:arith-network} for details.
	
	While the definition may seem unusual at first, arithmetic networks have actually been widely studied (implicitly and explicitly) in arithmetic complexity. For example, many classical ``arithmetic algorithms'' actually work in this model \cite{BvzGH82,vzGathen84,vzGathen86}, including the Euclidean algorithm; and even very recently, the constant-depth GCD algorithm of Andrews and Wigderson~\cite{AW24} is, in fact, precisely an arithmetic network.\footnote{There are, in general, good reasons for this. First, networks do not provide significantly more power for computing polynomials (see~\Cref{sec:pre:networks}). Secondly, standard arithmetic circuits can only compute continuous functions of their inputs, whereas many interesting algebraic problems are discontinuous (e.g., computing GCD of polynomials, or solving $\pit$).} Lower bounds for arithmetic networks have also been studied: the works of \cite{MP93, MMP96, GV17} prove lower bounds for arithmetic networks that decide membership in a (semi-)algebraic set using techniques similar to those used to prove lower bounds for algebraic decision trees \cite{Yao97}. %
	
	\paragraph{Lower bounds on all but finitely many inputs.}
	\label{todo:ra2}
	Lastly, lower bounds for uniform circuits and networks can differ greatly from lower bounds for their non-uniform counterparts. As one important example, for every $\F$, there is (unconditionally!) a $\P$-uniform family of arithmetic circuits $\set{C_n}_{n\in\N}$ of size $n^{O(k)}$ over $\F$ such that for every $n^k$-time-uniform circuit family $\{C'_n\}_{n \in \N}$ of size $n^k$ over $\F$, %
	and for all but at most finitely many $n\in\N$, it holds that $C_n(\vec{x})\ne C'_n(\vec{x})$ for \emph{every $\vec{x}\in\F^n$} (see~\Cref{fact:circuits:hierarchy}). That is, $\{C_n\}_{n\in\N}$ is hard for $n^k$-time-uniform circuits of size $n^k$ on \emph{each and every possible input} (except, at most, finitely many). 
	
	Lower bounds on all but finitely many inputs turn out to be closely related to $\pit$ algorithms, as we shall show. We stress that the feasibility of this almost-all-inputs hardness relies crucially on the fact that we consider hardness with respect to \emph{uniform} families of arithmetic circuits and networks. (Indeed, a non-uniform circuit can always have the value of the function at (say) $0^n$ hard-wired.) 
	
	\subsection{Result statements} \label{sec:int:results}
	
	At a high level, we show that $\pit$ algorithms are essentially equivalent to lower bounds for uniform randomized arithmetic networks on all but finitely many inputs.
    
    Our first result focuses on fields of characteristic zero. We show that if for some $k$ there is a  family of (deterministic) arithmetic circuits of size $n^{k^2}$ that is hard on all inputs for uniform arithmetic networks with $\pit$ gates of size $n^{k}$, then $\pit$ can be solved in \emph{polynomial time} over $\F$.
	
	\begin{theorem} [hardness on all inputs implies $\pit$; informal, see~\Cref{thm:main:zero}] \label{thm:int:main:zero}
		Let $\F$ be a field of characteristic zero. Assume that for some sufficiently large $k>1$ there is a strongly uniform\footnote{The meaning of ``strongly uniform'' refers to a stricter notion of uniformity, in which given the indices of gates in $C_n$, we can decide in polynomial time in the input length whether these gates are connected (indeed, the running time is polylogarithmic in the size of $C_n$); see~\Cref{def:suf-unif} for a precise definition. This is analogous to well-studied notions in Boolean complexity.} family of arithmetic circuits of size and degree $n^{k^2}$ over $\F$ that is hard on all but finitely many inputs for uniform arithmetic networks with $\pit$ gates of size $n^{k}$  over $\F$. Then, there is a uniform family of arithmetic networks of polynomial size solving $\pit$ over $\F$.
	\end{theorem}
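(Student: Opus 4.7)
The plan is to construct a polynomial-time hitting-set generator $G_n : \F^{O(\log n)} \to \F^n$ whose image is a hitting set for polynomial-size arithmetic circuits and which can be evaluated by a uniform arithmetic network of polynomial size. The overall strategy has three components, all of which exploit both the ``all but finitely many inputs'' form of hardness and the availability of $\pit$ gates in the lower bound model.

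First I would use the hard polynomial $C_n$ as the seed of the construction. Following the $\mathsf{GKSS}$ framework, I would build a generator whose output encodes evaluations of $C_n$ and its Hasse derivatives (or an analogous object) along a low-degree curve; this converts a hard low-variable polynomial into a generator fooling low-variable arithmetic circuits. To fool many-variable circuits at polynomial-size resolution I would then apply the bootstrapping of $\mathsf{AGS}$ and $\mathsf{KST}$, iteratively amplifying a slow generator over a small universe into a fast generator over many variables, where each iteration uses the previous generator to evaluate $C_n$ at the needed points. Strong uniformity of $\{C_n\}$ is what makes the entire pipeline executable by a uniform network of polynomial size, and the characteristic-zero assumption lets us freely use small integer constants in the curve construction without worrying about spurious cancellations.

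The heart of the proof is the correctness argument, which proceeds by a uniform algebraic reconstruction in the spirit of Chen--Tell. Suppose toward contradiction that some nonzero polynomial-size circuit $P$ satisfies $P(G_n(y)) \equiv 0$ as a formal identity. I would show how, using $P$ together with $\pit$ gates for algebraic zero-testing, one can construct a uniform arithmetic network $N_n$ of size at most $n^k$ that computes $C_n(x)$ correctly on \emph{every} input $x \in \F^n$. The role of $\pit$ gates is essential: the reconstruction must repeatedly zero-test intermediate polynomial-size algebraic expressions (e.g., to identify the correct factor, to verify that a candidate evaluation agrees with $C_n$ along a curve, and to resolve branches in the recursive decoding), and this is precisely the power that $\pit$ gates provide. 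Since this yields a small uniform network with $\pit$ gates agreeing with $C_n$ on all inputs, it directly contradicts the hypothesis that $C_n$ is hard on all but finitely many inputs for such networks, and hence $G$ must in fact be a hitting-set generator.

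The main obstacle will be arranging the reconstruction to be \emph{total} --- producing $C_n$ on literally every input --- rather than merely on a dense or generic subset, while still staying within the $n^k$ size budget (which is why a hardness gap of $n^{k^2}$ versus $n^k$ is useful, to absorb polynomial overheads in the reconstruction and bootstrapping). Standard hardness-vs-randomness reconstructions agree with the hard function only in a worst-case or average-case sense; reaching ``all but finitely many inputs'' agreement requires carefully exploiting the algebraic structure (polynomials of bounded degree that agree on sufficiently many points agree identically) together with $\pit$ gates to eliminate exceptional cases. This is where the Chen--Tell techniques, transplanted to the arithmetic setting and composed with $\mathsf{GKSS}$, will need the most delicate adaptation.
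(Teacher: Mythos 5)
The central gap is that you propose a \emph{universal} hitting-set generator whose image hits all polynomial-size circuits, but the hypothesis only gives hardness for \emph{uniform} networks, and that cannot yield a universal HSG. Think through the reconstruction: if $P$ is an arbitrary nonzero circuit with $P(G_n(\cdot)) \equiv 0$, the network $N_n$ you build from $P$ has $P$ hardwired, i.e.\ it is non-uniform. A non-uniform small network computing $C_n$ does not contradict hardness against $\P$-uniform networks with $\pit$ gates. The paper's fix is structural, not cosmetic: the generator is \emph{targeted}, $\calG_{\vec{\Lambda}}$, taking the circuit description $\vec{\Lambda}$ as input, so the reconstruction network $R_n$ is a single uniform machine that receives $\vec{\Lambda}$ and outputs $C_n(\vec{\Lambda})$. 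The ``all but finitely many inputs'' hardness is then matched up by letting the ``input'' of $C_n$ be exactly the description $\vec{\Lambda}$ of the distinguishing circuit: the generator fails only on those $\vec{\Lambda}$ where $R_n(\vec{\Lambda}) = C_n(\vec{\Lambda})$, and hardness says these are finitely many. Your sketch instead wants $N_n$ to agree with $C_n$ on every $x$, which is both a different (and stronger) conclusion and one that the GKSS-style reconstruction does not give you.

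Two further points where your route diverges from what the hypothesis can support. First, you invoke $\mathsf{AGS}$/$\mathsf{KST}$ bootstrapping to go from a slow generator to many variables; the paper's characteristic-zero argument uses no bootstrapping at all. It instead uses a polynomial decomposition of $C_n(\vec{\Lambda})$ (a Chen--Tell / GKR sumcheck encoding producing downward self-reducible polynomials $f_{i,j}$ on $O(1)$ variables) and then instantiates GKSS on each $f_{i,j}$; since the $f_{i,j}$ have constantly many variables, the GKSS seed length is $O(1)$, which directly gives polynomial-time PIT by brute-forcing a constant-dimensional grid. Your $O(\log n)$ seed length would give only quasipolynomial-time PIT without bootstrapping, and the bootstrapping step is precisely where the paper chose a different (KI-based) generator for the finite-field case, not the characteristic-zero one. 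Second, the polynomial decomposition with downward self-reducibility is not an optional refinement: it is what lets the reconstruction iterate layer by layer without exponential blowup, and it is the reason the hypothesis can be against \emph{strongly uniform} circuits (the decomposition uses the circuit-structure function). Your sketch does not address how to answer the GKSS reconstruction's queries to the hard polynomial without blowing up the reconstruction size, which is exactly what downward self-reducibility and the separation of roles between ``querying'' and ``the size of the output circuit'' solve.
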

	
	Our second result is a tight converse for~\Cref{thm:int:main:zero}. To contextualize this, recall that previously known results reflect a trade-off: When considering suboptimal $\pit$ algorithms (e.g., running in super-polynomial time, or working only on average), known results rely on hardness assumptions that are necessary or close to being so (see, e.g.,~\cite{KI04,CIKK15}); whereas, when considering polynomial-time $\pit$ algorithms, known results rely on hardness assumptions that are natural, but not known to be close to necessary for $\pit$ (say, as in~\cite{GKSS22}). This trade-off between the running time of the $\pit$ algorithm and the tightness of the hardness assumption may seem reasonable, but our second result shows that it is, essentially, \emph{entirely avoidable}: The assumptions in~\Cref{thm:int:main:zero} are remarkably close to being necessary for the conclusion. %
	
	\begin{theorem} [$\pit$ implies hardness on all inputs; informal, see~\Cref{prop:pit:necessary}] \label{thm:int:pit:necessary}
		Let $\F$ be a field of characteristic zero. Assume that there is a uniform family of arithmetic networks of polynomial size solving $\pit$ over $\F$. Then, for every $k\in\N$, there is a family of uniform arithmetic networks over $\F$ of polynomial size that is hard on all but finitely many inputs for $n^k$-time-uniform arithmetic networks with $\pit$ gates of size and degree $n^k$ over $\F$.
	\end{theorem}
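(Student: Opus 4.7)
The plan is to leverage the polynomial-time $\pit$ hypothesis to collapse arithmetic networks with $\pit$ gates into deterministic uniform arithmetic networks of slightly larger size, and then to invoke a hierarchy theorem (the network analogue of the already-stated \Cref{fact:circuits:hierarchy}) against that collapsed class. Concretely, let $\{\mathcal{A}_m\}$ be the hypothesized uniform family of arithmetic networks of polynomial size $m^c$ solving $\pit$. Given any $n^k$-time-uniform arithmetic network $N_n$ of size $n^k$ with $\pit$ gates, I would replace each $\pit$ gate by a sub-network that invokes $\mathcal{A}$ on the circuit description passed to that gate. Because $\mathcal{A}$ solves $\pit$ in the worst case, and the circuit fed to a $\pit$ gate may itself depend on the outer input $\vec{x}$, the replacement agrees pointwise with the original on every input. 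Composing the machines printing $N_n$ and $\mathcal{A}$ yields a single uniform machine running in time $n^{O(kc)}$, and the resulting (deterministic) network has size $n^{O(kc)}$. Thus the adversary class is contained, input-by-input, in the class of $n^{k'}$-time-uniform deterministic arithmetic networks of size $n^{k'}$, for some $k' = O(kc)$ depending only on $k$.

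It then suffices to exhibit, for any $k'$, a uniform family of polynomial-size arithmetic networks that is hard on all but finitely many inputs against $n^{k'}$-time-uniform deterministic arithmetic networks of size $n^{k'}$. This is the network analogue of the unconditional \Cref{fact:circuits:hierarchy} for arithmetic circuits, and I would establish it via the same time-hierarchy style diagonalization: enumerate the Turing machines that print such networks, simulate the $n$-th enumerated network within a polynomial budget slightly larger than $n^{k'}$, and define the hard family to disagree with the simulated value on every input, using the uniformity bottleneck in the same way as in the circuit case. Combining the two steps, the resulting hard family is automatically hard against the original class of networks with $\pit$ gates, on all but finitely many inputs.

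The main (minor) obstacle is promoting the hierarchy theorem from circuits to networks: networks have a richer branching structure and can compute discontinuous functions, so one must check that the \emph{pointwise disagreement on every input} guarantee of \Cref{fact:circuits:hierarchy} still holds in the network setting. I expect this to go through with essentially only syntactic changes, since the only property the diagonalization uses is that a uniform adversary can be simulated within a slightly larger uniform budget, which holds equally for networks. The $\pit$-gate-elimination step itself is conceptually straightforward, amounting to routine bookkeeping of uniformity and size parameters.
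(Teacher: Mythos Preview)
Your proposal is correct and takes essentially the same approach as the paper: eliminate $\pit$ gates using the hypothesized $\pit$-solving networks, then diagonalize pointwise by evaluating each candidate adversary and adding $1$ to a designated output coordinate. The only cosmetic difference is that the paper folds both steps into a single construction (the hard family $F_n$ enumerates the first $n$ machines printing networks with $\pit$ gates, replaces their $\pit$ gates by copies of the $\pit$-solver, evaluates, and flips), whereas you factor it as ``collapse to deterministic networks, then invoke a network hierarchy theorem''; your anticipated obstacle of lifting \Cref{fact:circuits:hierarchy} to networks is indeed only syntactic, handled via the universal evaluation of networks by networks (\Cref{prop:universal-arith-network-for-networks}).
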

	
	For a visual depiction of the necessary and sufficient assumptions that we prove, focusing for simplicity on the case of characteristic zero, see Figure~\eqref{fig:main}. We stress that neither of the results above refers to circuit lower bounds. Indeed, our algorithms are \emph{not} hitting-set generators, but rather \emph{targeted} hitting-set generators: The $\pit$ algorithm makes essential use of the description of the arithmetic circuit given to it.

	\begin{figure}[t]
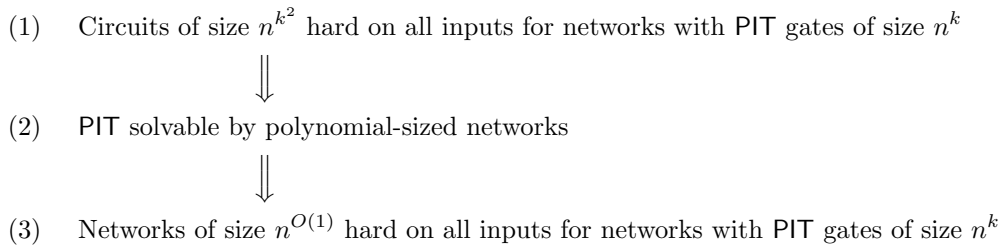

		\mm{
			(1)\;\;\;\;\;&\text{Circuits of size $n^{k^2}$ hard on all inputs for networks with $\pit$ gates of size $n^k$}\\
			&\;\;\;\;\;\;\;\;\;\;\;\;\;\;\;\;\;\;\;\;\;\;\;\;\Big\Downarrow\\
			(2)\;\;\;\;\;&\text{$\pit$ solvable by polynomial-sized networks}\\
			&\;\;\;\;\;\;\;\;\;\;\;\;\;\;\;\;\;\;\;\;\;\;\;\;\Big\Downarrow\\
			(3)\;\;\;\;\;&\text{Networks of size $n^{O(1)}$ hard on all inputs for networks with $\pit$ gates of size $n^k$}
		}	
		\caption{A visual depiction of~\Cref{thm:int:main:zero,thm:int:pit:necessary}. For simplicity of presentation, we omitted the repeated qualifier ``on all {\it but finitely many} inputs'', as well as the precise notions of uniformity and the degree bounds (the latter two cause further gaps between the necessary assumption in $(1)$ and the sufficient one in $(3)$).} %
	\label{fig:main}
\end{figure}

\paragraph{A variation for finite fields.}
We also show very similar results over sufficiently large finite fields. In this setting, the lower bounds that we focus on are for a more general model of randomized arithmetic networks, wherein the network receives random elements as inputs, and needs to err with low probability whenever these elements are sampled from any large enough set (see~\Cref{sec:pre:randomized:inputs} for details).

Beyond that, there are two minor differences in the result. On the one hand, we allow the hard family to be computed in arbitrary polynomial size, rather than size $n^{k^2}$ (e.g., it can be computed in size $n^{2^{2^k}}$), so the hardness assumption is slightly more relaxed. On the other hand, the conclusion is slightly weaker: Instead of getting a strictly polynomial-time $\pit$ algorithm, we get an algorithm running in time $n^{\log^{(c)}(n)}$, where $\log^{(c)}$ is the $c$-iterated $\log$ function and $c>1$ can be an arbitrarily large constant. 

\begin{theorem} [hardness on all inputs implies $\pit$; informal, see~\Cref{thm:main:finite}] \label{thm:int:main}
	Let $\F=\set{\F_n}_{n\in\N}$ be a sequence of finite fields, where $n^{\omega(1)}\le |\F_n|\le2^{\poly(n)}$. Assume  that for a sufficiently large $k>1$ there is a strongly uniform family of arithmetic circuits of polynomial size and degree over $\F$ that is hard on all but finitely many inputs for uniform randomized arithmetic networks of size $n^k$ over $\F$. Then, for every $c>1$ there is a uniform family of arithmetic networks of size $n^{\log^{(c)}(n)}$ solving $\pit$ over $\F$.
\end{theorem}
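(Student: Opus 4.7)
The plan is to combine the three ingredients flagged in the abstract: an algebraic analogue of the Chen--Tell targeted generator, the algebraic hitting-set generator of Guo--Kumar--Saptharishi--Solomon (GKSS), and the bootstrapping of Agrawal--Ghosh--Saxena and Kumar--Saptharishi--Tengse (KST). The overall flow is (i) convert the all-inputs hardness of $\{C_n\}$ into a correct \emph{targeted} $\pit$ generator via a reconstruction argument against randomized arithmetic networks, (ii) use GKSS to shorten the seed length so that in few variables one gets a fast (quasi-polynomial) $\pit$ algorithm, and (iii) iterate the KST bootstrapping $c$ times to reduce the effective variable count down to roughly $\log^{(c)}(n)$, which yields the claimed running time.

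\textbf{Step one.} I would build a targeted hitting-set generator $G_D$ from $\{C_n\}$: on an input circuit $D$ of size $s$ in $m$ variables, $G_D$ outputs points whose coordinates are evaluations of $C_n$ at seeds that depend on $D$, in the Chen--Tell spirit. Correctness is established by contradiction: if $G_D$ fails to hit $D$, then $D$ can be used as a distinguisher, and the standard reconstruction then yields a randomized arithmetic network of size $\poly(s)$ that computes $C_n$ on every input with low error, where the randomness is supplied as $\F_n$-valued inputs exactly in the sense of~\Cref{sec:pre:randomized:inputs}. Choosing $s$ to be a suitable polynomial in $n$ below $n^k$, this contradicts the assumed hardness of $\{C_n\}$ on all but finitely many inputs, so $G_D$ is a correct targeted generator for circuits up to size roughly $n^{k/O(1)}$.

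\textbf{Step two.} I would plug in the GKSS construction to compress the seed length. GKSS shows that any polynomial of polynomial size and degree which is hard for size-$s$ arithmetic circuits yields a hitting-set generator of seed length $O(m \log s)$ against size-$s$ circuits on $m$ variables; the candidate seeds can be enumerated in time $s^{O(m)}$. Combined with the output of step one, this gives a $\pit$ algorithm for $m$-variable circuits in time $s^{O(m)}$. Feeding this base case into the KST/AGS bootstrapping, one recursive step converts a $\pit$ algorithm on $m$ variables and size $s$ into a $\pit$ algorithm on $n \gg m$ variables in time roughly $\poly(n,s)\cdot T(\poly(m),\poly(s))$. A $c$-fold iteration of the recursion drives the variable count down to roughly $\log^{(c)}(n)$ and delivers the stated running time $n^{\log^{(c)}(n)}$. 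The reason we fall short of strict polynomial time (which is attained in the characteristic-zero \Cref{thm:int:main:zero}) is that each bootstrapping level introduces an iterated-log factor in the exponent that accumulates across the $c$ levels and whose removal would require a tighter base-case generator than the one available over finite fields.

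\textbf{Main obstacle.} The principal technical difficulty lies in step one, where the reconstruction must be carried out inside the finite-field randomized-network model. Over characteristic zero, one can exploit infinite-precision manipulations and rational-embedded sampling to get a clean reduction, as in~\cite{CT21}. Over $\F_n$ with $n^{\omega(1)} \le |\F_n| \le 2^{\poly(n)}$, one must instead ensure that (a) the random field elements fed into the reconstructing network lie in $\F_n$ and the low-error guarantee on \emph{every} input is compatible with the randomized-input semantics of~\Cref{sec:pre:randomized:inputs}, (b) the degrees of polynomials constructed by composing $\{C_n\}$ with the distinguisher $D$ stay sufficiently below $|\F_n|$ so that vanishing as a polynomial coincides with vanishing as a function, and (c) the entire reduction is implementable by an arithmetic network -- i.e., all zero-tests and branching used in reconstruction respect the restricted Boolean gates allowed in~\Cref{def:arith-network}. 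A secondary issue is propagating the ``strongly uniform'' and degree hypotheses on $\{C_n\}$ through each recursion level of the KST bootstrapping without degradation, since each level invokes $\pit$ on circuits constructed from the previous one.
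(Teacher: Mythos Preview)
Your outline misses the paper's actual structure in two essential ways, and both are genuine gaps rather than mere differences in presentation.

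\textbf{GKSS does not apply over general finite fields.} You propose to ``plug in the GKSS construction'' in step two, but the GKSS reconstruction relies on Euler's formula to recover lower-order partial derivatives from higher-order ones, and this requires $\ch(\F_n)$ to exceed the degree of the hard polynomial. Over finite fields with no characteristic restriction (only $|\F_n|\ge n^{\omega(1)}$), this fails. The paper uses GKSS only for the characteristic-zero result (\Cref{thm:int:main:zero}); for finite fields it builds the targeted generator on the Kabanets--Impagliazzo generator instead, whose reconstruction (Kaltofen factoring + hybrid argument) works in any characteristic. This is precisely why the paper has two separate constructions in Sections~5 and~6.

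\textbf{The bootstrapping is not standard KST.} You invoke AGS/KST bootstrapping as a black box to shrink the variable count. But AGS/KST requires a \emph{sequence} of hard polynomials on fewer and fewer variables, i.e., it needs hardness against non-uniform circuits at each scale. The hypothesis here gives only a single hard family against \emph{uniform} randomized networks, which is not enough to run KST verbatim. The paper's key idea (Section~2.2, ``a free lunch setting for bootstrapping'') is to iterate the \emph{targeted} generator itself: after one application, $D_1 = D\circ\mathcal{G}_{\vec{\Lambda}_0}$ is a new nonzero circuit on $\polylog(n)$ variables; because the hardness holds on \emph{all} inputs, in particular on the description $\vec{\Lambda}_1$ of $D_1$, the targeted generator applied to $\vec{\Lambda}_1$ again preserves nonzeroness; and at this second iteration one re-arithmetizes the same hard circuit's polynomial decomposition over fewer variables. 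This recursive self-composition, not external bootstrapping, is what drives the variable count down. A further subtlety you do not mention is that $\mathcal{G}_{\vec{\Lambda}}$ is an arithmetic \emph{network} (not a circuit), so one must argue separately that the composed object $D_{i+1}=D_i\circ\mathcal{G}_{\vec{\Lambda}_i}$ is still an arithmetic circuit to which the next round applies; the paper handles this by observing that with $\vec{\Lambda}_i$ fixed, the seed-to-output map is a genuine polynomial.

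Your identification of the reconstruction-in-finite-fields obstacle is reasonable, but the paper's actual obstacles are different: the KI generator natively has $O(\log n)$ seed length (hence quasipolynomial time), and its reconstruction interpolates on a grid of size $(\ideg f)^{\eps\log n}$, which is superpolynomial unless the individual degree of the layer polynomials is constant. The paper fixes the latter with a Cook--Levin-style variable-extended formulation (\`a la \cite{KalaiLV23}) in the first iteration.
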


Again,~\Cref{thm:int:main} is coupled with a converse direction, analogous to~\Cref{thm:int:pit:necessary}, but the converse direction only yields lower bounds for networks with $\pit$ gates (rather than for the more general model of randomized networks); see~\Cref{prop:pit:necessary} for details. We also show an alternative proof for a special case of~\Cref{thm:int:main}, wherein the field's characteristic is low, in Appendix~\ref{apdx:lowchar}.

\paragraph{On lower bounds for networks, and the role of degree.}
The statements of~\Cref{thm:int:main:zero,thm:int:main} do not mention any bound on the degree of networks for which we assume lower bounds. We believe that even in this form, both results are interesting. For context, the original results of Kabanets and Impagliazzo~\cite{KI04} do not spell out any degree bound in the hardness assumption (and over finite fields, the needed degree in their assumption is indeed high; see~\cite{And20} for an improvement). Moreover, the meaning of the notion of ``degree'' for networks is less straightforward than it is for circuits (see~\Cref{sec:pre:networks}).

Nevertheless, our technical results only rely on lower bounds for networks of relatively low degree. In~\Cref{thm:int:main:zero}, the degree is quasipolynomial in the input length $n$ (see~\Cref{thm:main:zero}). In~\Cref{thm:int:main}, the degree is quasipolynomial in $n$ and in the field size $q$; lower bounds for arithmetic circuits or networks of degree more than $q$ are less well studied, and we observe that over fields of small characteristic (say, at most $n$), such lower bounds imply lower bounds for low-depth Boolean circuits (see Appendix~\ref{apdx:lowchar}). As far as we are aware, such implication is not known for the general case studied in~\Cref{thm:int:main}.

\begin{remark} [on the formulation of $\pit$ algorithms] \label{rem:int:formulation}
	As the reader might have noticed, all of our results refer to solving $\pit$ by uniform arithmetic networks; that is, intuitively, we consider \emph{arithmetic} algorithms for $\pit$ (rather than, say, arbitrary Boolean algorithms for $\pit$). Also, recall that we presented the formulation of $\pit$ as a search problem (i.e., finding a non-root). This choice is immaterial for our results: Our $\pit$ algorithms in~\Cref{thm:int:main:zero,thm:int:main} solve the search version, whereas the lower bound in~\Cref{thm:int:pit:necessary} follows from an algorithm solving the decision version.
\end{remark}

\paragraph{Organization of the rest of the paper:} In~\Cref{sec:tech}, we present a high-level overview of the proofs. \Cref{sec:pre} contains several standard facts in algebraic complexity, as well as the formal definitions and properties of  uniform, randomized, and universal arithmetic networks and circuits, which we make use of repeatedly. \Cref{sec:decomposition} presents an auxiliary technical notion that will be common to our proofs, namely, polynomial decompositions (\`a la~\cite{CT21}). \Cref{sec:gkss} and \Cref{sec:ki} then discuss the construction and correctness of our GKSS-based and KI-based targeted hitting set generators, respectively. The proofs of the main results appear in \Cref{sec:pit}. We conclude by mentioning several open problems in \Cref{sec:open}.

\section{Technical overview} \label{sec:tech}

Recall that we are given as input the description of an arithmetic circuit $C(x_1,\ldots,x_n)$ of size $\poly(n)$ computing a nonzero polynomial of degree $d \le \poly(n)$, and we want to find a point $\vec{\alpha} \in \F^n$ such that $C(\vec{\alpha}) \neq 0$.
The trivial deterministic algorithm for $\pit$ evaluates the circuit $C$ on an $n$-dimensional grid of side length $d+1$; the Schwartz--Zippel lemma guarantees that $C$ will evaluate to a nonzero value at some point on this grid.
This leads to an algorithm for $\pit$ that runs in time $(d+1)^n \cdot \poly(n)$.

A standard approach to obtain a faster algorithm for $\pit$ is to reduce the given $n$-variate instance $C(x_1,\ldots,x_n)$ to an $\ell$-variate instance $C'(y_1,\ldots,y_\ell)$ of degree $d' \le \poly(n)$.
If this reduction is efficient, we improve on the previous running time of $n^{O(n)}$ to $n^{O(\ell)}$.
Such a reduction follows from an explicit construction of a \emph{hitting-set generator}, which is a polynomial map $\calG : \F^\ell \to \F^n$ that satisfies $(C \circ \calG)(y_1,\ldots,y_\ell) \neq 0$ whenever $C$ is sufficiently small circuit. Many prior works on hardness-versus-randomness in algebraic complexity give explicit constructions of hitting set generators to derandomize $\pit$ (see~\cite{KS19}).%

As mentioned in~\Cref{sec:int:bg}, our $\pit$ algorithms are inspired by recent developments in Boolean hardness-vs-randomness (see~\cite{CT23b}). Following Goldreich~\cite{gol11} and Chen and Tell~\cite{CT21}, our $\pit$ algorithms will use a \emph{targeted hitting-set generator}, which (in the current arithmetic setting) is a procedure $\calG$ mapping a description of $C$ to a polynomial map $\calG_C(y_1,...,y_{\ell})$ such that $C'(y_1,...,y_{\ell})=C\circ \calG_C$ is nonzero. Note that in contrast to standard hitting-set generators, which are polynomial maps $\calG$ that work for all small circuits, here $\calG_C$ depends on $C$. Targeted hitting-set generators are weaker objects than standard hitting-set generators, and the advantage in working with them is that we will only need lower bounds against \emph{uniform} models (whereas standard hitting-set generators necessitate lower bounds for non-uniform circuits).

\paragraph{The motivating idea.}
The starting point of our work is the observation that the arithmetic setting shows promise for a way to bypass what is one of the main current obstacles in the Boolean setting. Specifically, recall that~\cite{CT21} proved the following result: If there is $f\colon\zo^*\rightarrow\zo^*$ computable by logspace-uniform circuits of size $\poly(n)$ and depth $n^2$, but hard on all but finitely many inputs for probabilistic time $n^c$ (for a sufficiently large $c \in \N$), then $pr\BPP=pr\P$. Note the restriction of $n^2$ on the depth of circuits for $f$: This particular restriction is the main gap between the foregoing assumption and a necessary assumption for $pr\BPP=pr\P$, and it is conjectured to be redundant (see, e.g.,~\cite{CTW23}).

The key point is that in the arithmetic setting we should not have this gap, due to the depth reduction procedure of \textcite{VSBR83}, a foundational result in arithmetic circuit complexity.
They showed that any degree-$d$ polynomial computed by a size-$s$ circuit can also be computed by a circuit of size $\poly(s,d)$ and depth $O(\log(s) \log(d))$.
In other words, when we restrict attention to low-degree polynomials---as is commonly done in algebraic complexity---small circuits can be assumed to be low-depth without loss of generality.\footnote{Of course, since we work with uniform circuits, one needs to verify that this depth reduction can be performed in a sufficiently uniform manner. This is indeed the case, as the standard proof of the depth reduction can be carried out in a uniform manner. The same concern regarding uniformity applies any time we invoke a result from arithmetic circuit complexity. For all results we use, known proofs suffice to obtain sufficiently uniform implementations.}
This suggests that by working in the arithmetic setting, we can close the gap between necessary and sufficient conditions for derandomization of $\pit$.

This intuition turns out to be, essentially, correct, and we are able to remove the depth upper bound in the sufficient condition. However, unfortunately, most of the other parts of the Boolean constructions do not migrate cleanly to the arithmetic setting, in which we want a targeted generator implementable as an arithmetic procedure, and we want to base its correctness on hardness for arithmetic networks. 

Thus, to prove our results we introduce new constructions of arithmetic targeted hitting-set generators, which rely on specific arithmetic tools (e.g., the hitting-set generators of \textcite{KI04,GKSS22} and bootstrapping ideas of~\cite{AGS18,KST19}) as well as on refined analyses of these tools.

\paragraph{Organization.}
We first start with the easier case of fields of characteristic zero, described in~\Cref{sec:tech:gkss}. This part should help introduce readers who are less familiar with the recent constructions in Boolean hardness vs randomness to the high-level outline of the arguments. Then we move to the more challenging case of finite fields, described in~\Cref{sec:tech:ki}.%

\subsection{Warming up: Fields with characteristic zero} \label{sec:tech:gkss}

For the sake of simplicity, let us now work over the field $\Q$ of rational numbers.

\subsubsection{Hardness on all but finitely many inputs is necessary}

The proof of \Cref{thm:int:pit:necessary} is a straightforward diagonalization argument.
Recall that we assume that $\pit$ can be solved by a  $\P$-uniform family of arithmetic networks of polynomial size, and our goal is to find a $\P$-uniform family of uniform arithmetic networks that is hard on all but finitely many inputs for $n^k$-time-uniform arithmetic networks with $\pit$ gates.
To do this, on input length $n$ we diagonalize against the first $n$ families of $n^k$-time-uniform arithmetic networks with $\pit$ gates.

On input $\vec{\Lambda} \in \Q^n$, we want to compute an output $\vec{\sigma} \in \Q^{n}$ such that $\sigma_i$ disagrees with the $i\ts{th}$ output of the $i\ts{th}$ family of uniform arithmetic networks with $\pit$ gates on $\vec{\Lambda}$, for all $i\in[n]$. 
To do this, we simulate each of the $n$ machines for time $n^k$, and evaluate each of the $n$ corresponding arithmetic networks with $\pit$ gates on $\vec{\Lambda}$, using the assumption that we can solve $\pit$ deterministically in order to simulate the $\pit$ gates. We then obtain $\sigma$ by taking $\sigma_i$ to be some trivial modification of the $i\ts{th}$ output element of the $i\ts{th}$ network (e.g., adding $1$ to it). 
This diagonalization procedure yields a uniform family of arithmetic networks whose outputs cannot be computed on \emph{any} input $\vec{\Lambda}$ (except, at most, finitely many) by smaller uniform arithmetic networks with $\pit$ gates.\footnote{The diagonalization implementation is an arithmetic network (rather than an arithmetic circuit) since the assumed $\pit$ algorithm is a uniform arithmetic network, and not a uniform arithmetic circuit.}%

\subsubsection{Hardness on all but finitely many inputs is sufficient}

The proof of \Cref{thm:int:main:zero} is more intricate and, together with \Cref{thm:int:main}, constitutes the technical bulk of our work.
By assumption, we have a strongly uniform family of arithmetic circuits $\set{C_n}_{n \in \N}$ of size and degree $n^{k^2}$ that is hard on all but finitely many inputs for uniform arithmetic networks with $\pit$ gates of size $n^k$, and our goal is to use this hard family of circuits to design a uniform family of deterministic arithmetic networks that solve $\pit$.
Thanks to depth reduction for arithmetic circuits \cite{VSBR83}, we may assume that the circuits $\set{C_n}$ are of depth $\Delta \coloneqq O(\log^2 n)$.

Let $D$ be the $n$-variate arithmetic circuit for which we want to solve $\pit$.
By using the universal arithmetic circuits of \textcite{Raz10}, we can regard the description of $D$ as a point $\vec{\Lambda} \in \Q^{n}$.
Following~\cite{CT21}, we will construct a generator $\calG_{\vec{\Lambda}} : \Q^{1000} \to \Q^n$ so that if $D \circ \calG_{\vec{\Lambda}} = 0$, then we can compute the hard function $C_n$ at the point $\vec{\Lambda}\in\Q^n$ with a uniform arithmetic network with $\pit$ gates of bounded polynomial size.
Because $C_n$ is hard for such networks on all but finitely many inputs $\vec{\Lambda}$, our generator succeeds in hitting all but finitely many circuits $D$ (i.e., for all but finitely many $\vec{\Lambda}$, the generator $\calG_{\vec{\Lambda}}$ hits the circuit represented by $\vec{\Lambda}$).

\paragraph{The generator.}
The generator $\calG_{\vec{\Lambda}}$ is built from a \emph{polynomial decomposition} of the computation of an arithmetic circuit $C_n$ on input $\vec{\Lambda}$. Evaluating $C_n$ at $\vec{\Lambda}$ assigns each gate of the circuit a rational number. If we view $C_n$ as a layered circuit, then the $i\ts{th}$ layer of $C_n$ corresponds to a vector $\vec{v}_i \in \Q^{n^{k^2}}$, which we think of as specifying a ``hard'' polynomial $f_i$. Indeed, we can simply interpolate $f_i$ from these points, or treat them as the coefficients of $f_i$. However, we would like the resulting sequence of polynomials $\set{f_1,...,f_{\Delta}}$ to have the additional property that they are {\sf downward self-reducible}: We can evaluate $f_i$ quickly (i.e., in time much smaller than $n^{k^2}$) if we have access to $f_{i-1}$. 
Using an idea inspired by the proof system of Goldwasser, Kalai, and Rothblum~\cite{GKR15} (which, in turn, uses the sumcheck protocol), we can encode each layer $\vec{v_i}$ of $C_n(\vec{\Lambda})$ by $\ell\le\log(n)$ of polynomials $f_{i,1}, \ldots, f_{i,\ell}$ such that the resulting sequence of polynomials $\set{f_{i,j}}_{i\in[\Delta],j\in[\ell]}$ is, indeed, downward reducible: the value of $f_{i,j}$ at any point can be learned by querying $f_{i,j-1}$ (or $f_{i-1,\ell}$, if $j=0$) at $O(n^{\eps})$ many related points, where $\eps > 0$ is a small enough constant. Moreover, in this sequence $\set{f_{i,j}}$ the polynomial computing the input layer is (unconditionally) efficiently computable. The polynomial decomposition relies on $\set{C_n}$ being strongly uniform, and its description appears in~\Cref{sec:ki:decomposition}.

The targeted generator gets input $\vec{\Lambda}$, and computes the polynomial decomposition of $C_n(\vec{\Lambda})$. It the instantiates the (standard) hitting-set generator $G_{\sf GKSS}$ of \textcite{GKSS22} with each of the $f_{i,j}$'s as the hard polynomial. The final generator construction is
\mm{
\calG(\vec{\Lambda},y_1,y_2,\vec{z}) = \sum_{i,j}L_{i,j}(y_1,y_2)\cdot G_{\sf GKSS}^{f_{i,j}}(\vec{z}) \;\;, \tag{2.1}\label{eq:tech:tarhsg}
}
where the $L_{i,j}$'s are standard Lagrange interpolation polynomials computing the indicator function of $(i,j)$ in $[\Delta]\times[\ell]$. Note that the notation $G_{\sf GKSS}^{f_{i,j}}(\vec{z})$ suppresses $\vec{\Lambda}$, but $f_{i,j}$ depends on the gate-values of $C_n(\vec{\Lambda})$.

We instantiate the polynomial decomposition so that the hard polynomials $f_{i,j}$ are $m$-variate polynomials $\Q^{m}\rightarrow\Q$ where $m=O(1)$, in which case the seed length of $G_{\sf GKSS}$ is constant; that is, we have $G_{\sf GKSS}\colon\Q^{O(1)}\rightarrow \Q$. The $\pit$ algorithm evaluates $D$ (i.e., the $n$-variate arithmetic circuit represented by $\vec{\Lambda}$) on a grid of constant dimension and polynomial side length.

\paragraph{Correctness: An arithmetic reconstruction procedure.}

To prove the correctness of $G_{\vec{\Lambda}}$, we need to show that if a small circuit $D$ with description $\vec{\Lambda}$ satisfies $D \circ G_{\vec{\Lambda}} = 0$, then we can compute the mapping $\vec{\Lambda}\mapsto C_n(\vec{\Lambda})$ by a small uniform arithmetic network with $\pit$ gates.

Since $G_{\vec{\Lambda}}$ is the combination of the GKSS generator instantiated with the polynomials $f_{i,j}$, we have that $D \circ G_{\mathsf{GKSS}}^{ f_{i,j}} = 0$ for each $f_{i,j}$. The GKSS generator is a reconstructive generator, meaning that using the distinguisher $D$ and a bounded number of queries to $f_{i,j}$, one can reconstruct a circuit for the polynomial $f_{i,j}$ (see more on this below). As shown in \cite{GKSS22}, the complexity of this reconstructed circuit largely depends on the complexity of the distinguisher $D$. In particular, if $D$ is a small arithmetic circuit, then we obtain a correspondingly-small arithmetic circuit for $f_{i,j}$. This suggests an obvious roadmap: as we have a small circuit $D$ that satisfies $D \circ G_{\mathsf{GKSS}}^{f_{i,j}} = 0$ for each $f_{i,j}$, apply the GKSS reconstruction algorithm to iteratively reconstruct circuits for each of the polynomials $f_{i,j}$, for $(i,j)=(1,1),...,(1,\ell),(2,1),...,(\Delta,\ell)$; each time we need to answer queries of the GKSS reconstruction to $f_{i,j}$, use the downward self-reducibility as well as our circuit for $f_{i,j-1}$ (or for $f_{i-1,\ell}$). If the circuit $D$ has size $n'$ and each round of reconstruction can be implemented in time $(n')^c$ for some fixed constant $c$, then by iteratively reconstructing circuits for the polynomials $f_{i,j}$, we obtain in time $(n')^c\cdot\Delta$ a circuit for $f_{\Delta, O(1)}$, whose evaluations (at certain points determined by the polynomial decomposition) corresponds to the output of the circuit $C_n$ on input $\vec{\Lambda}$.

We stress that downward self-reducibility is not used as part of the circuit for $f_{i,j}$, since that would cause an exponential blow-up in the circuit size as we iteratively construct circuits for increasing $(i,j)$'s. Instead, downward self-reducibility is only used to answer queries of the GKSS reconstruction to $f_{i,j}$, and the GKSS reconstruction outputs a circuit of fixed polynomial size (depending on $D$) for $f_{i,j}$. 

\medskip\noindent\underline{Technical details for the reconstruction procedures.} Now, to be more precise, we need a uniform arithmetic network with $\pit$ gates that iteratively performs the reconstruction argument of GKSS, producing a description of a circuit for $f_{i,j}$ at each step. In particular, we need to implement the reconstruction procedure for the GKSS generator by a uniform arithmetic network with $\pit$ gates (that prints an arithmetic circuit).

Let us point out key parts of this implementation, deferring %
the full details to~\Cref{sec:gkss-recon-step}. 
In the original proof of \cite{GKSS22}, the reconstruction algorithm for the generator $G_{\textsf{GKSS}}^{f}$ first constructs a circuit for the partial derivatives of order up to $n$ of the homogeneous components of $f$ up to degree $n$.
Because $f$ is a polynomial on a constant number of variables, there are $n^{O(1)}$ such polynomials to compute, each of which is a sum of $n^{O(1)}$ monomials. 
After this base case, the reconstruction algorithm of \cite{GKSS22} uses the distinguisher circuit $D$ to compute all partial derivatives of order up to $n$ of the homogeneous components of $f$, proceeding iteratively one degree at a time.

To obtain the (partial derivatives of) the degree-$d$ homogeneous component of $f$, the reconstruction algorithm of \cite{GKSS22} first obtains evaluations of the order-$n$ partial derivatives of this component.
The algorithm then interpolates the order-$n$ derivatives from these evaluations, and then repeatedly applies Euler's formula (\Cref{fact:euler}) to reconstruct the degree-$d$ homogeneous component of $f$.
Crucially, we cannot evaluate the order-$n$ partials of the degree-$d$ component of $f$ on any points of our choosing, but only on points where a certain polynomial does not vanish.
Very roughly, these evaluations are obtained by solving a univariate polynomial equation to first order.
If $g(x)$ is a univariate polynomial, then we can Taylor expand $g$ at $a$ as
\[
g(x) = g(a) + (x-a) g'(a) + O((x-a)^2).
\]
If we want to find a point where $g(x) = 0$, then we can obtain an approximate root by solving
\[
0 = g(x) = g(a) + (x-a) g'(a) + O((x-a)^2).
\]
To first order, we have
\[
x = a - \frac{g(a)}{g'(a)}.
\]
For the solution $x$ to be well-defined, we need to ensure that $a$ is chosen so that $g'(a) \neq 0$.

We obtain the order-$n$ partials of the degree-$d$ component of $f$ in a similar manner.
To ensure that the evaluations we obtain can be used to reconstruct the partial derivatives themselves, we need to find a set of points where (i) a first-order partial derivative of some related polynomial does not vanish, and (ii) these points are an interpolating set for degree-$d$ polynomials in $n$ variables.
We can write down an explicit arithmetic circuit that computes a nonzero polynomial that vanishes on sets of points that do not satisfy both of these conditions.
By using $\pit$ gates within the reconstruction, we can explicitly construct an interpolating set that allows us to recover the partial derivatives of higher-degree homogeneous components of $f$.
With a good interpolating set in hand, this iterative part of the reconstruction can be implemented by a uniform network in a fairly straightforward manner.

Perhaps surprisingly, the starting point of each execution of the GKSS reconstruction is more difficult to implement than the iterative part sketched above.
In the setting of \cite{GKSS22}, they simply hard-wire the partial derivatives up to order $n$ of the relevant homogeneous components to the reconstructed circuit (since they do not care about uniformity). In our setting, we could compute these by polynomial interpolation (i.e., we could query $f_{i,j}$ to learn the coefficients of its low-degree homogeneous components, using downward self-reducibility), but interpolating the entire polynomial $f_{i,j}$ would be too expensive. The key idea to avoid this is to construct a circuit $C^{\sf temp}_{i,j}$ for $f_{i,j}$, which uses downward-self-reducibility and the circuit for $f_{i,j-1}$, and then homogenize $C^{\sf temp}_{i,j}$ up to degree $n$. Indeed, the circuit $C^{\sf temp}_{i,j}$ is too big for us to simply use it as the circuit for $f_{i,j}$ in the next iteration (recall that, as explained above, doing so in each iteration would cause an exponential blow-up), but we will only be using $C^{\sf temp}_{i,j}$ temporarily to extract information---the coefficients of its low-degree homogeneous components---when implementing the GKSS reconstruction, which outputs a circuit of fixed polynomial size for $f_{i,j}$.

Finally, the number of partial derivatives is bounded, but it is unfortunately not independent of the number of points used to define $f_{i,j}$; in other words, their number does depend on the size of the circuit $C_n$. This is why in our GKSS-based reconstruction, we reconstruct circuits of size $n^{K}$ by arithmetic networks with $\pit$ gates of size $n^{\sqrt{K}}$ (indeed, the result was stated in~\Cref{sec:int} using $K=k^2$). For further details, see~\Cref{sec:gkss}.

\subsection{$\pit$ over finite fields} \label{sec:tech:ki}

\newcommand{\ki}{\ensuremath{\mathsf{KI}}}

Turning to the case of finite fields of super-polynomial size, we now use the $\ki$ generator~\cite{KI04} instead of the GKSS generator. That is, for each polynomial $f_{i,j}$ encoding a row in the polynomial decomposition, we instantiate the $\ki$ generator with $f_{i,j}$. The final generator will combine the $f_i$'s similarly to Eq.~\ref{eq:tech:tarhsg}, and we will implement a reconstruction procedure by repeatedly invoking a uniform implementation of the reconstruction procedure of $\ki$ (i.e., analogously to~\Cref{sec:tech:gkss}, but working with a uniform reconstruction of $\ki$ instead of GKSS, and making the necessary adjustments).

However, there is a problem with this idea. For now, let us focus even just on $\ki$ with a single polynomial $f=f_{i,j}$. In fact, let us even ignore the fact that we added $\Delta\cdot\ell$ auxiliary polynomials, and just pretend that each polynomial $f$ corresponds to a row of gate-values in $C_n(\vec{\Lambda})$. 

An inherent limitation of $\ki$ is that it only yields $\pit$ algorithms running in \emph{quasipolynomial} time, whereas we are trying to get in $\pit$ algorithms running in polynomial time (or close to it). One reason why this limitation arises is because the $\ki$ generator needs to output $n$ elements (for the $n$-input circuit on which we want to solve $\pit$). Since it uses combinatorial designs \`{a} la~\cite{NisanW94}, it must rely on a hard polynomial over $\ell=O(\log n)$ variables, in which case the number of input elements to the generator $\ki$ is at least $O(\ell)$. Evaluating the given circuit of degree $d=\poly(n)$ on a grid of side length $d+1$ and dimension $O(\ell)$, we get a quasipolynomial-time $\pit$. (In fact, there is another reason that using $\ki$ only yields a quasipolynomial-time $\pit$ algorithm, which we explain below; for now, let us focus on the current challenge.)

Bootstrapping techniques from~\cite{AGS18,KST19} suggest a way out of this conundrum, but at a cost of a stronger hardness assumption. Specifically, recall that we started with a circuit $D_0$ that has $n_0=n$ inputs, and used an $(\ell_0=\ell)$-variate polynomial $f^{(0)}=f$ to reduce the effective number of inputs of $D_0$ to be $O(\ell_0)$; that is, we constructed a nonzero $O(\ell_0)$-variate circuit $D_1=D_0\circ \ki^{f^{(0)}}$ of polynomial size and degree. Of course, if we would have more hard polynomials with different parameters, we would be able to repeat this procedure: At each iteration $i\ge1$ we use a hard polynomial over $\ell_i=O(\log(\ell_{i-1}))$ variables, and obtain $D_i=D_{i-1}\circ \ki^{f^{(i)}}$ over $O(\ell_i)$ variables, terminating after the number of variables is sufficiently small. The problem is that this bootstrapping procedure requires a \emph{sequence of hard polynomials}, each over fewer and fewer variables; we do not wish to rely on this seemingly stronger assumption.

\paragraph{A free lunch setting for bootstrapping.}
Our key idea for avoiding this involves a change of perspective. Instead of thinking of $f$ as a hard polynomial, let us just think of the gate-values of $C_n(\vec{\Lambda})$ at layer $i$ as a \emph{hard sequence of field elements}. That is, this sequence should be hard to compute quickly, otherwise (i.e., if all rows in the polynomial decomposition are ``easy'') a reconstruction procedure as in~\Cref{sec:tech:gkss} computes $C_n(\vec{\Lambda})$ too quickly. The advantage in this perspective is that now we are allowed to define a hard polynomial from this sequence using \emph{an arithmetization of our choosing}; that is, the string specifies a sequence of values, and we are allowed to interpolate an $\ell$-variate polynomial from this sequence for a value of $\ell$ of our choosing.

We stress that arithmetizing each layer differently does not affect the hardness assumption. The assumption is still that $C_n(\vec{\Lambda})$ is hard to compute quickly. It just so happens that when we arithmetize a layer with a value of $\ell$ that we choose, it specifies an $\ell$-variate polynomial. The ability to choose $\ell$ is thus, effectively, free of cost in terms of our hardness assumption (in contrast to the setting of~\cite{AGS18,KST19}). (One subtlety is that the way we arithmetize must conform to the way the polynomial decomposition is defined. Let us ignore this complication for simplicity of presentation; it does not affect the argument.)

This suggests a natural approach. We compute the hard circuit $C_n(\vec{\Lambda})$ and use the $\ki$ generator to obtain a description $\vec{\Lambda}_1$ of $D_1$, which is defined as above, with a standard arithmetization of the layers (i.e., as $(O(\ell_0=\log n))$-variate polynomials). 
Now, at each iteration $i$, we arithmetize each layer as an $\ell_i$-variate polynomial to obtain $D_i$, which is the composition of $D_{i-1}$ with the resulting targeted hitting-set generator (obtained by using $\ki$ with the $\ell_i$-variate polynomials given by the layers). 
Thus, the $\pit$ algorithm involves a recursive sequence of computations of $C_n$, each time on a new circuit $\vec{\Lambda}_i$ representing $D_i$, and after each computation $i$ of $C_n$, we arithmetize the layers differently to obtain the targeted hitting-set generator. 
Since $C_n$ is hard to compute on \emph{all} inputs, each of the applications of the generator (i.e., on each $\vec{\Lambda}_i$) will result in a nonzero polynomial. After $c=O(1)$ iterations we obtain a polynomial with $\log^{(c)}(n)$ variables, which we can evaluate over a grid of the appropriate side length.

(We intentionally leave the precise details of how polynomials are arithmetized vague, since they depend on the implementation details of the polynomial decomposition. For full details, see~\Cref{sec:decomposition,sec:ki}.)

\paragraph{A subtle challenge when recursing: The targeted generator is not an arithmetic circuit.}
The reader might have noticed one subtle point in the description above: At each iteration $i$, we need $D_i=D_{i-1}\circ\calG_{\ki}$ to remain an arithmetic circuit of polynomial size and degree, where $\calG_{\ki}$ is our $\ki$-based targeted generator. This is because we will be feeding $D_i$ to $\calG_{\ki}$ again as the starting point of iteration $i+1$, and we want the next iteration maintain the property that $D_{i+1}$ is nonzero.\footnote{Specifically, if $D_i$ is not an arithmetic circuit of polynomial size, then our reconstruction procedure will not be an arithmetic network of polynomial size computing the hard problem. Thus, we will not be guaranteed that $\calG_{\ki}$ will work in iteration $i+1$ (i.e., that $D_{i+1}=D_i\circ\calG_{\ki}$ will be nonzero). This problem is not just syntactic: Recall that if $D_i$ is an arithmetic network, it may not even compute a polynomial, and thus designing a hitting-set generator for it seems to be a problem of a completely different nature.} This might seem like a triviality, given that we expect our generator to be computable arithmetically. However, unfortunately, our targeted generator is \emph{not an arithmetic circuit}; it is inherently an arithmetic network.

Surprisingly, this issue is manageable. To see how, recall that in iteration $i$ we are composing $D_{i-1}$ with $\calG_{\ki}$, when the latter targeted generator is applied to input $\vec{\Lambda}_{i-1}$ that is the description of $D_{i-1}$. The key observation is that we are not actually composing $D_{i-1}$ with a procedure that maps $\vec{\Lambda}_{i-1}$ to $\calG_{\ki}$ (this would be the arithmetic network); we are composing  $D_{i-1}$ with a procedure $\calG_{\ki,\vec{\Lambda}_{i-1}}$ that has $\vec{\Lambda}_{i-1}$ fixed and ``hard-wired'', and computes the outputs of $\calG_{\ki,\vec{\Lambda}_{i-1}}(\vec{s})=\calG_{\ki}(\vec{\Lambda}_{i-1},\vec{s})$ as a function of the new input variables $\vec{s}$ (i.e., of the seed of the targeted generator). In particular, we can construct a circuit for $\calG_{\ki,\vec{\Lambda}_{i-1}}$ that has the hard-to-compute values hard-wired, and only computes the (easy-to-compute) mapping of seed $\vec{s}$ to output $\calG_{\ki}(\vec{\Lambda}_{i-1},\vec{s})$. To see the implementation details, see~\Cref{sec:pit:sufficient:small}.

Lastly, since we are interested in a $\pit$ algorithm implementable by an arithmetic network, we show how to implement all of the above as an arithmetic network. This network gets $\vec{\Lambda}$, and iteratively produces descriptions of the circuits $D_i=D_{i-1}\circ\calG_{\vec{\Lambda}_{i-1}}$, each time using $C_n$ to compute the values of $\calG_{\ki}$ that it needs. The construction for a single recursive iteration (which is the crux of the proof) appears in~\Cref{sec:ki:generator}, and the recursive application appears in~\Cref{sec:pit:sufficient:small}.

\paragraph{The problematic first iteration.}
Finally, we return to the second obstacle involved with using the $\ki$ generator, which was mentioned above. The problem is that the reconstruction procedure of $\ki$ requires quasipolynomial circuit size when it works with an $O(\log n)$-variate polynomial (as it does in the first iteration). This is a problem, since if the reconstruction is of quasipolynomial size, our hard circuit must also be of (larger) quasipolynomial size, in which case we only obtain a quasipolynomial $\pit$ algorithm.

To see where the inefficiency comes from, recall that the reconstruction procedure involves constructing partial truth-tables of the hard polynomial $f$, by interpolating $f$ on a grid of bounded side length $d_f$ and dimension $\eps\cdot\log(n)$, where $d_f$ is the individual degree of $f$.\footnote{That is, the reconstruction is an arbitrarily small constant fraction of the size of sets in the combinatorial design (corresponding to the size of pairwise intersections of sets in the design). When working with an $\ell$-variate polynomial, the designs sets are of size $\ell$, and hence the pairwise intersections can be made of size $\eps\cdot\ell$.} The issue is that the individual degree of $f$ is not small enough, and in particular, in the first iteration we have $d_f\approx \polylog(n)$ and hence $d_f^{\eps\cdot\log(n)}=n^{O(\mathrm{loglog} n)}$. (In subsequent iterations we can handle this using an appropriate arithmetization, but in the first iteration we must use an $O(\log n)$-variate polynomial.)

To resolve this issue we change the polynomial decomposition, only for the first iteration of the $\ki$-based targeted generator, so that $f$ has constant individual degree (and thus $d_f^{\eps\cdot\log(n)}=n^{O(\eps)}$). In a gist, recall that the polynomials are defined based on ideas from the sumcheck protocol, and we note that they involve computing a certain low-degree formula corresponding to the circuit-structure function of $C_n$. We use an idea of Kalai, Lombardi, and Vaikuntanathan~\cite{KalaiLV23}, who apply a Cook-Levin-style trick to reduce the individual degree of this formula to be constant, while increasing the round complexity of the relevant sumcheck protocol. In our setting this will result in having a $\ki$ generator with $\polylog(n)$ input variables in the first iteration (rather than $O(\log n)$), but this overhead becomes immaterial after subsequent recursions.

\section{Preliminaries} \label{sec:pre} 

We use $\F$ to denote a field.
Often, we will need to work with a countably-infinite family of fields, such as the family $\set{\F_{2^{n}} : n \in \N}$ of finite fields of characteristic $2$.
To simplify notation, we will write such a family of fields as $\set{\F_n : n \in \N}$ with the understanding that \emph{$\F_n$ is not necessarily a finite field of order $n$}.
In some instances, we will only need to work over a single field, such as $\Q$; for notational consistency, we still write this as a family of fields $\set{\F_n : n \in \N}$ where $\F_n = \Q$ for all $n \in \N$. 

Whenever considering a field sequence $\F=\set{\F_n}$, we implicitly also associate with it a way of representing field elements as bit-strings.\footnote{For fields such as $\C$, we can either represent only a subset of field elements as bit-strings (e.g., the rational ones), or work in a more generalized real-word-RAM-like model. Our results are not sensitive to this choice: see~\Cref{rem:pre:uni,rem:pre:nequni}.} We say that $\F=\set{\F_n}$ is {\sf feasible} if there is a uniform Turing machine that gets as input $(1^n,1^m)$ where $m\le|\F_n|$, runs in time $\poly(n+m)$, and outputs $m$ distinct elements in $\F_n$. Most natural field sequences are feasible (when considering the straightforward way of representing field elements as bit-strings). For example, if $\F_n=\F_{p(n)}$ is a finite field of prime order, a uniform machine can simply print the first $m$ integers (i.e., without having to know the prime $p(n)$); and similarly, if $\F_n=\F_{p^r}$, then a uniform machine can print $m$ distinct vectors in $[p]^r$ (i.e., without having to know an irreducible of degree $r$ over $\F_p$). When $\F_n$ is of characteristic zero, the machine can also just print $m$ integers.

We abbreviate a tuple $(x_1,\ldots,x_n)$ as $\vec{x} \coloneqq (x_1, \ldots, x_n)$; in all instances, the length of the tuple will either be explicitly stated or will be clear from context.
For an exponent vector $\vec{e} = (e_1,\ldots, e_n) \in \Z_{\geq 0}^n$, we write $|\vec{e}| \coloneqq e_1 + \cdots + e_n$ for the $1$-norm of $\vec{e}$ and use $\vec{x}^{\vec{e}}$ to denote the monomial $x_1^{e_1}\cdots x_n^{e_n}$.
We write $\F[x_1,\ldots,x_n]$ and $\F(x_1,\ldots,x_n)$ for the polynomial ring and field of rational functions, respectively, over the variables $x_1,\ldots,x_n$ with coefficients in $\F$.
Given a polynomial $P(\vec{x}) \in \F[x_1,\ldots,x_n]$, we write $\deg(P)$ and $\ideg(P)$ for the degree and individual degree of $P$, respectively.
We use $\partial_{\vec{x}^{\vec{e}}}(P(\vec{x}))$ to abbreviate the partial derivative $\frac{\partial^{\vec{e}}}{\partial \vec{x}^{\vec{e}}}(P(x_1,\ldots,x_n))$.
We use $\langle \vec{x} \rangle^i$ to denote the ideal in the polynomial ring $\F[x_1,\ldots, x_n]$ that is generated by all degree-$i$ monomials in the variables $\vec{x}$.

We say that a function $f\colon\mathcal{D}\rightarrow\mathcal{R}$ \emphdef{represents} a string $x_f\in\mathcal{R}^{\mathcal{D}}$ if $f(i)=(x_f)_i$ for all $i\in\mathcal{D}$.

\subsection{Standard results}

Here, we quote the Schwartz--Zippel Lemma and Gauss's Lemma, two standard results that we make use of throughout our work.
We begin with the Schwartz--Zippel Lemma.

\begin{lem}[Schwartz--Zippel Lemma, total degree \cite{Schwartz80}]\label{lem:sz}
Let $\F$ be a field and let $f \in \F[\vec{x}]$ be a nonzero polynomial.
Then for any finite set $S \subseteq \F$, we have
\[
\Pr_{\vec{\alpha} \gets S^n} \sbr{f(\vec{\alpha}) = 0} \le \frac{\deg(f)}{|S|}.
\]
\end{lem}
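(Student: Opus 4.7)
The plan is to prove the Schwartz--Zippel Lemma by induction on the number of variables $n$, using a standard decomposition of $f$ by the highest power of one distinguished variable.

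For the base case $n = 1$, the polynomial $f$ is a nonzero univariate polynomial of degree at most $\deg(f)$, so by the fundamental theorem of algebra (or more precisely, the elementary fact that a nonzero polynomial of degree $d$ over a field has at most $d$ roots), $f$ has at most $\deg(f)$ roots in $\F$, and in particular at most $\deg(f)$ roots in $S$. Thus $\Pr_{\alpha \gets S}[f(\alpha) = 0] \leq \deg(f)/|S|$.

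For the inductive step with $n \geq 2$, I would write $f(\vec{x}) = \sum_{i=0}^{k} x_n^{i} \cdot g_i(x_1, \ldots, x_{n-1})$ where $k$ is the largest integer such that $x_n^k$ appears with a nonzero coefficient $g_k$ in $f$ (viewing $f$ as a polynomial in $x_n$ over $\F[x_1,\ldots,x_{n-1}]$). Then $g_k$ is a nonzero polynomial with $\deg(g_k) \leq \deg(f) - k$. Apply the inductive hypothesis to $g_k$ to bound $\Pr_{\vec{\alpha}_{<n} \gets S^{n-1}}[g_k(\vec{\alpha}_{<n}) = 0] \leq (\deg(f) - k)/|S|$. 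Conditioned on $g_k(\vec{\alpha}_{<n}) \neq 0$, the univariate polynomial $f(\vec{\alpha}_{<n}, x_n) \in \F[x_n]$ is nonzero of degree exactly $k$, so by the base case $\Pr_{\alpha_n \gets S}[f(\vec{\alpha}_{<n}, \alpha_n) = 0 \mid g_k(\vec{\alpha}_{<n}) \neq 0] \leq k/|S|$. A union-bound-style combination then yields
\[
\Pr_{\vec{\alpha} \gets S^n}[f(\vec{\alpha}) = 0] \;\leq\; \frac{\deg(f) - k}{|S|} + \frac{k}{|S|} \;=\; \frac{\deg(f)}{|S|},
\]
completing the induction.

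There is essentially no main obstacle here; this is a textbook argument. The only point that requires minor care is ensuring that $g_k$ is indeed nonzero (which is immediate from how $k$ is chosen) and that the degree bound $\deg(g_k) \leq \deg(f) - k$ holds (which is a direct consequence of the total-degree grading, since every monomial in $x_n^k g_k$ contributes degree at least $k$ from the $x_n^k$ factor). The argument works over any field since it only uses the fact that a nonzero univariate polynomial over a field has at most as many roots as its degree.
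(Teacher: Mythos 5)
Your proof is the standard textbook inductive argument for the Schwartz--Zippel Lemma and is correct. The paper does not prove this lemma itself — it is simply quoted and cited to Schwartz \cite{Schwartz80} — so there is no in-paper proof to compare against; your argument is precisely the classical one one would find in any reference.
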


\begin{lem}[Schwartz--Zippel Lemma, individual degree \cite{Zippel79}]\label{lem:sz-ideg}
Let $\F$ be a field and let $f \in \F[x_1,\ldots,x_n]$ be a nonzero polynomial.
Then for any finite set $S \subseteq \F$, we have
\[
\Pr_{\vec{\alpha} \gets S^n} \sbr{f(\vec{\alpha}) = 0} \le 1 - \del{1 - \frac{\ideg(f)}{|S|}}^n.
\] 
Let $f$ be a nonzero $n$-variate polynomial of individual degree at most $d$ over a field $\F$. Then for any set $S \subseteq \F$ with $|S| > d$, there is a point $\vec{a} \in S^n$ such that $f(\vec{a}) \neq 0$.
\end{lem}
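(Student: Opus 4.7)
The plan is to prove both assertions by induction on the number of variables $n$, using the standard decomposition of $f$ as a polynomial in the ``leading variable'' $x_n$ with coefficients in $\F[x_1,\ldots,x_{n-1}]$, and recursing on the leading coefficient.

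For the base case $n=1$, the argument is immediate: a nonzero univariate polynomial of degree at most $d$ has at most $d$ roots over $\F$, so $\Pr_{\alpha \gets S}[f(\alpha) = 0] \le d/|S| = 1 - (1 - d/|S|)$, matching the stated bound.

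For the inductive step, I would write $f(x_1,\ldots,x_n) = \sum_{i=0}^{d} x_n^i \cdot f_i(x_1,\ldots,x_{n-1})$ and let $k$ be the largest index with $f_k \not\equiv 0$. Since each $f_i$ arises by collecting coefficients of $f$ according to the power of $x_n$, the polynomial $f_k$ has individual degree at most $d$ in the variables $x_1,\ldots,x_{n-1}$, and is nonzero by choice of $k$. The inductive hypothesis applied to $f_k$ gives $\Pr[f_k(\alpha_1,\ldots,\alpha_{n-1}) \neq 0] \ge (1 - d/|S|)^{n-1}$. Conditioned on that event, the restriction $f(\alpha_1,\ldots,\alpha_{n-1},x_n)$ is a univariate polynomial in $x_n$ of degree exactly $k \le d$, hence nonzero at a uniformly random $\alpha_n \in S$ with probability at least $1 - k/|S| \ge 1 - d/|S|$. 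By independence of the coordinates, multiplying the two probabilities yields $\Pr[f(\vec{\alpha}) \neq 0] \ge (1 - d/|S|)^{n}$, i.e.\ $\Pr[f(\vec{\alpha}) = 0] \le 1 - (1 - d/|S|)^n$, as required.

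For the final assertion, I would simply note that when $|S| > d$ the factor $1 - d/|S|$ is strictly positive, so $(1 - d/|S|)^n > 0$, and therefore $\Pr[f(\vec{\alpha}) \neq 0] > 0$; by the probabilistic method, there exists $\vec{a} \in S^n$ with $f(\vec{a}) \neq 0$. I do not foresee any real obstacle in this argument: it is a classical textbook proof. The only point that warrants a moment's care is checking that passing to the leading coefficient preserves the individual-degree bound, which follows directly from the definition of individual degree applied to the expansion of $f$ in powers of $x_n$.
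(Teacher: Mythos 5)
Your proof is correct and is the standard inductive argument due to Zippel; the paper gives no proof and simply cites \cite{Zippel79}, whose original argument proceeds exactly as you describe (peel off the leading coefficient with respect to one variable, recurse, and multiply the two non-vanishing probabilities). The only caveat, which is inherent to the statement rather than to your proof, is that the bound $1 - (1 - \ideg(f)/|S|)^n$ is only meaningful when $|S| \ge \ideg(f)$, and that is also the regime in which the inequality $(1 - \ideg(f_k)/|S|)^{n-1} \ge (1 - \ideg(f)/|S|)^{n-1}$ you implicitly use (since $f_k$ may have smaller individual degree than $f$) is valid; you may want to state that monotonicity explicitly.
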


Next, we recall the statement of Gauss's Lemma, which allows us to relate the factorizations of a polynomial $f \in \F[\vec{x}, y]$ over the two rings $\F(\vec{x})[y]$ and $\F[\vec{x}][y]$.

\begin{lem}[{Gauss's Lemma \cite[Corollary 6.10]{vzGG13}}]\label{lem:gauss}
Let $f \in \F[\vec{x}, y]$ be monic in $y$.
Then $f$ is irreducible in $\F(\vec{x})[y]$ if and only if $f$ is irreducible in $\F[\vec{x}][y] \cong \F[\vec{x}, y]$.
\end{lem}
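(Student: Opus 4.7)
The plan is to prove both directions via the classical content argument, leveraging that $\F[\vec{x}]$ is a UFD with field of fractions $\F(\vec{x})$. The easy direction uses monicity to convert a putative $\F[\vec{x}][y]$-factorization into one where one factor is a unit of $\F(\vec{x})[y]$; the harder direction uses the multiplicativity of content to clear denominators from an $\F(\vec{x})[y]$-factorization and obtain one in $\F[\vec{x}][y]$.

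First, I would handle the direction: if $f$ is irreducible in $\F(\vec{x})[y]$, then it is irreducible in $\F[\vec{x}, y]$. Suppose for contradiction that $f = gh$ in $\F[\vec{x}][y]$ with neither factor a unit. This is also a factorization in $\F(\vec{x})[y]$, so by irreducibility there, one factor, say $g$, is a unit in $\F(\vec{x})[y]$; units of $\F(\vec{x})[y]$ are exactly the nonzero elements of $\F(\vec{x})$, so $g \in \F(\vec{x}) \cap \F[\vec{x}][y] = \F[\vec{x}]$. Since $f$ is monic in $y$, comparing leading coefficients in $y$ on both sides of $f = gh$ gives $g \cdot \lc_y(h) = 1$ in $\F[\vec{x}]$, whence $g \in \F^\times$ is a unit of $\F[\vec{x}][y]$, a contradiction.

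For the converse, suppose $f$ is irreducible in $\F[\vec{x}, y]$ but $f = gh$ in $\F(\vec{x})[y]$ with $\deg_y(g), \deg_y(h) \geq 1$. Choose nonzero $a, b \in \F[\vec{x}]$ with $ag, bh \in \F[\vec{x}][y]$; then
\[
ab \cdot f \;=\; (ag)(bh) \quad \text{in } \F[\vec{x}][y].
\]
Writing $c(\cdot)$ for the content in the UFD $\F[\vec{x}]$, the classical Gauss lemma for UFDs gives $c((ag)(bh)) = c(ag)\,c(bh)$ up to units, while $c(abf) = ab \cdot c(f) = ab$ up to units because $f$ is monic (hence primitive). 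Writing $ag = c(ag) g'$ and $bh = c(bh) h'$ with $g', h' \in \F[\vec{x}][y]$ primitive, substitution gives $f = g' h'$ up to a unit of $\F[\vec{x}]$, and since $\deg_y(g') = \deg_y(g) \geq 1$ and likewise for $h'$, neither factor is a unit of $\F[\vec{x}][y]$, contradicting irreducibility of $f$ there.

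The main obstacle is the content argument in the second direction, which requires that $\F[\vec{x}]$ is a UFD so that content is well-defined and Gauss's lemma for UFDs (i.e., multiplicativity of content up to units) applies; the role of monicity is then precisely to guarantee $c(f)$ is a unit, which is what allows the denominators $a, b$ to be absorbed. Once both ingredients are in place, both implications follow cleanly.
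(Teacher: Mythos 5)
The paper does not prove this lemma itself; it cites the statement directly from von zur Gathen and Gerhard \cite[Corollary 6.10]{vzGG13}, so there is no in-paper argument to compare against. Your proof is a correct, standard, self-contained derivation of the cited result via the content argument over the UFD $\F[\vec{x}]$. In the first direction you correctly observe that a non-unit factor lying in $\F(\vec{x}) \cap \F[\vec{x}][y] = \F[\vec{x}]$ must, by monicity, divide $1$ in $\F[\vec{x}]$ (comparing leading coefficients in $y$), hence is a unit — contradiction. In the second direction, clearing denominators with $a, b \in \F[\vec{x}]$, applying multiplicativity of content, and using that $f$ is primitive (an immediate consequence of being monic in $y$) correctly recovers a factorization $f = g' h'$ in $\F[\vec{x}][y]$ with $y$-degrees preserved and both $\geq 1$; this is the whole point of requiring monicity. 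One implicit assumption you rely on throughout but do not state is that irreducibility of $f$ in $\F(\vec{x})[y]$ forces $\deg_y f \geq 1$ (a degree-$0$ element there is a unit and so not irreducible); it is worth saying explicitly, but the argument is otherwise complete.
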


Gauss's Lemma implies the following corollary, which we frequently use.

\begin{cor} \label{cor:gauss}
Let $f \in \F[\vec{x}, y]$ and let $g \in \F[\vec{x}]$.
Suppose that $f(\vec{x}, g(\vec{x})) = 0$.
Then $y - g(\vec{x})$ is an irreducible factor of $f(\vec{x}, y)$ in $\F[\vec{x}, y]$.
\end{cor}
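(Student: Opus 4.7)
The plan is to combine polynomial division in $\F[\vec{x}][y]$ with Gauss's Lemma. The two things to verify are (i) that $y - g(\vec{x})$ divides $f(\vec{x}, y)$ in $\F[\vec{x}, y]$, and (ii) that $y - g(\vec{x})$ is irreducible there. Since the conclusion is really just two elementary facts glued together, I do not anticipate any serious obstacle; the only thing to be mildly careful about is working in the correct ring at each step.

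First I would view $f$ as an element of $\F[\vec{x}][y]$, i.e., as a polynomial in $y$ with coefficients in $\F[\vec{x}]$. The divisor $y - g(\vec{x})$ is monic in $y$, so the usual polynomial division algorithm applies inside $\F[\vec{x}][y]$, producing $q(\vec{x}, y) \in \F[\vec{x}, y]$ and a remainder whose $y$-degree is strictly less than $1$, hence a remainder $r(\vec{x}) \in \F[\vec{x}]$. That is, $f(\vec{x}, y) = q(\vec{x}, y) \cdot (y - g(\vec{x})) + r(\vec{x})$. Substituting $y = g(\vec{x})$ and using the hypothesis $f(\vec{x}, g(\vec{x})) = 0$ forces $r(\vec{x}) = 0$, so $y - g(\vec{x})$ divides $f(\vec{x}, y)$ in $\F[\vec{x}, y]$.

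For irreducibility, I would observe that $y - g(\vec{x})$ is monic in $y$ and has degree $1$ in $y$, so as an element of $\F(\vec{x})[y]$ it is a nonconstant polynomial of degree $1$ and therefore irreducible in that univariate polynomial ring over the field $\F(\vec{x})$. Applying Gauss's Lemma (\Cref{lem:gauss}) then transfers irreducibility from $\F(\vec{x})[y]$ back to $\F[\vec{x}][y] \cong \F[\vec{x}, y]$. Combining this with the divisibility established above gives the claim.
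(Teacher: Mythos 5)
Your proof is correct, and it takes the route the paper explicitly signals ("Gauss's Lemma implies the following corollary"): monic division with remainder in $\F[\vec{x}][y]$ to get divisibility, then degree-$1$ irreducibility over $\F(\vec{x})$ transferred back via \Cref{lem:gauss}. The paper does not spell out the argument, and yours is a clean, complete version of the intended one.

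As a minor aside, the appeal to Gauss's Lemma is not strictly necessary here: a monic degree-$1$ polynomial $y - g(\vec{x}) \in \F[\vec{x}][y]$ is irreducible directly, since in any factorization into two non-units one factor must lie in $\F[\vec{x}]$, and comparing leading $y$-coefficients forces that factor to be a unit in $\F$. But since the paper already sets up Gauss's Lemma for this purpose, your choice is the more natural one in context.
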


\subsection{Arithmetic circuits and networks} \label{sec:pre:networks}

In this subsection, we recall the definitions of arithmetic circuits and arithmetic networks.
We start with arithmetic circuits.

\begin{definition} \label{def:algebraic circuit}
Let $\F$ be a field.
An \emphdef{arithmetic circuit} over $\F$ is a directed acyclic graph whose internal gates are labeled as addition or multiplication gates, whose input gates are labeled with variables $x_i$ or constants from $\F$, and which computes a polynomial in the natural way.
We say that an arithmetic circuit is \emphdef{homogeneous} if every gate of the circuit computes a homogeneous polynomial.
Additionally, we assume that gates labeled by field constants appear only in the bottom-most layer of an arithmetic circuit.\footnote{This simplifying assumption does not meaningfully lose generality, since constants in the bottom layer can be used in upper layers by propagating a constant element $\sigma$ using the operations $\sigma\times 1$ and $\sigma+0$.}
\end{definition}

We assume throughout the paper that all arithmetic circuits are alternating and have fan-in two. 
In addition to arithmetic circuits, we also need the notion of an \emph{arithmetic network}, first defined by \textcite{vzGathen_survey}.
Arithmetic networks endow arithmetic circuits with the ability to test if a computed value is zero and branch accordingly.

\begin{definition} [arithmetic networks] \label{def:arith-network}
Let $\F$ be a field.
An \emphdef{arithmetic network} over $\F$ is a directed acyclic graph together with the following additional data.
\begin{itemize}
	\item
	Internal gates of the network are labeled with either an arithmetic operation from $\set{+, -, \times, \div}$, a Boolean operation from $\set{\land, \lor, \neg}$, a test operation $\testequals 0$, or a selection operation $\select$.
	\item
	The input gates are labeled by variables $x_i$ that take values in $\F$, variables $y_j$ that take Boolean values in $\set{0,1}$, and by field constants $\alpha \in \F$.
	\item
	The gates of the circuit compute as follows.
	\begin{itemize}
		\item
		Arithmetic gates $\set{+,-,\times,\div}$ receive two arithmetic values as input and output the corresponding function of the inputs.
		If division by zero occurs, the output of the gate is undefined.
		\item
		Boolean gates $\set{\land, \lor, \neg}$ receive two Boolean values as input and output the corresponding function of the inputs.
		\item
		The test gate $\testequals 0$ receives an arithmetic value $\alpha \in \F$ as input and outputs a Boolean value that indicates whether $\alpha$ is zero or nonzero.
		\item
		The selection gate $\select$ receives one Boolean input $b$ and two arithmetic inputs $\alpha, \beta \in \F$ and computes the function
		\[
		\select(b, \alpha, \beta) \coloneqq
		\begin{cases}
			\alpha & \text{if $b = 0$} \\
			\beta & \text{if $b = 1$.}
		\end{cases}
		\]
	\end{itemize}
\end{itemize}
An arithmetic network computes a function $\F^n \times \set{0,1}^m \to \F^\nu \times \set{0,1}^\mu$.
The \emphdef{size} of a network is the number of gates in the network.
The \emphdef{depth} of a network is the length of the longest path from an input gate to the output gate.

When the Boolean inputs to the network and the outputs of the $\select$ gates are fixed, the arithmetic gates of the network compute a rational function $f / g \in \F(x_1,\ldots,x_n)$ of the arithmetic inputs.
The degree of this rational function $f/g$ is given by $\deg(f/g) \coloneqq \deg(f) + \deg(g)$.
The \emphdef{degree} of an arithmetic network is the maximum degree of any rational function computed by an arithmetic gate, where the maximum is taken over all fixings of the Boolean inputs and the outputs of the $\select$ gates in the network.
\end{definition}

As is typical in arithmetic circuit complexity, we will restrict our attention to networks of low degree, typically degree $\poly(n)$ where $n$ is the number of arithmetic inputs to the network.

Arithmetic networks clearly generalize arithmetic circuits, as networks can compute strictly more functions than circuits.
For polynomial functions, which are computable by both circuits and networks, the following remark shows that networks in fact provide no advantage over circuits, at least in the non-uniform setting.

\begin{remark} \label{rem:network vs circuit}
Let $\F$ be an algebraically closed field and let $f(\vec{x}) \in \F[\vec{x}]$ be a polynomial.
If there is an arithmetic network of size $s$ that computes the function $\F^n \to \F$ given by $\vec{\alpha} \mapsto f(\vec{\alpha})$, then there is an arithmetic \emph{circuit} of size $s$ that computes $f(\vec{x})$.
To see this, take the generic path through the network: assume all comparisons with zero return ``not equal to zero,'' unless the value being compared is identically zero.
This fixes all Boolean values appearing in the network; under this fixing, the network behaves as an arithmetic circuit instead.
The resulting arithmetic circuit computes a polynomial $g(\vec{x})$ that agrees with $f$ on a nonempty Zariski-open subset $U \subseteq \F^n$.
It is a basic fact of algebraic geometry (see, e.g., \cite[Chapter 1, Section 3.2]{Shafarevich13}) that if two polynomials agree as functions on a nonempty Zariski-open subset of $\F^n$, then they are the same polynomial.
This implies that the circuit obtained from the generic path in the network correctly computes the polynomial $f(\vec{x})$ on all inputs.
\end{remark}

Unfortunately, this method of eliminating the Boolean part of an arithmetic network will not apply to our reconstruction algorithms.
In that setting, we compute a polynomial using a uniform randomized arithmetic network (a model of computation whose definition appears in \Cref{sec:pre:randomized}), and it is less clear if a randomized arithmetic network can be efficiently converted into an equivalent arithmetic circuit.

\subsection{Uniform arithmetic circuits and networks} \label{sec:pre:uniform}

In this section we define our notions of \emph{uniform} arithmetic circuits and networks.
We first explain some general considerations, then formally define two notions of uniformity, and then show that there are functions that are hard on all but finitely many inputs for uniform circuits.

\subsubsection{Background} \label{sec:pre:uni:bg}

The data describing an arithmetic network can be divided into a Boolean part, which encompasses the structure and labeling (without the field constants) of the directed acyclic graph, and an arithmetic part, corresponding to the particular field constants used in the arithmetic network.
To impose a uniformity constraint on arithmetic networks, we must specify a notion of uniformity for both the Boolean part and the arithmetic part of the network.

It is fairly clear how this impacts the Boolean part of an arithmetic network: the underlying dag and its labeling should be computable by a Turing machine within some resource bound. It is less clear how uniformity interacts with the arithmetic part of the network description.
There are several notions of uniformity for the arithmetic part of the network considered in the literature.
\begin{enumerate}
\item The most restrictive form of uniformity requires the network to be constant-free, i.e., only the constant $1 \in \F$ is used by the network.
\item A more relaxed notion is what \textcite{vzGathen_survey} calls $\neq$-uniformity: the network can make use of a list of pairwise distinct field elements $(\alpha_1 = 1, \alpha_2, \alpha_3, \ldots)$, but the network is required to compute the desired function regardless of the precise values of $\alpha_2, \alpha_3, \ldots$.
For example, one can implement polynomial interpolation in a $\neq$-uniform manner (since many distinct field elements are needed, but their precise values are not important).
\item For feasible fields, a more general notion is the simplest one: Constants are allowed in the circuit, but the Turing machine is required to print them in the circuit's description. Indeed, this calls for considering some natural notion of representing field elements.
\item A different generalization of $\neq$-uniformity suggested by~\textcite{vzGathen_survey} allows one to specify polynomial equations or inequations that must be satisfied by the constants appearing in the network, where the (in)equations specifying the $i$-th constant $\beta_i$ are computed by a uniform arithmetic circuit that is allowed to use the first $i-1$ constants $\beta_1, \ldots, \beta_{i-1}$ in its description.\footnote{Formally, the numerical values of the constants $\beta_1, \ldots, \beta_{i-1}$ are not printed as part of the circuit's description.
	Instead, they are treated as symbolic variables, with the intended meaning that the variable $\beta_j$ corresponds to the $j$-th constant specified thus far.}
	\end{enumerate}
	
	The notion that our definitions below will use is essentially the third one above (i.e., the machine needs to print the constants labeling the circuit), but many of our results also hold if we allow the more relaxed notion of $\neq$-uniformity (see~\Cref{rem:pre:uni,rem:pre:nequni}).
	
	\subsubsection{$\P$-uniform circuits and networks} \label{sec:uni:p-uni}
	
	The first and more relaxed notion of uniformity that we consider is $\P$-uniformity, which essentially says that an arithmetic circuit or network can be printed in time that is polynomial in its size.
	
	\begin{definition} [$\P$-uniform circuits and networks] \label{def:p-uniform networks} \label{def:pre:uni}
Let $\F = \set{\F_n}_{n \in \N}$ be a sequence of fields and let $\calN = \set{\calN_n}_{n \in \N}$ be a family of arithmetic networks where $\calN_n$ is defined over $\F_n$. We say that $\calN$ is \emphdef{$\P$-uniform} if there is Turing machine $M$ that on input $1^n$ runs in time polynomial in the size of $\calN_n$ and outputs a description of $\calN_n$, including the graph structure of the network and the constants labeling gates in the bottom layer.
\end{definition}

If the network family $\calN = \set{\calN_n}_{n \in \N}$ in \Cref{def:p-uniform networks} is in fact a family of arithmetic circuits, then we refer to the family $\calN$ as a \emphdef{$\P$-uniform family of arithmetic circuits}. We also extend~\Cref{def:p-uniform networks} in the straightforward way to circuit families that can be printed in some \emph{fixed} polynomial time. That is, a circuit family is \emphdef{$n^c$-time-uniform} if it satisfies~\Cref{def:p-uniform networks} with a machine $M$ that runs in time at most $n^c$.

\begin{remark} \label{rem:pre:uni}
Our results are not sensitive to the choice of how to represent field elements in~\Cref{def:p-uniform networks}. For example, over fields of characteristic zero, our results hold if we only allow constants that can be represented as bit-strings; but they also hold if we consider a real-RAM-like model in which machines can store field elements in registers, perform operations on them, and print them. The crucial point in~\Cref{def:p-uniform networks} is uniformity, i.e. that the networks $\set{\calN_n}$ are printable by a uniform machine of constant description size (in whichever model of descriptions we are working with).
\end{remark}

\begin{remark} \label{rem:pre:nequni}
All of our results in the ``hardness $\Rightarrow$ randomness'' direction hold even if the required hardness is only for the more restrictive notion of $\neq$-uniform circuits (recall that in this notion a machine does not need to print field elements, so we can avoid the question of representing field elements). Specifically, recall that when the field sequence $\F=\set{\F_n}$ is feasible, $\P$-uniformity (as in~\Cref{def:pre:uni}) is a more relaxed notion than $\neq$-uniformity. In~\Cref{thm:main:zero,thm:main:finite} we only use the notion of $\P$-uniformity to model the networks against which we need hardness, and we can indeed replace this notion by $\neq$-uniformity in these results, yielding a weaker hardness assumption. (We can do this because the reconstruction procedures in~\Cref{thm:gkss-tarhsg,thm:ki:tarhsg} can be implemented by $\neq$-uniform networks.) The reason we define $\P$-uniformity as our model of choice for the paper, rather than $\neq$-uniformity, is only since the upper bound obtained via diagonalization in~\Cref{prop:pit:necessary} (and~\Cref{fact:circuits:hierarchy}) does not seem to be $\neq$-uniform, but rather only $\P$-uniform.
\end{remark}

\subsubsection{$\log^c$-uniform circuits and networks} \label{sec:pre:uniform:logc}

Our generators will rely on hard functions that satisfy a stricter notion of uniformity. Loosely speaking, the circuit-structure function, which receives names of gates and computes whether or not these gates are connected in the circuit, should be computable in time that is polynomial in the size of the gates' names (i.e., polylogarithmic in the circuit size).

\begin{definition}[$\log^c$-uniform circuit families]\label{def:suf-unif}

Let $\F=\{\F_{n}\}_{n\in \N}$ be a sequence of fields and let $c \in \N$.
We say that an arithmetic circuit family $\set{C_n}$ over $\F$ of size $s(n)$ and depth $\Delta(n)$ is \emphdef{$\log^c$-uniform} if the following holds.
\begin{enumerate}
	\item (Circuit-structure function.) For each $n\in\N$, let $\Phi_n$ be the function that gets as input $i\in [\Delta(n)]$ and $u,v,w\in \zo^{\log(s(n))}$, returns $1$ if gate $u$ in layer $i$ of $C_n$ is fed by gates $v$ and $w$ in layer $i-1$, and returns $0$ otherwise.
	Then the family of functions $\Phi = \set{\Phi_n}_{n \in \N}$ is computable by a $\P$-uniform family of Boolean formulas of size $(\log (s(n)))^c$.\footnote{For concreteness, we consider Boolean formulas with NAND gates of fan-in two (analogously to closely related definitions in~\cite{GKR15,Gol18,CT21}), and the precise choice of gate-basis does not matter for our results.}
	
	\item (Circuit-constants function.) For each $n\in\N$, let $\Psi_n$ be the function that gets as input $\vec{\Lambda}\in\F_{n}^n$ and $w\in \{0,1\}^{\log(s(n))}$ and outputs the value of the $w^{\text{th}}$ gate in the input layer of $C_n(\vec{\Lambda})$.\footnote{Recall that arithmetic circuits have gates labeled with constant field elements only in their bottom layer (see~\Cref{def:algebraic circuit}). That is, for a size-$s$ circuit, the bottom layer consists of $n$ input gates and at most $s-n$ gates labeled with various constants in the field. Accordingly, if $w\le n$ then $\Psi_n(\vec{\Lambda},w)=\vec{\Lambda}_w$, and otherwise $\Psi_n(\vec{\Lambda},w)$ is the constant that labels gate $w$ in the input layer.}
	Then, there is a $\P$-uniform family of arithmetic circuits $\{\Psi'_n\colon\F_{n}^n\times\F_{n}^{\log(s)}\rightarrow\F_{n}\}_{n \in \N}$ of size $n \cdot (\log(s(n)))^c$ and degree $n \cdot (\log(s(n)))^c$ that computes $\Psi=\{\Psi_n\}$.
	That is, $\Psi'_n$ agrees with $\Psi_n$ on all inputs in the set $\F_{n}^n\times\{0,1\}^{\log(s)}$.
\end{enumerate}
\end{definition}

We note that the assumption on the computability of the circuit-constants function above is quite ad hoc. We allowed arithmetic circuits for convenience, and since these can handle arbitrary field elements, but we could also replace this by requiring Boolean formulas akin to the circuit-structure function. The crucial points are only that the circuits/formulas will be of size $\polylog(s)$ and that they are arithmetizable as low-degree polynomials (i.e., if these are Boolean models, we need them to be formulas). 

\subsubsection{Lower bounds for uniform circuits on all but finitely many inputs} 

As explained in~\Cref{sec:int}, when considering uniform families of circuits with multiple output elements, there are functions that are hard for such circuits on \emph{all but finitely many inputs}. This fact follows from a simple diagonalization argument.

\begin{fact} [almost-all-inputs hierarchy for uniform arithmetic circuits] \label{fact:circuits:hierarchy}
For every integer $k\in\N$ and every sequence $\F = \set{\F_n}_{n \in \N}$ of fields, where the fields may be finite or infinite (e.g., we may have $\F_n = \C$ for all $n \in \N$), there is a $\P$-uniform family of arithmetic circuits $\set{C_n}_{n\in\N}$ with $n$ input gates and $n$ output gates such that the following holds. For every $n^k$-time-uniform family $\set{C'_n}$ of arithmetic circuits of size $n^k$ and all but finitely many $n\in\N$, for every $\vec{\alpha}=(\alpha_1,\ldots,\alpha_n) \in \F_n^n$ it holds that $C_n(\vec{\alpha})\ne C'_n(\vec{\alpha})$.
\end{fact}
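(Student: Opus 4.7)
The plan is a straightforward diagonalization that exploits the fact that $C_n$ has $n$ outputs: we dedicate the $i$-th output coordinate to diagonalizing against the $i$-th candidate $n^k$-time-uniform family, and the crucial ``$+1$ trick'' ensures inequality on \emph{every} input rather than merely on some input.

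\textbf{Construction.} Fix a standard enumeration $M_1, M_2, \ldots$ of Turing machines that on input $1^n$ attempt to print the description of an arithmetic circuit over $\F_n$ with $n$ input gates and $n$ output gates. The master machine that prints $C_n$ operates as follows. It simulates each of $M_1, \ldots, M_n$ on input $1^n$ for $n^k$ steps. If $M_i$ halts within this budget with a syntactically valid description of an arithmetic circuit $C'_{n,i}$ of size at most $n^k$ with $n$ inputs and $n$ outputs over $\F_n$, the master machine embeds a copy of $C'_{n,i}$ inside $C_n$; otherwise it embeds an arbitrary dummy circuit of size $1$. The $i$-th output gate of $C_n$ is then defined to be a single addition gate that computes (the $i$-th output of the embedded $C'_{n,i}$) $+\, 1$, using the constant $1 \in \F_n$ (which is available regardless of the field-representation model, cf.\ \Cref{rem:pre:uni}).

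\textbf{Uniformity.} The circuit $C_n$ contains at most $n$ subcircuits of size $n^k$, together with $n$ additional addition gates feeding from a constant-$1$ input, for a total size of $O(n^{k+1})$. The master machine runs in time $n^{O(k)}$, which is polynomial in the size of $C_n$, so $\{C_n\}_{n\in\N}$ is $\P$-uniform as required by \Cref{def:pre:uni}.

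\textbf{Diagonalization property.} Let $\{C'_n\}_{n \in \N}$ be any $n^k$-time-uniform family of arithmetic circuits of size $n^k$ over $\F$, and let $M_j$ be a Turing machine that witnesses its uniformity. For every $n \geq j$, $M_j$ is among the first $n$ machines in the enumeration, and running it for $n^k$ steps yields exactly $C'_n$. Consequently, by construction, the $j$-th output coordinate of $C_n(\vec{\alpha})$ equals $(C'_n(\vec{\alpha}))_j + 1$, which differs from $(C'_n(\vec{\alpha}))_j$ for every $\vec{\alpha} \in \F_n^n$ (since $1 \neq 0$ in any field). Hence $C_n(\vec{\alpha}) \neq C'_n(\vec{\alpha})$ for every $\vec{\alpha} \in \F_n^n$ and every $n \geq j$, proving the claim. (If $C'_n$ happens to have a different number of inputs or outputs than $C_n$ for some $n$, the inequality as tuples of output elements is automatic for those $n$.)

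\textbf{Main obstacle.} There is no serious obstacle; the argument is genuinely a one-line diagonalization. The only subtlety worth flagging is that the ``$+1$'' modification must differ from the identity map on every input, which is what elevates a standard hierarchy theorem (nonequality on \emph{some} input) to the stronger almost-all-inputs form (nonequality on \emph{every} input). The remaining details---parsing the output of each $M_i$, handling invalid outputs, ensuring the constant $1$ is usable in the circuit---are routine bookkeeping.
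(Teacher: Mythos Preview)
Your proposal is correct and matches the paper's proof essentially line by line: both enumerate the first $n$ machines, run each for $n^k$ steps to obtain candidate circuits, and set the $i$-th output of $C_n$ to be the $i$-th output of the $i$-th candidate plus $1$, yielding size $O(n^{k+1})$ and the almost-all-inputs inequality via the ``$+1$'' trick.
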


\bproof
The proof follows by simple diagonalization. We define the Turing machine that prints $C=\set{C_n}$ as follows. 
Given $1^n$, the machine enumerates the first $n$ Turing machines (according to some efficient enumeration), simulates each of them for $n^k$ steps, and obtains $n$ arithmetic circuits over $\F_n$ of size $n^k$ with $n$ input gates and $n$ output gates, denoted $A_n^{(1)},\ldots,A_n^{(n)}$. 
(If one of the $n$ Turing machines does not print such an arithmetic circuit in $n^k$ steps, we define the corresponding arithmetic circuit in some trivial way.)

For each $i\in[n]$, the machine modifies $A_n^{(i)}$ to a circuit $B_n^{(i)}$ that only computes the $i^{\text{th}}$ output gate of $A_n^{(i)}$. 
The machine prints a circuit $C_n$ that gets input $\vec{x}$ and outputs the $n$ elements
\[
B_n^{(1)}(\vec{x})+1, B_n^{(2)}(\vec{x})+1, \ldots , B_n^{(n)}(\vec{x})+1.
\]

Note that $C_n$ is indeed of polynomial size (i.e., size $O(n^{k+1})$). Also, for every $n^k$-time-uniform circuit family $\set{C'_n}$ printed by the $i^{\text{th}}$ machine, for all inputs $\vec{\alpha}$ of length at least $i$, we have $C'_n(\vec{\alpha})\ne C_n(\vec{\alpha})$; this is the case since otherwise we would have $B^{(i)}_n(\vec{\alpha})=A^{(i)}_n(\vec{\alpha})_i=C_n(\vec{\alpha})_i=B^{(i)}_n(\vec{\alpha})+1$, a contradiction. 
\eproof

\subsection{Randomized arithmetic networks} \label{sec:pre:randomized}

In this section we define two models of \emph{randomized} arithmetic networks. As explained in~\Cref{sec:int}, the first and simpler model will consist of networks with $\pit$ gates (see~\Cref{sec:pre:randomized:pit}) and the second and more general model consists of networks with random inputs (see~\Cref{sec:pre:randomized:inputs}).

\subsubsection{Arithmetic networks with $\pit$ gates} \label{sec:pre:randomized:pit}

In the following definition of networks with $\pit$ gates, the gates receive a description of an arithmetic circuit $C$, and compute $\pit$ (as a Boolean-valued function) on $C$. That is:

\begin{definition}[arithmetic networks with $\pit$ gates] \label{def:arith network PIT gates}
An \emphdef{arithmetic network with $\pit$ gates} $\calC$ is an arithmetic network that, in addition to  arithmetic, Boolean, test, and selection gates described in \Cref{def:arith-network}, has \emph{$\pit$} gates that have unbounded fan-in and implement the following functionality: each $\PIT$ gate receives as input a universal encoding\footnote{For a formal definition of a universal encoding, see~\Cref{subsec:props-arith-networks}.} $\vec{\alpha}$ of an arithmetic circuit $C$ and outputs a Boolean value $v$ such that $v = 1$ if and only if $C$ computes the zero polynomial.
\end{definition}

We remark that the choice of using universal encodings in the definition above over standard encodings (see \Cref{def:standard encoding}) is not arbitrary; the former guarantees that an encoding of length $n$ can only describe circuits computing polynomials of degree at most $n$, %
whereas this may not be the case for the latter.%

The computation of a network with $\pit$ gates on any given input is defined in the straightforward way (i.e., since the value of a $\pit$ gate on an input encoding of a circuit $C$ is well-defined). 
And indeed, this definition  will be, conceptually, our primary focus.
However, in order to design algorithms that meet \Cref{def:arith network PIT gates}, it will be convenient for us to work with auxiliary subroutines that compute \emph{relations}, i.e.,  we consider these subroutines to be correct if they print any output in some predetermined set of valid outputs on this input (i.e., rather than insisting on a single canonical output). %

Specifically, in the following definition, we consider a relation (representing a search problem) and define the network as correct on input $(\vec{\alpha},\vec{\beta})$ if it outputs any element in the relation.

\begin{definition} [computation of arithmetic relations via networks with $\pit$ gates] \label{def:arith-network-pit-gates-relations}
Let $C(\vec{x},\vec{y})$ be an arithmetic network with $\pit$ gates and $R\subseteq (\F^n\times\{0,1\}^m)\times(\F^{\nu}\times\zo^{\mu})$ be a relation. For any $(\vec{\alpha},\vec{\beta})\in \F^n\times\{0,1\}^m$, define $R(\vec{\alpha},\vec{\beta}) \coloneqq \set{ (\vec{\sigma},\vec{\sigma}')\in \F^{\nu}\times\zo^{\mu} : ((\vec{\alpha},\vec{\beta}),(\vec{\sigma},\vec{\sigma}')) \in R}$.
We say that \emphdef{$C$ computes $R$ at input $(\vec{\alpha},\vec{\beta})$} if $C(\vec{\alpha},\vec{\beta})\in R(\vec{\alpha},\vec{\beta})$, and we say that \emphdef{$C$ computes $R$} if this property holds for every input $(\vec{\alpha},\vec{\beta})$.
\end{definition}

We stress again that networks with $\pit$ gates are a relatively restrictive model of randomized networks (which means that hardness assumptions for it are relatively weak hardness assumptions). In particular, if $\pit$ can be solved by standard arithmetic networks, then arithmetic networks with $\pit$ gates can be simulated by standard arithmetic networks (see, e.g., the proof of~\Cref{prop:pit:necessary}).

\subsubsection{Arithmetic networks with random inputs} \label{sec:pre:randomized:inputs}

In this subsection, we define a randomized network model that is a more general notion than \Cref{def:arith network PIT gates} that equips arithmetic networks with an additional designated set of ``random inputs'' (that come from a prescribed finite set in the underlying field) to perform its computation, while still morally being algebraic in nature.
Intuitively, the idea is that when compared to \Cref{def:arith network PIT gates}, it is no longer restricted to using its randomness {\it only} to solve $\pit$ for auxiliary circuits---instead, it may now use it to perform {\it other} auxiliary computation.
However, it is still subject to the constraint that it eventually computes the correct value with high probability---and in particular, with error probability no more than the corresponding error bound guaranteed by \Cref{lem:sz} (for the randomized procedure that solves $\pit$ by simply evaluating its input polynomial at a random point). %

This slightly more general notion of randomized arithmetic networks is what is referenced in \Cref{thm:int:main}, as well as its corresponding formal statement in \Cref{thm:main:finite}.
Jumping ahead, the reason for assuming hardness against this subtly-generalized model there---instead of simply working with \Cref{def:arith network PIT gates}---is because the proof of correctness of our corresponding ($\ki$-based) targeted hitting-set generator relies on a reconstruction procedure that uses Kaltofen's \cite{Kaltofen1989FactorizationOP} classical factoring algorithm as a subroutine.
Indeed, this subroutine (described in detail in \Cref{subsec:kaltofen-arith-net}) uses its input randomness in a manner that does not seem to be easily simulated using $\pit$ gates (see also \Cref{prob:int:rand} in \Cref{sec:open}).

\begin{definition}[randomized arithmetic networks] \label{def:randomized arith network}
A \emphdef{randomized arithmetic network} $C$ is an arithmetic network that, in addition to arithmetic inputs $\vec{x}$ and Boolean inputs $\vec{y}$ (now referred to as ``primary" inputs), receives a supplementary set $\vec{r}$ of arithmetic inputs (which we term the ``random" set of arithmetic inputs).
\end{definition}

\Cref{def:randomized arith network} defines only the syntax of randomized arithmetic networks.
We next define the semantics of randomized arithmetic networks. %

\begin{definition} [computation by randomized arithmetic networks] \label{def:rand-arith-network}
Let $C((\vec{x},\vec{y}),\vec{r})$ be an arithmetic network with primary inputs $\vec{x},\vec{y}$ and supplementary (``random'') inputs $\vec{r}$. 
We say that \emphdef{$C$ computes value $\vec{\gamma}$ at primary input $(\vec{\alpha},\vec{\beta})$ with error degree $e$} if for every finite set $S\subset \F$, when each random input $r_i$ is chosen independently from $S$, with probability at least $1-e/|S|$ we have $C((\vec{\alpha},\vec{\beta}),\vec{r})=\vec{\gamma}$.
When $\F$ is finite, we may drop the qualifier ``error degree $e$'', in which case we consider $S=\F$ and require probability at least $2/3$.
\end{definition}

Analogous to \Cref{def:arith-network-pit-gates-relations}, we next define the notion of computing a relation by a randomized arithmetic network.

\begin{definition} [computation of arithmetic relations via randomized arithmetic networks] \label{def:rand-arith-network:relations}
Let $C((\vec{x},\vec{y}),\vec{r})$ be an arithmetic network with primary inputs $\vec{x},\vec{y}$ and supplementary (``random'') inputs $\vec{r}$.
Let $R\subseteq (\F^n\times\{0,1\}^m)\times(\F^{\nu}\times\zo^{\mu})$ be a relation, and define $R(\vec{\alpha},\vec{\beta}) \coloneqq \set{ (\vec{\sigma},\vec{\sigma}')\in \F^{\nu}\times\zo^{\mu} : ((\vec{\alpha},\vec{\beta}),(\vec{\sigma},\vec{\sigma}')) \in R}$ for every $\vec{\alpha},\vec{\beta}$.

We say that \emphdef{$C$ computes $R$ at primary input $(\vec{\alpha},\vec{\beta})$ with error degree $e$} if for every finite set $S \subseteq \F$, when each random input $r_i$ is chosen independently from $S$, with probability at least $1-e/|S|$, we have $C((\vec{\alpha},\vec{\beta}),\vec{r})\in R(\vec{\alpha},\vec{\beta})$. 
We say that $C$ computes $R$ with \emphdef{error degree} $e$ if this property holds for every primary input. 
When $\F$ is finite, we may drop the qualifier ``error degree $e$'', in which case we consider $S=\F$ and require probability at least $2/3$. 
\end{definition}

\subsubsection{Concatenation and composition of randomized arithmetic networks}

In this section we define notions of concatenating randomized arithmetic networks and of composing randomized arithmetic networks, and state some basic properties of these operations. The two notions are relevant both for~\Cref{def:arith network PIT gates} and for~\Cref{def:randomized arith network}. However, since the operations and their properties are obvious for the model presented in the former definition, we will only present the operations and their properties formally while referring to the latter definition.

\begin{definition}[Concatenation of randomized arithmetic networks]\label{def:concat-arith-networks}
Let $\calR$ and $\calS$ be two randomized arithmetic networks that compute relations $R_\calR\subseteq(\F^{n_{\calR}}\times \zo^{m_\calR})\times (\F^{\nu_\calR}\times \zo^{\mu_\calR})$ and $R_\calS\subseteq (\F^{n_\calS}\times \zo^{m_\calS})\times (\F^{\nu_\calS}\times \zo^{\mu_\calS})$, and take $r_\calR, r_\calS$ many random arithmetic inputs, respectively. 
The \emphdef{concatenated arithmetic network $(\calR,\calS)$} computes a relation 
\[
(R_\calR, R_\calS)\subseteq (\F^{n_\calR}\times \zo^{m_\calR})\times (\F^{n_\calS}\times \zo^{m_\calS})\times (\F^{\nu_\calR}\times \zo^{\mu_\calR})\times (\F^{\nu_\calS}\times \zo^{\mu_\calS})
\]
as follows. 
The network $(\calR, \calS)$ takes as primary arithmetic input a vector $(\vec{\alpha}_1,\vec{\alpha}_2) \in \F^{n_\calR}\times \F^{n_\calS}$, a Boolean vector $(\vec{b}_1,\vec{b}_2)\in \zo^{m_\calR}\times \zo^{m_\calS}$, and random arithmetic inputs  $\vec{\gamma}_1\in \F^{r_\calR}$ and $\vec{\gamma}_2\in\F^{r_\calS}$, and on input $(\vec{\alpha}_1, \vec{b}_2,\vec{\alpha}_2, \vec{b}_2)$ it outputs  $(\calR(\vec{\alpha}_1, \vec{b}_2,\vec{\gamma}_1),\calS(\vec{\alpha}_2, \vec{b}_2,\vec{\gamma}_2))$.
\end{definition}

\begin{proposition}[Degree and error degree under concatenation]\label{prop:concat-arith-networks}
Let $\calR$ and $\calS$ be randomized arithmetic networks as in \Cref{def:concat-arith-networks}. 
Let $d_\calR$ and $d_\calS$ denote the degrees of $\calR$ and $\calS$, respectively; let $e_\calR$ and $e_\calS$ denote the error degrees of $\calR$ and $\calS$, respectively.
Then the degree of the concatenated arithmetic network $(\calR,\calS)$ is bounded by $\max\{d_\calR, d_\calS\}$ and the error degree of $(\calR, \calS)$ is bounded by $e_\calR + e_\calS$. 
\end{proposition}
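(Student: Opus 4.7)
The plan is to handle the two bounds separately, both by direct inspection of the concatenated network, exploiting the fact that $\calR$ and $\calS$ act on disjoint sets of arithmetic inputs (both primary and random) and disjoint sets of Boolean inputs.

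For the degree bound, I would argue that the set of arithmetic gates of $(\calR,\calS)$ is exactly the disjoint union of the arithmetic gates of $\calR$ and of $\calS$, with no wires crossing between the two sub-networks. Fix any assignment to the Boolean inputs of $(\calR,\calS)$ and to the outputs of its $\select$ gates. Such a fixing restricts to fixings of the Boolean inputs and $\select$-gate outputs of $\calR$ and $\calS$ individually. Under this fixing, every arithmetic gate $g$ of $(\calR,\calS)$ belongs either to $\calR$ or to $\calS$; in either case the rational function it computes in the joint network coincides (as a rational function) with the one it computes in its parent sub-network, since the other sub-network contributes no inputs to it. Taking the maximum over all such fixings yields $\deg(\calR,\calS)\le\max\{d_\calR,d_\calS\}$.

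For the error-degree bound, fix any finite set $S\subseteq\F$ and sample each random input of $(\calR,\calS)$ independently from $S$. The random inputs $\vec{\gamma}_1$ used by $\calR$ and $\vec{\gamma}_2$ used by $\calS$ are disjoint, hence jointly independent. On primary input $(\vec{\alpha}_1,\vec{b}_1,\vec{\alpha}_2,\vec{b}_2)$, the output of $(\calR,\calS)$ lies in $(R_\calR,R_\calS)(\vec{\alpha}_1,\vec{b}_1,\vec{\alpha}_2,\vec{b}_2)$ whenever both $\calR(\vec{\alpha}_1,\vec{b}_1,\vec{\gamma}_1)\in R_\calR(\vec{\alpha}_1,\vec{b}_1)$ and $\calS(\vec{\alpha}_2,\vec{b}_2,\vec{\gamma}_2)\in R_\calS(\vec{\alpha}_2,\vec{b}_2)$. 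By~\Cref{def:rand-arith-network:relations}, the complement of each of these events has probability at most $e_\calR/|S|$ and $e_\calS/|S|$, respectively. A union bound over the two failure events gives a total failure probability of at most $(e_\calR+e_\calS)/|S|$, establishing that the error degree of $(\calR,\calS)$ is at most $e_\calR+e_\calS$.

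Neither part presents a real obstacle; the whole statement is essentially bookkeeping that justifies treating concatenation as a "product" operation for the two parameters of interest. The one subtlety worth being careful about in writing this up is to verify that fixings of Boolean values in the joint network really do factor as independent fixings in the two sub-networks (so that the degree maximization also factors), which is immediate from the syntactic definition of concatenation in~\Cref{def:concat-arith-networks}.
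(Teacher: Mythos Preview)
Your proposal is correct and follows exactly the same approach as the paper: the paper's proof is a two-line sketch saying the degree bound is ``straightforward'' and the error-degree bound ``follows from a simple union bound,'' which is precisely what you have written out in full.
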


\bproof[Proof Sketch]
The bound for the degree is straightforward.
The bound for the error degree follows from a simple union bound.
\eproof

We also need a notion of composition for randomized arithmetic networks.
We formalize this in the following definition, and then subsequently bound the degree and error degree of the composition.

\begin{definition}[Composition of randomized arithmetic networks]\label{def:comp-arith-network}
Let $\calR$ and $\calS$ be two randomized arithmetic networks that compute relations $R_\calR\subseteq(\F^{n_{\calR}}\times \zo^{m_\calR})\times (\F^{\nu_\calR}\times \zo^{\mu_\calR})$ and $R_\calS\subseteq (\F^{n_\calS}\times \zo^{m_\calS})\times (\F^{\nu_\calS}\times \zo^{\mu_\calS})$, and take $r_\calR, r_\calS$ many random arithmetic inputs, respectively. 
Then, we say that the $\calR$ and $\calS$ (as an \emph{ordered} pair) are \emphdef{composable} if $\nu_\calS = n_\calR$ and $\mu_\calS = m_\calR$. 

If $\calR$ and $\calS$ are composable, we define the relation
\begin{align*}
	R_\calR \circ R_\calS \subseteq (\F^{n_\calS} \times \zo^{m_\calS}) \times (\F^{\nu_\calR} \times \zo^{\mu_\calR})
\end{align*}
to be the set of all tuples $((\vec{\alpha}_1, \vec{b}_1), (\vec{\alpha}_3, \vec{b}_3))$ such that there exists $(\vec{\alpha}_2, \vec{b}_2)$ for which both $R_{\calR}((\vec{\alpha}_1, \vec{b}_1), (\vec{\alpha}_2, \vec{b}_2))$ and $R_{\calS}((\vec{\alpha}_2, \vec{b}_2), (\vec{\alpha}_3, \vec{b}_3))$ are satisfied.
The \emphdef{composition $\calR\circ\calS$} computes the relation $R_\calR\circ R_\calS$ as follows: it takes as primary arithmetic input a vector $\vec{\alpha} \in \F^{n_\calS}$, a Boolean vector $\vec{b}\in \zo^{m_\calS}$, and random arithmetic inputs $\vec{\gamma}_1\in \F^{r_\calR}$ and $\vec{\gamma}_2\in\F^{r_\calS}$, and outputs $\calR(\calS(\vec{\alpha},\vec{b},\vec{\gamma}_2),\vec{\gamma}_1)$.
\end{definition}

\begin{proposition}[Degree and error degree under composition]\label{prop:error-deg-comp}
Let $\calR$ and $\calS$ be composable randomized arithmetic networks as in \Cref{def:comp-arith-network}. Furthermore, let $d_\calR$ and $d_\calS$ denote the degrees of $\calR$ and $\calS$, respectively; let $e_\calR$ and $e_\calS$ denote the error degrees of $\calR$ and $\calS$, respectively.
Then the degree of the composed arithmetic network $\calR\circ \calS$ is bounded by the product $d_\calR\cdot d_\calS$ and the error degree of $\calR \circ \calS$ is bounded by $e_\calR + e_\calS$.
\end{proposition}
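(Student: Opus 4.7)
The plan is to handle the two bounds independently: first the degree bound, which is purely syntactic, and then the error degree bound, which is a probabilistic union bound exploiting the independence of the two blocks of random inputs.

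For the degree bound, I would first fix any assignment to the Boolean inputs and select-gate outputs of $\calR \circ \calS$, which induces corresponding fixings for $\calR$ and $\calS$ individually. Under such a fixing, every arithmetic gate of $\calR$ computes a rational function $h_i$ of $\calR$'s arithmetic inputs, and every arithmetic gate of $\calS$ computes a rational function $\phi_j$ of $\calS$'s arithmetic inputs, with $\deg(h_i)\le d_\calR$ and $\deg(\phi_j)\le d_\calS$. The key auxiliary claim is that if $h = N/D$ and $h' = N'/D'$ are written as fractions, then the rational functions $h\pm h', h\cdot h', h/h'$ (computed by bringing to common denominator or cross-multiplying in the obvious way) all satisfy $\deg(\cdot)\le \deg(h)+\deg(h')$; this is verified by a short case analysis using the identity $\deg(f/g)=\deg(f)+\deg(g)$. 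Applied to $\calR$, this claim gives an inductive interpretation of the degree of $h_i$ as a bound that comes from the structure of the computation from the inputs of $\calR$. The same claim applied to $\calR\circ\calS$, where the inputs of $\calR$ are replaced by the $\phi_j$'s, then yields the bound $\deg(h_i\circ(\phi_1,\dots,\phi_n))\le \deg(h_i)\cdot d_\calS$ by induction on the topological order of the gates (inputs contribute $d_\calS$, constants contribute $0$, and arithmetic operations preserve the ratio to $\deg(h_i)$). Taking the maximum over all gates, and noting that gates inside $\calS$ already have degree at most $d_\calS\le d_\calR\cdot d_\calS$, gives the stated degree bound.

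For the error degree bound, I would fix the primary input $(\vec{\alpha},\vec{b})\in \F^{n_\calS}\times\zo^{m_\calS}$ and a finite set $S\subseteq\F$, and then argue by a union bound over two failure events, using that $\vec{\gamma}_1$ and $\vec{\gamma}_2$ are drawn independently from $S^{r_\calR}$ and $S^{r_\calS}$ respectively. Let $\calE_\calS$ denote the event that $\calS((\vec{\alpha},\vec{b}),\vec{\gamma}_2)\notin R_\calS((\vec{\alpha},\vec{b}))$; by the error-degree hypothesis for $\calS$, we have $\Pr[\calE_\calS]\le e_\calS/|S|$. Conditioned on any fixed outcome $(\vec{\alpha}_2,\vec{b}_2)\in R_\calS((\vec{\alpha},\vec{b}))$ of $\calS$ on this input, the hypothesis on $\calR$ (applied with primary input $(\vec{\alpha}_2,\vec{b}_2)$) and the independence of $\vec{\gamma}_1$ from $\vec{\gamma}_2$ together imply that the event $\calE_\calR$ that $\calR$ fails to output an element of $R_\calR((\vec{\alpha}_2,\vec{b}_2))$ occurs with probability at most $e_\calR/|S|$.

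Combining by a union bound, $\Pr[\calE_\calS \cup \calE_\calR]\le (e_\calR+e_\calS)/|S|$, and whenever neither bad event occurs the composed output satisfies $R_\calR\circ R_\calS$ by construction. This establishes the error-degree bound. I don't expect either step to be a serious obstacle; the only mildly subtle point is making sure that the degree bookkeeping treats the rational-function representations consistently (so that the additive and multiplicative rules hold without reduction to lowest terms), and that the conditioning in the error-degree argument does not disturb the distribution of $\vec{\gamma}_1$, which is guaranteed by the independence of the two random-input blocks in the syntactic definition of composition.
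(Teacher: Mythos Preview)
Your error-degree argument matches the paper's exactly: both condition on $\calS$ succeeding, then apply the error-degree bound for $\calR$ on the (now fixed) $\calS$-output using independence of $\vec{\gamma}_1$ from $\vec{\gamma}_2$, and combine. The paper writes this as a product of success probabilities $(1-e_\calR/|S|)(1-e_\calS/|S|)\ge 1-(e_\calR+e_\calS)/|S|$, you as a union bound over two failure events; these are equivalent. For the degree bound, the paper simply declares it ``straightforward'' with no further argument, so your treatment is already more detailed than the paper's.

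One technical caveat on your degree argument: the auxiliary claim that $\deg(h\pm h')\le\deg(h)+\deg(h')$ is not correct under the paper's convention $\deg(f/g)=\deg(f)+\deg(g)$ once nontrivial denominators appear---for instance $h=x$ and $h'=1/x$ give $h+h'=(x^2+1)/x$ of degree $3>2$. This breaks your inductive step at addition gates whenever $\calS$'s outputs are genuine rational functions rather than polynomials. The paper does not confront this subtlety either (it just says ``straightforward''), and in all of the paper's applications of this proposition the degree bounds are stated with polynomial slack, so the issue is immaterial in context.
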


\bproof
The bound for the degree is straightforward. To show the bound on the error degree of the composition, from \Cref{def:rand-arith-network:relations}, it suffices to show that for every input $(\vec{\alpha}, \vec{b}) \in \F^{n_\calS} \times \set{0,1}^{m_\calS}$ and any finite set $S\subset\F$,
\begin{equation}
\Pr_{\substack{\vec{\gamma}_1 \gets S^{r_\calR},\\ \vec{\gamma}_2 \gets S^{r_\calS}}}\sbr{\calR\circ\calS(\vec{\alpha}, \vec{b}, (\vec{\gamma}_1,\vec{\gamma}_2)) \not\in R_\calR\circ R_\calS(\vec{\alpha}, \vec{b})} \le \frac{e_\calR + e_\calS}{|S|}.
\end{equation}
Consider any fixed input $(\vec{\alpha}, \vec{b}) \in \F^{n_\calS} \times \set{0,1}^{m_\calS}$. %
Let $E_1$ denote the event that $\calR\circ\calS$ outputs in the relation $R_{\calR}\circ R_{\calS}$ on input $(\vec{\alpha},\vec{b})$, and $E_2$ denote the event that  $\calS$ outputs in the relation $R_{\calS}(\vec{\alpha},\vec{b})$ (i.e., computes ``correctly''). 
Note that again from \Cref{def:rand-arith-network:relations}, we have that $\Pr[E_1|E_2]\geq 1 - e_\calR/|S|$ and $\Pr[E_2]\geq 1 - e_\calS/|S|$. 
Therefore, we have
\[
\Pr[E_1] \geq \Pr[E_1\cap E_2] = \Pr[E_1|E_2]\cdot \Pr[E_2] \geq \left(1 - \frac{e_\calR}{|S|}\right)\left(1 - \frac{e_\calS}{|S|}\right) \geq 1 - \frac{e_\calR}{|S|} - \frac{e_\calS}{|S|}.
\]
Since this is true for an arbitrary choice of a primary input to $\calR\circ\calS$, we conclude that its error degree is at most $e_\calR + e_\calS$.
\eproof

\subsection{The relation to BSS machines} \label{sec:pre:bss}

The $\pit$ algorithms in~\Cref{thm:int:main:zero,thm:int:main} work assuming lower bounds for $\P$-uniform randomized networks, and in fact even for $\neq$-uniform randomized networks as mentioned in~\Cref{rem:pre:nequni}. However, our proofs do not heavily capitalize on the specific features of this computational model, and we suspect that the $\pit$ algorithms also work assuming lower bounds for other reasonable models of uniform randomized arithmetic computation. The key point is that the reconstruction procedures we present in~\Cref{sec:gkss,sec:ki} can also be implemented in other models of uniform randomized arithmetic computation.\footnote{The main non-trivial operation that the reconstruction procedures use is branching according to zero-tests; that is, an operation of the form ``if $v=0$ continue with arithmetic procedure $A$, and if $v\ne0$ continue with arithmetic procedure $B$'', where $v$ is the result of an intermediary arithmetic computation.}

In particular, we believe that our $\pit$ algorithms should work assuming lower bounds for the more standard model of (randomized) constant-free Blum--Shub--Smale machines (BSS machines)~\cite{BSS89}. This is because constant-free BSS machines are seemingly at least as strong as $\neq$-uniform networks (and ditto if we allow randomness or $\pit$ gates/oracle in both models), and thus hardness for the former seems to be a stronger assumption than hardness for the latter.

Uniform networks and BSS machines are very similar in spirit (see below). The reason we work with networks in this paper is the conceptually clearer comparison of the upper-bounds to the lower-bounds (i.e., both refer to circuits or networks), as well as the clean converse direction in~\Cref{thm:int:pit:necessary}. An additional (mostly cosmetic) benefit in networks is a clearer separation between auxiliary Boolean inputs and arithmetic inputs, especially in intermediate auxiliary computations in the reconstruction procedures.\footnote{ In contrast to networks, the inputs to a BSS machine are all from the underlying ring. Thus, if we want to allow for subroutines that operate only on Boolean data, it becomes necessary to speak of \emph{partial} functions computed by BSS machines.}

\paragraph{More detail about BSS machines.}
Recall that a BSS machine receives a vector of elements from a ring $R$ as input, performs ring operations on its input and storage (which consists of ring elements) and then outputs another vector of elements from $R$. In addition to ring operations, a BSS machine may also perform equality tests with zero (similarly to a network) to decide which ring operations to perform next; over ordered rings inequality tests with zero are often also allowed (i.e., tests of the form ``$v\ge0$?'', which are not allowed in networks). The control flow of the machine is specified by a finite directed graph, just as a Turing machine is specified by a finite collection of states and transitions. The machine may also be supplied with a finite list of constants from $R$.

Randomized BSS machines are slightly more subtle, and we need to restrict the constants allowed in the machine's description (otherwise $\pit$ can be solved in polynomial time, by using ``unreasonable'' real constants to represent non-uniform advice; see~\cite[Chapter 17, Theorem 7]{BCSS98}). As mentioned above, we can simply consider constant-free BSS machines, which already seem strong enough to simulate $\neq$-uniform networks. To augment the BSS machine with randomness, we either allow oracle calls to $\pit$ (analogously to networks with $\pit$ gates) or allow random field elements (analogously to randomized networks).

For more details on BSS machines, we refer the reader to~\textcite{BCSS98,mm97}, and in particular to~\cite[Chapter 17]{BCSS98} for the randomized variant.

\subsection{Universal arithmetic circuits and networks}\label{subsec:props-arith-networks}

In this subsection, we define two notions of encodings for  arithmetic circuits and observe that it is possible to uniformly construct arithmetic networks that can evaluate any circuit on a given point when given its encoding as input. These notions are repeatedly used by the reconstruction networks of our main results.

\begin{definition}[Standard encoding of an arithmetic circuit] \label{def:standard encoding}
Let $\F$ be a field and let $\Phi$ be an algebraic circuit over $\F$.
We say that a tuple $(\vec{\alpha}, \vec{b}) \in \F^n \times \zo^m$ is a \emphdef{standard encoding} or \emphdef{standard description} of $\Phi$ if $\vec{b}$ is a bit-string that describes the wiring of $\Phi$ and the type of each gate and $\vec{\alpha}$ is a vector of field elements where $\alpha_i$ is the $i$-th field constant used by the circuit $\Phi$.
\end{definition}

\begin{proposition}\label{prop:short-desc}
If $C$ is an arithmetic circuit of size $s$, then $C$ admits a standard encoding of size $O(s \log s)$.
\end{proposition}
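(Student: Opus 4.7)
The plan is a direct counting argument, broken into the two components of a standard encoding as laid out in \Cref{def:standard encoding}. First I would bound $|\vec{\alpha}|$, the list of field constants. By the convention in \Cref{def:algebraic circuit}, constant-labeled gates appear only in the bottom-most layer of an arithmetic circuit, so the number of distinct constants used is at most the number of bottom-layer gates, which is at most $s$. Hence $|\vec{\alpha}| \le s$, contributing $s$ ``units'' to the encoding size (where each field element is counted as a unit in $\vec{\alpha}$).

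Next I would bound $|\vec{b}|$, the Boolean description of the wiring and the gate labels. Assign each of the $s$ gates a unique identifier in $\{0,1\}^{\lceil \log s \rceil}$. Since (as stated immediately after \Cref{def:algebraic circuit}) we assume circuits are alternating and of fan-in two, describing a single gate requires: $O(1)$ bits for the gate label (one of $\{+,\times\}$ for internal gates, or a marker indicating an input variable $x_i$, or a marker indicating that it is a constant gate along with a pointer of $O(\log s)$ bits into the list $\vec{\alpha}$); plus two pointers of $\lceil \log s \rceil$ bits each to identify the two children of the gate. Summing over all $s$ gates and adding a marker identifying the output gate gives $|\vec{b}| = O(s \log s)$.

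Combining the two bounds, the total size of the encoding is $|\vec{\alpha}| + |\vec{b}| = s + O(s \log s) = O(s \log s)$, as claimed. There is no real obstacle here; the only subtlety worth flagging is the bookkeeping convention that field constants sit only at the bottom layer (so that $|\vec{\alpha}| \le s$ rather than a potentially larger quantity), and the fact that pointers into a universe of $s$ gates require exactly $\lceil \log s \rceil$ bits, which is what produces the $\log s$ factor. No nontrivial construction is required beyond choosing an arbitrary indexing of the gates.
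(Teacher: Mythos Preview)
Your proposal is correct and follows essentially the same approach as the paper: bound the number of field constants by $s$, then bound the Boolean part by $O(s\log s)$ by observing that each of the $s$ gates needs $O(1)$ bits for its type plus $O(\log s)$ bits per incoming wire, with fan-in two giving $O(s)$ wires total.
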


\bproof
For each gate $v$ in $C$, the type of $v$ can be specified with $O(1)$ bits, so the types of all gates can be specified with $O(s)$ bits.
Each wire in $C$ can be described by $O(\log s)$ bits.
As we assume our circuits to have fan-in two, a circuit of size $s$ necessarily has $O(s)$ wires, so we can describe the wires of $C$ using $O(s \log s)$ bits.
Finally, a circuit of size $s$ uses at most $s$ field constants.
Thus there is a standard description $(\vec{\alpha}, \vec{b}) \in \F^n \times \zo^m$ of $C$ of length $n + m \le O(s \log s)$.
\eproof

\begin{proposition}[Arithmetic networks can evaluate arithmetic circuits] \label{prop:universal-arith-network}
Let $\F$ be any field and let $d : \N \to \N$ be a time-constructible function.
There is a $\P$-uniform family of arithmetic networks of size $\poly(n,s,d(n))$ and degree $\poly(n,d(n))$ over $\F$ that takes as input (i) a standard encoding of an $n$-variate arithmetic circuit $C$ over $\F$ of size $s$ and formal degree $d(n)$ and (ii) a vector $\vec{\alpha}\in \F^n$ (as arithmetic data), and outputs (as arithmetic data in $\F$) the evaluation $C(\vec{\alpha})$.
\end{proposition}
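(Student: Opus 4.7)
The approach is to construct, for each parameter choice $n, s, d$ where $d = d(n)$, a universal evaluator network $\mathcal{U}_{n,s,d}$ over $\F$ that, on input a standard encoding $(\vec{\gamma}, \vec{b}) \in \F^{\le s} \times \zo^*$ of a circuit $C$ of size at most $s$ and formal degree at most $d$ together with an evaluation point $\vec{\alpha} \in \F^n$, produces $C(\vec{\alpha})$. The key observation is that the combinatorial skeleton of $C$ is contained entirely in the Boolean part $\vec{b}$ of the encoding, so it can be processed by Boolean and selection gates, while the arithmetic portion only needs to implement a short per-position menu: variable lookup, constant lookup, addition, and multiplication. Instantiating this menu once per gate position of $C$ yields a uniform layered network.

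Concretely, I would assume without loss of generality that $\vec{b}$ lists the gates of $C$ in some topological order $g_1, \ldots, g_s$ (this can be enforced by canonicalizing standard encodings). The universal network iteratively computes a value $v_i \in \F$ representing the evaluation of $g_i$ at $\vec{\alpha}$. The bits of $\vec{b}$ associated with position $i$ specify a type from $\{\text{variable}, \text{constant}, +, \times\}$ along with up to two indices of $O(\log(s+n))$ bits each, pointing either to a variable index $k \in [n]$, a constant index $j \in [s]$, or two input-gate indices $i_1, i_2 < i$. Each such index drives a tree of selection gates that fetches the appropriate element from $\vec{\alpha}$, from $\vec{\gamma}$, or from $\{v_1,\ldots,v_{i-1}\}$; a selection tree over $t$ field elements with $\log t$ control bits uses $O(t)$ selection gates. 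With the two operand values in hand, the network computes both $v_{i_1} + v_{i_2}$ and $v_{i_1} \cdot v_{i_2}$ in parallel and uses a small selection tree driven by the type bits to assign $v_i$ (the variable and constant cases simply pass through the relevant lookup). The final output is $v_s$, which by topological correctness equals $C(\vec{\alpha})$.

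For the size bound, each of the $s$ iterations contributes $O(s+n)$ gates, yielding total size $O(s(s+n)) = \poly(n, s, d(n))$. For the degree bound, recall from \Cref{def:arith-network} that the degree of an arithmetic network is the maximum degree of any rational function at an arithmetic gate, taken after fixing all Boolean inputs and all outputs of selection gates. Under any such fixing the selection trees collapse---each multiplexer resolves to a single chosen input, contributing no extra degree---and the arithmetic portion of $\mathcal{U}_{n,s,d}$ reduces to a specific arithmetic circuit in the inputs $(\vec{\gamma}, \vec{\alpha})$. A straightforward induction on $i$ using the formal-degree bound on $C$ shows that each $v_i$ has degree at most $d$ as a polynomial in $(\vec{\gamma}, \vec{\alpha})$; the ``dead'' product computation discarded at a sum gate raises the local degree by at most a factor of two. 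Hence the network degree is $O(d) = \poly(n, d(n))$. Finally, $\P$-uniformity is clear because the construction is entirely mechanical given $1^n, 1^s, 1^{d(n)}$ and $d$ is time-constructible. I do not anticipate any substantive obstacle; the main care needed is in implementing the multiplexers via selection gates (rather than by polynomial interpolation, which would blow up the degree) and in the inductive argument that per-gate degrees do not accumulate beyond $O(d)$ across the layers.
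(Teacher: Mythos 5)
Your network description and size bound are sound, but your degree analysis has a genuine gap. The paper's definition of degree (\Cref{def:arith-network}) takes the maximum over \emph{all} fixings of the Boolean inputs and $\select$-gate outputs, not only those fixings corresponding to an honest encoding of a circuit of formal degree $d$. You argue that ``a straightforward induction on $i$ using the formal-degree bound on $C$'' gives $\deg(v_i) \le d$, but this invokes structural properties of $C$ that need not hold once the select outputs are fixed adversarially. In your select-then-compute construction, at each step $i$ the multiplexers can be fixed so that both operand slots fetch $v_{i-1}$ and the type multiplexer chooses the product, yielding $\deg(v_i) = 2\deg(v_{i-1})$ and hence $\deg(v_s) = 2^{\Theta(s)}$. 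That is exponential in $s$, not $\poly(n, d(n))$, so the network you build does not satisfy the stated degree bound.

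The paper circumvents this by reversing the order: at step $i$ the network first \emph{computes} all pairwise sums, together with all pairwise products subject to a formal-degree cap of $d(n)$, and only then uses $\select$ gates to pick which of these precomputed candidates becomes $v_i$. Because the arithmetic computation is a single fixed subcircuit, and the select gates only choose among its already-computed outputs, the per-gate degree is controlled by that fixed subcircuit rather than by whatever chain an adversarial routing could assemble. Your order saves on gate count but destroys this invariant. To repair your proof you would need to enforce the degree bound by construction---for example, maintain each $v_j$ as a tuple of homogeneous components $(v_{j,0},\ldots,v_{j,d(n)})$ and implement multiplication by convolving these components while discarding everything above degree $d(n)$, so that every arithmetic gate has degree at most $d(n)$ regardless of how the selections resolve---or else revert to the paper's compute-all-then-select structure.
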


\bproof
The network proceeds by evaluating the circuit $C$ gate-by-gate in topological order.
For each $i \in [s]$, we compute the value of the $i$-th intermediate gate in the circuit $C$ on input $\vec{\alpha}$.
The network does this by computing all pairwise sums of the values it has computed so far, along with all pairwise products subject to the constraint that the resulting product has formal degree at most $d(n)$.
These computations can be carried out using $O(i^2) \le O(s^2)$ many arithmetic gates.
The network then reads the Boolean part of the description of the circuit $C$ to determine the type of the $i$-th gate in $C$, along with the names of the children of the $i$-th gate.
Feeding this into $\select$ gates, the network can correctly compute the value of the $i$-th gate in $C$.

It follows from the description above this network has size $\poly(n,s,d(n))$ and degree $\poly(n,d(n))$.
Moreover, the wiring of this network can be printed by a Turing machine running in time $\poly(n,s,d(n))$.
\eproof

\begin{proposition}\label{prop:comp-arith-network}
There is a $\P$-uniform arithmetic network of size $\poly(n,m,s,s')$ and degree $1$ over any field $\F$ that takes as input (i) a standard description of an arithmetic circuit $C$ of size $s$ over $\F$ computing an $n$-variate polynomial and (ii) a standard description of a multi-output arithmetic circuit $C'$ of size $s'$ over $\F$ having $n$ output gates where each output gate computes an $m$-variate polynomial, and outputs a standard description of the circuit computing their composition i.e., $C \circ C'$.
\end{proposition}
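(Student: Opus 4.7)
The key observation is that composing $C$ with $C'$ is a purely combinatorial rewiring task: we take a disjoint union of the two circuits, delete the $n$ input gates of $C$, and replace every wire in $C$ that originally emanated from the $i$-th input gate $x_i$ of $C$ by a wire emanating from the $i$-th output gate of $C'$. The arithmetic constants that appear in the resulting composed circuit are exactly the (multiset) union of the constants appearing in $C$ and in $C'$, and they are simply copied unchanged into the arithmetic part of the output description. Thus every arithmetic output of the network is equal to some arithmetic input, so the rational functions computed by the arithmetic gates have degree $1$ (in fact, they are just input variables), which gives the claimed degree bound.

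\textbf{Constructing the wiring.} The Boolean part of the output description is a function of the Boolean parts of the two input descriptions, plus the integers $n, m, s, s'$. Concretely, I would fix the convention that the gates of $C \circ C'$ are indexed as follows: gates $1, \ldots, m$ are the $m$ input gates inherited from $C'$; the next $s'$ gates are copies of the internal gates of $C'$; and the remaining $s - n$ gates are copies of the non-input gates of $C$. For each wire of $C'$ (both its endpoints and its label), we simply copy it into the corresponding position of the output, possibly relabeling the gate names via a fixed Boolean offset. For each wire of $C$, say from source $u$ to target $v$ with some label, we branch on whether $u$ is one of the $n$ input gates of $C$: if $u = x_i$, we instead emit a wire from the $i$-th output gate of $C'$ (whose new index is a fixed function of $i$, $m$, and $s'$) to the shifted name of $v$; otherwise we emit the same wire with both endpoints shifted. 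The gate-type labels are copied verbatim, again with index shifts. All of these operations on the Boolean labels are bit-level functions computable by a uniform family of Boolean circuits of size $\poly(n,m,s,s' ,\log s, \log s')$.

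\textbf{Implementation as a $\P$-uniform arithmetic network.} Since an arithmetic network may contain arbitrary Boolean gates ($\land, \lor, \neg$) on its Boolean wires, the above rewriting is implemented by a Boolean subnetwork that reads the Boolean parts of the two input descriptions and emits the Boolean part of the output description. For the arithmetic part, the network simply routes the arithmetic inputs (the constants used by $C$ and $C'$) to prescribed positions of the arithmetic output, with no use of $+, -, \times, \div$ gates; in particular no $\select$ gates are needed for the arithmetic values because the order in which the constants of $C$ and $C'$ appear in the description of $C \circ C'$ is a fixed function of $n,m,s,s'$ rather than of the Boolean input data. The total network size is $\poly(n,m,s,s')$, and because the construction is purely syntactic, a $\P$-uniform Turing machine can print the network on input $1^n 1^m 1^s 1^{s'}$ in polynomial time.

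\textbf{Main concern.} The only subtlety is handling the fact that a source gate in a wire of $C$ could be either an input gate or an internal gate, and only the former should be redirected to an output gate of $C'$. This is a constant-depth Boolean decision on the gate index, and the subsequent redirection is again a purely Boolean relabeling, so it poses no real difficulty beyond bookkeeping. Once this convention is fixed, verifying that the emitted description really encodes a valid arithmetic circuit computing $C \circ C'$ is immediate from the construction.
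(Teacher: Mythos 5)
Your proof is correct and takes essentially the same approach as the paper's: a purely syntactic rewiring of the Boolean part (redirecting wires that read from input gates of $C$ to the output gates of $C'$) together with pass-through routing of the field constants, which is what gives degree $1$. The paper's own proof is just a more terse version of exactly this argument.
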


\bproof
The arithmetic network computes the Boolean part of a standard encoding of $C \circ C'$ by replacing the input gates of $C$ labeled by variables with the output gates of $C'$.
This rewiring can be done by a $\P$-uniform family of Boolean circuits.
The arithmetic part of a standard encoding of $C \circ C'$ corresponds to the concatenation of the arithmetic parts of standard encodings of $C$ and $C'$, which can clearly by computed by a $\P$-uniform arithmetic network of size $s + s'$ and degree $1$.
\eproof

\begin{definition}[Universal circuits]\label{def:universal circuit} 
Let $\F$ be a field and let $n, d, s \in \N$.
We say that a polynomial $C(\vec{x},\vec{y})$ is \emphdef{universal for $n$-variate, degree-$d$, size-$s$ computation over $\F$} if for every polynomial $f(\vec{x})$ where $\deg(f) \le d$ and $f$ is computable by an algebraic circuit of size $s$, there is a point $\vec{\beta} \in \F^m$ such that $C(\vec{x},\vec{\beta}) = f(\vec{x})$.
An algebraic circuit is \emphdef{universal for $n$-variate, degree-$d$, size-$s$ computation} if it computes a polynomial that is universal for $n$-variate, degree-$d$, size-$s$ computation.
\end{definition}

\begin{definition}[Universal encoding] \label{def:universal encoding}
Let $\F$ be a field and fix a family $\mathcal{C} = \set{C_n}_{n \in \N}$ of $\P$-uniform arithmetic circuits such that $C_n$ is universal for $n$-variate, degree-$d$, size-$s$ computation.
Let $\Phi$ be an $n$-variate algebraic circuit of size $s$ over $\F$ that computes a polynomial of degree $d$.
We say that a point $\vec{\alpha} \in \F^t$ is a \emphdef{$\mathcal{C}$-universal encoding} of $\Phi$ if for $m = \max(n,s,d)$, the circuit $C_m$ satisfies $C_m(\vec{x}, \vec{\alpha}) = \Phi(\vec{x})$.
\end{definition}

Throughout this work, we make use of the universal encoding with respect to the family of universal circuits $\set{\Psi_n}_{n \in \N}$ constructed by \textcite{Raz10}.
As this is the only universal circuit we make use of, we abbreviate the term ``$\Psi$-universal encoding'' as ``universal encoding.''

\begin{theorem}[{\cite[Proposition 2.8]{Raz10}}] \label{thm:raz universal circuit}
Let $\F$ be a field.
For every $n, d, s \in \N$, there is a constant-free circuit $\Psi$ that is universal for $n$-variate, degree-$d$, size-$s$ computation over $\F$.
The size of $\Psi$ is bounded by $O(s d^4)$.
Moreover, there is a polynomial-time algorithm that receives $(1^n, 1^s, 1^d)$ as input and outputs a description of the circuit $\Psi$.
\end{theorem}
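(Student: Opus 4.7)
My plan is to construct a universal circuit by first canonicalizing an arbitrary size-$s$, degree-$d$ circuit into a homogeneous layered normal form, and then designing a parameterized circuit that generically simulates such a canonical form. The starting observation is that Strassen's homogenization converts any size-$s$ circuit computing a polynomial of degree at most $d$ into a homogeneous circuit of size $O(sd^2)$, in which each gate computes a homogeneous polynomial of known degree $i \in \set{0,1,\ldots,d}$. This homogeneous circuit can be further arranged into $d+1$ degree-layers, with addition gates combining like-degree inputs and multiplication gates pairing a degree-$j$ input with a degree-$(i-j)$ input.

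Given this canonical form, I would build $\Psi$ by reserving one ``slot'' for each possible gate of a worst-case canonical circuit and introducing auxiliary parameters $\vec{y}$ at each slot to (i) select its two inputs from the previously computed slots via a generic linear combination with parameters acting as indicator coefficients, and (ii) choose between the addition and multiplication of these two linear combinations via the polynomial expression $\tau \cdot (A + B) + (1 - \tau) \cdot A \cdot B$, which faithfully reproduces either gate type when $\tau \in \set{0,1}$. Field constants, which cannot appear in $\Psi$ itself, are handled by designating certain input-layer slots as ``constant slots'' whose value is simply a parameter $y$; given any target circuit $C$, the witness $\vec{\beta}$ is assembled by setting selectors to indicator vectors, type parameters to $0$ or $1$, and constant parameters to $C$'s field constants.

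The main obstacle I expect is achieving the target bound of $O(sd^4)$ rather than a worse polynomial in $s$ and $d$. Naively, generically selecting each gate's two inputs from among $\Theta(sd^2)$ candidate predecessors costs $\Theta(sd^2)$ auxiliary gates per slot, which blows the total size up to $\Theta(s^2 d^4)$. To avoid this quadratic blowup, the canonicalization step must control the number of potential predecessors more tightly, presumably by arranging the layered form so that each layer can be consumed locally by the next few layers and by amortizing the cost of the multiplication-factorization choice (the selection of $j$ in a degree-$i$ product) across the degrees. A careful layer-by-layer accounting, combined with the $O(sd^2)$ total gate count of the canonical form, should yield the $O(sd^4)$ bound claimed in the theorem.

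The final requirement, uniformity of the construction, I expect to be immediate: the homogenization and canonicalization procedures are standard and run in time $\poly(n, s, d)$, and the parameterized $\Psi$ built on top of them is a simple combinatorial assembly whose description can clearly be printed by a $\poly(n,s,d)$-time algorithm given $(1^n, 1^s, 1^d)$.
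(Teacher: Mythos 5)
The paper does not prove this statement; it simply cites \cite[Proposition 2.8]{Raz10}, so there is no proof in the paper to compare your sketch against directly.

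Your sketch identifies a reasonable high-level approach (homogenize, then parametrize a canonical form), but it has a genuine gap exactly where you flagged one, and also a structural problem you did not notice. The gap: you observe that generic selection of each slot's two inputs from $\Theta(sd^2)$ predecessors gives $\Theta(s^2 d^4)$ gates, and you defer the fix to an unspecified ``careful layer-by-layer accounting.'' But this is the crux of the theorem, not a detail. Naive slot-by-slot selection is inherently quadratic in $s$, since a slot at degree layer $i$ could, a priori, want any of the $\Theta(s)$ gates at layers $j$ and $i-j$. Getting from $s^2 \poly(d)$ down to $s \poly(d)$ requires a nontrivial normal form for homogeneous circuits (Raz proves a structural lemma converting any homogeneous size-$s$, degree-$d$ circuit into a rigidly layered form with a controlled fan-in pattern) that your sketch does not reproduce or even identify as the missing ingredient. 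The structural problem: the gate-type switch $\tau(A+B) + (1-\tau)AB$ is incompatible with the homogeneous layered canonical form you set up in the first step. In that form a sum gate of output degree $i$ has both inputs of degree $i$, while a product gate of output degree $i$ has inputs of degrees $j$ and $i-j$; a single parametrized slot cannot serve both roles, because $A+B$ and $AB$ live at different $\vec{x}$-degrees once $A,B$ are homogeneous. Raz sidesteps this by fixing the alternation of sum and product layers in the topology of $\Psi$ (so that $\Psi$ itself is homogeneous in $\vec{x}$), rather than using an algebraic type-selector. Your uniformity observation is fine and matches the cited source.
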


We conclude this subsection by extending some of the aforementioned notions to arithmetic networks with $\pit$ gates. These extensions will be useful in establishing the randomness $\Rightarrow$ hardness direction of our results, i.e., \Cref{thm:int:pit:necessary} (or more precisely, its formalization, \Cref{prop:pit:necessary} in \Cref{sec:pit:necessary}).

\begin{definition}[Standard encoding of an arithmetic network with $\pit$ gates] \label{def:standard encoding networks}
Let $\F$ be a field and let $\calC$ be an arithmetic network with $\pit$ gates over $\F$.
We say that a tuple $(\vec{\alpha}, \vec{b}) \in \F^n \times \zo^m$ is a \emphdef{standard encoding} or \emphdef{standard description} of $\calC$ if $\vec{b}$ is a bit-string that describes the wiring of $\calC$ and the type of each gate and $\vec{\alpha}$ is a vector of field elements where $\alpha_i$ is the $i$-th field constant used by the network $\calC$.
\end{definition}

The following statement has a proof similar to that of \Cref{prop:short-desc}.
\begin{proposition}\label{prop:short-desc-network}
If $C$ is an arithmetic network with $\pit$ gates of size $s$, then $C$ admits a standard encoding of size $O(s \log s)$.
\end{proposition}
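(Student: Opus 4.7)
The plan is to closely follow the proof of \Cref{prop:short-desc}, accounting for the additional gate types allowed in an arithmetic network with $\pit$ gates and for the fact that $\pit$ gates have unbounded fan-in. First I would observe that there are now only a slightly larger, but still constant-sized, catalog of gate labels: arithmetic operations in $\set{+,-,\times,\div}$, Boolean operations in $\set{\land,\lor,\neg}$, the test gate $\testequals 0$, the selection gate $\select$, and the $\pit$ gate. Hence the type of each gate can still be encoded in $O(1)$ bits, so the total contribution from gate labels across all $s$ gates is $O(s)$ bits.

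Next I would handle the wiring. For every gate other than a $\pit$ gate, fan-in is bounded by a constant ($\le 3$, achieved by $\select$), so such gates collectively introduce $O(s)$ wires. For $\pit$ gates the fan-in is unbounded, but the total fan-in across all gates in the network is bounded by the size $s$ of the network (under the standard convention that ``size'' of a network with unbounded fan-in gates counts either the total number of wires or, equivalently, charges each gate by its fan-in). Thus the network has $O(s)$ wires in total, and each wire can be described by naming its source and destination gates and the ordinal position of the wire into its destination, using $O(\log s)$ bits per wire. This gives $O(s \log s)$ bits for the Boolean part $\vec{b}$ describing the wiring and gate types.

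Finally, the arithmetic part $\vec{\alpha} \in \F^n$ consists of the field constants labeling the bottom-layer constant gates. Since the network has at most $s$ gates labeled by constants, we have $n \le s$. Combining the two parts, the total size of the standard encoding $(\vec{\alpha},\vec{b}) \in \F^n \times \zo^m$ satisfies $n + m \le s + O(s \log s) = O(s \log s)$.

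The only real point of care, compared with \Cref{prop:short-desc}, is the handling of unbounded fan-in $\pit$ gates; once we verify that ``size'' is interpreted so that the total fan-in (equivalently, the total wire count) is $O(s)$, the rest of the argument is identical. I do not anticipate any substantive obstacle beyond this bookkeeping.
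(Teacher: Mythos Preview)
Your proposal is correct and takes essentially the same approach as the paper, which simply states that the proof is ``similar to that of \Cref{prop:short-desc}'' without further elaboration. If anything, you are more careful than the paper in explicitly flagging the unbounded fan-in of $\pit$ gates and the convention on ``size'' needed to bound the total wire count by $O(s)$; the paper does not spell this out.
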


\begin{proposition}[Arithmetic networks can evaluate arithmetic networks] \label{prop:universal-arith-network-for-networks}
Let $\F$ be any field and let $d : \N \to \N$ be a time-constructible function.
There is a $\P$-uniform family of arithmetic networks of size $\poly(n,s,d(n))$ and degree $\poly(n,d(n))$ over $\F$ that takes as input (i) a standard encoding of an  arithmetic network $C$ over $\F$ of size $s$ and degree $d(n)$ and (ii) a vector $(\vec{\alpha},\vec{\beta})\in \F^n\times \zo^n$, and outputs the evaluation $C(\vec{\alpha},\vec{\beta})$.
\end{proposition}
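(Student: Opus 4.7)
The plan is to adapt the construction from \Cref{prop:universal-arith-network}, which already handles arithmetic circuits, to the richer gate-set of arithmetic networks. The universal evaluator will still process the gates of the input network $C$ in topological order, maintaining for each simulated gate both its arithmetic value (on an arithmetic wire of the evaluator) and its Boolean value (on a Boolean wire). The key new issue is handling the additional gate types present in networks: Boolean gates $\land,\lor,\neg$, the test gate $\testequals 0$, and the select gate $\select$. Since the gate type of the $i$-th gate of $C$ is specified by the Boolean part of the description of $C$, and is therefore not known to the evaluator at construction time, I would compute all plausible candidate values for the $i$-th gate in parallel and then use the evaluator's own $\select$ gates to route the correct one according to the Boolean description.

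Concretely, at the $i$-th simulation step I would compute the $O(i^2)$ candidate arithmetic values obtained by applying $+,-,\times,\div$ to pairs of previously computed arithmetic values (discarding candidates whose formal degree would exceed $d(n)$), the $O(i^2)$ candidate Boolean values obtained by applying $\land,\lor,\neg$ to pairs of previously computed Boolean values, the $O(i)$ candidate test results obtained by applying $\testequals 0$ to previously computed arithmetic values, and the $O(i^3)$ candidate $\select(b,\alpha,\beta)$ results. Reading the Boolean description of $C$ to extract the type of the $i$-th gate and the names of its children, the evaluator then routes the correct candidate through a tree of $\select$ gates. Each such simulation step uses $\poly(n,s,d(n))$ gates, so the total size is $\poly(n,s,d(n))$, and the wiring is clearly printable in time $\poly(n,s,d(n))$ by a Turing machine, giving the desired $\P$-uniformity.

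The step I expect to require the most care is the degree bound, because arithmetic networks allow division and naive simulation of a $\div$ gate can in principle blow up the degree of the intermediate rational functions. However, by hypothesis the network $C$ has degree at most $d(n)$ in the sense of \Cref{def:arith-network}, meaning that under every fixing of the Boolean inputs and of the outputs of the $\select$ gates of $C$, each arithmetic gate of $C$ computes a rational function $f/g$ with $\deg(f)+\deg(g)\le d(n)$. A careful inspection of the construction above should show that every arithmetic gate of the evaluator inherits this bound (up to a $\poly(n,d(n))$ overhead incurred by the selection circuitry) under any induced fixing of the Boolean and $\select$ wires of the evaluator: each arithmetic gate of the evaluator is either a copy of an arithmetic operation actually performed by $C$ (under the induced fixing) or a $\select$ over such copies, so its rational function is, under the fixing, the same rational function as the corresponding gate of $C$. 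This yields the degree bound of $\poly(n,d(n))$ and completes the argument.
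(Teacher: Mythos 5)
Your proposal takes exactly the same route as the paper's own proof sketch: simulate $C$ gate by gate in topological order, maintaining both an arithmetic and a Boolean wire per simulated gate, compute all plausible candidate values (for each gate type in the network's extended gate set, not just $+,\times$), and use the evaluator's own $\select$ gates, driven by the Boolean part of the description of $C$, to route the correct candidate forward. Your discussion of the degree bound in the presence of $\div$ gates goes somewhat beyond what the paper writes down (the paper's proof here is only a short sketch and does not explicitly discuss how the $d(n)$ bound is preserved through division chains), but the argument you sketch — that under any fixing of the evaluator's Boolean inputs and $\select$ outputs, each arithmetic gate either replicates an arithmetic operation of $C$ or selects among such, and therefore inherits $C$'s degree bound up to the selection overhead — is the right one and is consistent with the paper's intent.
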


\bproof[Proof Sketch.] The proof is similar to that \Cref{prop:universal-arith-network}. 
The network mimics the one in the proof of \Cref{prop:universal-arith-network}, with the only difference now being that it computes the value of the $i$-th intermediate gate in the network $C$ on input $(\vec{\alpha},\vec{\beta})$ by not just accounting for all the pairwise sums and products (of degree at most $d(n)$) of the arithmetic values it has computed so far, but also for the corresponding Boolean operations performed on the Boolean values computed so far, along with the test gates values $\testequals$, and the $\select$ gate values of their appropriate combinations.
\eproof

\section{Polynomial decompositions for arithmetic circuits} \label{sec:decomposition}

In this section, we recall the notion of \emph{polynomial decompositions of a circuit}, following \textcite{CT21}.
The definition provided in this section adapts \cite[Definition 4.6]{CT21} to algebraic circuits. This notion is an encoding of a circuit's computation (on a fixed input $\vec{\Lambda}\in\F^n$) as a sequence of downward self-reducible polynomials. 
After presenting the abstract notion of polynomial decompositions, we will show that any $\log^c$-uniform 
family of algebraic circuits has efficient polynomial decompositions.

\subsection{Variable-extended formulations}

Before introducing polynomial decompositions of circuits, we need the definition of a variable-extended formulation of a boolean function. This is a technical notion that, while not strictly necessary for the definition of polynomial decompositions, leads to an improvement in the parameters of a polynomial decomposition. In particular, the use of variable-extended formulations allows us to guarantee that the individual degrees of the hard polynomials are bounded by a constant.

\begin{definition}[Variable-extended formulation]\label{def:var-ext}
Let $f\colon S^m\rightarrow\{0,1\}$ where $S$ is an arbitrary set. We say that $g(x_1,\ldots,x_m,y_1,\ldots, y_r)$ is a \emphdef{variable-extended formulation} of $f$ if the following holds for every $\vec{x} \in S^m$.
\begin{enumerate}
	\item If $f(\vec{x}) = 1$, then there exists a unique $\vec{y}(\vec{x}) \in S^r$ (``unique witness") such that $g(\vec{x}, \vec{y}(\vec{x})) = f(\vec{x}) = 1$, and $g(\vec{x}, \vec{y}) = 0$ for all $\vec{y} \neq \vec{y}(\vec{x})$.
	\item If $f(\vec{x}) = 0$, then $g(\vec{x}, \vec{y}) = 0$ for all $\vec{y} \in S^r$.
\end{enumerate}
\end{definition}

\begin{obs}\label{obs:var-ext}
Any variable-extended formulation $g$ of $f$ satisfies
\[
f(\vec{x}) = \sum_{\vec{y}\in S^r} g(\vec{x},\vec{y})
\]
for every $\vec{x} \in S^m$.
\end{obs}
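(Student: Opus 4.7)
The plan is to give a direct case analysis based on the two cases in Definition \ref{def:var-ext}. The statement is essentially immediate from the definition, so the proof will be short.

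Fix an arbitrary $\vec{x} \in S^m$. The goal is to show $\sum_{\vec{y} \in S^r} g(\vec{x}, \vec{y}) = f(\vec{x})$. Since $f$ is $\zo$-valued, there are exactly two cases to handle.

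In the first case, suppose $f(\vec{x}) = 1$. Then by item (1) of Definition \ref{def:var-ext}, there is a unique $\vec{y}(\vec{x}) \in S^r$ with $g(\vec{x}, \vec{y}(\vec{x})) = 1$, and $g(\vec{x}, \vec{y}) = 0$ for all $\vec{y} \neq \vec{y}(\vec{x})$. Splitting the sum accordingly, I will compute
\[
\sum_{\vec{y} \in S^r} g(\vec{x}, \vec{y}) = g(\vec{x}, \vec{y}(\vec{x})) + \sum_{\vec{y} \neq \vec{y}(\vec{x})} g(\vec{x}, \vec{y}) = 1 + 0 = 1 = f(\vec{x}).
\]
In the second case, suppose $f(\vec{x}) = 0$. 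Then by item (2), $g(\vec{x}, \vec{y}) = 0$ for every $\vec{y} \in S^r$, so $\sum_{\vec{y} \in S^r} g(\vec{x}, \vec{y}) = 0 = f(\vec{x})$.

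Since both cases yield the claimed identity and $\vec{x}$ was arbitrary, the observation follows. There is no real obstacle here: the only subtlety worth flagging is that the sum is a formal sum over the (finite) set $S^r$, which is implicit in the definition since $g$'s outputs on elements of $S^r$ are $\zo$-valued by the two itemized conditions.
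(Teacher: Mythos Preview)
Your proof is correct and is exactly the natural argument; the paper itself treats this observation as immediate from the definition and does not spell out a proof.
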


The following lemma asserts that any small Boolean formula has a variable-extended formulation that can be computed by a small arithmetic formula of constant individual degree. The proof follows \cite{KalaiLV23} and is based on a Cook--Levin-style trick.

\begin{lem}[{Modification of \cite[Lemma 5.3]{KalaiLV23}}]\label{lem:var-ext}
Let $f:\zo^m\to \zo$ be a NAND-boolean formula of size $s$ and let $\F$ be any field. Then there exists a variable-extended formulation $g:\zo^{m+s}\to\zo$ of $f$ that can be computed by an arithmetic formula\footnote{By this, we mean that the formula computes a polynomial whose evaluations on $\zo^{m + s}$ agree with a variable-extended formulation $g : \zo^{m + s} \to \zo$.} of size $O(s\cdot m)$ over $\F$, which computes a polynomial of total degree $O(s\cdot m)$ and individual degree $2$. 
Moreover, there is a $\P$-uniform family of arithmetic networks of size $\poly(s\cdot m)$ that, when given as input a standard description\footnote{Since $f$ is a Boolean formula, by a standard description, we simply mean a description that describes the wiring of $f$.} of the NAND-formula for $f$, outputs a standard description of this arithmetic formula for $g$.
\end{lem}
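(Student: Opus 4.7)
The plan is to follow a Cook--Levin-style arithmetization in the spirit of~\cite{KalaiLV23}. The approach is to introduce one auxiliary variable $y_i$ per NAND gate, accounting for the $s$ new variables in the statement, and to define $g$ as a product of local consistency gadgets times an ``output $=1$'' check.

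Concretely, for each NAND gate $i$, let $u_i, v_i \in \{x_1,\ldots,x_m,y_1,\ldots,y_{i-1}\}$ denote the two values feeding into it. Take the local gadget
\[
I_i(\vec x,\vec y) \coloneqq y_i(1-u_iv_i) + (1-y_i)u_iv_i,
\]
which on $\zo$-inputs equals $1$ iff $y_i = \mathrm{NAND}(u_i, v_i) = 1 - u_iv_i$ and is $0$ otherwise, and which is linear in each of $y_i, u_i, v_i$. Letting $s$ index the output gate, define
\[
g(\vec x,\vec y) \coloneqq y_s\cdot \prod_{i=1}^s I_i(\vec x,\vec y).
\]
A gate-by-gate induction in topological order shows that, for each $\vec x \in \zo^m$, the unique $\vec y^\star(\vec x)\in\zo^s$ making every $I_i$ equal $1$ is the actual computation transcript of the NAND formula, in which case $g(\vec x,\vec y^\star(\vec x))=y_s^\star=f(\vec x)$; any deviation sends some $I_i$ to $0$. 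This yields the unique-witness property of~\Cref{def:var-ext}.

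The individual-degree bound for the $\vec y$-variables comes essentially for free: because the underlying NAND formula has fan-out $1$, each $y_i$ (for $i<s$) appears only in $I_i$ and in the gadget of its unique parent, and $y_s$ appears only in $I_s$ and as the leading multiplier. Since each $I_j$ is linear in the three variables it touches, every $y_i$ has individual degree at most $2$ in $g$. The main obstacle, and the place where the technical work concentrates, is achieving individual degree $2$ in each $x_j$: an input variable may occur at up to $s+1$ leaves of the NAND formula, which would push the $\vec x$-degree of the literal product to $\Theta(s)$. To fix this, the plan is to apply the rewriting of~\cite{KalaiLV23}: express each leaf of the product using a sum $\sum_{j\in[m]}\chi_{\ell,j}\cdot x_j$ over formal selectors $\chi_{\ell,j}\in\zo$ (depending only on leaf labels), and then rearrange the resulting sum-of-products so that each $x_j$ appears in only a bounded number of multiplicative slots. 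This rearrangement is where the announced formula size $O(s\cdot m)$ and total degree $O(s\cdot m)$ come from, and verifying that it preserves both the defining property of $g$ on $\zo^{m+s}$ and the individual-degree-$2$ guarantee in every variable is the technical core of the proof.

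Finally, uniformity is routine: given the standard description of the NAND formula, a $\P$-uniform Turing machine (and hence a $\P$-uniform arithmetic network, by the standard simulation) reads each gate's children and leaf labels and emits the appropriate gadget in $\poly(s\cdot m)$ total time, producing the standard description of the claimed arithmetic formula for $g$.
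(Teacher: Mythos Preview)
Your construction of the NAND-consistency gadget $I_i$ is exactly the paper's $\Phi_i$ (just expand $y_i(1-u_iv_i)+(1-y_i)u_iv_i = y_i + u_iv_i - 2y_iu_iv_i$), and the unique-witness argument is correct. The gap is in your treatment of the $x$-variables. Once you define $g = y_s\prod_i I_i$ with the leaves plugged in as $x_j$'s, the \emph{polynomial} $g$ already has individual degree in $x_j$ equal to the number of leaves labeled $x_j$; no ``rearranging the sum-of-products'' can fix this, since individual degree is a property of the polynomial, not of the formula computing it. Writing each leaf as $\sum_j \chi_{\ell,j}x_j$ with $\chi_{\ell,j}\in\{0,1\}$ constants does nothing: it evaluates to the same $x_j$, so the polynomial is unchanged.

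The paper's fix is to change the polynomial, not the formula: introduce a $y$-variable for \emph{every} wire, including leaf wires. Then all the gate-consistency gadgets $\Phi_i$ live purely in $\vec y$, and the $x$-variables enter only through separate leaf-consistency gadgets. For each input variable $x_j$ and polarity $\sigma$, with $S_{j,\sigma}$ the set of leaf wires labeled by that literal and $\ell_{j,\sigma}\in\{x_j,1-x_j\}$, the paper sets
\[
h_{j,\sigma}(\vec x,\vec y)\;=\;\ell_{j,\sigma}\!\!\prod_{i\in S_{j,\sigma}}\!\!y_i \;+\;(1-\ell_{j,\sigma})\!\!\prod_{i\in S_{j,\sigma}}\!\!(1-y_i),
\]
which is $1$ on Boolean inputs iff all those leaf wires equal $\ell_{j,\sigma}$, and is \emph{linear} in $x_j$. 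Taking $g = y_s\cdot\prod_i\Phi_i\cdot\prod_{j,\sigma}h_{j,\sigma}$, each $x_j$ appears only in $h_{j,0}$ and $h_{j,1}$, giving individual degree $2$; each leaf-wire $y_i$ appears once in its $h_{j,\sigma}$ and once in its parent's $\Phi$; each internal $y_i$ appears once in its own $\Phi_i$ and once in its parent's; and $y_s$ appears once as the leading factor and once in $\Phi_s$. This is the step your proposal is missing.
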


\bproof
We introduce $s$ new variables $y_1,\ldots,y_s$, one for each wire of the formula computing $f$.
For convenience of notation, we assume that there is a wire coming out of the output gate that has the label $y_s$.

We identify each gate $\Phi=\Phi_{i}$ with the wire $i\in[s]$ going out of $\Phi$. 
Let 
\[
\Phi_i(\vec{y}) \coloneqq y_i + y_jy_k - 2y_iy_jy_k,
\]
where $j,k\in[s]$ are the indices of the wires going into $\Phi_i$. 
For a zero-one-valued input $\vec{y}\in\zo^{s}$, we have 
\[
\Phi_i(\vec{y}) = \begin{cases}
1 & \text{if $y_i=\mathsf{NAND}(y_j,y_k)$,} \\
0 & \text{otherwise.}
\end{cases}
\]
For every index $(j,\sigma)\in[m]\times\zo$ of an input literal (i.e., $x_j$ or $\lnot x_j$), we define 
\[
h_{j,\sigma}(\vec{x},\vec{y}) \coloneqq \ell_{j,\sigma}\prod_{i\in S_{j,\sigma}}y_i + (1-\ell_{j,\sigma})\prod_{i\in S_{j,\sigma}}(1-y_i),
\]
where 
\[
\ell_{j,\sigma} \coloneqq 
\begin{cases} 
x_j & \text{if $\sigma = 0$,} \\
1 - x_j & \text{if $\sigma = 1$,}
\end{cases}
\]
and $S_{j,\sigma}$ denotes the set of leaf wires corresponding to $\ell_{j,\sigma}$.
Observe that for every $\vec{x},\vec{y}\in\zo^{m+s}$ we have that $h_{j,\sigma}(\vec{x},\vec{y}) = 1$ if and only if for all $i \in S_{j, \sigma}$, the variable $y_i$ is set to $\ell_{j, \sigma}$.
Otherwise, we have $h_{j, \sigma}(\vec{x}, \vec{y}) = 0$.
Then we define
\[
g(\vec{x},\vec{y}) \coloneqq y_s \prod_{i = 1}^s \Phi_i(\vec{y}) \prod_{j = 1}^m\left(\prod_{\sigma\in\zo} h_{j,\sigma}(\vec{x},\vec{y})\right).
\]

The claims about the formula size and total degree follow immediately from the definition of $g$. 
It follows from the definitions of the $\Phi_i$'s and $h_{j, \sigma}$'s that the restriction of $g(\vec{x}, \vec{y})$ to $\zo^{m+s}$ corresponds to a Boolean function that is a variable-extended formulation of $f$.

Finally, note that $g(\vec{x},\vec{y})$ has individual degree $2$, since (i) the variable $y_s$ appears only twice, (ii) each intermediate (non-output, non-leaf) variable only appears twice because the corresponding wire has exactly two endpoints, and (iii) the variables $\{x_j , y_i|j\in [m],i\in S_{j,\sigma}\}$ have degree at most $2$ (they occur at most once in the first product, while the second product is multilinear). %
The efficient computation of a standard description of $g$ (see \Cref{def:standard encoding}) can be done via a $\P$-uniform family of arithmetic networks $\{\calN_m\}_{m\in\N}$ as follows. %
First, from the given description of $f$, the arithmetic network $\calN_m$ reads off the set of wires $S_{j,\sigma}$ adjacent to each input literal.
Given $S_{j,\sigma}$, a standard description of $h_{j,\sigma}$ can be output using a constant overhead using its definition above.
From the given wiring of $f$, the network is able to compute a description of each $\Phi_i$ of constant size.
Putting everything together, the network $\calN_m$ outputs a standard description of the arithmetic formula for $g$.
\eproof

\subsection{Polynomial decompositions} \label{sec:ki:decomposition}

We recall that the overall idea in the definition below is to arithmetize
the circuit structure function layer-by-layer over an appropriate arithmetic setting, and “in between layers”, we
add additional polynomials that represent a suitable sumcheck protocol, which allows reducing
claims about each layer to claims about the preceding layer. We allow some parameters to be arbitrary to allow for greater functionality in the reconstruction procedures later, in particular the size of an interpolating set $H\subseteq\F$ and an integer parameter $r$. 
Jumping ahead, we will instantiate the notion either with $H=\zo$ and $r>0$, or with a larger $H$ (e.g., $|H|=n^{1/\log\log(n)}$) and $r=0$.

\begin{definition}[Polynomial decompositions of a circuit] \label{def:poly-dec}
Let $C$ be an algebraic circuit over a field $\F$ that has $n$ inputs, fan-in two, size $s$, and depth $\Delta$.
For any $\vec{\Lambda} \in \F^n$, we call a collection of polynomials $\{f_{i,j}\}$ an \emphdef{$(h,m,r)$-polynomial decomposition of $C(\vec{\Lambda})$} if it satisfies the following requirements.
\begin{enumerate}
	\item
	(Setting.)
	The polynomials $f_{i,j}$ are defined over the field $\F$.
	For some integer $h \le |\F|$ such that $h$ is a power of $2$, let $H \subseteq \F$ be of size $h$, let $m$ be the minimum positive integer such that $h^m \ge s$, and let $r\geq 0$ be an integer.
	\item
	(Circuit structure polynomial.)
	For each $i \in [\Delta]$, let $\hat{\Phi}_i : H^{3m} \to \set{0,1}$ be the function such that $\hat{\Phi}_i(\vec{\alpha}, \vec{\beta}, \vec{\gamma}) = 1$ if and only if the gate in layer $i$ of $C$ indexed by $\vec{\alpha}$ has as children the gates in layer $i-1$ indexed by $\vec{\beta}$ and $\vec{\gamma}$.\footnote{The function $\hat{\Phi}_i$ is defined to be zero if any of the three input strings is not a valid indexing of a gate.} Let $\hat{\Phi}_i^e : H^{3m + r} \to \set{0,1}$ be a variable-extended formulation of $\hat{\Phi}_i$ (so if $r = 0$, $\hat{\Phi}_i^e = \hat{\Phi}_i$).
	Let the polynomial $\Phi_i(w_1,\ldots,w_m, u_1, \ldots, u_m, v_1, \ldots, v_m, y_1,\ldots, y_r) \in \F[\vec{w},\vec{u},\vec{v}, \vec{y}]$ be any polynomial which agrees with the function $\hat{\Phi}_i^e$ on the set $H^{3m+r}$. 
	\item\label{item:f0}
	(Input-layer function.)
	Let $\hat{f}_0 : H^m \to \F$ represent\footnote{\label{fn:lex}See \Cref{sec:pre} for a definition. Also, note that here, the elements of $H^m$ are assumed to be listed in the lexicographic order.} the input layer string $\vec{\Lambda}\vec{K}\in\F^s$ (interpreted as the concatenation of vectors $\vec{\Lambda}$ and $\vec{K}$), where $\vec{K}$ is the list of constants that $C$ uses. 
	Let $f_0(w_1,\ldots,w_m)$ be any polynomial over $\F$ that agrees with $\hat{f}_0$ on all inputs in $H^m$.%
	
	\item\label{item:layer-polys}
	(Layer polynomials.)
	For each $i \in [\Delta]$, let $\hat{f}_i : H^m \to \F$ represent\footnote{See \Cref{fn:lex}.} %
	the values of the gates at the $i$\textsuperscript{th} layer of $C$ in the computation of $C(\vec{\Lambda})$, with zeroes in locations that do not index valid gates.
	Define $f_i \in \F[w_1,\ldots,w_m]$ inductively by 
	\[
	f_i(\vec{w}) \coloneqq 
	\begin{cases}
		\sum_{\vec{\beta}, \vec{\gamma} \in H^m, \vec{\delta}\in H^r} \Phi_i(\vec{w}, \vec{\beta}, \vec{\gamma},\vec{\delta}) \del{f_{i-1}(\vec{\beta}) + f_{i-1}(\vec{\gamma})}, &\text{if $i$ is odd} \\
		\sum_{\vec{\beta}, \vec{\gamma} \in H^m, \vec{\delta}\in H^r} \Phi_i(\vec{w}, \vec{\beta}, \vec{\gamma},\vec{\delta}) \del{f_{i-1}(\vec{\beta}) \cdot f_{i-1}(\vec{\gamma})}, &\text{if $i$ is even.} \\
	\end{cases}
	\]
	\item\label{item:sumcheck}
	(Sumcheck polynomials.)
	Denote $t = 2m+r$. For each $i \in [\Delta]$, let $f_{i,0}(\vec{w},\sigma_1,\ldots,\sigma_{t}) \in \F[\vec{w}, \vec{\sigma}]$ be the polynomial
	\begin{align*}
		f_{i,0}(\vec{w},\vec{\sigma}) &\coloneqq 
		\begin{cases}
			\Phi_i(\vec{w}, \sigma_{1},\ldots,\sigma_{t}) \del{f_{i-1}(\sigma_{1},\ldots,\sigma_m) + f_{i-1}(\sigma_{m+1},\ldots,\sigma_{2m})}, &\text{if $i$ is odd}\\
			\Phi_i(\vec{w}, \sigma_{1},\ldots,\sigma_{t}) \del{f_{i-1}(\sigma_{1},\ldots,\sigma_m) \cdot f_{i-1}(\sigma_{m+1},\ldots,\sigma_{2m})}, &\text{if $i$ is even.}\\
		\end{cases}
	\end{align*}
	For each $j \in [t-1]$, let $f_{i,j}(\vec{w},\sigma_1,\ldots,\sigma_{t-j}) \in \F[\vec{w},\vec{\sigma}]$ be the polynomial
	\[
	f_{i,j}(\vec{w},\sigma_{1},\ldots,\sigma_{t-j}) \coloneqq 
	\begin{cases}
		\displaystyle\sum_{\sigma_{t-j+1},\ldots,\sigma_{t} \in H^j} \Phi_i(\vec{w}, \sigma_{1},\ldots,\sigma_{t}) \del{f_{i-1}(\sigma_{1},\ldots,\sigma_{m}) + f_{i-1}(\sigma_{m+1},\ldots,\sigma_{2m})}, &\text{if $i$ is odd}\\
		\displaystyle\sum_{\sigma_{t-j+1},\ldots,\sigma_{t} \in H^j} \Phi_i(\vec{w},\sigma_{1},\ldots,\sigma_{t}) \del{f_{i-1}(\sigma_{1},\ldots,\sigma_{m}) \cdot f_{i-1}(\sigma_{m+1},\ldots,\sigma_{2m})}, &\text{if $i$ is even}\\
	\end{cases} 
	\]
	Finally, we define $f_{i,t}(\vec{w}) \coloneqq f_i(\vec{w})$.
\end{enumerate}
\end{definition}

Towards constructing efficient polynomial decompositions, it will be useful---particularly for our targeted hitting set generator based on the Kabanets--Impagliazzo generator---to define an auxiliary notion of an \emph{enumeration} of a set of given size within feasible field families (see \Cref{sec:pre} for a refresher on its definition). 

\begin{definition}[enumeration of a set of prescribed size in feasible fields]\label{def:enumeration}
Let $\F = \{\F_n\}_{n\in \N}$ be a feasible sequence of fields, and let $h(n)\leq |\F_n|$ be a given time-constructible function. 
We say that a family of sets $\{H_n\}$ is an \emphdef{enumeration of $h(n)$ with respect to $\{\F_n\}$} (or simply that $H$ is an \emphdef{enumeration of $h$ in $\F$} if the family is clear from context) if there is a Turing machine that takes as input $1^n$ and outputs the elements of a set $H_n\subset \F_n$ such that $|H_n| = h(n)$ and whose running time is bounded by $\poly(n,h(n))$. 
We call a function family $\{\phi_n: [h(n)]\rightarrow \F_n\}$ an \emphdef{enumerator of $h(n)$ with respect to $\{\F_n\}$} if $\phi_n$ is injective, its image is $H_n$, and it is computable in  $\poly(n,h(n))$ time.\footnote{Note that the defintion of feasible field families implies that such enumerators exist.} 
\end{definition}

We now show that a polynomial decomposition exists for any $\log^c$-uniform family $\{C_n\}$ of arithmetic circuits (see \Cref{def:suf-unif}). Moreover, we show that there is a polynomial-sized $\P$-uniform family of arithmetic networks that when given an input $\vec{\Lambda}$, outputs small arithmetic circuits for the layer polynomials as well as for the sumcheck polynomials. These small circuits ensure the efficiency of our targeted hitting set generators (defined later in \Cref{cons:generator} and \Cref{cons:GKSS-based-tarHSG}). Furthermore, we record crucial downward self-reducibility properties of these polynomials, which will be essential in the reconstruction argument.

The proposition below actually constructs two polynomial decompositions, parameterized differently. The first one is parameterized with the set $H=\{0,1\}$ and with an integer $r=\polylog(s)$ (where $s$ is the size of $\{C_n\}$), and the second one is parameterized with an arbitrary set $H$ and with $r=0$. We crucially use both instantiations for our targeted hitting set generator based on the Kabanets--Impagliazzo generator (which is discussed in \Cref{sec:ki}), whereas we only use the latter for our construction based on the hitting set generator of Guo--Kumar--Saptharishi--Solomon (which is discussed in \Cref{sec:gkss}).

\begin{proposition} [Polynomial decompositions for arithmetic circuits] \label{prop:poly-dec-cons}
Let $\{\F_n\}_{n\in \N}$ be a feasible sequence of fields where $\F_n$ has order $q(n) \le \exp(n^{O(1)})$.
Let $\set{C_n}_{n \in \N}$ be a $\log^c$-uniform family of arithmetic circuits correspondingly over $\{\F_n\}_{n\in \N}$ of size $s(n)$ and depth $\Delta(n)$. Let $h=h(n)\ge2$ be a time-constructible function whose range is powers of two. Then, for every $n\in\N$ and every $\vec{\Lambda} \in \F_n^n$, 
\begin{enumerate}
	\item {\bf ($r=\polylog(s)$ instantiation.)} There is a \emph{$(h=2,m=\lceil\log(s)\rceil,r=\log(s)^c)$-polynomial decomposition} $\{f_{i,j} \mid i\in[\Delta(n)], j\in \{0,..., t\}  \}$ of $C_n(\vec{\Lambda})$, where $t \coloneqq 2m+r$.
	\item {\bf ($r=0$ instantiation.)} There is an \emph{$(h,m,0)$-polynomial decomposition} $\{f_{i,j} \mid i\in[\Delta(n)], j\in \{0,..., t \} \}$ of $C_n(\vec{\Lambda})$, where $m$ is the minimal integer such that $h^m\ge s$ and $t \coloneqq 2m$.
\end{enumerate}
where each of the foregoing decompositions satisfies the following properties.
\begin{enumerate}
	\item\label{item:faithful}
	(Faithful representation.)
	Let $H$ be any enumeration of $h$ in $\F_n$ and $\phi:[h]\to H$ be its corresponding enumerator. 
	Then, for every $i \in [\Delta(n)]$ and $\vec{\alpha} \in H^m$ representing a gate in the $i$\textsuperscript{th} layer, %
	it holds that\footnote{Recall from \cref{item:sumcheck} of \Cref{def:poly-dec} that $f_i$ is defined to be the layer polynomial $f_{i,t}$.} $f_i(\vec{\alpha})$ is the value of the gate $\vec{\alpha}$ in $C_n(\vec{\Lambda})$, i.e., $f_i(\vec{\alpha}) = \hat{f}_i(\vec{\alpha})$.
	\item\label{item:layer-poly-comp}(Layer Polynomial Computation.) It is possible to compute the layer polynomials efficiently by a uniform family of arithmetic networks as follows:
	\begin{enumerate}
		\item\label{item:input-layer-comp} (Input Layer.)
		There is a $\P$-uniform family of arithmetic networks of size $\poly(n,m,h,\log(s))$ 
		that takes as input $\vec{\Lambda}$ and outputs the description of an arithmetic circuit of size $O(n\cdot m\cdot h\cdot\polylog(s(n)))$ and degree $n\cdot h\cdot\polylog(s)$ that computes the polynomial $f_0(\vec{w})$.
		\item(Other Layers.)
		There is a $\P$-uniform family of arithmetic networks of size $\poly(s(n))$ %
		that takes as input $\vec{\Lambda}$ (as arithmetic data), and integers $i\in [\Delta(n)]$ and $j\in \{0,..., t\}$ (as boolean data) and outputs the description of an arithmetic circuit of size $\poly(s(n))$ and degree $h\cdot \polylog(s(n))$
		that computes $f_{i,j}(\vec{w},\vec{\sigma})$.%
	\end{enumerate}
	
	\item\label{item:dsr}
	(Layer downward self-reducibility.)
	There is a $\P$-uniform arithmetic network family of size $\poly(h,\log(s(n)))$ 
	that takes as input $\vec{\Lambda}$ (as arithmetic data) and an integer $i\in[\Delta(n)]$ (as boolean data), and outputs the description of an oracle arithmetic circuit of size and degree at most $h\cdot\polylog(s(n))$ that computes $f_{i,0}(\vec{w}, \vec{\sigma})$. 
	The only oracle access that the output circuit requires is for querying $f_{i-1}$ \emph{twice} on inputs in $H^m$. 
	\item\label{item:prop-sumcheck}
	(Sumcheck downward self-reducibility.)
	There is a $\P$-uniform arithmetic network family of size $\poly(h)$ 
	and degree $O(1)$ 
	that takes as input $\vec{\Lambda}$ (as arithmetic data) and integers $i\in[\Delta(n)]$ and $j\in\{0,..., t\}$(as boolean data), and outputs the description of an oracle arithmetic circuit of size $O(h)$ and degree $1$ that computes $f_{i,j}(\vec{w}, \vec{\sigma})$. 
	The only oracle access that the output circuit requires is for querying $f_{i,j-1}$ on inputs in $H^{m+t-j+1}$. 
	
	\item\label{item:ideg}
	(Low degree.)
	In the instantiation with $r>0$, the maximum individual degree of $f_{i,j}$ for each $(i,j)\in[\Delta]\times\{0,...,t\}$ is at most $6(h-1)$. Moreover, in either instantiation ($r>0$ or $r=0$), the total degree $\deg(f_{i,j})$ of each $f_{i,j}$ is bounded by $h\cdot\polylog(s)$.
	
\end{enumerate}
\end{proposition}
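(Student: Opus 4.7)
The plan is to build both decompositions directly from the $\log^c$-uniformity of $\{C_n\}$ by arithmetizing the circuit-structure function $\Phi_n$ to produce the polynomials $\Phi_i$, arithmetizing the circuit-constants function $\Psi_n$ to produce $f_0$, and then letting the formulas in Definition~\ref{def:poly-dec} define the remaining polynomials. For the $r=0$ instantiation, I take $\Phi_i$ to be the (unique) polynomial of individual degree at most $h-1$ agreeing with $\hat\Phi_i$ on $H^{3m}$, which is easy to arithmetize uniformly since the NAND-formula for $\hat\Phi_n$ has size $\polylog(s)$ and $\F$ is feasible. For the $r = \log(s)^c$ instantiation with $h=2$, I invoke Lemma~\ref{lem:var-ext} on the NAND-formula to obtain a variable-extended formulation $\hat\Phi_i^e$ computable by an arithmetic formula of size $\polylog(s)$ and individual degree $2$; the polynomial this formula computes is $\Phi_i$. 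The polynomial $f_0$ is produced by composing the $\P$-uniform arithmetic circuit $\Psi'_n$ (of size and degree $n\cdot\polylog(s)$ guaranteed by $\log^c$-uniformity) with the enumerator $\phi$ for $H$ extended as a low-degree polynomial, giving a circuit of the required size $O(n \cdot m \cdot h \cdot \polylog(s))$.

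Faithful representation (property~\ref{item:faithful}) is proved by induction on $i$: for $\vec\alpha\in H^m$ indexing a valid gate $v$, only the unique $(\vec\beta,\vec\gamma,\vec\delta)$ corresponding to the children of $v$ (with $\vec\delta$ the unique variable-extension witness, in the $r>0$ case) yields a nonzero summand in the definition of $f_i(\vec\alpha)$, whence $f_i(\vec\alpha)$ equals the true gate value of $v$ in $C_n(\vec\Lambda)$ by the layer-parity-dependent operation and the inductive hypothesis. For layer polynomial computation (property~\ref{item:layer-poly-comp}), the input-layer circuit is produced as above; for general $f_{i,j}$, since $h^{2m+r}\le\poly(s)$ in both instantiations, the sum defining $f_{i,j}$ has $\poly(s)$ many terms, each evaluable by combining a $\polylog(s)$-sized circuit for $\Phi_i$ with the hard-wired field values $\{\hat f_{i-1}(\vec\beta)\}_{\vec\beta\in H^m}$, which are obtained by evaluating $C_n$ on $\vec\Lambda$ through Proposition~\ref{prop:universal-arith-network}. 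Layer and sumcheck downward self-reducibility (properties~\ref{item:dsr} and~\ref{item:prop-sumcheck}) are then immediate from the recursive definitions: $f_{i,0}$ queries $f_{i-1}$ exactly twice, and $f_{i,j}$ equals the sum over $h$ values $\tau\in H$ of $f_{i,j-1}$ with one additional free coordinate set to $\tau$.

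For the low-degree property~\ref{item:ideg}, $\Phi_i$ has total degree $O(mh + r) = h\cdot\polylog(s)$ in both instantiations, and summing out variables never increases total degree, so a straightforward induction gives $\deg(f_{i,j})\le h\cdot\polylog(s)$ for all $i,j$. For the individual-degree bound in the $r>0$ case, $\Phi_i$ has individual degree $2$ by Lemma~\ref{lem:var-ext}; the individual degree of $f_i$ in each $w_k$ equals the individual degree of $\Phi_i$ in $w_k$ (since $\vec\beta,\vec\gamma,\vec\delta$ are summed out), so inductively each $f_i$ has individual degree at most $2$, and the individual degree of $f_{i,j}$ in any $\sigma_k$ is bounded by the individual degree of $\Phi_i$ in $\sigma_k$ plus the individual degree contributed by $f_{i-1}$ in its corresponding coordinate; a simple accounting yields the claimed bound of $6(h-1)$.

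The main obstacle is ensuring that the variable-extension construction in the $r>0$ case simultaneously achieves $\P$-uniformity, small total degree, and constant individual degree. Lemma~\ref{lem:var-ext} is designed for this and provides all three properties at once, but one must verify that the Cook--Levin-style witness encoding propagates uniformly from the NAND-formulas for $\hat\Phi_n$, including producing uniform standard descriptions of the resulting arithmetic formulas. A secondary subtlety is realizing the $\poly(s)$-sized circuits for $f_{i,j}$ required by property~\ref{item:layer-poly-comp} \emph{without} blowing up the total degree beyond $h\cdot\polylog(s)$, which is why one must hard-wire the precomputed field values $\hat f_{i-1}(\vec\beta)$ as leaf constants rather than substituting in recursive polynomial expressions for $f_{i-1}$.
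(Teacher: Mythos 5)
Your choice of $\Phi_i$ for the $r=0$ instantiation breaks the construction. You take $\Phi_i$ to be the unique polynomial of individual degree at most $h-1$ agreeing with $\hat\Phi_i$ on $H^{3m}$, but there is no reason this interpolant has a small arithmetic circuit: the Lagrange expansion over $H^{3m}$ has $h^{3m}=\poly(s)$ terms, giving a circuit of size $\poly(s)$. That violates property~\ref{item:dsr}, which demands an oracle circuit for $f_{i,0}$ of size and degree at most $h\cdot\polylog(s)$ and must embed a sub-circuit for $\Phi_i$. Definition~\ref{def:poly-dec} does not ask for the minimal-individual-degree interpolant, only for \emph{any} polynomial agreeing with $\hat\Phi_i$ on $H^{3m}$, and the paper exploits that freedom: its $\Phi_i$ is the composition of the direct arithmetization of the NAND-formula (replace each NAND gate $w$ with children $u,v$ by $w = 1 - uv$) with degree-$h$ projection polynomials $\pi_j\colon\F\to\F$ mapping $H$ to bits. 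That composition has circuit size and total degree $h\cdot\polylog(s)$ and agrees with $\hat\Phi_i$ on $H^{3m}$, even though its individual degree is far larger than $h-1$. It is the small circuit that matters, not minimality of the interpolant.

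There is a second issue in your account of the circuits for $f_{i,j}$ under property~\ref{item:layer-poly-comp}. You claim each summand is built by combining a small circuit for $\Phi_i$ with hard-wired constants $\hat f_{i-1}(\vec\beta)$. That is correct for the layer polynomial $f_i=f_{i,t}$, where $\vec\beta,\vec\gamma$ range over $H^m$ and so are constants, but for $f_{i,j}$ with $j<t$ the variables $\sigma_1,\ldots,\sigma_{t-j}$ are free, and $f_{i-1}(\sigma_1,\ldots,\sigma_m)$ is then a genuine polynomial in free variables rather than a constant. The paper instead substitutes the circuit already built for $f_{i-1}=f_{i-1,t}$ (which hard-wires the layer-$(i-2)$ gate values and hence has degree $h\cdot\polylog(s)$); this is a one-step substitution, not a recursive unrolling down to $f_0$, so it keeps the circuit degree at $O(h\cdot\polylog(s))$ and the degree blowup your ``secondary subtlety'' warns about never occurs, while your proposed fix of replacing the circuit by hard-wired values is simply unavailable for $j<t$. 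The remaining components of your plan --- the variable-extended formulation for $r>0$, constructing $f_0$ from $\Psi_n'$ and projections, the faithful-representation induction, and the individual-degree bookkeeping --- do match the paper's approach.
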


\bproof
As $\set{C_n}_{n \in \N}$ is a $\log^c$-uniform family of circuits, we know that its circuit structure function $\hat{\Phi}$ (see Definition \ref{def:suf-unif}) is computable by a $\P$-uniform family of formulas of size $(\log(s(n)))^c$. Note that the domain of $\hat{\Phi}$ is $[\Delta(n)]\times [s(n)]^3$. Let the $i$-th ``slice" be denoted by $\hat{\Phi}_i(\cdot)$ (that is, $\hat{\Phi}_i(w,u,v)=\hat{\Phi}(i,w,u,v)$). In the rest of the proof, we use the shorthand $s = s(n)$, $\Delta = \Delta(n)$, and $\F = \F_n$.

We construct two instantiations of polynomial decompositions (as described in \Cref{def:poly-dec}): one corresponding to $r>0$ and $H=\{0,1\}$, and the other corresponding to $r=0$. From the definition of the $f_i$'s in \cref{item:layer-polys} of \Cref{def:poly-dec}, together with \Cref{obs:var-ext}, in order to establish the claim about faithful representation, it suffices to find an appropriate arithmetization of the $\hat{\Phi}_i$’s. 
In the first case, we construct the arithmetization of an appropriate variable-extended formulation (see \Cref{def:var-ext}) of $\hat{\Phi}_i$ for some $0 < r = \polylog(s(n))$ to be specified soon, whereas in the second case, we do so for $\hat{\Phi}_i$ itself. Thus, to complete the description of the polynomial decomposition and conclude the proof, we now specify the two arithmetizations (correspondingly in the two subclaims below), and subsequently prove our claims regarding downward
self-reducibility and efficient computability (which is common to both cases).

\begin{subclaim} [$r>0,H=\{0,1\}$]\label{clm:arithmetization}
Let $H=\{0,1\}$ and let $r =\lfloor (\log s)^c \rfloor$. For each $i\in [\Delta]$, there is a polynomial $\Phi_i\in \F[z_1, \ldots, z_{3m+r}]$ that satisfies the following:
\begin{enumerate}
	\item For every $(\vec{w},\vec{u},\vec{v})\in \{0,1\}^{3m}$ (so that each of these vectors denotes the label of a gate in layer $i$ of $C_n$), we have that 
	\[
	\hat{\Phi}_i(\vec{w},\vec{u},\vec{v}) = \sum_{\vec{y}\in \{0,1\}^r}\Phi_i(\vec{w},\vec{u},\vec{v}, \vec{y}).
	\]
	\item The maximum individual degree of $\Phi_i$ is $2$ and the total degree is bounded by $\polylog(s)$.
	\item There is an arithmetic circuit that computes $\Phi_i$ that has size at most $\polylog(s)$.
\end{enumerate}
Moreover, there exists a $\P$-uniform family of arithmetic networks of size $\polylog(s)$ %
that takes as input an integer $i\in [\Delta]$ and outputs the description of the foregoing arithmetic circuit.
\end{subclaim}

\begin{subproof}
From \Cref{def:suf-unif}, we have a $\P$-uniform family of NAND-formulas of size $r' = \lfloor (\log s)^c \rfloor$ that compute $\hat{\Phi}_i$. For any given $n$, we assume that in the corresponding boolean NAND-formula, the domain of $\hat{\Phi}_i$ is $\zo^{3m}$. 

Next, we apply \Cref{lem:var-ext} to this boolean formula to obtain an arithmetic formula $\Phi'$ of size $5(\log s)^c$ over $\F$ that computes a variable-extended formulation of $\hat{\Phi}_i$. We define $\Phi_i$ to be the polynomial computed by this arithmetic formula. By~\Cref{lem:var-ext}, for every $(\vec{w},\vec{u},\vec{v})\in\{0,1\}^{3m}$ we have $\sum_{\vec{y}\in\{0,1\}^r}\Phi_i(\vec{w},\vec{u},\vec{v},\vec{y})=\hat{\Phi}_i(\vec{w},\vec{u},\vec{v})$, and the bounds on the individual degree, total degree, and size of $\Phi_i$ also follow immediately from the lemma's statement (note that $s_f=r'=\polylog(s)$ and $m=\log(s)$).

As for the $\P$-uniform family of networks that get $i$ and print a circuit for $\Phi_i$, the network has the descriptions of $\hat{\Phi}_1,\ldots,\hat{\Phi}_{\Delta}$ hard-wired.
This can be done in a $\P$-uniform manner, as each of $\hat{\Phi}_1, \ldots, \hat{\Phi}_{\Delta}$ can be printed in time $\polylog(s)$, since the circuit family $\{C_n\}$ is $\log^c$-uniform.
Given $i$, the network uses the network from~\Cref{lem:var-ext} to transform the description of $\hat{\Phi}_i$ into a description of $\Phi_i$.
\end{subproof}

\begin{subclaim} [$r=0$, $H$ arbitrary]\label{clm:arithmetization'}
For each $i\in [\Delta]$, there is a polynomial $\Phi_i\in \F[z_1, \ldots, z_{3m}]$ that satisfies the following:
\begin{enumerate}
	\item For every $(\vec{w},\vec{u},\vec{v})\in H^{3m}$ (so that each of these vectors denotes the label of a gate in layer $i$ of $C_n$), we have that $\hat{\Phi}_i(\vec{w},\vec{u},\vec{v}) = \Phi_i(\vec{w},\vec{u},\vec{v}).$
	\item The degree of $\Phi_i$ is bounded by $h\cdot \polylog(s)$.
	\item There is an arithmetic circuit that computes $\Phi_i$ that has size at most $h\cdot \polylog(s)$.
\end{enumerate}
Moreover, there exists a $\P$-uniform family of arithmetic networks of size 
$h\cdot \polylog(s)$ that takes as input an integer $i\in [\Delta]$ and outputs the description of the foregoing arithmetic circuit.

\end{subclaim}

\begin{subproof}
We think of $\hat{\Phi}_i$ as a function $\zo^{3\log s}\to \zo$, and note that it is computable by an arithmetic formula (over $\zo$), whose structure mimics the one of the original Boolean formula (i.e., each gate $\hat{w}$ in the original formula computes the NAND of two sub-formulas $\hat{u}$ and $\hat{v}$, and thus $\hat{w}$ can be replaced by the expression ${w} = 1 -{u}\cdot {v}$). The same arithmetic formula (now interpreted to be over $\F$) can be used to compute a polynomial ${\Phi}_i': \F^{3\log s} \to \F$ of size at most $3\cdot \size(\hat{\Phi}) = O((\log s)^c)$ and degree bounded by $O((\log s)^c)$. The latter follows from the simple observation that the degree of a polynomial computed by an arithmetic formula is always bounded by its size\footnote{Here is a simple proof by structural induction: the statement is clearly true for leaves, regardless of whether the leaf is labelled by a constant or a variable. Suppose $u$ is an  addition gate with children $v$ and $w$; then by the inductive hypothesis for $v$ and $w$, $\deg(u)\leq \max(\deg(v),\deg(w))\leq \deg(v) + \deg(w)\leq \size(v) + \size(w) \leq \size(u)$. If $u$ is a multiplication gate, then $\deg(u) = \deg(v) + \deg(w)\leq \size(v) + \size(w) \leq \size(u)$.}. 
We conclude that $\Phi_i'$ is a polynomial of degree $\polylog(s(n))$ over $\F$ computable by an arithmetic  formula of size $\polylog(s(n))$ that agrees with $\hat{\Phi}_i$ on $\zo^{3\log s}$.  

Next, let $\ell = \log h$, and for every $j \in [\ell]$ consider the function $\hat{\pi}_j : H \to \zo$ such that
$\hat{\pi}_j(a)$ is the $j$-th bit in the binary representation of the integer $\phi(a)$, where $\phi^{-1}$ is an enumerator for $H$ (see \Cref{def:enumeration}). 
Note that there is a polynomial $\pi_j : \F \to \F$ of degree at most $h$ that agrees with $\hat{\pi}_j$ on $H$. Finally, let $\Phi_i : \F^{3m} \to \F$ be the polynomial
\[
\Phi_i(z_1,\ldots,z_{3m}) \coloneqq \Phi_i'(\pi_1(z_1),\ldots,\pi_\ell(z_1),\ldots,\pi_1(z_{3m}),\ldots,\pi_\ell(z_{3m})).
\]
By definition, when ${\Phi}_i$ is given input $(\vec{w},\vec{u},\vec{v})\in H^{3m}$, the $\pi_j$’s project each of the three inputs
to a bit-string that is the index of a gate, and the arithmetic formula $\Phi_i'$ computes $\hat{\Phi}_i$ on the corresponding gates. Also, the size and degree of $\Phi_i$ are bounded by $h\cdot \size(\Phi') = h\cdot (\log (s(n)))^c = h\cdot \polylog(s(n))$. 
Finally, to see the ``moreover" part of the claim, we note that the arithmetic formulas for $\Phi_i$ are defined by composing $\Phi_i'$ (whose formula descriptions are already output by the $\P$-uniform family of arithmetic networks in \Cref{lem:var-ext}) and $\pi_j$'s (which have small $O(h)$-sized $\P$-uniform formulas as $H$ can be enumerated efficiently).
Applying \Cref{prop:comp-arith-network} then yields the result. %
\end{subproof}

To specify a polynomial decomposition, it suffices to specify $H,m,r,\Phi$, and $f_0$. Thus, given~\Cref{clm:arithmetization} and~\Cref{clm:arithmetization'}, all that is left in order to fully specify the two polynomial decompositions is $f_0$. 

Let $\vec{\Lambda}\vec{K}\in\F^s$ be the input layer in $C_n(\vec{\Lambda})$. Since $\{C_n\}_{n\in\N}$ is $\log^c$-uniform, there is a $\P$-uniform family $\{\Psi'_n\colon\F_n^n\times\F_n^{\log(s)}\rightarrow\F_n\}$ of arithmetic circuits such that $\Psi'_n(\vec{\Lambda},w)=(\vec{\Lambda}\vec{K})_w$ for all $w\in\{0,1\}^{\log(s)}$. Consider an arithmetic circuit $C_{\vec{\Lambda}}\colon\F^m\rightarrow\F$ that has $\vec{\Lambda}$ hard-wired, gets input $\vec{w}=(w_1,...,w_m)\in\F^m$, computes $z=(\pi_1(w_1),...,\pi_{\ell}(w_1),...,\pi_1(w_m),...,\pi_{\ell}(w_m))\in\zo^{\ell\cdot m=\log(s)}$ (where $\ell=\log(h)$ and the $\pi_i$'s are as in the proof of~\Cref{clm:arithmetization'}) and outputs $\Psi'_n(\vec{\Lambda},z)$. The circuit $C_{\vec{\Lambda}}$ is of size $h\cdot m\cdot n\cdot\polylog(s)$ and of degree $n\cdot h\cdot\polylog(s)$, and for every $\vec{w}\in H^m$ it outputs the $w^{th}$ entry of $\vec{\Lambda}\vec{K}$. Also, there is a $\P$-uniform arithmetic network that gets $\vec{\Lambda}$ and outputs the description of $C_{\vec{\Lambda}}$ in size $\poly(n,m,h,\log(s))$. We thus define $f_0$ to be the polynomial that $C_{\vec{\Lambda}}$ computes.

\paragraph{Layer Polynomial Computation:} 
We already proved the claim about $f_0$ above, so let us focus on other layers. Given $\vec{\Lambda}$ and $i\in[\Delta]$, consider the following arithmetic circuit for $f_i(\vec{w})$.

Consider all pairs of gates in the $(i-1)$-th layer of $C_n$ evaluated on the input $\vec{\Lambda}$ (represented by a pair of index vectors $(\vec{\beta},\vec{\gamma})$), apply the suitable arithmetic operation (addition if $i$ is odd, and multiplication if $i$ is even), multiply the result with $\sum_{\vec{\delta}\in H^r}\Phi_i(\vec{w},\vec{\beta},\vec{\gamma},\vec{\delta})$ if we are working with the $r>0$ instantiation (whose circuit is obtained from \Cref{clm:arithmetization} above) or $\Phi_i(\vec{w},\vec{\beta},\vec{\gamma})$ if we are working with the $r=0$ instantiation (whose circuit is obtained from \Cref{clm:arithmetization'} above), and finally add the result for all such pairs of gates in the $(i-1)$-th layer of $C_n$ evaluated on the input $\vec{\Lambda}$ (using \Cref{prop:universal-arith-network}). 

As the circuit size of $C_n$ is bounded by $s(n)$, the size of the sub-circuit that computes the evaluation of any gate in the $(i-1)$-th layer on input $\vec{\Lambda}$ is also bounded by $s(n)$. As the computation described above can also be performed by a sub-circuit of $\poly(s(n))$ size, it follows that the size of the circuit for $f_i(\vec{w})$ constructed above is also bounded by $\poly(s(n))$. 

Moreover, given as input $j\in \{0,...,t\}$, we can construct an arithmetic  circuit that computes $f_{i,j}$ by simply using the formula defining $f_{i,j}(\vec{w}, \vec{\sigma})$ in \cref{item:sumcheck} of \Cref{def:poly-dec}: enumerate over the $h^j$ different values for $\sigma_{t-j+1},\ldots,\sigma_{t}$ and use the foregoing circuits for $\Phi_i$ and $f_{i-1}$. As $h^j\leq h^m = O(s)$, the size of the overall circuit for $f_{i,j}$ is still $\poly(s(n))$. 

Finally, note that the implementation of the above using a $\P$-uniform family of arithmetic networks follows from a suitably repeated application of \Cref{prop:short-desc}, \Cref{prop:universal-arith-network}, and \Cref{prop:comp-arith-network}.%

\paragraph{Layer Downward Self-reducibility:} 
Given as input $\vec{\Lambda}$, an integer $i\in[\Delta(n)]$, and oracle access to $f_{i-1}$, in order to compute (the description of) an oracle arithmetic circuit for $f_{i,0}$, the arithmetic network does the following: It takes (the description of) the circuit for $\Phi_i$ as constructed above (accordingly in \Cref{clm:arithmetization} if $r>0$ and \Cref{clm:arithmetization'} if $r=0$), and computes the description of a circuit that is obtained by performing the two oracle calls to $f_{i-1}$ (as described in \cref{item:layer-polys} of \Cref{def:poly-dec}) and the appropriate arithmetic operation on the two values (depending on the parity of $i$), and multiplying this result with the circuit for $\Phi_i$.
The correctness of this circuit computing the desired polynomial $f_{i,0}$ again follows from \cref{item:sumcheck} of \Cref{def:poly-dec}. The size and degree bounds follow
from  \Cref{clm:arithmetization} and~\Cref{clm:arithmetization'}. 
The claim about the the polynomial identity $f_{i,t}(\vec{w}) = f_i(\vec{w})$ is by \cref{item:sumcheck} of \Cref{def:poly-dec}.

\paragraph{Sumcheck downward self-reducibility:} From \cref{item:sumcheck} of \Cref{def:poly-dec}, we have 
\mm{
f_{i,j}(\vec{\alpha},\beta_{1},\ldots,\beta_{t-j}) = \sum_{\beta_{t-j+1}\in H} f_{i,j-1}(\vec{\alpha},\beta_{1},\ldots,\beta_{t-j}, \beta_{t-j+1}),
}
and this expression can be computed by an oracle formula (with access to $f_{i,j-1}$) of size $O(h)$ and degree $1$. Therefore, given an integer $j\in\{0,..., t \}$ and oracle access to $f_{i,j-1}$, the arithmetic network computes its description simply using this equation (which uses the oracle to the circuit for $f_{i,j}$ $h$ times).
The claims about the $\P$-uniformity %
and the size and degree of the arithmetic network that computes this description are straightforward in this case. 

\paragraph{Low degree:}
The maximum individual degree of $f_{i}$ for each $i\in[\Delta]$ is at most that of $\Phi_{i}$ by \cref{item:layer-polys} of \Cref{def:poly-dec}. Therefore, it satisfies 
\[
\ideg(f_i) \le 
\begin{cases}
2(h-1) & \text{if $r > 0$} \\
h \cdot \polylog(s) & \text{if $r = 0$.}
\end{cases}
\]
In the $r>0$ instantiation, for each $(i,j)\in[\Delta]\times\{0,...,t\}$, we have that
\mm{
\ideg(f_{i,j}) \le \ideg(\Phi_i) + 2\ideg(f_{i-1})\leq \ideg(\Phi_i) + 2\ideg(\Phi_{i-1}) \leq 6(h-1) \;\;,
}
from \Cref{clm:arithmetization}. Moreover, in either instantiation, the total degree $\deg(f_{i,j})$ of each $f_{i,j}$ is bounded by $h\cdot\polylog(s)$.
\eproof

\section{A targeted version of the Guo--Kumar--Saptharishi--Solomon generator}\label{sec:gkss}

Our goal in this section is to construct an arithmetic reconstructive targeted hitting-set generator that works over fields of characteristic zero, or over sequences of finite fields of large characteristic. The generator is based on a family $\set{C_n}$ of circuits computing a hard problem, and is computable by a family of arithmetic networks. Whenever a distinguisher represented by a string $\vec{\Lambda}$ breaks the generator, another family $\set{R_n}$ of arithmetic networks computes the hard problem at input $\vec{\Lambda}$.

\begin{theorem} [algebraic GKSS-based targeted hitting-set generator] \label{thm:gkss-tarhsg}
Let $d,\nbar:\N\to\N$ be time-constructible functions where $d$ is polynomially-bounded, and let $\eps>0$ be a constant.
Let $\set{C_n}_{n \in \N}$ be a $\log^c$-uniform family of arithmetic circuits of size $s(n)$ and degree at most $s(n)$ having $n$ output gates, defined over a field family $\set{\F_n}_{n\in\N}$ where either (i) each $\F_n$ is a fixed field $\F$ of characteristic zero, or (ii) for each $n\in \N$, $\ch(\F_n) > s^{O(\eps/c)}\cdot \polylog(s) > \nbar$.
Then, there are two $\P$-uniform families of arithmetic networks $\set{G_n}$ and $\set{R_n}$ satisfying the following.
\begin{enumerate}
    \item {\bf (Generator.)} \label{item:gkss-tarhsg-generator} Networks in $\set{G_n}$ are of size $\poly(s)$.
    On input $\vec{\Lambda}\in\F_n^n$, the network $G_n$ outputs the description of an arithmetic circuit of size $\poly(s,\nbar)$ that computes a polynomial map
    \[
        \calH_{\vec{\Lambda}}\colon\F_n^{O(1/\eps)} \rightarrow \F_n^{\nbar}
    \]
    of degree $\deg(\calH_{\vec{\Lambda}})\le n\cdot s^{\eps}$.
    
    \item {\bf (Reconstruction.)} \label{item:gkss-tarhsg-reconstruction} Networks in $\set{R_n}$ are networks with $\pit$ gates and are of size $\poly(n,\nbar^{1/\eps},s^{\eps},d(\nbar))$ and degree $\poly(d(\nbar),s^{\eps},\nbar)^{\polylog(s)}$. %
    When $\vec{\Lambda}\in\F_n^n$ is a universal encoding of a degree-$d(\nbar)$ arithmetic circuit with $\nbar$ input gates such that $\vec{\Lambda}\circ \calH_{\vec{\Lambda}} \equiv 0$, then $R_n$ computes $C_n(\vec{\Lambda})$ on input $\vec{\Lambda}$. %
\end{enumerate}
\end{theorem}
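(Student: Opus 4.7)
The plan is to combine the polynomial decomposition of \Cref{prop:poly-dec-cons} with the hitting-set generator $G_{\sf GKSS}$ of~\cite{GKSS22}, following the blueprint sketched in~\Cref{sec:tech:gkss}. Specifically, I would apply the $(h,m,0)$-instantiation of the polynomial decomposition with $h$ of order $s^{\Theta(\eps/c)} \cdot \polylog(s)$ and $m = O(1/\eps)$, so each layer/sumcheck polynomial $f_{i,j}$ is an $m$-variate polynomial of degree $h \cdot \polylog(s)$ over $\F_n$. The assumption on $\ch(\F_n)$ ensures that the characteristic is large enough to accommodate all the interpolation and Euler's-formula manipulations that underlie GKSS. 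Because $m$ is a constant, $G_{\sf GKSS}$ instantiated with any single $f_{i,j}$ uses a constant-length seed.

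For the generator itself, I would define
\[
\calH_{\vec{\Lambda}}(y_1, y_2, \vec{z}) \;=\; \sum_{(i,j) \in [\Delta] \times \{0,\ldots,t\}} L_{i,j}(y_1, y_2) \cdot G_{\sf GKSS}^{f_{i,j}}(\vec{z}),
\]
where the $L_{i,j}$ are Lagrange interpolators selecting index $(i,j)$ on a grid in $\F_n^2$. The network $G_n$, on input $\vec{\Lambda}$, invokes the layer-polynomial-computation part of~\Cref{prop:poly-dec-cons} to print small circuits for each $f_{i,j}$, substitutes these into a generic description of $G_{\sf GKSS}$, and combines them via the Lagrange map. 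The size bound $\poly(s,\nbar)$ and the degree bound $n \cdot s^{\eps}$ then follow by tracking the parameters through this composition.

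For reconstruction, I would iterate over pairs $(i,j)$ in lexicographic order, producing at each step a fixed-polynomial-sized arithmetic circuit $C^{\sf rec}_{i,j}$ for $f_{i,j}$ by invoking a uniform implementation of the GKSS reconstruction procedure on the distinguisher $\vec{\Lambda}$ (which satisfies $\vec{\Lambda} \circ G_{\sf GKSS}^{f_{i,j}} \equiv 0$ since $\vec{\Lambda} \circ \calH_{\vec{\Lambda}} \equiv 0$ and the $L_{i,j}$ form an indicator basis). Oracle queries made by the GKSS reconstruction to $f_{i,j}$ are answered using the downward-self-reducibility procedures in items~\ref{item:dsr} and~\ref{item:prop-sumcheck} of~\Cref{prop:poly-dec-cons} together with the already-reconstructed circuit $C^{\sf rec}_{i,j-1}$ (or $C^{\sf rec}_{i-1,t}$ if $j=0$); crucially, downward self-reducibility is used only to \emph{answer queries within one iteration}, and the output of each iteration is a fresh circuit of fixed polynomial size depending only on $\vec{\Lambda}$, so there is no exponential blow-up across the $\Delta \cdot t$ iterations. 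After the final iteration, the faithful-representation property (item~\ref{item:faithful}) lets us read off $C_n(\vec{\Lambda})$ from evaluations of $C^{\sf rec}_{\Delta,t}$ on the set $H^m$.

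The main obstacle is implementing the GKSS reconstruction as a $\P$-uniform family of arithmetic networks with $\pit$ gates. The iterative ``degree-by-degree'' part of~\cite{GKSS22}, which recovers the degree-$d$ homogeneous component of $f_{i,j}$ from its order-$n$ partial derivatives by applying Euler's formula, can be implemented in a fairly direct uniform way once we have (a) a starting base case consisting of the low-degree homogeneous components of $f_{i,j}$, and (b) an interpolating set of points on which a certain auxiliary first-order-derivative polynomial does not vanish. For (a), my plan is to first use downward self-reducibility to build an (expensive but temporary) circuit $C^{\sf temp}_{i,j}$ for $f_{i,j}$, homogenize it up to degree $n$, and read off the coefficients of the low-degree components; this object is only used as scratch work during one iteration and is then discarded in favor of the polynomial-size $C^{\sf rec}_{i,j}$ produced by GKSS. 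For (b), the $\pit$ gates are essential: we write down an explicit arithmetic circuit computing a nonzero polynomial that vanishes on ``bad'' candidate sets, and use $\pit$ gates to search over a grid for a set on which this polynomial is nonzero. Finally, the size and degree bounds in the theorem statement follow by tracking the per-iteration cost $\poly(n,\nbar^{1/\eps},s^{\eps},d(\nbar))$ through the $\polylog(s)$ layers of the decomposition, and using $m = O(1/\eps)$ to keep $\nbar^m = \nbar^{O(1/\eps)}$ under control.
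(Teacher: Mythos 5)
Your proposal follows the same route as the paper: the $(h,m,0)$-polynomial decomposition with $m = O(1/\eps)$ and $h \approx s^{\eps/c}$, the Lagrange-interpolated combination of GKSS generators over the $(i,j)$ grid, and the iterative reconstruction that answers GKSS oracle queries via downward self-reducibility while keeping each $C^{\sf rec}_{i,j}$ of fixed polynomial size, using a temporary homogenized circuit for the base case and $\pit$-gate search for a good interpolating set. All of this matches Construction~\ref{cons:GKSS-based-tarHSG}, Lemma~\ref{lem:GKSS-based-gen}, and Proposition~\ref{prop:GKSS-recons-arith-net}.

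The one thing you leave implicit that the paper makes explicit and essential: before taking the polynomial decomposition, one must apply uniformity-preserving depth reduction (\Cref{prop:uniform-depth-red}, in the spirit of~\cite{VSBR83}) to $\set{C_n}$ to obtain a family of size $\poly(s)$ and depth $\Delta = O(\log^2 s)$. The theorem's hypothesis bounds only the size and degree of $\set{C_n}$, not the depth, so without this step $\Delta$ could be $\poly(n)$ — in which case your ``$\polylog(s)$ layers'' estimate fails and the final degree bound from composing $\Delta \cdot t$ reconstruction steps would be $\poly(d,s^{\eps},\nbar)^{\poly(s)}$ rather than $\poly(d,s^{\eps},\nbar)^{\polylog(s)}$. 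Since you do track $\Delta \cdot t$ iterations and assume polylogarithmically many layers, the fix is just to state the depth-reduction step explicitly; everything else in your argument then goes through as the paper's does.
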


We now prove \Cref{thm:gkss-tarhsg} based on claims that will be established throughout the section. We include the proof here (i.e., in advance) to assist in verification. The definitions and analyses of the generator networks $\set{G_n}$ and reconstruction networks $\set{R_n}$ appear in \Cref{sec:gkss-generator} and \Cref{sec:gkss-reconstruction}, respectively.

\begin{proof}[{\bf Proof of~\Cref{thm:gkss-tarhsg}}]
Let $c'>1$ be a sufficiently large universal constant. We use~\Cref{cons:GKSS-based-tarHSG} with the circuit family $\set{C_n}$ and with parameters $h=2^{\lceil (\eps/c)\log(s) \rceil}$ and $m=c'/\eps$ (so that indeed $h^m\ge s'$ is satisfied) and with output length $\nbar$. By~\Cref{lem:GKSS-based-gen}, the degree of $\calH_{\vec{\Lambda}}$ is indeed at most $n\cdot \tilde{O}(s^{\eps/c})<n\cdot s^{\eps}$, for every $\vec{\Lambda}\in \F_n^n$. %
The reconstruction in~\Cref{prop:GKSS-recons-arith-net}, instantiated with degree bound $d(\nbar)$, is a network of size at most %
$\poly(n,\nbar^{1/\eps},s^{\eps},d(\nbar))$ and degree at most $\poly(d(\nbar),s^{\eps},\nbar)^{\polylog(s)}$.%
\end{proof}

\subsection{The targeted hitting set generator} \label{sec:gkss-generator}

Loosely speaking, the targeted generator will get input $\vec{\Lambda}$ and apply the generator of~\textcite{GKSS22} to each of the polynomials in the polynomial decomposition of $C_n(\vec{\Lambda})$. We first recall their construction, and then describe our targeted generator.

\begin{construction}[\cite{GKSS22}] \label{cons:GKSS}
    Let $n \in \N$ and let $\F$ be a field with $\ch(\F) = 0$ or $\ch(\F) \ge n$.
    Let $P(z_1,\ldots,z_\ell) \in \F[z_1,\ldots,z_\ell]$ be an $\ell$-variate polynomial. %
    For $i \in [n]$, define $\Delta_{P,i}$ to be the $2\ell$-variate polynomial
    \[
        \Delta_{P,i}(z_1,\ldots,z_\ell,y_1,\ldots,y_\ell) \coloneqq \sum_{\substack{\vec{e} \in \Z_{\geq 0}^\ell \\ |\vec{e}| = i}} \frac{\vec{y}^{\vec{e}}}{\vec{e}!} \cdot \frac{\partial^{\vec{e}}}{\partial \vec{z}^{\vec{e}}}(P(z_1,\ldots,z_\ell)).
    \]
    Equivalently, the polynomial $\Delta_{P,i}$ is the component of $P(y_1 + z_1, \ldots, y_\ell + z_\ell)$ that is homogeneous of degree $i$ with respect to the $\vec{y}$ variables.
    The \emphdef{$n$-output GKSS generator with respect to $P$} is the polynomial map $\mathcal{H}_{\text{GKSS}, P, n} : \F^{2\ell} \to \F^{n}$ given by
    \[
        \mathcal{H}_{\text{GKSS}, P, n}(z_1, \ldots, z_\ell, y_1,\ldots,y_\ell) = (\Delta_{P,0}(\vec{z}, \vec{y}), \Delta_{P,1}(\vec{z},\vec{y}), \ldots, \Delta_{P, n-1}(\vec{z}, \vec{y})).
    \]
\end{construction}

\paragraph{Preliminary observations.} We will later use some basic observations about the generator of \Cref{cons:GKSS}, and a well-known fact, which we state here for convenience.

\begin{obs}[{\cite[Observation 2.11]{GKSS22}}] \label{obs:shift-invariance-gkss}
    Let $P \in \F[z_1,\ldots,z_\ell]$ and let $\vec{a} \in \F^\ell$.
    Letting $P'(\vec{z}) \coloneqq P(\vec{z} + \vec{a})$, we have
    \[
        \Delta_{P', i}(\vec{z}, \vec{y}) = \Delta_{P,i}(\vec{z} + \vec{a}, \vec{y})
    \]
    for all $i \in \N$.
    As a consequence, we have 
    \[
        \mathcal{H}_{\text{GKSS}, P', n}(\vec{z}, \vec{y}) = \mathcal{H}_{\text{GKSS}, P, n}(\vec{z} + \vec{a}, \vec{y})
    \]
    for all $n \in \N$.
\end{obs}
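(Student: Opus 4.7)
The plan is to identify $\Delta_{P,i}(\vec{z},\vec{y})$ with the degree-$i$ homogeneous component (in $\vec{y}$) of the Taylor expansion of $P$ centered at $\vec{z}$. Concretely, I would first observe that the multivariate Taylor expansion gives
\[
P(\vec{z}+\vec{y}) = \sum_{\vec{e} \in \Z_{\geq 0}^{\ell}} \frac{\vec{y}^{\vec{e}}}{\vec{e}!}\,\partial_{\vec{z}^{\vec{e}}}(P(\vec{z})),
\]
so that $\Delta_{P,i}(\vec{z},\vec{y})$ is exactly the component of $P(\vec{z}+\vec{y})$ which is homogeneous of degree $i$ in the $\vec{y}$-variables, as already noted in the discussion surrounding \Cref{cons:GKSS}. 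This reformulation is the one that makes the shift-invariance transparent.

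Next, I would apply this viewpoint to $P'(\vec{z})=P(\vec{z}+\vec{a})$. Writing out $P'(\vec{z}+\vec{y})$ directly and re-associating the sum of arguments yields
\[
P'(\vec{z}+\vec{y}) \;=\; P(\vec{z}+\vec{y}+\vec{a}) \;=\; P\bigl((\vec{z}+\vec{a})+\vec{y}\bigr).
\]
Taking the degree-$i$ component in $\vec{y}$ on both sides and using the Taylor characterization again gives $\Delta_{P',i}(\vec{z},\vec{y}) = \Delta_{P,i}(\vec{z}+\vec{a},\vec{y})$, which is the first claimed identity. As a sanity check I would also verify the identity directly from the defining sum, noting that $\partial_{\vec{z}^{\vec{e}}}(P(\vec{z}+\vec{a})) = (\partial_{\vec{z}^{\vec{e}}}P)(\vec{z}+\vec{a})$ by the (formal) chain rule for the translation $\vec{z}\mapsto \vec{z}+\vec{a}$, so that
\[
\Delta_{P',i}(\vec{z},\vec{y}) \;=\; \sum_{|\vec{e}|=i} \tfrac{\vec{y}^{\vec{e}}}{\vec{e}!}\,(\partial_{\vec{z}^{\vec{e}}}P)(\vec{z}+\vec{a}) \;=\; \Delta_{P,i}(\vec{z}+\vec{a},\vec{y}).
\]

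For the second statement, I would simply invoke the definition of the generator $\mathcal{H}_{\text{GKSS},P,n}$ as the tuple $(\Delta_{P,0},\Delta_{P,1},\ldots,\Delta_{P,n-1})$ and apply the componentwise identity to conclude $\mathcal{H}_{\text{GKSS},P',n}(\vec{z},\vec{y}) = \mathcal{H}_{\text{GKSS},P,n}(\vec{z}+\vec{a},\vec{y})$. Since the argument is purely formal and uses only Taylor expansion together with translation invariance of partial derivatives, there is no real obstacle here; the only minor point to be careful about is that we are working with formal polynomials, so we rely on the formal (rather than analytic) Taylor identity, which is valid over any field of characteristic zero or characteristic exceeding $\deg(P)$, both of which are covered by the ambient hypotheses of \Cref{cons:GKSS}.
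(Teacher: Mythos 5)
The paper states this observation without proof, citing \cite{GKSS22}. Your argument is correct; in fact the direct ``sanity check'' is the cleanest form of the proof, since it needs only the identity $\partial_{\vec{z}^{\vec{e}}}\bigl(P(\vec{z}+\vec{a})\bigr) = \bigl(\partial_{\vec{z}^{\vec{e}}}P\bigr)(\vec{z}+\vec{a})$ applied termwise to the defining sum for $\Delta_{P,i}$, and this holds formally over any field (without invoking exactness of the full Taylor expansion). The second part is indeed immediate componentwise from \Cref{cons:GKSS}.
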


\begin{lem}\label{lem:z=0-gkss}
    Let $P \in \F[z_1,\ldots,z_\ell]$.
    For $i \in \N$, let $P_i$ denote the degree-$i$ homogeneous component of $P$.
    Then with $\Delta_{P,i}(\vec{z}, \vec{y})$ as in \Cref{cons:GKSS}, we have $\Delta_{P,i}(\vec{0}, \vec{y}) = P_i(\vec{y})$.
    As a consequence, we have $\mathcal{H}_{\text{GKSS}, P, n}(\vec{0}, \vec{y}) = (P_0(\vec{y}), \ldots, P_{n-1}(\vec{y}))$.
\end{lem}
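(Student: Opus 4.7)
The plan is to invoke the equivalent characterization of $\Delta_{P,i}$ stated in Construction~\ref{cons:GKSS}: namely, that $\Delta_{P,i}(\vec{z},\vec{y})$ is the component of $P(\vec{y}+\vec{z})$ that is homogeneous of degree $i$ with respect to $\vec{y}$. Setting $\vec{z} = \vec{0}$ gives $P(\vec{y}+\vec{0}) = P(\vec{y})$, and the degree-$i$ homogeneous component of $P(\vec{y})$ in the variables $\vec{y}$ is precisely $P_i(\vec{y})$, since the variables $\vec{y}$ play exactly the role of the original variables $\vec{z}$ of $P$. This establishes the first claim.

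Alternatively, one can derive the same conclusion directly from the Taylor-type definition:
\[
\Delta_{P,i}(\vec{0},\vec{y}) = \sum_{|\vec{e}|=i} \frac{\vec{y}^{\vec{e}}}{\vec{e}!} \cdot \frac{\partial^{\vec{e}}}{\partial \vec{z}^{\vec{e}}} P(\vec{z}) \bigg|_{\vec{z} = \vec{0}}.
\]
The quantity $\frac{1}{\vec{e}!} \cdot \partial^{\vec{e}}_{\vec{z}^{\vec{e}}} P(\vec{z})\big|_{\vec{z}=\vec{0}}$ is the coefficient of the monomial $\vec{z}^{\vec{e}}$ in $P$ (this is the standard fact relating partial derivatives at the origin to Taylor coefficients, and holds over any field since $|\vec{e}| = i < \ch(\F)$ is implicit or $\ch(\F) = 0$ in the setting of Construction~\ref{cons:GKSS}). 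Summing over exponent vectors $\vec{e}$ with $|\vec{e}| = i$ therefore collects exactly the monomials of $P$ of total degree $i$, evaluated at $\vec{y}$, which is by definition $P_i(\vec{y})$.

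The ``as a consequence'' statement then follows immediately: by the definition of $\mathcal{H}_{\text{GKSS}, P, n}$ in Construction~\ref{cons:GKSS}, substituting $\vec{z} = \vec{0}$ gives $\mathcal{H}_{\text{GKSS}, P, n}(\vec{0},\vec{y}) = (\Delta_{P,0}(\vec{0},\vec{y}), \ldots, \Delta_{P, n-1}(\vec{0},\vec{y}))$, and applying the first part of the lemma componentwise yields $(P_0(\vec{y}), \ldots, P_{n-1}(\vec{y}))$. There is no real obstacle here; the lemma is a direct unpacking of the definitions and serves as a sanity check that will be invoked later as a base case for the reconstruction argument.
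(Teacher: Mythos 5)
Your proof is correct and follows essentially the same approach as the paper: both use the equivalent characterization of $\Delta_{P,i}(\vec{z},\vec{y})$ as the degree-$i$ homogeneous component of $P(\vec{z}+\vec{y})$ with respect to $\vec{y}$, then substitute $\vec{z}=\vec{0}$. Your additional derivation via the Taylor-coefficient identity is a fine alternative route but not present in the paper; the rest matches.
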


\bproof
    Recall that $\Delta_{P,i}(\vec{z}, \vec{y})$ is, by definition, the component of $P(\vec{z} + \vec{y})$ that is homogeneous of degree $i$ with respect to the $y$ variables.
    Applying the evaluation $\vec{z} \mapsto \vec{0}$, we see that $\Delta_{P,i}(\vec{0}, \vec{y})$ is the degree-$i$ homogeneous of component of $P(\vec{0} + \vec{y}) = P(\vec{y})$, which is precisely $P_i(\vec{y})$.
\eproof

\begin{fact}[Euler’s formula for differentiation of homogeneous polynomials]\label{fact:euler} 
    If $A \in \F[x_1, \ldots, x_\ell]$ is a homogeneous polynomial of degree $t$, then 
    \[
        \sum_{i=1}^\ell x_i \cdot \frac{\partial}{\partial x_i} (A(\vec{x})) = t \cdot A(\vec{x}).
    \]
\end{fact}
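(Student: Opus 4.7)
The plan is a straightforward reduction to monomials via linearity, followed by a one-line direct verification. Since $A$ is homogeneous of degree $t$, write $A(\vec{x}) = \sum_{\vec{e} \in \Z_{\geq 0}^\ell,\ |\vec{e}| = t} c_{\vec{e}} \vec{x}^{\vec{e}}$ with $c_{\vec{e}} \in \F$. Both the left-hand side $\sum_{i=1}^\ell x_i \cdot \frac{\partial}{\partial x_i}(A(\vec{x}))$ and the right-hand side $t \cdot A(\vec{x})$ are $\F$-linear in $A$, so it suffices to verify the identity for a single monomial $\vec{x}^{\vec{e}}$ with $|\vec{e}| = t$ and then extend by linearity.

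For such a monomial, I would compute
\[
x_i \cdot \frac{\partial}{\partial x_i}(\vec{x}^{\vec{e}}) = x_i \cdot e_i \cdot x_i^{e_i - 1} \prod_{j \neq i} x_j^{e_j} = e_i \cdot \vec{x}^{\vec{e}},
\]
where the middle expression is interpreted as zero whenever $e_i = 0$ (since then $\frac{\partial}{\partial x_i}(\vec{x}^{\vec{e}}) = 0$). Summing over $i$ then yields
\[
\sum_{i=1}^\ell x_i \cdot \frac{\partial}{\partial x_i}(\vec{x}^{\vec{e}}) = \left(\sum_{i=1}^\ell e_i\right) \cdot \vec{x}^{\vec{e}} = |\vec{e}| \cdot \vec{x}^{\vec{e}} = t \cdot \vec{x}^{\vec{e}},
\]
which matches the right-hand side on this monomial. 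Summing over $\vec{e}$ weighted by $c_{\vec{e}}$ then recovers the identity for $A$.

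There is no real obstacle here: the statement is a classical identity whose verification on a monomial is a one-step calculation. The only subtlety worth noting is that partial differentiation is being interpreted formally rather than analytically, but this causes no issue, since the rule $\frac{\partial}{\partial x_i}(\vec{x}^{\vec{e}}) = e_i \vec{x}^{\vec{e} - \vec{\delta}_i}$ (where $\vec{\delta}_i$ denotes the $i$\textsuperscript{th} standard basis vector) holds over any field, and the resulting identity is valid in arbitrary characteristic. An alternative derivation would be to apply $\frac{d}{d\lambda}$ to the homogeneity relation $A(\lambda \vec{x}) = \lambda^t A(\vec{x})$ and specialize to $\lambda = 1$, using the chain rule on the left, but the direct monomial argument is more elementary and avoids introducing an auxiliary variable.
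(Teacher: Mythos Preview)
Your proof is correct and entirely standard. The paper does not actually prove this statement: it is recorded as a ``Fact'' without proof, so there is nothing to compare against beyond noting that your monomial-by-monomial verification is the canonical argument.
 One minor remark: you correctly observe that the identity holds formally over any field, but it may be worth flagging (as the paper does later in Remark~\ref{rem:Euler-char}) that its \emph{usefulness} for recovering $A$ from its derivatives requires $t \neq 0$ in $\F$, i.e., a characteristic assumption.
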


\paragraph{The targeted generator.}
We now state the construction of our targeted hitting set generator that is based on~\Cref{cons:GKSS}.
The generator networks $\set{G_n}_{n \in \N}$ of \Cref{thm:gkss-tarhsg} will output arithmetic circuits that compute this targeted hitting set generator.

\begin{construction}[GKSS-based targeted HSG]\label{cons:GKSS-based-tarHSG}
    Let $\set{C_n}_{n \in \N}$ be a $\log^c$-uniform family of arithmetic circuits of size $s(n)$ and degree at most $s(n)$, defined over a field family $\{\F_n\}_{n\in\N}$ satisfying certain conditions specified in the next paragraph. %
    Let $\{C'_n\}$ be the $\log^{c+O(1)}$-uniform family of arithmetic circuits of size $s'(n)=\poly(s(n))$ and depth $\Delta(n)=O(\log(s(n))^2)$ obtained by applying \Cref{prop:uniform-depth-red} to $\set{C_n}$. 
	
    \medskip\noindent {\bf Parameters.} Let $h,m,\nbar : \N \to \N$ be time-constructible functions such that $h(n)$ is a power of $2$, $m(n)$ is the minimal integer for which $h(n)^{m(n)} \geq s'(n)$. %
    The field family $\{\F_n\}_{n\in\N}$ is such that either (i) each $\F_n$ is a fixed field $\F$ of characteristic zero, or (ii) for each $n\in \N$, $\ch(\F_n) > \max(\nbar,h(n)\cdot \polylog(s(n)))$. 

    \medskip\noindent {\bf Generator.} Given $\vec{\Lambda} \in \F_n^n$, let $\set{f_{i,j}}$ be the $(h,m,0)$-polynomial decomposition of $C'_n$ on input $\vec{\Lambda}\in \F_n^n$ obtained via \Cref{prop:poly-dec-cons} (i.e., the $r=0$ instantiation).
    We view all polynomials as having $\ell \coloneqq 3m$ variables, padding them with extra variables as necessary.
    The \emphdef{$\nbar$-output GKSS-based targeted HSG for $\vec{\Lambda}$} is the polynomial map $\calH_{\vec{\Lambda}}^{(h,m,\nbar)}:\F_n^{2\ell+2}\to \F_n^{\nbar}$ given by
    \[
        \mathcal{H}_{\vec{\Lambda}}^{(h,m,\nbar)}(w_1,w_2,\vec{z},\vec{y}) \coloneqq \sum_{i = 1}^{\Delta(n)} \sum_{j = 0}^{2m} L_{i,j}(w_1,w_2) \cdot \mathcal{H}_{\text{GKSS},f_{i,j},\nbar}(\vec{z},\vec{y}), 
        \eqtag{eq:GKSS-tarhsg-def}
    \]
    where $L_{i,j}(w_1,w_2)$ is the Lagrange interpolation polynomial that satisfies\footnote{\label{fn:lagrange-char}Note that the parameter settings imply that both $m$ and $\Delta(n)$ are bounded by $\polylog(s(n))$, so the assumption on the field characteristic implies that $[\Delta(n)]$ and $[2m]$ can be regarded as subsets of $\F_n$.}
    \[
		L_{i,j}(w_1,w_2) = \begin{cases}
			1 & \text{if $w_1 = i$ and $w_2 = j$,} \\
			0 & \text{if $(w_1,w_2) \in ([\Delta(n)]\times \set{0,...,2m}) \setminus \set{(i,j)}$.}
		\end{cases}
    \]		
\end{construction}

\begin{remark}\label{rem:char-GKSS}
    We remark here that the characteristic assumption $\ch(\F_n)> h(n)\cdot \polylog(s(n))$ above is made primarily because it allows us to assume that $\ch(\F_n) > \deg(f_{i,j})$ (see \cref{item:ideg} of \Cref{prop:poly-dec-cons}) for each pair $(i,j)$.
    In particular, this allows us to apply \Cref{fact:euler} to compute lower-order derivatives of $f_{i,j}$ from its higher-order derivatives, an operation that is used crucially later in the reconstruction procedure.
    For a more precise explanation of the reasons behind this assumption, see \Cref{rem:char-GKSS-final}.
\end{remark}

We now show that there is a $\P$-uniform family of arithmetic networks that take $\vec{\Lambda} \in \F_n^n$ as input and output a description of an arithmetic circuit that computes the corresponding GKSS-based targeted hitting set generator $\mathcal{H}_{\Lambda}$.
This corresponds to \cref{item:gkss-tarhsg-generator} in \Cref{thm:gkss-tarhsg}.

\begin{lem}[The complexity of the targeted generator] \label{lem:GKSS-based-gen}
    Let $\set{C_n}$, $h$, $m$, and $\nbar$ be as in the statement of \Cref{cons:GKSS-based-tarHSG}.
    Then, the following statements hold.
    \begin{enumerate}
    	\item There is a $\P$-uniform family of arithmetic networks $\set{G_n}_{n \in \N}$ where the network $G_n$ has size $\poly(s(n))$ and on input $\vec{\Lambda} \in \F_n^n$ outputs the description of an arithmetic circuit that computes $\calH_{\vec{\Lambda}}^{(h,m,\nbar)}$, the $\nbar$-output GKSS-based targeted hitting set generator for $\vec{\Lambda}$. %
    	\item The degree of $\calH_{\vec{\Lambda}}^{(h,m,\nbar)}$ is at most $n\cdot h(n)\cdot \polylog(s(n))$. %
    \end{enumerate}
\end{lem}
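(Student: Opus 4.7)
The plan is to exploit \Cref{prop:poly-dec-cons} to obtain uniformly printable circuits for each polynomial $f_{i,j}$ in the decomposition, and then assemble the final generator circuit by (i) performing the substitution $\vec{x} \mapsto \vec{z}+\vec{y}$, (ii) applying standard Strassen-style homogenization with respect to the $\vec{y}$ variables to isolate each homogeneous component $\Delta_{f_{i,j},k}(\vec{z},\vec{y})$, and (iii) taking the linear combination prescribed by the defining formula~\eqref{eq:GKSS-tarhsg-def} of $\calH_{\vec{\Lambda}}^{(h,m,\nbar)}$.

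For the first claim (construction and uniformity), the ``Layer Polynomial Computation'' clause of \Cref{prop:poly-dec-cons} in the $r=0$ instantiation supplies a $\P$-uniform arithmetic network of size $\poly(s(n))$ that, given $\vec{\Lambda}$ and a pair $(i,j) \in [\Delta(n)] \times \{0,\ldots,2m\}$, prints a description of a $\poly(s(n))$-sized arithmetic circuit computing $f_{i,j}$. Replacing each input gate $x_a$ by an addition gate $z_a + y_a$ converts this into a circuit for $f_{i,j}(\vec{z}+\vec{y})$ with only an $O(\ell)$ additive overhead. Standard Strassen homogenization with respect to $\vec{y}$, truncated to degree $\nbar-1$, then produces designated output gates for each $\Delta_{f_{i,j},k}(\vec{z},\vec{y})$ with only a $\poly(\nbar)$ multiplicative overhead in size. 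The Lagrange polynomials $L_{i,j}(w_1,w_2)$ have closed-form descriptions as products of univariates over the grid $[\Delta(n)]\times\{0,\ldots,2m\}$ of size $\polylog(s(n))$, and are easily made $\P$-uniformly printable. Gluing all the pieces together by repeated use of \Cref{prop:short-desc,prop:universal-arith-network,prop:comp-arith-network} yields the desired network family $\{G_n\}$, with each $G_n$ of size $\poly(s(n),\nbar(n))$.

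For the second claim (degree bound), each output coordinate of $\calH_{\vec{\Lambda}}^{(h,m,\nbar)}$ has the form $\sum_{i,j} L_{i,j}(w_1,w_2) \cdot \Delta_{f_{i,j},k}(\vec{z},\vec{y})$, so its total degree is bounded by $\max_{i,j,k} \sbr{\deg(L_{i,j}) + \deg(\Delta_{f_{i,j},k})}$. The Lagrange polynomial has degree at most $\Delta(n) + 2m = \polylog(s(n))$. Since $\Delta_{f_{i,j},k}(\vec{z},\vec{y})$ is a homogeneous component (in $\vec{y}$) of the shift $f_{i,j}(\vec{z}+\vec{y})$, its total degree is at most $\deg(f_{i,j})$. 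The dominant contribution comes from the case $i=1$, in which the sumcheck polynomial $f_{1,0}$ directly substitutes $f_0$ (whose total degree is $n \cdot h(n) \cdot \polylog(s(n))$ by \Cref{prop:poly-dec-cons}); for all other $(i,j)$ the degree is bounded by $h(n)\cdot\polylog(s(n))$. Summing the two contributions yields the claimed bound.

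The only mildly non-routine point is verifying that Strassen homogenization admits a $\P$-uniform implementation that operates directly on circuit descriptions. This is standard: each original gate is replicated $\nbar$ times (one copy per degree in $\vec{y}$), with copies wired according to the identities $(u+v)_k = u_k + v_k$ and $(uv)_k = \sum_{a+b=k} u_a v_b$. Both rules are purely syntactic and translate into a $\P$-uniformly describable transformation on standard circuit descriptions, which can be composed with the network from \Cref{prop:poly-dec-cons} via \Cref{prop:comp-arith-network}.
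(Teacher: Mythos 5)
Your proposal is correct and follows essentially the same route as the paper's proof: obtain uniform circuits for the $f_{i,j}$ from \Cref{prop:poly-dec-cons}, shift $\vec{x}\mapsto\vec{z}+\vec{y}$, extract the $\vec{y}$-homogeneous components via the uniform gate-simulation homogenization (\Cref{lem:network algo homogeneous components}), and combine with the Lagrange selectors via \Cref{prop:comp-arith-network}; the degree bound is traced to $f_0$ (equivalently $f_{1,0}$) exactly as in the paper. The only cosmetic difference is that the paper explicitly observes $\nbar \le n \le s$ so that the $\poly(s,\nbar)$ size simplifies to $\poly(s)$, a step you should make explicit to match the lemma's stated bound.
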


\bproof
    We stress that there are three algorithms involved in the claim. 
    Our goal is to construct a Turing machine $M$ that on input $1^n$ prints an arithmetic network $N$.
    The arithmetic network $N$ receives as input $\vec{\Lambda} \in \F_n^n$ and computes the description of an arithmetic circuit $H_{\vec{\Lambda}}$ that computes the generator $\calH_{\vec{\Lambda}}^{(h, m, \nbar)}$. We note that we have the basic chain of inequalities $s\geq n\geq \nbar$; the former is because $s$ is the size of an $n$-input circuit, and the latter is because $n$ is the length of the encoding of a circuit computing an $\nbar$-variate polynomial.

    We first describe the arithmetic circuit $H_{\vec{\Lambda}}$.
    We then explain how $N$ computes the description of $H_{\vec{\Lambda}}$ given $\vec{\Lambda}$ as input.
    Finally, we describe how $M$ prints a description of $N$ when given input $1^n$.
    
    \paragraph{The circuit $H_{\vec{\Lambda}}$.}
    We build the circuit $H_{\vec{\Lambda}}$ for $\calH_{\vec{\Lambda}}^{(h,m,\nbar)}$ by implementing \Cref{eq:GKSS-tarhsg-def}.
    To do this, we need small circuits for both the Lagrange interpolation polynomials $L_{i,j}$ and the generators $\calH_{\text{GKSS}, f_{i,j}, \nbar}$.
    
    The Lagrange interpolation polynomial $L_{i,j}(w_1, w_2)$ can be written explicitly as 
    \[
        L_{i,j}(w_1, w_2) \coloneqq \del{\prod_{k \in [\Delta]} \frac{w_1 - k}{i - k}} \cdot \del{\prod_{\ell \in \set{0,1,\ldots,2m}} \frac{w_2 - \ell}{j - \ell}}.
    \]
    Note that from the given characteristic assumption $\ch(\F_n) \ge h\cdot \polylog(s)$, it follows that $[\Delta],\{0,\ldots,2m\}\subseteq \F_n$ as we have $\ch(\F_n) \geq \max\set{2m, \Delta}$ because $\Delta$ is assumed to be bounded by $\polylog(s)$.
    It is clear from this expression that $\deg(L_{i,j}) \le O(\Delta m)$ and that there is an arithmetic formula of size $O(\Delta m) \le \polylog(s)$ that computes $L_{i,j}(w_1,w_2)$.

    To compute the generators $\calH_{\text{GKSS}, f_{i,j}, \nbar}$, we need arithmetic circuits that compute the layer polynomials $f_{i,j}$.
    By \cref{item:layer-poly-comp} of \Cref{prop:poly-dec-cons}, there are arithmetic circuits of size $\poly(s)$ %
    and degree at most $n \cdot h \cdot \polylog(s)$ that compute the polynomials $f_0$ and $\set{f_{i,j} : i \in [\Delta], j \in \set{0,\ldots,2m}}$.
    Recall that the outputs of the generator $\calH_{\text{GKSS}, f_{i,j}, \nbar}$ correspond to the components of $f_{i,j}(\vec{z} + \vec{y})$ that are homogeneous with respect to the $\vec{y}$ variables and are of degree less than $\nbar$.
    Applying \Cref{lem:network algo homogeneous components} to extract the $\vec{y}$-homogeneous components of $f_{i,j}(\vec{z} + \vec{y})$ %
    we obtain a multi-output arithmetic circuit of size $O(s \nbar^2)$ that computes $\calH_{\text{GKSS}, f_{i,j}, \nbar}$.

    Combining the Lagrange interpolation polynomials $L_{i,j}$ with the generators $\calH_{\text{GKSS}, f_{i,j}, \nbar}$ can be done using additional $O(\Delta m) \le \polylog(s)$ gates.
    Overall, we obtain a circuit $H_{\vec{\Lambda}}$ of size $\poly(s, \nbar) \le \poly(s)$ %
    and degree $n \cdot h \cdot \polylog(s)$ that computes $\calH_{\vec{\Lambda}}^{(h,m,\nbar)}$.

    \paragraph{The network $N$.}
    Recall that $N$ is an arithmetic network that receives $\vec{\Lambda} \in \F_n^n$ as input and outputs a description of the circuit $H_{\vec{\Lambda}}$.
    The outer summation in \Cref{eq:GKSS-tarhsg-def} and the Lagrange interpolation polynomials $L_{i,j}$ do not depend on the value of $\vec{\Lambda}$, so the descriptions of these components can be hard-wired into the network $N$.
    It remains to describe an arithmetic network that computes descriptions of circuits for the generators $\calH_{\text{GKSS}, f_{i,j}, \nbar}$.

    To do this, we invoke \cref{item:layer-poly-comp} of \Cref{prop:poly-dec-cons}, which provides us with a $\P$-uniform family of arithmetic networks of size $\poly(s)$ that take as input $\vec{\Lambda}$ and print descriptions of arithmetic circuits of size $\poly(s)$ and degree at most $n\cdot h\cdot\polylog(s)$ for the polynomials $f_{i,j}$.
    From a description of a circuit that computes $f_{i,j}(\vec{y})$, we can easily obtain a description of a circuit that computes $f_{i,j}(\vec{z} + \vec{y})$.
    We then apply \Cref{lem:network algo homogeneous components} %
    to the circuit for $f_{i,j}(\vec{z} + \vec{y})$, which yields the description of circuits that compute the components of $f_{i,j}(\vec{z} + \vec{y})$ that are homogeneous of degree less than $\nbar$ with respect to the $\vec{y}$-variables, which are precisely the outputs of the generator $\calH_{\text{GKSS}, f_{i,j}, \nbar}$.

    Thus, for each $(i,j) \in [\Delta] \times \set{0,\ldots,2m}$, we have an arithmetic network of size $\poly(s)$ that takes as input $\vec{\Lambda}$ and outputs a description of an arithmetic circuit for the component generator $\calH_{\text{GKSS}, f_{i,j}, \nbar}$.

    \paragraph{The Turing machine $M$.}
    Finally, we describe a polynomial-time Turing machine $M$ that takes $1^n$ as input and outputs a description of the arithmetic network $N$.
    The network $N$ has both the outer summation in \Cref{eq:GKSS-tarhsg-def} and the Lagrange interpolation polynomials hardwired as part of the network.
    These components can be computed in polynomial time by $M$.
    
    The part of $N$ that takes as input $\vec{\Lambda}$ and produces circuits for the component generators $\calH_{\vec{\Lambda}}^{(h,m,\nbar)}$ is likewise computable in polynomial time: 
    the reason is that \Cref{item:layer-poly-comp} of \Cref{prop:poly-dec-cons} provides a polynomial-time algorithm that takes $1^n$ as input and computes the description of a network that maps $\vec{\Lambda}$ to descriptions of circuits for the layer polynomials $f_{i,j}$.
    The subsequent homogenization of these circuits can be done by a $\P$-uniform arithmetic network using \Cref{lem:network algo homogeneous components}.
\eproof

\subsection{Reconstruction procedure}\label{sec:gkss-reconstruction}

Before we describe the reconstruction procedure for our GKSS-based targeted HSG, we need the following definition.

\begin{definition}[{Interpolating Sets, \cite{GKSS22}}]\label{def:interpolating-set}
    Let $\F$ be a field and let $\calP(\ell,d)$ denote the set of all $\ell$-variate degree-$d$ polynomials over $\F$.
    Let $M_{\ell,d} \coloneqq \binom{\ell + d}{d}$ be the number of $\ell$-variate monomials of total degree at most $d$.
    We say that a set of points $\set{\vec{a}_1, \ldots, \vec{a}_r} \subseteq \F^{\ell}$ is an \emphdef{interpolating set for $\calP(\ell, d)$} if the set of vectors
    \[
        \set{(\vec{a}_i^{\vec{e}} : \vec{e} \in \Z_{\ge 0}^\ell, |\vec{e}| \leq d) : i \in [r]} \subseteq \F^{M_{\ell,d}}
    \]
    spans $\F^{M_{\ell,d}}$. 
    In other words, for every $\vec{e} \in \Z_{\ge 0}^\ell$ satisfying $|\vec{e}| \le d$, there are field constants $\beta_1, \ldots, \beta_r \in \F$ such that for all $P(\vec{y}) \in \mathcal{P}(\ell, d)$, we have 
    \[
        \coeff_{\vec{y}^{\vec{e}}}(P(\vec{y})) = \sum_{i = 1}^r \beta_i \cdot P(\vec{a}_i).
    \]
\end{definition}

We now describe the reconstruction procedure for our targeted hitting set generator based on the GKSS generator.
This corresponds to \cref{item:gkss-tarhsg-reconstruction} in \Cref{thm:gkss-tarhsg}.

\begin{proposition}\label{prop:GKSS-recons-arith-net}
    Let $\nbar:\N \to \N$ be a time-constructible function and $\set{C_n}_{n \in \N}$ be a $\log^c$-uniform family of $n$-output arithmetic circuits of size $s(n)$ and degree at most $s(n)$, defined over a field family $\{\F_n\}_{n\in\N}$ where either (i) each $\F_n$ is a fixed field $\F$ of characteristic zero, or (ii) for each $n\in \N$, $\ch(\F_n) > h(n)\cdot \polylog(s)$ and for the same poly-logarithmically bounded function, we also have $h(n)\cdot \polylog(s)> \nbar(n)$. Let $d : \N \to \N$ be a polynomially-bounded time-constructible function. Then there is a $\P$-uniform family of arithmetic networks with $\pit$ gates $R = \set{R_n}_{n \in \N}$ that satisfies the following.
   
    \begin{enumerate}
        \item 
            $R_n$ receives as input a vector $\vec{\Lambda} \in \F_n^n$, where the vector $\vec{\Lambda}$ is promised to be the universal encoding of an arithmetic circuit $D$ that computes a nonzero $\nbar$-variate polynomial of degree at most $d = d(\nbar)$ %
            that satisfies $D\circ \calH_{\vec{\Lambda}}^{(h,m,\nbar)} = 0$, where $\calH_{\vec{\Lambda}}^{(h, m, \nbar)}$ is the generator described in \Cref{cons:GKSS-based-tarHSG}.
        \item
            The size\footnote{In the statement of this proposition, the notation $\poly(T)$ indicates a bound of $T^{cK}$, where $K$ is a universal constant and $c$ is the given constant that parameterizes the uniformity of $\{C_n\}$.} of the network $R_n$ is bounded by $\poly(n, d, h, \log s, \nbar^{m})$.
        \item 
            The degree of $R_n$ is bounded by $\poly(d,h,\nbar,\log s)^{\polylog(s)}$.
        \item
            $R_n$ computes the vector $C_n(\vec{\Lambda}) \in \F_n^n$ at input $\vec{\Lambda}\in \F_n^n$. %
    \end{enumerate}   
\end{proposition}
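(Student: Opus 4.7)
The plan is to implement the iterative reconstruction from \cite{GKSS22} by a uniform arithmetic network with $\pit$ gates, processing the polynomials $f_{i,j}$ of the polynomial decomposition of $C'_n(\vec{\Lambda})$ in lexicographic order of $(i,j) \in [\Delta(n)] \times \set{0,\ldots,2m}$. The invariant maintained after iteration $(i,j)$ is that the network holds a standard description of an arithmetic circuit $C_{i,j}$ of fixed polynomial size $\poly(n,d,h,\log s, \nbar^{m})$ computing $f_{i,j}$. The base case is $f_0$, handled directly by \cref{item:input-layer-comp} of \Cref{prop:poly-dec-cons}. After the final iteration $(\Delta,2m)$, the network uses the faithful-representation property (\cref{item:faithful} of \Cref{prop:poly-dec-cons}) to evaluate $C_{\Delta,2m} = C_\Delta$ at the $n$ points of $H^m$ indexing the output gates of $C'_n$, thereby producing $C_n(\vec{\Lambda})$.

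For the inductive step at iteration $(i,j)$, the network first builds a \emph{temporary} oracle circuit $\widetilde{C}_{i,j}$ for $f_{i,j}$ by composing (via \Cref{prop:comp-arith-network}) the downward-self-reducibility circuit from \Cref{prop:poly-dec-cons} (\cref{item:prop-sumcheck} when $j > 0$, \cref{item:dsr} when $j=0$) with the previously reconstructed circuit $C_{i,j-1}$ (or $C_{i-1,2m}$, respectively). This $\widetilde{C}_{i,j}$ will be used only to supply evaluations of $f_{i,j}$ during the current iteration; it is never fed forward, which is what prevents exponential blow-up. Next, the network implements the GKSS reconstruction as follows: the hypothesis $D \circ \calH_{\vec{\Lambda}}^{(h,m,\nbar)} \equiv 0$ together with the Lagrange structure of \Cref{eq:GKSS-tarhsg-def} and the standard property of Lagrange interpolation polynomials implies $D \circ \calH_{\mathrm{GKSS}, f_{i,j}, \nbar} \equiv 0$. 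The reconstruction begins by computing circuits for the partial derivatives up to order $\nbar$ of the homogeneous components of $f_{i,j}$ of degree $< \nbar$; these are obtained by interpolating $f_{i,j}$ via $\widetilde{C}_{i,j}$ on a grid (the relevant coefficients are recovered using \Cref{lem:z=0-gkss}). Then the reconstruction iteratively computes order-$\nbar$ partials of each degree-$d'$ homogeneous component of $f_{i,j}$ for $d' \geq \nbar$, using the distinguisher $D$ to solve a first-order Taylor expansion, and applies \Cref{fact:euler} (the characteristic assumption ensures division by $d'$ is legal) to descend to lower-order partials and to the component itself. Finally, the network applies \Cref{lem:network algo homogeneous components}-style homogenization-decomposition to assemble a description of $C_{i,j}$ of the claimed size.

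The main obstacle is the uniform, in-network construction of interpolating sets for the higher-degree iterative steps: at each step one needs points forming an interpolating set for $\calP(\ell,d')$ (\Cref{def:interpolating-set}) on which a certain auxiliary first-order partial derivative is nonzero, and both conditions can be expressed as the non-vanishing of a specific, explicitly describable arithmetic circuit. This is exactly where the $\pit$ gates are used: the network constructs candidate point sets, builds the auxiliary circuits from standard encodings, and invokes $\pit$ gates to verify non-vanishing, iterating until a valid interpolating set is found. The uniformity and size/degree bounds follow by tracking parameters through the chain: each of the $O(\Delta \cdot m) = \polylog(s)$ iterations contributes a network of size $\poly(n,d,h,\log s,\nbar^{m})$, and composition of the oracle circuits multiplies degrees, yielding the stated degree bound $\poly(d,h,\nbar,\log s)^{\polylog(s)}$. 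The overall network is $\P$-uniform because every ingredient --- the polynomial-decomposition circuits of \Cref{prop:poly-dec-cons}, the universal evaluator of \Cref{prop:universal-arith-network}, the composition of \Cref{prop:comp-arith-network}, and the $\pit$-assisted interpolating-set construction --- is $\P$-uniform.
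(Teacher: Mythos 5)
Your plan has the right skeleton — iterative reconstruction in lexicographic order of $(i,j)$ with a fixed-size invariant, a temporary downward-self-reducibility circuit used only to answer queries, $\pit$-assisted interpolating sets, and a final evaluation of $f_\Delta$ on $H^m$. This matches the paper's structure, which is organized around a per-iteration lemma (\Cref{lem:gkss-ind-recons}) wrapped around a GKSS-style reconstruction. However, three pieces of the GKSS reconstruction are missing or mis-stated, and they are not cosmetic.

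First, the branching over hybrid index and root multiplicity is absent. The identity $D\circ\calH_{\mathrm{GKSS},f_{i,j},\nbar}\equiv 0$ does not by itself tell you which output coordinate $a$ of the generator is the ``first bad one'' (i.e.\ $D_a\ne 0$ but $D_{a+1}=0$) nor the highest power $b$ to which $(x_{a+1}-g_{a+1})$ divides $D_a$. The key technical lemma (\cite[Lemma 3.2]{GKSS22}, restated as \Cref{lem:gkss-technical}) is phrased in terms of $\partial^{b+1}_{x_{a+1}}D_a$ and a $\Psi$ that depends on both $a$ and $b$; it only yields anything in the correct branch. The network must therefore run all $O(\nbar\cdot d)$ branches $(a,b)$ in parallel, use a $\pit$ gate per branch to test viability (i.e.\ $(\partial^b_{x_{a+1}}D_a)\circ\calH_{\mathrm{GKSS},f,\nbar}\ne 0$), and select a surviving candidate via the aggregation sub-network (\Cref{clm:ki:recon:ind}). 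Your proposal never says how the partial-derivative order is chosen nor how a failed branch is detected and discarded, so as written the algorithm is not well-defined.

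Second, you omit the random-shift step. Even in a correct branch, the evaluation at $\vec{z}=\vec{0}$, namely $\Psi(\vec{y})=(\partial^b_{x_{a+1}}D_a)\circ\calH_{\mathrm{GKSS},f,\nbar}(\vec{0},\vec{y})$, can be the zero polynomial. The paper repairs this by translating $f\mapsto f(\cdot+\vec{c})$ for a shift $\vec{c}$ found coordinate-by-coordinate with $\pit$ gates; \Cref{obs:shift-invariance-gkss} is then used to relate the shifted generator to the original one. Without this, the interpolating set construction (which requires $\Psi(\vec{a}_i)\ne 0$ on every point $\vec{a}_i$) can simply fail.

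Third, the interpolating set is mis-parameterized. You say one needs ``an interpolating set for $\calP(\ell,d')$'' at each inner step $d'$. In fact the paper constructs a \emph{single} interpolating set for $\calP(\ell,\nbar)$ (see \Cref{def:interpolating-set} and the constrained-interpolating-set step in \Cref{sec:gkss-recon-step}), because \Cref{lem:gkss-technical} gives evaluations of $\Delta_{f_{\nbar+j},\nbar}(\vec{z},\vec{a})$, which as an $\ell$-variate polynomial in $\vec{y}$ has degree $\nbar$, independent of $j$. The same set of $M_{\ell,\nbar}$ points is reused across all $d_f-\nbar$ inner iterations. Using sets for $\calP(\ell,d')$ with $d'$ growing up to $d_f=h\cdot\polylog(s)$ would inflate the interpolating-set size to roughly $(h\cdot\polylog(s))^\ell$, which can exceed $s$ and destroys the claimed $\poly(n,d,h,\log s,\nbar^m)$ bound that the overall argument (taking $\eps=1/\sqrt{k}$ so the reconstruction has size $n^{O(\sqrt{k})}$ against a hard family of size $n^{k^2}$) depends on.
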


The rest of this subsection is devoted to the proof of \Cref{prop:GKSS-recons-arith-net}, which follows a similar structure to that of \Cref{prop:recons-arith-net}.
Recall that the generator $\calH_{\vec{\Lambda}}^{(h,m,\nbar)}$, defined in \Cref{cons:GKSS-based-tarHSG}, is obtained by instantiating copies of the GKSS generator (\Cref{cons:GKSS}) with polynomials coming from the polynomial decomposition of a circuit family $\set{C_n'}$ obtained by applying depth reduction to the circuit family $\set{C_n}$.
On input $\vec{\Lambda}$, the network $R_n$ will output the value of $C_n'(\vec{\Lambda})$, which equals $C_n(\vec{\Lambda})$ by correctness of the depth reduction.
The depth reduction ensures that the size of $C_n'$ is bounded by $\poly(s)$ and the depth of $C_n'$ is bounded by $\Delta \le O(\log^2 s)$.
For notational convenience, we will drop the prime symbol in $C_n'$ and instead refer to these circuits as $C_n$.

Before we describe the construction formally, we provide a brief sketch of its various components.
On input $\vec{\Lambda}$, the network efficiently constructs a description of a circuit computing the polynomial $f_\Delta = f_{\Delta,2m}$.
As this polynomial faithfully represents the output gate values of $C_n(\vec{\Lambda})$ (see \cref{item:faithful} of \Cref{prop:poly-dec-cons}), the network eventually computes $C_n(\vec{\Lambda})$ by simply evaluating $f_{\Delta}$ over the first $n$ many vectors (in lexicographical order) in $H^m$.
To construct the description of a circuit that computes the polynomial $f_\Delta$, the network $R_n$ iteratively constructs a small arithmetic circuit that computes $f_i$ for $i = 0, 1, \ldots, \Delta$. The circuit for $f_i$ is obtained by using query access to a circuit for $f_{i-1}$ to iteratively construct circuits for $f_{i,j-1}$ for $j = 1, \ldots, 2m$.
	
We will now describe the arithmetic network $R_n$, accounting for both the complexity of implementing each iteration (i.e., the complexity of the uniform arithmetic network) and the complexity of the arithmetic circuits whose descriptions are produced by each iteration.
Since $\vec{\Lambda}$ is the universal encoding of $D$, it follows from \Cref{thm:raz universal circuit} that $\size(D)\leq n$. 
To begin, $R_n$ employs the $\P$-uniform family of arithmetic networks from \Cref{lem:network-algo-uni-std} to convert $\vec{\Lambda}$ to a standard description of $D$ of $\poly(n,d)$ size. In the rest of this section, we abuse notation in the following way: whenever we refer to $\vec{\Lambda}$, we mean this standard description of $D$ of $\poly(n,d)$ size (unless specified otherwise). %

We start the iterative procedure using \Cref{item:input-layer-comp} of \Cref{prop:poly-dec-cons}, which allows the network $R_n$ to produce an arithmetic circuit of size $O(n m h \cdot \polylog(s))$ and degree $n h \cdot \polylog(s)$ that computes $f_0$.
Then the arithmetic network $R_n$ successively uses the networks given by the following lemma to construct circuits for the subsequent layer polynomials $f_{i,j}$.

\begin{lem}[One iteration: moving from a circuit for $f_{i,j-1}$ to a circuit for $f_{i,j}$] \label{lem:gkss-ind-recons}
    There is a $\P$-uniform family $\calN=\{\calN_n\}$ of  arithmetic networks with $\pit$ gates that when given as input $\vec{\Lambda}$ and a standard description of a size $s_{i,j-1}$, degree $d_{i,j-1}$ arithmetic circuit computing $f_{i,j-1}$, outputs a standard description of an arithmetic circuit computing $f_{i,j}$ of size $O(n\cdot d \cdot h^3\cdot \polylog(s)\cdot \nbar^{30m}) = \poly(n,d,h,\log s,\nbar^m)$ and degree $h\cdot \polylog(s)$. 
    The size of $\calN_n$ is $\poly(n, s_{i,j-1}, d_{i,j-1}, d, h, \log s, \nbar^{m})$ and its degree is at most $\poly( d_{i,j-1},d,h,\nbar,\log s)$. %
    
    Furthermore, there is another family $\calN'$ of arithmetic networks with $\pit$ gates with the same complexity as $\calN$ (i.e., $\P$-uniformity, size, and degree) that when given as input $\vec{\Lambda}$ and a standard description of a size $s_{i-1,2m}$, degree $d_{i-1,2m}$ arithmetic circuit computing $f_{i-1,2m}$, outputs the description of an arithmetic circuit computing $f_{i,0}$ of size $\poly(n,d,h,\log s,\nbar^m)$ and degree $h\cdot \polylog(s)$.
\end{lem}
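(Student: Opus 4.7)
\bproof[Proof Plan]
The plan is to mimic the GKSS reconstruction argument, implementing it as a $\P$-uniform arithmetic network with $\pit$ gates whose output circuit size does not depend on $s_{i,j-1}$ or $d_{i,j-1}$ (so that iterating over all $(i,j)$ pairs will not blow up). Throughout, we exploit the fact that plugging $(w_1, w_2) = (i,j)$ into $\calH_{\vec{\Lambda}}^{(h,m,\nbar)}$ and using the Lagrange property of the $L_{i',j'}$ yields $D \circ \calH_{\text{GKSS}, f_{i,j}, \nbar} = 0$ for each pair $(i,j)$; thus the distinguisher $D$ (encoded by $\vec{\Lambda}$) breaks the GKSS generator instantiated with $f_{i,j}$ separately.

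First, using the sumcheck downward self-reducibility given by \cref{item:prop-sumcheck} of \Cref{prop:poly-dec-cons}, I would build a \emph{temporary} arithmetic circuit $C^{\sf temp}_{i,j}$ for $f_{i,j}$ via an oracle computation of size $O(h)$ and degree $1$ that queries the given circuit for $f_{i,j-1}$; the resulting circuit has size $O(h \cdot s_{i,j-1})$ and degree $O(d_{i,j-1})$. Although this circuit is too large to carry forward to the next iteration, it serves as a black-box from which we can interpolate the coefficients of every homogeneous component of $f_{i,j}$ of degree less than $\nbar$. Since the polynomials $f_{i,j}$ are viewed as being in $\ell = 3m$ variables, the number of such coefficients (and of partial derivatives of order at most $\nbar$ of such components) is bounded by $\binom{\ell + \nbar}{\ell} \le \nbar^{O(m)}$, accounting for the $\nbar^{30m}$ factor in the claimed output size. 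The interpolation is carried out by evaluating $C^{\sf temp}_{i,j}$ on an explicit grid of side length $\nbar$ and dimension $\ell$, which is implementable by a $\P$-uniform network of size $\poly(n, s_{i,j-1}, d_{i,j-1}, h, \log s, \nbar^m)$ using \Cref{prop:universal-arith-network}.

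Next, with all partial derivatives of order at most $\nbar$ of the $\vec{y}$-homogeneous components of $f_{i,j}(\vec{z} + \vec{y})$ of degree $< \nbar$ computed and hard-wired into an output circuit (as the ``base case''), I would implement the iterative step of the GKSS reconstruction to extend these to all higher-degree homogeneous components. Concretely, for each target degree $t$ from $\nbar$ up to $\deg(f_{i,j}) \le h \cdot \polylog(s)$, the network uses the distinguisher $D$ (which the network evaluates via \Cref{prop:universal-arith-network} applied to $\vec{\Lambda}$) together with the already-reconstructed lower-degree partials to obtain evaluations of the order-$\nbar$ partials of the degree-$t$ component at a carefully chosen set of points; these points must (i) form an interpolating set for $\calP(\ell, \nbar)$ in the sense of \Cref{def:interpolating-set}, and (ii) avoid the vanishing locus of an explicit auxiliary polynomial (arising from the first-order Taylor expansion trick). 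Condition (i) and the non-vanishing required by (ii) can both be checked by $\pit$ gates, since each is equivalent to a nonzero test for an explicit small circuit. Once the order-$\nbar$ partials of the degree-$t$ component are known, repeated application of Euler's formula (\Cref{fact:euler}) descends to the component itself and to all its lower-order partials; this is where the characteristic assumption of \Cref{cons:GKSS-based-tarHSG} is used, since Euler's formula requires dividing by the degree.

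Finally, the bound on degrees compounds multiplicatively through the $\Delta \cdot 2m = \polylog(s)$ iterations of reconstruction, giving a degree of $\poly(d, h, \nbar, \log s)^{\polylog(s)}$ as claimed. The ``furthermore'' statement handling the transition from $f_{i-1,2m}$ to $f_{i,0}$ is identical in structure, except one invokes \cref{item:dsr} of \Cref{prop:poly-dec-cons} in place of \cref{item:prop-sumcheck} to build the temporary circuit. The main obstacle is the uniform implementation of the interpolating-set search: we need the network to produce (in a canonical manner) a valid interpolating set, which is done by enumerating candidate sets in some fixed order and using $\pit$ gates to select the first valid one---the key point being that this enumeration can be performed in size $\poly(\nbar^m, h, \log s)$ since the relevant nonzero polynomial has explicit small-circuit size, ensuring the output circuit for $f_{i,j}$ has the stated size that is independent of $s_{i,j-1}$ and $d_{i,j-1}$.
\eproof
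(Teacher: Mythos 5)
Your proposal captures the right high‑level skeleton (build a temporary oracle circuit from $f_{i,j-1}$ via downward self‑reducibility, interpolate the low‑degree homogeneous components and their order‑$\le\nbar$ partials as a base case, then iterate the GKSS degree‑by‑degree lift using the distinguisher, Euler's formula, and interpolating sets found via $\pit$ gates). However, there are two substantive gaps.

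First, you never run the hybrid argument over the distinguisher. The GKSS reconstruction does not use ``the distinguisher $D$'' directly; it requires a specific index $a\in\set{0,\ldots,\nbar-1}$ for which the hybrid $D_a(\vec{z},\vec{y},\vec{x}) = D(g_0,\ldots,g_a,x_{a+1},\ldots,x_\nbar)$ is nonzero but $D_{a+1}$ vanishes, and a specific multiplicity $b$ (the order of $(x_{a+1}-g_{a+1})$ as a factor of $D_a$) so that $\partial_{x_{a+1}}^b D_a$ composed with the generator is nonzero. It is the nonvanishing of $\Psi^{(a,b)}(\vec{y}) = (\partial_{x_{a+1}}^b D_a)\circ\calH_{\text{GKSS},f,\nbar}(\vec{0},\vec{y})$ (after a suitable shift found via $\pit$ gates) that makes \Cref{lem:gkss-technical} applicable and the iterative lift sound. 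Since the network cannot know $a$ and $b$ in advance, it must try all $\nbar\cdot d$ choices in parallel sub‑branches, test viability of each branch with a $\pit$ gate applied to $\Phi^{(a,b)} = (\partial^b_{x_{a+1}}D_a)\circ\calH_{\text{GKSS},f,\nbar}$, and then select a valid branch. Your proposal's step ``the network uses the distinguisher $D$ \ldots to obtain evaluations of the order‑$\nbar$ partials'' silently presupposes these quantities and is not implementable as stated.

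Second, the degree bound you give for $\calN_n$ is wrong. The lemma asserts that $\calN_n$ has degree $\poly(d_{i,j-1},d,h,\nbar,\log s)$, i.e.\ a \emph{polynomial} bound for a single iteration. You instead write that degree ``compounds multiplicatively through the $\Delta\cdot 2m=\polylog(s)$ iterations \ldots giving $\poly(d,h,\nbar,\log s)^{\polylog(s)}$.'' This conflates the single‑step network $\calN_n$ with the full reconstruction $R_n$ of \Cref{prop:GKSS-recons-arith-net}, which composes $\calN_n$ a $\polylog(s)$ number of times. Moreover, even the $d_f-\nbar=h\cdot\polylog(s)$ internal degree‑by‑degree steps do not cause exponential compounding, because the network is manipulating circuit \emph{descriptions} rather than performing arithmetic on field elements whose degrees would multiply; that is precisely why the size of the output circuit can stay $\poly(n,d,h,\log s,\nbar^m)$ independent of $s_{i,j-1}$.
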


We defer the proof of \Cref{lem:gkss-ind-recons} to \Cref{sec:gkss-recon-step}.
The remainder of this subsection completes the proof of \Cref{prop:GKSS-recons-arith-net} assuming \Cref{lem:gkss-ind-recons}.

The network $R_n$ makes repeated use of the networks $\calN_n$ and $\calN_n'$ from \Cref{lem:gkss-ind-recons}.
If we have computed the description of a circuit for $f_{i-1}$, then applying the network $\calN'_n$ of \Cref{lem:gkss-ind-recons} to the description of the circuit for $f_{i-1}$ yields a description of a circuit for $f_{i,0}$.
We then use $2m$ sequential applications of the network $\calN_n$ of \Cref{lem:gkss-ind-recons} to compute descriptions of circuits for $f_{i,1}$, $f_{i,2}, \ldots, f_{i,2m} = f_i$.
After having computed the description of a circuit that computes $f_{\Delta} = f_{\Delta, 2m}$, we use \Cref{prop:universal-arith-network} to query $f_{\Delta}$ on the $n$ lexicographically first elements of $H^m$.
Finally, the network outputs the values of $f_{\Delta}$ on these $n$ points, which by \cref{item:faithful} of \Cref{prop:poly-dec-cons}, equals $C_n(\vec{\Lambda})$. %

The network $R_n$ is obtained through one application of \Cref{item:input-layer-comp} of \Cref{prop:poly-dec-cons}, $2m\Delta$ applications of \Cref{lem:gkss-ind-recons}, and $n$ applications of \Cref{prop:universal-arith-network}.
For each application of \Cref{lem:gkss-ind-recons}, the size of the resulting circuit for $f_{i,j}$ is independent of the circuit for the preceding layer polynomial $f_{i,j-1}$.
In other words, letting $s_{i,j}$ denote the size of the circuit obtained for $f_{i,j}$, we have
\[
    s_{i,j} \le \poly(n, d, h, \log s, \nbar^{m}).
\]
In addition, the degree of $f_{i,j}$ is bounded by $h \cdot \polylog(s)$.
From the bound on the size of the circuit computing $f_{i,j}$, we see that the $2 m \Delta \le \polylog(s)$ applications of networks $\calN_n$ and $\calN_n'$ increase the size of $R_n$ by a total of $\poly(n, d, h, \log s, \nbar^{m})$.
The $n$ applications of \Cref{prop:universal-arith-network} correspond to a cost of 
\[
    \poly(n, s_{\Delta,2m}, \deg(f_{\Delta,2m})) \le \poly(n, d, h, \log s, \nbar^{m})
\]
in the size of $R_n$.
In total, the size of the network $R_n$ can be bounded by $\poly(n, d, h, \log s, \nbar^{m})$ as claimed.

To see the degree bound, first note that from each application of \Cref{lem:gkss-ind-recons}, the degree of the resulting circuit for $f_{i,j}$ is independent of that of the circuit for the preceding layer polynomial $f_{i,j-1}$. In fact, we have the bound $d_{i,j} = h\cdot \polylog(s)$ for all $i$ and $j$. 
Next, the network produced by \Cref{item:input-layer-comp} of \Cref{prop:poly-dec-cons} that produces a description of $f_0$ has degree $n\cdot h\cdot \polylog(s)$. By \Cref{prop:error-deg-comp}, we can bound the degree of $R_n$ by the product of the degrees of its constituent networks. 
To this end, note that each of the $2m\Delta \le \polylog(s)$ invocations of $\calN_n$ or $\calN_n'$ multiplies the degree of $R_n$ by $\poly(d, h, \nbar, \log s)$. Therefore, overall, we conclude that the degree of $R_n$ is bounded by $\poly(d, h, \nbar, \log s)^{\polylog(s)}$.

\subsubsection{A single iteration: Proof of \Cref{lem:gkss-ind-recons}} \label{sec:gkss-recon-step}

We first prove the main part of the statement, corresponding to the family $\set{\calN_n}_{n \in \N}$ of arithmetic networks with $\pit$ gates.
We then explain how to obtain the family $\set{\calN_n'}_{n \in \N}$, relying on essentially the same argument as in the construction of $\set{\calN_n}_{n \in \N}$.
	
Relabel the indices $i$ and $j$ in the lemma statement as $i'$ and $j'$, respectively, so that $i$ and $j$ are now free to use for other purposes in the remainder of the proof of this lemma.
From now on, we assume the indices $i'$ and $j'$ are fixed, so we refer to $f_{i',j'}$ as $f$ for simplicity.
We also use $s'$ and $d'$, respectively, to refer to the size and degree of the circuit for $f_{i', j'-1}$ that is given to $\calN_n$ as input. Let $d_f$ be the degree of $f$ (which by \cref{item:ideg} of \Cref{prop:poly-dec-cons} is bounded by $h\cdot \polylog (s)$).
	
Recall that in \Cref{cons:GKSS-based-tarHSG} we added dummy variables to all the $f_{i',j'}$'s so that they are all $3m$-variate polynomials.
We will repeatedly use the fact that given the circuit for $f_{i', j'-1}$, the network can evaluate $f$ at any given point in $\F_n^{3m}$ by using the downwards self-reducibility of the decomposition $\{f_{i,j}\}$.
Let us state this formally.
	
\begin{subclaim} \label{clm:gkss-ind-recons:dsr}
    There is a $\P$-uniform network of size $h\cdot \poly(s')$ and degree $O(1)$ that gets as input $(i',j')\in[\Delta]\times\{0,...,2m\}$ and a standard description of an arithmetic circuit of size $s'$ and degree $d'$ computing $f_{i',j'-1}$, and outputs a standard description of an arithmetic circuit of size $h\cdot\poly(s')$ and degree $d'$ that computes $f$.%
\end{subclaim}
	
\begin{subproof}
    The network relies on \Cref{prop:universal-arith-network} (to evaluate the circuit for $f_{i',j'-1}$) and on the downward self-reducibility of the polynomial decomposition $\set{f_{i,j}}$ (i.e., \Cref{item:prop-sumcheck} of \Cref{prop:poly-dec-cons}). 
\end{subproof}
	
\paragraph{High-level overview.} For a better exposition, we first describe the ideas involved in the construction of the arithmetic network $\calN_n$ in a list below and subsequently discuss the formal details of their implementation via a $\P$-uniform family of arithmetic networks with $\pit$ gates.
The reconstruction closely follows the argument laid out in \cite{GKSS22}, with additional ideas needed to ensure the uniformity of the reconstruction.

\begin{itemize}
    \item 
        From the equation
        \[
            D \circ \calH_{\vec{\Lambda}}^{(h,m,\nbar)}(w_1,w_2,\vec{z},\vec{y}) = 0,
        \]
        applying the partial evaluation $(w_1, w_2) \mapsto (i',j')$ yields
        \[
            D \circ \mathcal{H}_{\text{GKSS},f_{i',j'},\nbar}(\vec{z},\vec{y}) = 0.
        \]
        That is, the distinguisher $D$ for $\calH_{\vec{\Lambda}}^{(h, m, \nbar)}$ is necessarily a distinguisher for $\calH_{\text{GKSS}, f_{i',j'}, \nbar}$. Let the output of the generator $\calH_{\text{GKSS}, f_{i',j'}, \nbar}(\vec{z},\vec{y})$ (or just $\mathcal{H}_{\text{GKSS},f,\nbar}(\vec{z},\vec{y})$ in our simplified notation) be denoted by $(g_0(\vec{z}, \vec{y}),\ldots,g_\nbar(\vec{z}, \vec{y}))$.

    \item 
        {\bf (Distinguisher to ``predictor'' transformation.)} 
        For $a \in \set{0,1,\ldots,\nbar}$, define the $a^\text{th}$ hybrid polynomial 
        \[
            D_a(\vec{z},\vec{y},\vec{x}) = D(g_0(\vec{z},\vec{y}),\ldots,g_a(\vec{z},\vec{y}),x_{a+1},\ldots,x_\nbar).
        \]
        Because $D\circ \mathcal{H}_{\text{GKSS},f,\nbar}(\vec{z},\vec{y}) = 0$ and $D\neq 0$, it follows that there exists some $a \in \set{0,1,\ldots,\nbar-1}$ for which $D_a(\vec{z},\vec{y},\vec{x}) \neq 0$ but $D_{a+1}(\vec{z},\vec{y},\vec{x}) = 0$.
        Since we do not know which index $a$ satisfies this condition, the arithmetic network $\calN_n$ will simultaneously try all possible values of $a$ in $\nbar$ different branches.
        In what follows, we work within one such branch of the arithmetic network with a fixed value of $a$.
        
    \item
        {\bf (Guessing multiplicity of the generator as a root.)} 
        Suppose we have correctly guessed the value of $a$ for which $D_a(\vec{z}, \vec{y}, \vec{x}) \neq 0$ but $D_{a+1}(\vec{z}, \vec{y}, \vec{x}) = 0$.
        This implies that the distinguisher $D$ depends non-trivially on the variable $x_{a+1}$.
        Furthermore, since $D_a(\vec{z},\vec{y},\vec{x})$ vanishes on the substitution $x_{a+1} \mapsto g_{a+1}(\vec{z},\vec{y})$, it follows from \Cref{lem:gauss} that $(x_{a+1} - g_{a+1}(\vec{z}, \vec{y}))$ divides $D_a(\vec{z}, \vec{y}, \vec{x})$.
        Let $r$ be the highest power of $(x_{a+1} - g_{a+1})$ that divides $D_a$.
        Then we have
        \begin{align*}
            (\partial_{x_{a+1}}^{r-1} D_a) \circ \calH_{\text{GKSS},f,\nbar}(\vec{z},\vec{y}) &= 0 \\
            (\partial_{x_{a+1}}^{r} D_a) \circ \calH_{\text{GKSS},f,\nbar}(\vec{z},\vec{y}) &\neq 0.
        \end{align*}
        Note that we must have $r \le \deg(D_a) \le d$.
        We would like to work with this $r$, but since we do not know its value, the network $\calN_n$ simultaneously tries using all values of $b \in [d]$ as the value of $r$.
        Just as in the previous bullet, we now work within one such branch of the arithmetic network with a fixed value of $b$.
        
    \item 
        {\bf (Within branch $(a,b)$.)}
        We have now fixed guesses for the values of $a$ and $b$.
       From \Cref{lem:network algo partial derivative}, we can compute a description of a circuit of size $O(\size(D)\cdot d) = O(nd)$ for the partial derivative $\partial_{x_{a+1}}^b D_a$ using a $\P$-uniform family of arithmetic networks. 
        We make use of this circuit in the subsequent steps.
        \begin{itemize}
        
            \item{\bf (Testing for viability of the sub-branch.)}
            First, we verify, using $\pit$ gates, whether our choice of $(a, b)$ will allow us to reconstruct a circuit for the polynomial $f$. 
            For the reconstruction to succeed in this sub-branch, it is sufficient that the polynomial
            \[\Phi^{(a,b)}(\vec{z}, \vec{y}) \coloneqq (\partial^b_{x_{a+1}} D_a) \circ \calH_{\text{GKSS}, f, \nbar}(\vec{z}, \vec{y}) = (\partial^b_{x_{a+1}} D_a) \circ (\Delta_{f,0}(\vec{z}, \vec{y}), \Delta_{f,1}(\vec{z},\vec{y}), \ldots, \Delta_{f, \nbar-1}(\vec{z}, \vec{y}))\]
            is nonzero.\footnote{We do not justify this statement here; rather, this is an artifact of the reconstruction procedure of \cite{GKSS22} and becomes evident after a complete reading of the present high-level overview.}
            We first use the downward self-reducibility relation (\Cref{clm:gkss-ind-recons:dsr}) to compute the description of an arithmetic circuit $F$ of size $h \cdot s'$ and degree $d'$ that computes $f$.
            Using this, we can generate the description of a circuit $G$ of size $h \cdot s' + 2\ell = O(h\cdot s')$ and the same degree $d'$ that computes the $2\ell$-variate polynomial $f(\vec{y} + \vec{z})$.
            We then use the arithmetic network in \Cref{lem:network algo homogeneous components} to extract the $\vec{y}$-homogeneous components of degree at most $\nbar$ of this circuit, i.e., the polynomials $\Delta_{f,0}(\vec{z}, \vec{y}), \Delta_{f,1}(\vec{z},\vec{y}), \ldots, \Delta_{f, \nbar-1}(\vec{z}, \vec{y})$.
            Let $G_i$ denote the circuit of size $O(h s' i^2)$ that computes $\Delta_{f,i}$. From the aforementioned circuit description for the partial derivative $\partial_{x_{a+1}}^b D_a$ along with these descriptions of $G_i$'s, and applying \Cref{prop:comp-arith-network}, it follows that we can compute a description of a circuit for $\Phi^{(a,b)}(\vec{y},\vec{z})$ of size $O(nd\cdot hs'\nbar^2)$.
            
            We feed this description\footnote{Strictly speaking, $\pit$ gates expect \emph{universal} encodings as inputs but we note that it is possible to convert a standard encoding to a universal one (and vice-versa) without significant overhead (see \Cref{lem:network-algo-std-uni} and \Cref{lem:network-algo-uni-std}), and ignore this point about encoding format in the high-level overview.} to a $\pit$ gate and only proceed with subsequent steps in this sub-branch if the $\pit$ gate outputs that $\Phi^{(a,b)}$ is nonzero.
            Otherwise, the current branch detects its failure to reconstruct $f$ and aborts, returning (a description of) the zero circuit.

        \item{\bf (Ensuring nonzeroness of an auxiliary polynomial.)}
            In the rest of the high-level overview, we work with the assumption that the polynomial $\Phi^{(a,b)}(\vec{z}, \vec{y}) \coloneqq (\partial^b_{x_{a+1}} D_a) \circ \calH_{\text{GKSS}, f, \nbar}(\vec{z}, \vec{y})$ is nonzero. %
            In fact, for the reconstruction to succeed in this sub-branch, it is sufficient that the polynomial
            \[
                \Psi_f^{(a,b)}(\vec{y}) \coloneqq (\partial^b_{x_{a+1}} D_a) \circ \calH_{\text{GKSS}, f, \nbar}(\vec{0}, \vec{y})
            \]
            is nonzero.
            We would like to work with the assumption that the simpler polynomial $\Psi_f^{(a,b)}(\vec{y})$ is nonzero; however, it is possible that even in a sub-branch corresponding to ``correct'' guesses for $a$ and $b$ (i.e., when $\Phi^{(a,b)}$ is nonzero), the polynomial $\Psi_f^{(a,b)}(\vec{y})$ simplifies to the zero polynomial.
            In such a situation, since $\Phi^{(a,b)}(\vec{z},\vec{y})$ is a nonzero $\ell$-variate polynomial of degree at most $d\cdot d_f$ in the $\vec{z}$ variables over $\F_n(\vec{y})$, \Cref{lem:sz} implies that $\Phi^{(a,b)}(\vec{c}, \vec{y})$ is a nonzero polynomial for a random choice of $\vec{c} \in [2d \cdot d_f]^{\ell}$.
            By \Cref{obs:shift-invariance-gkss}, we have
            \[
                \Phi_{(a,b)}(\vec{z} + \vec{c},\vec{y}) = (\partial^b_{x_{a+1}} D_a) \circ \calH_{\text{GKSS}, f, \nbar}(\vec{z} + \vec{c}, \vec{y}) = (\partial^b_{x_{a+1}} D_a) \circ \calH_{\text{GKSS}, f', \nbar}(\vec{z}, \vec{y}),
            \]
            where $f'$ denotes the polynomial $f'(\vec{y}) \coloneqq f(\vec{y}+\vec{c})$ obtained by shifting the point $\vec{c}$ to the origin.
            In particular, $\Phi^{(a,b)}(\vec{c},\vec{y}) = (\partial^b_{x_{a+1}} D_a) \circ \calH_{\text{GKSS}, f', \nbar}(\vec{0}, \vec{y}) = \Psi_{f'}^{(a,b)}(\vec{y})$ is a nonzero polynomial for a random choice of $\vec{c} \in [2d \cdot d_f]^{\ell}$.
            
            Therefore, the next goal for our reconstruction network is to repeatedly make use of $\pit$ gates in order to find such a shift $\vec{c}$. 
            The reconstruction network attempts to find such a shift by fixing one coordinate of $\vec{c}$ at a time, using the standard search-to-decision reduction for $\pit$.
            Fix a constant $c_1\in [2d\cdot d_f]$. 
            From the aforementioned description of a circuit for $\Phi_{g}^{(a,b)}(\vec{y},\vec{z})$ of size $O(nd\cdot hs'\nbar^2)$, we generate the description of a circuit for the $(2\ell - 1)$-variate polynomial $\Phi_{g}^{(a,b)}(\vec{y},c_1,z_2,\ldots, z_\ell)$.
            We feed the resulting description to a $\pit$ gate. If for this choice of $c_1$, the polynomial is zero, then we try the next value for $c_1$ in $[2d\cdot d_f]$, and continue in this manner. If we find a choice of $c_1$ for which the polynomial is nonzero, we fix it and move on to checking the nonzeroness of the choices for $z_2$ upon substituting constants $c_2\in[2d\cdot d_f]$, and so on. 
            Note that since $\Phi^{(a,b)}(\vec{z}, \vec{y})$ is nonzero, from \Cref{lem:sz}, we are guaranteed to find a shift $\vec{c}$ using this procedure such that $\Psi_{f'}^{(a,b)}(\vec{y})$ is nonzero. 

            In the rest of this high-level overview, the reconstruction is truly for $f'$, not $f$; however, we remark that once we manage to reconstruct a small circuit for $f'$ at the end of this sub-branch, we can easily obtain one for $f$ by simply subtracting $\vec{c}$ from the inputs.
            For the sake of simpler notation, we ignore this point about translation by a random vector $\vec{c}$ in the rest of the high-level overview, and work with the assumption that we are in a sub-branch where $\Phi^{(a,b)}$ is nonzero, and furthermore, so is $\Psi(\vec{y}) \coloneqq (\partial^b_{x_{a+1}} D_a) \circ \calH_{\text{GKSS}, f, \nbar}(\vec{0}, \vec{y})$.
            From now on, we will drop the subscript $f$ in $\Psi$, as well as the superscript ${(a,b)}$, as we are working in a fixed sub-branch of the reconstruction.
         \item{\bf (Constructing a constrained interpolating set.)} The next step is to construct an explicit interpolating set for $\calP(\ell,\nbar)$ with the crucial property that $\Psi$ is nonzero at every point in this set (the reason for this demand will be explained momentarily). 
        Consider the following $M_{\ell,\nbar}\times M_{\ell,\nbar}$ variable matrix $A(\vec{y}_1,\ldots, \vec{y}_{M_{\ell,\nbar}})$:  the rows of $A$ are indexed by $[M_{\ell,\nbar}]$ and the columns indexed by $\ell$-variate monomials $\vec{x}^{\vec{e}}$ of degree at most $\nbar$, and the entry at $(i,\vec{x}^{\vec{e}})$ is $\vec{y}_i^{\vec{e}}$ (the monomial $\vec{x}^{\vec{e}}$ evaluated at $\vec{y}_i$). 
        The condition that a collection of vectors $\{\vec{a}_1,\ldots, \vec{a}_{M_{\ell,\nbar}}\}$ (where each $\vec{a}_i\in \F_n^\ell$) is an interpolating set is characterized by the nonzeroness of the determinant of $A(\vec{a}_1,\ldots, \vec{a}_{M_{\ell,\nbar}})$. 
        Therefore, it suffices to find vectors $\vec{a}_1,\ldots, \vec{a}_{M_{\ell,\nbar}}$ such that the evaluation of the polynomial $Q(\vec{y}_1,\ldots, \vec{y}_{M_{\ell,\nbar}})\coloneqq \det(A(\vec{y}_1,\ldots, \vec{y}_{M_{\ell,\nbar}}))\cdot \prod_{i=1}^{M_{\ell,\nbar}}\Psi (\vec{y}_i)$ at $(\vec{a}_1,\ldots, \vec{a}_{M_{\ell,\nbar}})$ does not vanish. %
        
        To do this, the network first uses a standard description of the $\P$-uniform family of arithmetic circuits of $\poly(M_{\ell,\nbar})$ size and degree from {\cite{Berkowitz84}} to compute $\det(A)$, and the description of the circuit for $\Psi(\vec{y_i}) = \Phi^{(a,b)}(\vec{0},\vec{y_i})$ computed in a previous bullet point to compute a description of $Q$. 
        Note that our assumption about $\Psi$ implies that $Q$ is a nonzero polynomial, and furthermore, note that the determinant of $A$ as a polynomial over $\ell\cdot M_{\ell,\nbar}$ many variables has the property that each of its rows is indexed by a disjoint set of variables. As a consequence, the individual degree of $Q$ is bounded by $\ideg(A)\cdot \deg(\Psi)\leq  \nbar + d\nbar$ and therefore, by \Cref{lem:sz-ideg}, we have that when the
        vectors $\vec{a}_1,\ldots, \vec{a}_{M_{\ell,\nbar}}$ are chosen uniformly at random from $[2\nbar+2d\nbar]^\ell$, then with high probability, they form an interpolating set for $\calP(\ell,\nbar)$ and $\Psi(\vec{a}_i) \neq 0$ for each $i\in [M_{\ell,\nbar}]$.
        
        The network finds such a non-vanishing point $(\vec{a}_1,\ldots, \vec{a}_{M_{\ell,\nbar}})$ for $Q$ by fixing one coordinate (out of $\ell\cdot M_{\ell,\nbar}$ in all) at a time, in a process similar to that described earlier for finding a good shift $\vec{c}$ described previously. We omit the details but note that this leads to at most $(2\nbar+2d\nbar)\cdot \ell\cdot M_{\ell,\nbar}$  calls to $\pit$ gates, where each such $\pit$ gate is fed a description of an arithmetic circuit of size bounded by $O(nd\cdot hs'\nbar^2)\cdot \poly(M_{\ell,\nbar}) = \poly(n,d,h,s',\nbar^m)$.
         \item{\bf (Induction setup.)}
            We now describe the setup of the inductive reconstruction of a small circuit for $f$. 
            The induction is on a parameter $j$ which takes values from $0$ up to $d_f-\nbar$. Note that this is where we use the given bound on $\nbar$ in the statement of \Cref{prop:GKSS-recons-arith-net}.
            At the end of the $j$-th step, we will have a circuit that computes all partial derivatives of order at most $\nbar$ of all homogeneous components of $f$ of degree up to $j+\nbar$.
            We now describe the steps of the induction argument more formally.
        \item 
            \textbf{(Base case $(j = 0)$.)}
            For each $\vec{e} \in \Z_{\ge 0}^\ell$ of weight $|\vec{e}| \le \nbar$ and each $t \le \nbar$, we can write the polynomial $\partial_{\vec{z}}^{\vec{e}} f_t$---where $f_t$ is the degree-$t$ homogeneous component of $f$---explicitly as a sum of at most $M_{\ell,\nbar} \coloneqq \binom{\nbar + \ell}{\ell}$ monomials,\footnote{This is overloaded notation, as $f_t$ has already been defined as a layer polynomial in \Cref{def:poly-dec}. However, since we are working with a fixed layer polynomial $f$ inside the proof of \Cref{lem:gkss-ind-recons}, there is no ambiguity, as we will never use $f_t$ to refer to a layer polynomial throughout the proof.} where recall from \Cref{cons:GKSS-based-tarHSG} that $\ell = 3m$ is a bound on the number of variables in $f$.
            Hence there is a multi-output circuit $B_0$ of size $s_0 = M_{\ell,\nbar}^2$ that computes the polynomials $\{\partial_{\vec{z}}^{\vec{e}}f_t: 0 \leq t \leq \nbar , |e| \leq \nbar\}$.

            To construct the circuit $B_0$, we first use the downward self-reducibility relation (\Cref{clm:gkss-ind-recons:dsr}) to compute the description of an arithmetic circuit $F$ of size $h \cdot s'$ and degree $d'$ that computes $f$.
            We use the arithmetic network in \Cref{lem:network algo homogeneous components} to extract the homogeneous components of degree at most $\nbar$ of this circuit.
            Let $F_i$ denote the circuit of size $O(h s' i^2)$ that computes $f_i$, the degree-$i$ homogeneous component of $f$.
            For $i \le \nbar$, the polynomial $f_i$ is an $\ell$-variate polynomial of degree at most $\nbar$.
            We can interpolate $f_i$ using a circuit of size at most $O(\nbar^{\ell+1} \ell)$ via
            \begin{equation}\label{eq:gkss-f-interpolation}
                f_i(\vec{z}) \coloneqq \sum_{\vec{\beta} \in T^{\ell}} f_i(\vec{\beta}) \prod_{k\in [\ell]} \prod_{\gamma \in T\setminus \set{\beta_k}} \frac{z_k - \gamma}{\beta_k - \gamma},
            \end{equation}
            where $T \subseteq \F_n$ is any finite set\footnote{\label{fn:T-large-enough}For example, one can take $T = [\nbar]$, as the characteristic is large enough (see \Cref{cons:GKSS-based-tarHSG}).} of size at least $\nbar + 1$.
            We can compute the value of $f_i(\vec{\beta})$ by evaluating the circuit $F_i$ at $\vec{\beta}$ using \Cref{prop:universal-arith-network}.

            Likewise, for any $\vec{e} \in \Z_{\ge 0}^\ell$ and any $t \le \nbar$, we can interpolate a circuit for $\partial^{\vec{e}}_{\vec{z}} f_t(\vec{z})$ of the same size.
            This requires a description of a circuit that computes $\partial^{\vec{e}}_{\vec{z}} f_t(\vec{z})$.
            We can obtain such a circuit of size $O(s \prod_{i=1}^\ell e_i)$ by repeatedly applying \Cref{lem:network algo partial derivative} to the circuit $F$ to compute the appropriate partial derivative of $f$.
       
        \item \textbf{(Induction hypothesis.)}
            We have the description of a circuit $B_{j-1}(\vec{z})$ that has size at most $s_{j-1}$.
            The circuit $B_{j-1}$ computes the $M_{\ell,\nbar} \cdot (\nbar + j - 1)$ polynomials  $\partial_{\vec{z}}^{\vec{e}}f_t$ for $|\vec{e}| \leq \nbar$ and $t \leq \nbar + j - 1$.
        \item
            \textbf{(Induction step.)} The goal by the end of this inductive step is to construct a circuit $B_j(\vec{z})$ of size at most $s_j$ (to be defined shortly) that computes $\partial_{\vec{z}}^{\vec{e}}f_t$ for $|\vec{e}| \leq \nbar$ and $t \leq \nbar + j$. 
   \begin{itemize}

   \item For a point $\vec{a} \in \F_n^\ell$, define 
   \[
       \Gamma_{j-1,\vec{a}} \coloneqq (\Delta_{f_{\leq \nbar + j -1},0}(\vec{z},\vec{a}),\ldots, \Delta_{f_{\leq \nbar + j -1},\nbar}(\vec{z},\vec{a})),
   \]
   where $f_{\leq \nbar + j -1}$ denotes the sum of the first $\nbar + j -1$ homogeneous components of $f$.
   The crucial observation in \cite{GKSS22} is that $\Delta_{f_{n+j},n}$ can be computed using the evaluation of our distinguisher $D$ on the points $\Gamma_{j-1,\vec{a}}$ as $\vec{a}$ ranges over the interpolating set we have constructed. 
   We state their main technical lemma below.\footnote{Strictly speaking, this lemma is only stated in \cite{GKSS22} for $a = \nbar - 1$ and $b = 0$ in our notation; but it is easy to verify that it also holds in this slightly more general form.}
   \begin{lem}[{\cite[Lemma 3.2]{GKSS22}}]\label{lem:gkss-technical}
       Let $\vec{a}\in\F_n^\ell$ be such that $\Psi(\vec{a})\neq 0$. Then, %
       \[
       \left(\frac{-1}{\Psi(\vec{a})}\right) (\partial_{x_{a+1}}^{b+1}D_a)(\Gamma_{j-1,\vec{a}}) = \Delta_{f_{\nbar+j},\nbar}(\vec{z},\vec{a}) \pmod{\langle \vec{z}\rangle^{j+1}}.
       \]
   \end{lem}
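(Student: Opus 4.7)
The proof reproduces the key technical step of~\cite[Lemma 3.2]{GKSS22} in our notation, and proceeds via a careful Taylor-expansion argument built on a factorization of $D_a$.

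First, I would invoke \Cref{cor:gauss}: since $D_{a+1} \equiv 0$ as a polynomial identity, we have $(x_{a+1}-g_{a+1}) \mid D_a$, and taking $b$ to be the exact multiplicity of this factor (matching the guessed value, by the setup of the enclosing sub-branch) yields a factorization $D_a = (x_{a+1}-g_{a+1})^b \cdot U$ with $U$ coprime to $(x_{a+1}-g_{a+1})$. A direct application of Leibniz's rule gives $(\partial_{x_{a+1}}^b D_a)|_{x_{a+1}=g_{a+1}} = b! \cdot U|_{x_{a+1}=g_{a+1}}$, so combined with \Cref{lem:z=0-gkss} the normalization $\Psi(\vec{a}) = b! \cdot U(\vec{0}, \vec{a}, \calH_{\text{GKSS},f,\nbar}(\vec{0}, \vec{a}))$ falls out, which will supply the correct denominator in the final identity.

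Next, I would expand $\partial_{x_{a+1}}^{b+1} D_a$ by Leibniz and evaluate at $\vec{x} = \Gamma_{j-1, \vec{a}}$. The substitution turns the factor $(x_{a+1}-g_{a+1})$ into $-\Delta_{f_{\geq \nbar+j}, a+1}(\vec{z}, \vec{a}) \in \langle \vec{z}\rangle^{\nbar+j-a-1} \subseteq \langle \vec{z}\rangle^{j+1}$, so every Leibniz term with $k<b$ sits in $\langle \vec{z}\rangle^{j+1}$; only the $k=b$ term survives modulo $\langle \vec{z}\rangle^{j+1}$, giving $(\partial_{x_{a+1}}^{b+1} D_a)(\Gamma_{j-1,\vec{a}}) \equiv (b+1)! \cdot \partial_{x_{a+1}} U(\vec{z}, \vec{a}, \Gamma_{j-1, \vec{a}})$ modulo this ideal. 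I would then Taylor-expand $\partial_{x_{a+1}} U(\vec{z}, \vec{a}, \vec{x})$ in $\vec{x}$ around the true generator output $\calH_{\text{GKSS},f,\nbar}(\vec{z}, \vec{a})$; the displacement vector has coordinates $u_i = -\Delta_{f_{\geq \nbar+j}, i}(\vec{z}, \vec{a}) \in \langle \vec{z}\rangle^{\nbar+j-i}$, so all but the top coordinate already lie in $\langle \vec{z}\rangle^{j+1}$, while the top coordinate lies in $\langle \vec{z}\rangle^j$ with leading-$\vec{z}$-degree-$j$ part equal to $-\Delta_{f_{\nbar+j},\nbar}(\vec{z},\vec{a})$. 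Thus after discarding all second-and-higher-order Taylor pieces (which sit in $\langle \vec{z}\rangle^{2j}$ or $\langle \vec{z}\rangle^{j+1}$), only the zeroth-order term and a single first-order contribution from the top coordinate of $\vec{u}$ survive.

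The final step, which I expect to be the main obstacle, is the \emph{coefficient bookkeeping}: I must verify both that the zeroth-order piece $\partial_{x_{a+1}} U \circ \calH_{\text{GKSS},f,\nbar}$ itself vanishes modulo $\langle \vec{z}\rangle^{j+1}$, and that the first-order coefficient multiplying $\Delta_{f_{\nbar+j},\nbar}(\vec{z},\vec{a})$ works out to exactly $-\Psi(\vec{a})/(b+1)!$, so that dividing by $-\Psi(\vec{a})$ recovers the claimed identity. Both will follow from further differentiating the polynomial identity $D_a(\vec{z}, \vec{y}, g_{a+1}, \ldots, g_\nbar) = 0$ in the generator parameters, using \Cref{obs:shift-invariance-gkss} to commute shifts past the generator, and then evaluating at $\vec{z} = \vec{0}$ to pin down the relevant mixed partial of $U$ at $\calH_{\text{GKSS},f,\nbar}$ in terms of the already-computed scalar $U(\vec{0}, \vec{a}, \calH_{\text{GKSS},f,\nbar}(\vec{0}, \vec{a})) = \Psi(\vec{a})/b!$. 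Once the constants line up, everything else is routine degree counting in the ideal $\langle \vec{z}\rangle$.
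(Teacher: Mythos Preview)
The paper does not supply its own proof of this lemma; it simply quotes it as \cite[Lemma~3.2]{GKSS22} and notes in a footnote that the cited version is stated for the special case $a=\nbar-1$, $b=0$, the generalization being ``easy to verify.'' So there is no in-paper argument to compare your plan against --- your sketch is essentially the GKSS approach (factor out the known root, Leibniz-expand, Taylor-expand around the true generator output $\calH_{\text{GKSS},f,\nbar}(\vec z,\vec a)$, and track $\langle\vec z\rangle$-degrees), and that is exactly what the paper is citing.

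Two cautions on the plan itself. First, you identify the guessed $b$ with the exact multiplicity $r$ of $(x_{a+1}-g_{a+1})$ in $D_a$; the surrounding viability test only checks $\Phi^{(a,b)}\neq 0$, which for general $a<\nbar-1$ does not force $b=r$ (the coprime cofactor can vanish after substituting the remaining $x_i$'s). This matches the paper's informal framing of $b$ as ``a guess for $r$,'' but be aware the hypothesis you are actually using is slightly stronger than the stated $\Psi(\vec a)\neq 0$. Second, the step you correctly flag as the main obstacle is genuinely the crux: after Leibniz/Taylor, the zeroth-order piece $(\partial_{x_{a+1}}U)\circ\calH$ must vanish modulo $\langle\vec z\rangle^{j+1}$, and the first-order coefficient (a mixed partial of $U$ at $\calH(\vec 0,\vec a)$) must collapse to a scalar multiple of $U(\vec 0,\vec a,\calH(\vec 0,\vec a))$. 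Neither of these follows from ``differentiating $D_a\circ\calH=0$'' for a generic cofactor $U$; you will need the specific structure of the GKSS outputs --- namely the identity $\partial_{y_k}\Delta_{f,i}=\partial_{z_k}\Delta_{f,i-1}$ linking consecutive coordinates of $\calH$, which is what makes the derivatives telescope in the original GKSS proof. That identity (not just \Cref{obs:shift-invariance-gkss}) is the missing ingredient in your step-5 sketch.
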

   \item We begin the reconstruction of $f_{\nbar + j}$ with the circuit $B_{j-1}(\vec{z})$ that computes every $\partial_{\vec{z}}^{\vec{e}}f_t$ for $|\vec{e}| \leq \nbar$ and $t \leq \nbar + j -1$. 
   By taking suitable linear combinations of the output gates, $\calN$ can create a new circuit $B$ of size at most $s_{j-1}+M_{\ell,\nbar}^5$ that computes the points $\{\Gamma_{j-1,\vec{a}_r} : r \in [M_{\ell,\nbar}]\}$.
   Using \Cref{lem:gkss-technical} for each $\vec{a}_i$, it then obtains a circuit of size $s_{j-1} +M_{\ell,\nbar}^5 +n\cdot d\cdot M_{\ell,\nbar}$ that computes $\{\Delta_{f_{\nbar+j},\nbar}(\vec{z},\vec{a}_r) : r \in [M_{\ell,\nbar}]\}$ modulo the ideal $\langle \vec{z}\rangle^{j+1}$.
    \item 
    We can regard $\Delta_{f_{\nbar+j},\nbar}$ as an $\ell$-variate degree-$\nbar$ polynomial in $\F_n(\vec{z})[\vec{y}]$.
    By definition, we can write $\Delta_{f_{\nbar+j},\nbar}(\vec{z},\vec{a})$ as a linear combination of the $\nbar$-th order partial derivatives of  $f_{\nbar+j}(\vec{z})$.
    As $\{\vec{a}_1,\ldots,\vec{a}_{M_{\ell,\nbar}}\}$ is chosen to be an interpolating set for $\calP(\ell,\nbar)$, each $\partial_{\vec{z}}^{\vec{e}}f_{\nbar + j}$ with $|\vec{e}| = \nbar$ can be written as a suitable linear combination of $\{\Delta_{f_{\nbar+j},\nbar}(\vec{z},\vec{a}_r) : r \in [M_{\ell,\nbar}]\}$. %
 More precisely, from \Cref{cons:GKSS}, including the definition of $\Delta_{P,i}$, and the ``in other words'' part of \Cref{def:interpolating-set}, we have
 \begin{equation}\label{eq:Delta-lc}
     \left(\frac{1}{\vec{e}}\right)\partial_{\vec{z}}^{\vec{e}}f_{\nbar + j}(\vec{z})=\coeff_{\vec{y}^{\vec{e}}}(\Delta_{f_{\nbar+j},\nbar}(\vec{z},\vec{y})) = \sum_{i=1}^{M_{\ell,\nbar}} \beta_i \cdot \Delta_{f_{\nbar+j},\nbar}(\vec{z},\vec{a}_i) = \sum_{i=1}^{M_{\ell,\nbar}}\beta_i\cdot \sum_{\substack{\vec{\alpha} \in {\Z_{\geq 0}^\ell} \\ |\vec{\alpha}| = \nbar}} \frac{\vec{a}_i^{\vec{\alpha}}}{\vec{\alpha}!} \cdot \partial_{\vec{z}}^{\vec{\alpha}}(f_{\nbar+j}(\vec{z}))
 \end{equation}
 for some $\set{\beta_1,\ldots,\beta_N} \subseteq \F_n(\vec{z})$.
 Note that this amounts to solving the linear system $L\vec{\beta} = \vec{\chi}_{\vec{e}}/\vec{e}!$.
 Here, $L$ is an $M_{\ell,\nbar} \times M_{\ell,\nbar}$ matrix whose rows are indexed by exponent vectors $\set{\vec{\alpha} \in \Z_{\ge 0}^\ell : |\vec{\alpha}| \le \nbar}$, whose columns are indexed by $[M_{\ell,\nbar}]$, and whose $(\vec{\alpha}, i)$ entry is $\vec{a}_i^{\vec{\alpha}}/\vec{\alpha}!$.
 The vector $\vec{\beta} = (\beta_1,\ldots,\beta_N)$ is the vector of indeterminates we want to solve for, and the right-hand side $\vec{\chi}_{\vec{e}}$ is the indicator vector for $\vec{e}$.
 
 Because the set $\set{\vec{a}_1, \ldots, \vec{a}_N}$ is an interpolating set for $\calP(\ell,\nbar)$, the matrix $L$ is invertible.
 Hence we can solve this system of equations for $\vec{\beta}$ by inverting $L$ (which can be done with a $\P$-uniform arithmetic circuit of polynomial size \cite{Berkowitz84}) and computing the matrix-vector product $L^{-1} \vec{\chi}_{\vec{e}}$.
 For $\vec{e} \in \Z_{\ge 0}^\ell$ with $|\vec{e}| = \nbar$, this lets us write the partial derivative $\partial^{\vec{e}}_{\vec{z}}f_{\nbar + j}(\vec{z})$ as a linear combination of the evaluations $\set{\Delta_{f_{\nbar + j}, \nbar}(\vec{z}, \vec{a}_r) : r \in [M_{\ell,\nbar}]}$, which we have already computed in the previous step.

\begin{remark}\label{rem:Delta-lc}
    We use the field characteristic assumption here to represent the derivative $\partial_{\vec{z}}^{\vec{e}}f_{\nbar + j}(\vec{z})$ as a linear combination of the evaluations $\set{\Delta_{f_{\nbar + j}, \nbar}(\vec{z}, \vec{a}_r) : r \in [M_{\ell,\nbar}]}$ (using the formula described in \Cref{cons:GKSS} in \cref{eq:Delta-lc} and in the subsequent formulation of the linear system). %
    The assumption that $\ch(\F) = 0$ or $\ch(\F_n) > d_f = h \cdot \polylog(s)$ %
    ensures that the linear system $L \vec{\beta} = \vec{\chi}_{\vec{e}} / \vec{e}!$ is well-defined.
\end{remark}

 \item 
 We have now computed all $\nbar$-th order partial derivatives of $f_{\nbar+j}$.
 Because $f_{\nbar+j}$ is a homogeneous polynomial, an arithmetic network can also compute all lower order derivatives via repeated applications of Euler’s formula (\Cref{fact:euler}).

 \begin{remark}\label{rem:Euler-char}
    It is crucial in these applications of \Cref{fact:euler} that $\deg(f_{\nbar + j})$ is nonzero in $\F$.
    The value of $\deg(f_{\nbar + j})$ can be as large as $d_f = h \cdot \polylog(s)$.
    Because we assume that $\ch(\F) = 0$ or $\ch(\F_n) > h \cdot \polylog(s)$, we are guaranteed that $\deg(f_{\nbar+j})$ is nonzero in $\F$.
 \end{remark}

 Combined with the outputs of $B_{j-1}(\vec{z})$, we have a circuit $B'_j(\vec{z})$ of size $s_{j-1}+ M_{\ell,\nbar}^{10} +n\cdot d\cdot M_{\ell,\nbar}$ that computes the polynomials
 \[
     \{\partial_{\vec{z}}^{\vec{e}}f_t: |\vec{e}| \leq \nbar, t \leq \nbar + j - 1\} \cup \{\partial_{\vec{z}}^{\vec{e}}\tilde{f}_{\nbar+j}: |\vec{e}| \leq \nbar\},
 \]
 where $\partial_{\vec{z}}^{\vec{e}}\tilde{f}_{\nbar+j} = \partial_{\vec{z}}^{\vec{e}}{f}_{\nbar+j} \pmod{\langle z\rangle^{\nbar+j-|\vec{e}|+1}}$ for every $|\vec{e}| \leq \nbar$. 
 \item 
 The last remaining task is to fix the errors present in the polynomials $\partial^{\vec{e}}_{\vec{z}}\tilde{f}_{\nbar + j}$ and obtain circuits that correctly compute the desired partial derivatives $\partial^{\vec{e}}_{\vec{z}} f_{\nbar + j}$.
 The circuit $B'_j$ is a composition of a circuit of size $M_{\ell,\nbar}^{10} + n\cdot d\cdot M_{\ell,\nbar}$ with the homogeneous circuit $B_{j-1}$ of size $s_{j-1}$. 
 Using \Cref{lem:network algo partial hom}, 
 we extract the lowest degree homogeneous parts of the outputs of $B'_j \circ B_{j-1}$ by constructing an equivalent homogeneous circuit (which we call $B_j$) of size at most $s_{j-1} + ((M_{\ell,\nbar}^{10} + n\cdot d\cdot M_{\ell,\nbar}) \cdot d_f^2)$ that computes $\{\partial_{\vec{z}}^{\vec{e}}f_t:|\vec{e}| \leq \nbar, t \leq \nbar + j \}$. 
 This completes the induction step. 
\end{itemize}

\item Unraveling the induction, for $d_f -\nbar$ steps, we eventually obtain a circuit of size
 at most $s_{d_f-\nbar} = O(n\cdot d\cdot d_f^3 \cdot M_{\ell,\nbar}^{10}) = O(n\cdot d \cdot h^3\cdot \polylog(s)\cdot \nbar^{30m})$ that computes all the partial
 derivatives of order at most $\nbar$ of $f_0,\ldots,f_{d_f}$.
 The zeroth order partial derivatives are precisely $f_0,\ldots,f_{d_f}$, and these polynomials can be summed to produce a circuit for $f$ of size $O(n \cdot d \cdot h^3 \cdot \polylog(s) \cdot \nbar^{30m})$.\footnote{It is at this point that the network undoes the initial translation by $\vec{c}$ to recover a circuit of the same asymptotic size for the original polynomial.}

\item The network outputs the description of a candidate circuit produced by any sub-branch of the computation which does not abort after testing viability, breaking ties arbitrarily.

\end{itemize}
\end{itemize}

\paragraph{Details of implementation.} We now formalize the details of the implementation of this reconstruction procedure using a sequence of claims below.

\begin{subclaim} [evaluating a single sub-branch] \label{clm:gkss-recon-branch}
		There is a $\P$-uniform family $\Nin=\{\Nin_n\}$ of arithmetic networks with $\pit$ gates with the following properties.
        \begin{itemize}
            \item It takes as input $\vec{\Lambda}\in\F_n^n$, integers $a\in[\nbar-1]$ and $b\in [d]$, and a standard description  of a size $s'$, degree $d'$ arithmetic circuit computing $f_{i',j'-1}$.
            \item It outputs a Boolean value $v_{a,b}$ along with a standard description of a circuit $\calC_{a,b}$ such that:
            \begin{itemize}
                \item if $(\partial^b_{x_{a+1}} D_a) \circ \calH_{\text{GKSS}, f, \nbar}(\vec{z}, \vec{y})$ is nonzero, then $v_{a,b} = 1$ and $\calC_{a,b}$ has size $O(n\cdot d \cdot h^3\cdot \polylog(s)\cdot \nbar^{30m})$, degree $d_f$, and computes $f$, and 
                \item if $(\partial^b_{x_{a+1}} D_a) \circ \calH_{\text{GKSS}, f, \nbar}(\vec{z}, \vec{y}) = 0$, then $v_{a,b} = 0$, and $\calC_{a,b}$ is a description of the zero circuit.
            \end{itemize}
            \item Moreover, $\Nin_n$ has size $\poly(n, d, h, \log s, \nbar^{m}, s',d')$ and degree at most $\poly(d,d',h,\nbar,\log s)$. %
        \end{itemize} 
\end{subclaim}

    \begin{subproof}
        We verify that within any sub-branch $(a,b)$ where $(\partial^b_{x_{a+1}} D_a) \circ \calH_{\text{GKSS}, f, \nbar}(\vec{z}, \vec{y})$ is nonzero, the base case as well as the inductive step as laid out in the overview above can be performed using a $\P$-uniform family of arithmetic networks with $\pit$ gates, and subsequently analyze the complexity of the subroutines involved. %
        
        \paragraph{Testing for viability of $(a,b)$:}  As described in the high-level overview, $\Nin$ feeds a description of $\Phi^{(a,b)}$ to a $\pit$ gate, and reports its output as $v_{a,b}$. If $v_{a,b} = 0$, then it outputs a description of the zero circuit (and if not, then it proceeds with the implementing the rest of the high-level overview and eventually outputs a standard description of $B_{d_f-\nbar}$ that we subsequently describe).
        
        Let us now analyze the network size and degree complexity of the corresponding sub-procedure laid out in the high-level overview. First, the network that $\Nin_n$ uses from \Cref{clm:gkss-ind-recons:dsr} as a subroutine to construct a description of $G$ has size $O(h\cdot s')$ and degree $O(1)$. Next, the network from \Cref{lem:network algo homogeneous components} used to extract circuits for $\Delta_{f,i}$ (for $0\leq i \leq \nbar -1$) has size $\poly(\size(G),\deg(G)) = \poly(h,s',d')$ and degree $O(1)$.
        The network from \Cref{lem:network algo partial derivative} that is used to compute a description of the partial derivative $\partial_{x_{a+1}}^b D_a$ has size $\poly(n,d)$ and degree $O(1)$. 
        Finally, the arithmetic network from \Cref{prop:comp-arith-network} used to generate the description of a circuit for $\Phi^{(a,b)}(\vec{y},\vec{z})$ (whose corresponding universal encoding after applying \Cref{lem:network-algo-std-uni} is fed into a $\pit$ gate) has size $\poly(\nbar, \ell, \size(\partial^b_{x_{a+1}} D_a), \size(G)) = \poly(\nbar, \ell, n,d,h,s',d')$.
        \paragraph{Finding a good shift:} To find a shift $\vec{c}\in \F_n^\ell$ such that $\Psi_{f'}^{(a,b)}(\vec{y})$ is nonzero (where $f'(\vec{z})\coloneqq f(\vec{z} + \vec{c})$), $\Nin_n$ uses at most $2\ell\cdot d\cdot d_f$ $\pit$ gates, each of which are fed a description of (an appropriate partial substitution of) the circuit described above for $\Phi^{(a,b)}(\vec{y},\vec{z})$.  

        \paragraph{Computing a constrained interpolating set:}
        As described in the high-level overview, to construct an interpolating set, $\Nin_n$ makes at most at most $(2\nbar+2d\nbar)\cdot \ell\cdot M_{\ell,\nbar}$  calls to $\pit$ gates, where each such $\pit$ gate is fed a description of an arithmetic circuit for (an appropriate partial substitution of) $Q$ of size bounded by $O(nd\cdot hs'\nbar^2)\cdot \poly(M_{\ell,\nbar}) = \poly(n,d,h,s',\nbar^m)$.
         \paragraph{Base Case:} Given the description of a circuit for $f_{i',j'-1}$, the network $\Nin_n$ uses it to construct the description of the circuit $F'$ for $f'(\vec{z}) = f(\vec{z}+\vec{c})$ in \Cref{clm:gkss-ind-recons:dsr} of size $O(hs')$ and degree $d'$.
        Next, it uses the network in \Cref{lem:network algo homogeneous components}, which has size $\poly(h,s',d')$ and degree $O(1)$ to compute the descriptions of circuits for the homogeneous components $F_0',\ldots, F_\nbar'$, where each of these circuits has size bounded by $\poly(h,s',\nbar)$.
        Using \cref{eq:gkss-f-interpolation} and \Cref{prop:universal-arith-network}, the network $\Nin_n$ constructs a circuit for each $f_i'$ (for $i \leq \nbar$) of size at most $M_{\ell,\nbar}^2$.

        Using the same procedure applied to circuits that compute partial derivatives of $F_i$, the network $\Nin_n$ also constructs circuits for the derivatives $\set{\partial_{\vec{z}}^{\vec{e}}f_i' : \vec{e} \in \Z_{\ge 0}^\ell, |\vec{e}| \le \nbar}$ up to order $\nbar$ of $f_i'$.
        We can obtain a circuit for $\partial^{\vec{e}}_{\vec{z}} F_i'$ of size $O(s' \prod_{i=1}^\ell e_i) \le O(s' \nbar ^\ell) = O(s' \nbar^{m})$. %
        by repeatedly applying \Cref{lem:network algo partial derivative} to the circuit that computes $F_i'$.
        The interpolation procedure described in the previous paragraph then yields a circuit of size at most $M_{\ell,\nbar}^2$ for the corresponding partial derivative of $f_i'$. As in the high-level overview, we drop the primed notation (i.e., we continue to work with $f$ instead of $f'$) for the rest of this proof to simplify notation.
        
        \paragraph{Inductive Step:} 
        Next, for each point $\vec{a}_i$, the network computes the vector \linebreak $\Gamma_{j-1,\vec{a}} = (\Delta_{f_{\leq \nbar + j -1},0}(\vec{z},\vec{a}),\ldots, \Delta_{f_{\leq \nbar + j -1},\nbar}(\vec{z},\vec{a}))$ by using the formula for $\Delta_{P,i}$ in \Cref{cons:GKSS}.
        Note that each $\Delta_{f_{\leq \nbar + j -1},i}$ is a linear combination of the partial derivatives of $f_{\leq \nbar + j -1}(\vec{z})$ of order $i$, of which there are at most $M_{\ell,\nbar}$, and $\Nin_n$ has access to the descriptions of all derivatives of $f_{\leq \nbar + j -1}(\vec{z})$ of order up to $\nbar$ by the inductive hypothesis.
        As mentioned in the overview, the network constructs a circuit $B$ of size at most $s_{j-1}+M_{\ell,\nbar}^5$, that computes $\{\Gamma_{j-1,\vec{a}_r} : r \in [M_{\ell,\nbar}]\}$.
        Furthermore, this computation can be performed using a $\P$-uniform family of networks of the same size and degree $O(1)$ (from \Cref{prop:comp-arith-network}).

        Next, in order to apply \Cref{lem:gkss-technical}, the network needs to evaluate $\Psi(\vec{a}_i)$.
        For this, it uses the circuits obtained in the base case for $f_i$ (for $i\leq \nbar$) together with \Cref{lem:z=0-gkss} and \Cref{prop:universal-arith-network}.
        Also, from \Cref{lem:network algo partial derivative}, it can use a $\P$-uniform family of arithmetic networks of size $\poly(n,d)$ and degree $O(1)$ to compute a circuit of size $O(nd)$ for $(\partial_{x_{a+1}}^{b+1} D_a)(\vec{x})$.
        Overall, for each $i\in[M_{\ell,\nbar}]$, it computes the left-hand side of \Cref{lem:gkss-technical} using a $\P$-uniform family of arithmetic networks of size $\poly(n,d,M_{\ell,\nbar},s_{j-1})$ and degree $\poly(d,\nbar)$ (this again follows from \Cref{prop:comp-arith-network}).
        Next, it uses a $\P$-uniform circuit of size and degree $\poly(M_{\ell,\nbar})$ to solve the appropriate linear system, writing the order-$\nbar$ partial derivatives of $f_{\nbar + j}$ in terms of the evaluations obtained through \Cref{lem:gkss-technical}.
        It then uses the equation in \Cref{fact:euler} to compute circuits for the derivatives of $f_{\nbar+j}$ of all lower orders. 
        Finally, it uses the $\P$-uniform network family in \Cref{lem:network algo partial hom} of size $\poly(M_{\ell,\nbar},n,d,s_{j-1},\deg(B_j) = O(d_f))$ and degree $O(1)$ to construct the circuit $B_j$ to conclude the inductive step. 
        Note that there are  $d_f - \nbar = O(d_f) = h\cdot \polylog(s)$ iterations of this procedure.

        As argued in the overview above, the size of the circuit $B_{d_{f}-\nbar}$ produced in the final iteration of the inductive step is bounded by $s_{d_f-\nbar} = O(n\cdot d\cdot d_f^3 \cdot M_{\ell,\nbar}^{10}) = \poly(n, d, h, \log s, \nbar^{m})$.
        This implies that the bounds mentioned in this proof that rely on $s_{j-1}$ can also be bounded by this quantity. We conclude that $\Nin$ has size $\poly(n, d, h, \log s, \nbar^{m},s',d')$,  and degree $\poly(\nbar, d,d', h, \log s)$. %
        \end{subproof}
    \paragraph{Description of $\calN_n$.}Given this claim, we can now  provide a complete description of $\calN_n$ as follows. The output of $\calN_n(\vec{\Lambda})$ is provided by the output of $\calS$ (from \Cref{clm:ki:recon:ind}), which in turn, receives as input the output of the execution of $\Nin_n$ (from \Cref{clm:gkss-recon-branch}) on each of the indices $a\in[\nbar-1]$ and $b\in[d]$, i.e., the descriptions of the circuits $\{\calC_{a,b}\}$ and the corresponding Boolean values $v_{a,b}$.
    More precisely, $\calS$ receives as input the concatenation (see \Cref{def:concat-arith-networks}) of $(\nbar-1)\cdot d$ networks with $\pit$ gates ($\Nin_n$ instantiated with each $a\in[\nbar-1]$ and $b\in[d]$). %
	
	\paragraph{Complexity of $\calN_n$.} 
	Note that there are at most $\nbar d$ branches (i.e., instantiations of $\Nin_n$) in all. Hence, by \Cref{clm:gkss-recon-branch}, and \Cref{clm:ki:recon:ind}, the total size of $\calN_n$ is at most $\poly(n,s',d', d, h, \log s, \nbar^{m})$. %

    To see the degree bound, note that the degree of $\Nin_n$ (the network from \Cref{clm:ki:recon:branch}) in each of its instantiations is bounded by $\poly(\nbar, h,d,\log s,d')$, and as they are each run in parallel inside $\calN_n$, the degree of the network that computes the overall set of candidate circuit descriptions $\{\langle \calC_{a,b}\rangle|a\in[\nbar-1],b\in[d]\}$ is also bounded by $\poly(\nbar,h,d,\log(s),d')$ (by \Cref{prop:concat-arith-networks}). Finally, since $\calN_n$ is the composition of $\calS$ (which has degree $O(1)$) with these networks, the claimed overall degree bound of $\poly(\nbar,h,d,\log(s),d')$ follows from \Cref{prop:error-deg-comp}.

	\paragraph{The ``furthermore'' part.}
	The proof is essentially identical to that of the main case, the only difference is that to prove~\Cref{clm:gkss-ind-recons:dsr} we now rely on the downward self-reducibility relation between $f_{i-1,2m}$ and $f_{i,0}$ as specified in~\Cref{item:dsr} of~\Cref{prop:poly-dec-cons} (note the identity $f_{i,2m}(\vec{w}) = f_i(\vec{w})$). The size and degree of the circuit for downward self-reducibility both increase now to $h\cdot\polylog(s)$ (rather than $O(h)$ and $1$ in the main case), and hence the circuit for evaluating $f$ that~\Cref{clm:gkss-ind-recons:dsr} outputs now has size $h\cdot\poly(s',\log(s))$ and degree at most $d'\cdot h\cdot\polylog(s)$. This factors into the size and degree of the circuits from~\Cref{clm:gkss-recon-branch}, but does \emph{not} change the final asymptotic bounds.%

\begin{remark}\label{rem:char-GKSS-final}
    The assumptions on the field characteristic made in \Cref{thm:gkss-tarhsg} are used at the locations pointed to in \Cref{fn:lagrange-char}, \Cref{rem:Delta-lc}, and \Cref{rem:Euler-char}.
    The first instance can be remedied in arbitrary characteristic by working over a sufficiently large field, and the second instance can likely be addressed by working with Hasse derivatives (a notion of derivative that is better suited to positive characteristic).
    However, it is not clear to the authors how one can remove the restriction on the characteristic arising from the use of \Cref{fact:euler}.
\end{remark}

\section{A targeted version of the Kabanets--Impagliazzo generator} \label{sec:ki}

Our goal in this section is to construct an arithmetic reconstructive targeted hitting-set generator that works over sequences of finite fields, and is analogous to the generator from~\Cref{thm:gkss-tarhsg}.

A technical subtlety is that the generator can be instantiated in two parametric ``modes'', represented by the parameter $\sigma\in\{0,1\}$ below. The mode $\sigma=1$ useful when the output length is large, and the mode $\sigma=0$ is useful for when the output length is small. (Jumping ahead, our $\pit$ algorithm will compose this generator constantly many times, the first iteration with $\sigma=1$ and subsequent iterations with $\sigma=0$; see the proof of~\Cref{thm:main:finite} for details.)

We say that a field family $\F=\set{\F_n}_{n\in\N}$ is {\sf efficiently constructible} if there is a Turing machine that gets input $1^n$ and outputs a representation of $\F_n$ (i.e., a prime or a suitable irreducible polynomial).

\begin{theorem} [algebraic KI-based targeted hitting-set generator] \label{thm:ki:tarhsg}
Let $\set{C_n}_{n \in \N}$ be a $\log^c$-uniform family of algebraic circuits of size $s(n)$ and degree at most $s(n)$ having $n$ output gates, defined over a feasible sequence of finite fields $\set{\F_{n}}_{n\in\N}$, where $\F_{n}$ has order $q(n) \le \exp(n^{O(1)})$ and characteristic $p(n)$.
Let $\eps>0$ be a constant, and let $m(n)<s(n)$ and $d(n)$ be time-constructible functions.
Then, for $\sigma\in\zo$, there are two $\P$-uniform families of arithmetic networks $\set{G_n}$ and $\set{R_n}$ satisfying the following.
\begin{enumerate}
    \item {\bf (Generator.)} Networks in $\set{G_n}$ are of size $\poly(s)$. When given input $\vec{\Lambda}\in\F_n^n$, the network $G_n(\vec{\Lambda})$ outputs the description of an arithmetic circuit
    \mm{
    \calG_{\vec{\Lambda}}\colon\F_n^{\ell(n)}\times\F_n^{2} \rightarrow \F_n^{m(n)}
    }
    of degree $\deg(\calG_{\vec{\Lambda}})\le n\cdot s^{\Theta(1/\log(m))}\cdot \polylog(s(n))$, where $\ell(n)=\begin{cases}\polylog(s)&\text{if $\sigma=1$}\\O(\log(m))&\text{if $\sigma=0$}\end{cases}$.
    
    \item {\bf (Reconstruction.)} Networks in $\set{R_n}$ are randomized, and when $\vec{\Lambda}\in\F_n^n$ is a universal encoding of a degree-$d$ arithmetic circuit with $m$ input gates such that $\vec{\Lambda}\circ \calG_{\vec{\Lambda}}\equiv0$, then $R_n(\vec{\Lambda})$ computes $C_n(\vec{\Lambda})$ with error degree $\poly(n,d,\log(s))$. 
    
    Each network $R_n$ is of size $\begin{cases}\poly(n,d,m,\log(s),\log(q))&\text{if $\sigma=1$}\\ s^{\eps}\cdot \poly(n,d, m^{\log\log(s)},\log(q))&\text{if $\sigma=0$}\end{cases}$ and of degree $(n\cdot d\cdot q)^{\polylog(s)}$. Moreover, when $\set{\F_n}$ is efficiently constructible, and if we allow each $R_n$ to use \emphdef{p\ts{th} root gates}, which are gates that map a field element $\alpha \in \F_q$ to $\alpha^{1/p(n)}$, the network $R_n$ is of degree $(n\cdot d\cdot p)^{\polylog(s)}$.\footnote{To formally define the degree of a network with $p$\ts{th} root gates, we define it as the degree of the network obtained by replacing each $p$\ts{th} root gate with a dummy gate that computes the identity function $\alpha \mapsto \alpha$. In other words, for this purpose, we consider the $p$\ts{th} root gate as having degree $0$.}
\end{enumerate}
\end{theorem}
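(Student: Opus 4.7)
The proof will parallel the structure used for the GKSS-based generator in Section~\ref{sec:gkss}, with the KI generator substituted for GKSS at the core, and with two distinct parameter regimes corresponding to $\sigma \in \{0,1\}$. The plan is to first construct $\calG_{\vec\Lambda}$ by instantiating the classical KI generator on each of the layer polynomials coming from the polynomial decomposition of $C_n(\vec\Lambda)$ provided by~\Cref{prop:poly-dec-cons}, then combine these instantiations using the Lagrange interpolation trick of Eq.~\eqref{eq:GKSS-tarhsg-def}. The choice of decomposition is mode-dependent: for $\sigma=1$ we use the $(h=2, m=\lceil\log s\rceil, r=\polylog s)$ instantiation, which by~\Cref{prop:poly-dec-cons}(\ref{item:ideg}) yields polynomials of individual degree $O(1)$ over $\polylog(s)$ variables; for $\sigma=0$ we use the $(h, m, r=0)$ instantiation with $h = s^{\Theta(1/\log m)}$ and $m$ chosen so that the resulting KI instantiation has seed length $O(\log m)$. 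The degree bound on $\calG_{\vec\Lambda}$ tracks $n \cdot s^{\Theta(1/\log m)} \cdot \polylog(s)$ because evaluating the KI generator involves composing the universal circuit for $C_n$ at input $\vec\Lambda$ with the combinatorial-design restrictions of each $f_{i,j}$.

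For the generator network family $\{G_n\}$, I would use the layer-polynomial circuits produced by~\Cref{prop:poly-dec-cons}(\ref{item:layer-poly-comp}) together with a fixed combinatorial design (\`a la~\cite{NisanW94}) of appropriate parameters, composed via~\Cref{prop:comp-arith-network} with the Lagrange combiners. Uniformity and size $\poly(s)$ follow as in~\Cref{lem:GKSS-based-gen}. The key nontrivial point in the generator analysis is that in the $\sigma=0$ mode, the individual degree of $f_{i,j}$ is $h\cdot\polylog(s)$, which forces the output degree to be $n\cdot s^{\Theta(1/\log m)}\cdot\polylog(s)$ as claimed; in the $\sigma=1$ mode the polynomial individual degree is bounded by $O(1)$ (thanks to the variable-extended formulation from~\Cref{lem:var-ext}), which is what ultimately allows the reconstruction size to avoid the $m^{\log\log s}$ overhead.

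The reconstruction network $\{R_n\}$ follows the blueprint of~\Cref{prop:GKSS-recons-arith-net}: on input $\vec\Lambda$ (the universal encoding of a distinguisher $D$), the network iteratively produces standard descriptions of circuits computing $f_0, f_{1,0}, f_{1,1}, \ldots, f_{\Delta, t}=f_{\Delta}$ using~\Cref{prop:poly-dec-cons}(\ref{item:dsr}, \ref{item:prop-sumcheck}) for downward self-reducibility, and then evaluates $f_\Delta$ on the lexicographically first $n$ points of $H^m$ to recover $C_n(\vec\Lambda)$. The per-step subroutine is a uniform-arithmetic-network implementation of the KI reconstruction: given a distinguisher $D'$ with $D' \circ \mathrm{KI}^{f_{i,j}} \equiv 0$, one guesses a hybrid index $a$ and a multiplicity $b$ and applies Kaltofen's factorization algorithm (see~\Cref{subsec:kaltofen-arith-net}) to extract, from a predictor for $f_{i,j}$ restricted to a design set, a small circuit for $f_{i,j}$ itself. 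The queries that Kaltofen's algorithm needs to $f_{i,j}$ are answered via the downward-self-reducibility circuit and the previously-reconstructed circuit for $f_{i,j-1}$, and the randomization from Kaltofen's algorithm is what forces $R_n$ to be a \emph{randomized} arithmetic network (in the sense of~\Cref{def:rand-arith-network}) with error degree accumulating additively over the $O(\Delta m) = \polylog(s)$ iterations via~\Cref{prop:error-deg-comp}.

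I expect the main obstacle to be tracking the size bound in the $\sigma=0$ mode, where the KI reconstruction must restrict $f_{i,j}$ to sets of size $\ell = O(\log m)$ from the design and interpolate a polynomial of individual degree $h\cdot \polylog(s)$ over $\eps\cdot\ell$ variables; this gives a partial-truth-table of size $(h\cdot\polylog s)^{\eps\cdot\ell} \le s^{\eps}\cdot m^{O(\log\log s)}$, which is exactly where the $s^\eps\cdot m^{\log\log s}$ factor in the stated size bound comes from, and conversely in the $\sigma=1$ mode the individual degree is $O(1)$ so the truth-table has size $m^{O(1)}$. The second obstacle is the characteristic-$p$ handling: Kaltofen's algorithm requires $p$\ts{th} root extraction over inseparable fields, which is exactly where the optional $p$\ts{th} root gates enter and save a $(q/p)^{\polylog(s)}$ factor in degree; without them, we must use the $\Frob^{-1}$ simulation that costs $q/p$ additional degree per invocation, matching the two claimed degree bounds.
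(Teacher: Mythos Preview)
Your proposal is correct and follows essentially the same approach as the paper: mode-dependent polynomial-decomposition parameters ($h=2,r=\polylog s$ for $\sigma=1$ versus $h=s^{\Theta(1/\log m)},r=0$ for $\sigma=0$) feeding KI on each layer polynomial, Lagrange-combined, with iterative reconstruction via downward self-reducibility and Kaltofen's factoring, and your size and degree accounting in both modes matches the paper's. One small correction: the KI per-step reconstruction (unlike GKSS) does not guess a multiplicity~$b$---after fixing the hybrid index~$i$ and randomly setting the irrelevant variables, Kaltofen's algorithm is applied directly to $C_i'$ and returns a list of candidate factors (one of which is $f^{p^e}$), and a separate ``weeding'' step selects the right one by random evaluation against $f$ computed through the downward-self-reducibility circuit.
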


The generator networks $\set{G_n}$ and reconstruction networks $\set{R_n}$ are defined and analyzed in \Cref{sec:ki:generator,sec:ki-reconstruction}, respectively.
Our algorithms will make use of uniform versions of classical algorithms (i.e., versions implementable by uniform arithmetic circuits and networks), in particular the standard depth-reduction procedure of \textcite{VSBR83} and Kaltofen's algorithm for factorization of arithmetic circuits \cite{Kaltofen1989FactorizationOP}.
The uniform versions of these algorithms are described in Appendix~\ref{sec:uniform algos}.

\begin{remark} [the reconstruction's degree, and the role of $p$\ts{th} root gates]
The reconstruction procedure underlying~\Cref{thm:ki:tarhsg} computes $p$\ts{th} roots of certain elements during its computation. Over a finite field of order $q$ and characteristic $p$, the $p$\ts{th} root operation can be implemented via $x \mapsto x^{q/p}$, and %
this may cause a large (and generally, unavoidable) blowup in the network's degree. One alternative is to  simply pay (in degree), as spelled out at the end of the theorem's statement. (We do so in a careful manner that bounds the number of $p$\ts{th} root gates along any computational path; see~\Cref{sec:ki:recon:pth}.) The reason we also presented the reconstruction as networks using $p$\ts{th} root gates is to clarify that this is the only part that uses high-degree computation; that is, other than $p$\ts{th} root gates, the reconstruction is of quasipolynomial degree.
\end{remark}

\subsection{The generator networks} \label{sec:ki:generator}

Loosely speaking, the targeted generator will get input $\vec{\Lambda}$ and apply the construction of the Kabanets--Impagliazzo~\cite{KI04} generator to each of the polynomials in the polynomial decomposition of $C_n(\vec{\Lambda})$. We first recall the KI construction, and then describe our targeted generator.

\paragraph{A reminder: The Kabanets--Impagliazzo generator.}

The construction relies on combinatorial designs as in the work of Nisan and Wigderson~\cite{NisanW94}. Let us define these and state some standard constructions.

\begin{definition}
\label{def:nw-design}
        Given $m,n\in \N$ with $n<2^m$, a family of sets $S_1,\ldots, S_n \subseteq[\ell]$ is said to be a \emphdef{Nisan--Wigderson $(m,a)$-design} if
        \begin{itemize}
            \item for every $1\leq i\leq n$, $|S_i| = m$, and
            \item for every $1\leq i < j\leq n$, $|S_i\cap S_j| \leq a$.
        \end{itemize}
\end{definition}

\begin{lem}[see, e.g.,~{\cite[Problem 3.2]{vadhan-book}}] \label{lem:nw-design}
    For every $m,n,a\in \N$ with $n<2^{m}$ and $a<m$, there exists a Nisan--Wigderson $(m,a)$-design $S_1,...,S_n\subseteq[\ell]$ that can be constructed deterministically in time $\poly(n,m)$, where $\ell= O( n^{1/a} \cdot m^2/a )$.
\end{lem}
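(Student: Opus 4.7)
The plan is the classical polynomial-based Nisan--Wigderson construction: each set $S_i$ is encoded as the graph of a low-degree polynomial over a suitable finite field. Concretely, I would pick a prime power $q = \Theta(\max(m, n^{1/a}))$ satisfying both $q \geq m$ and $q^a \geq n$ (such a $q$ exists by Bertrand's postulate and can be located deterministically in $\poly(n,m)$ time), fix any subset $H \subseteq \F_q$ of size $m$, and identify the universe $[\ell]$ with $H \times \F_q$, so $\ell = mq$. For each polynomial $p \in \F_q[x]$ of degree less than $a$, define $S_p \coloneqq \{(x, p(x)) : x \in H\}$, and take the $n$ sets in the design to be the $S_{p_i}$ for $n$ distinct polynomials $p_1,\ldots,p_n$, enumerated (say) lexicographically by their coefficient vectors.

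Verifying the design properties is immediate. Each $|S_p| = m$ because the $m$ encoded elements have pairwise distinct first coordinates. For $p \neq p'$, the difference $p - p'$ is a nonzero polynomial of degree less than $a$ and hence has at most $a-1$ roots in $\F_q$, so $|S_p \cap S_{p'}| \leq a - 1 \leq a$. There are $q^a \geq n$ polynomials of degree less than $a$ available, so we have enough sets. The construction runs in $\poly(n,m)$ deterministic time: the prime power $q$ is located by a standard sieve, and enumerating $n$ coefficient vectors together with their $m$ evaluations at $H$ costs $O(nm \cdot a \log q) = \poly(n,m)$ field operations.

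The step I expect to be the main sticking point is matching the precise universe bound $\ell = O(n^{1/a} \cdot m^2/a)$ claimed in the statement. The construction above delivers $\ell = mq = O(m^2 + m \cdot n^{1/a})$, which coincides with the target in the natural regime $a \leq n^{1/a}$ (the other required inequality $a \leq m$ is given by hypothesis). At the boundary of these regimes one either checks the comparison arithmetic carefully, or slightly modifies the construction (for instance, by taking $q$ closer to $m/a$ and obtaining sets of size $m$ by taking $a$ ``shifted copies'' of each polynomial's graph, at the cost of routine adjustments to the intersection count) to shave off the extra factor of $a$. Since the lemma is cited from Vadhan's textbook, it suffices for the purposes of the paper to verify that the standard polynomial construction matches the bound up to constants in the parameter regimes invoked downstream.
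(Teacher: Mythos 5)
The paper states this as a textbook fact (citing Vadhan) and does not prove it, so the only question is whether your argument establishes the stated bound --- and it does not, a gap you partly acknowledge but then dismiss too quickly. The polynomial/Reed--Solomon construction gives $\ell = mq = O\bigl(m \cdot \max(m, n^{1/a})\bigr)$, which matches the claimed $O(n^{1/a} m^2/a)$ only when $a = O(n^{1/a})$. But the paper's sole downstream use of this lemma is \Cref{cor:nw-design}, which sets $a = \log n$ (so $n^{1/a} = 2$) and requires $\ell = O(m^2/\log n)$; your construction yields only $\ell = O(m^2)$, missing a factor of $\log n = a$. In other words, the regime you flag as problematic is exactly the regime the paper lives in, so the claim that the construction ``matches the bound up to constants in the parameter regimes invoked downstream'' is false. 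The ``shifted copies'' fix you sketch also does not work as a routine adjustment: if you set $S_p = \{(x, p(x)+i) : x \in H,\; 0 \le i < a\}$ over $|H| = m/a$ base points, then for two distinct degree-$(<a)$ polynomials $p, p'$ the difference $p-p'$ can take each value $v \in \{-(a-1),\dots,a-1\}$ at up to $a-1$ points of $H$, which drives the worst-case intersection to $\Theta(a^3)$ rather than $a$. Saving the extra factor of $a$ genuinely requires a different argument.

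The standard proof of the bound as stated --- and what Vadhan's Problem~3.2 has in mind --- is the greedy/probabilistic construction derandomized by conditional expectations. Choose $S_1, \ldots, S_n$ one at a time: a uniformly random size-$m$ subset of $[\ell]$ intersects a fixed $S_j$ in more than $a$ points with probability at most roughly $\binom{m}{a+1}(m/\ell)^{a+1}$, so a union bound over the fewer than $n$ previously chosen sets shows a valid $S_i$ exists once $\ell \ge C\, n^{1/(a+1)} m^2/(a+1)$ for a suitable constant $C$, which is $O(n^{1/a} m^2/a)$; derandomizing via the method of conditional expectations gives a deterministic $\poly(n,m)$-time algorithm, since $\ell \le \poly(n,m)$. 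Your polynomial construction is a perfectly good design and suffices in some parameter regimes, but it only delivers the weaker universe size $O(m^2 + m n^{1/a})$, not the bound the lemma claims.
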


\begin{cor} \label{cor:nw-design}
    For every $m,n\in \N$ with $n<2^{m}$, there exists a Nisan--Wigderson $(m,\log n)$-design $S_1,...,S_n\subseteq[\ell]$ that can be constructed deterministically in time $\poly(n,m)$, where $\ell=O(m^2/\log n)$.
\end{cor}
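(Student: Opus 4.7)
The plan is to derive the corollary by a direct specialization of \Cref{lem:nw-design}, instantiating the intersection-bound parameter $a$ with $\log n$. First I would verify that the hypothesis $a<m$ is satisfied: since $n<2^m$ implies $\log n < m$, setting $a=\log n$ is admissible (assuming $n\ge 2$; the case $n=1$ is trivial, as we can take $S_1$ to be any $m$-element set). Then I would apply \Cref{lem:nw-design} directly to obtain sets $S_1,\ldots,S_n\subseteq[\ell]$ with $|S_i|=m$ and $|S_i\cap S_j|\le\log n$, constructible in time $\poly(n,m)$, with
\[
\ell = O\!\left(n^{1/a}\cdot m^2/a\right) = O\!\left(n^{1/\log n}\cdot m^2/\log n\right).
\]

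The only observation needed is the identity $n^{1/\log n}=2$, which follows from the fact that $\log(n^{1/\log n}) = (\log n)/(\log n) = 1$ (using base-$2$ logarithms throughout). Substituting this into the expression above yields $\ell = O(2\cdot m^2/\log n) = O(m^2/\log n)$, which matches the claimed bound. The efficient construction time and the design properties carry over verbatim from \Cref{lem:nw-design}.

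There is essentially no obstacle here: the corollary is purely an arithmetic substitution into the more general parameter regime of \Cref{lem:nw-design}, and no new combinatorial or algorithmic idea is required. The value of stating it as a separate corollary is only that the resulting bound $\ell=O(m^2/\log n)$ is the specific regime used later when instantiating the KI-based targeted generator, where a logarithmic intersection bound is the natural choice given how the reconstruction argument (via interpolation on a grid of side length $d_f$ and dimension $a=\log n$ scaled by a small constant $\eps$) trades off size and complexity.
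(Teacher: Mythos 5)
Your proof is correct and is exactly the intended specialization — the paper gives no separate argument for \Cref{cor:nw-design}, and the natural deduction is precisely $a=\log n$ with $n^{1/\log n}=2$. One small technical point is that $\log n$ need not be an integer, so one should instantiate $a=\lfloor\log n\rfloor$ in \Cref{lem:nw-design} (this still satisfies $a<m$, still gives $n^{1/a}=O(1)$ and $m^2/a=O(m^2/\log n)$, and yields an even stronger intersection bound $\lfloor\log n\rfloor\le\log n$), but this changes nothing of substance.
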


Given a hard polynomial $g(\vec{x})\in\F[x_1,...,x_n]$ and NW-designs as above, the KI construction is as follows.

\begin{construction}[\cite{KI04}] \label{cons:KI-gen}
    Let $n$ and $m$ be integers satisfying $n < 2^{m}$.
    Let $g(\vec{x}) \in \F[x_1,\ldots,x_{m}]$.
    Let $\bar{S}=S_1, \ldots, S_n \subseteq [\ell]$ be a Nisan--Wigderson $(m,a)$-design.
    The \emphdef{Kabanets--Impagliazzo generator} $\mathcal{G}_{KI,g}(\vec{z}) : \F^\ell \to \F^n$ is the polynomial map given by
    \[
        \mathcal{G}^{\bar{S}}_{KI,g}(z_1,\ldots,z_\ell) \coloneqq (g(\vec{z}|_{S_1}), \ldots, g(\vec{z}|_{S_n})),
    \]
    where $\vec{z}|_{S_i}$ denotes the restriction of $\vec{z}$ to the variables indexed by $S_i$.
\end{construction}

\paragraph{The targeted generator.}
We now describe our construction of a targeted generator. Starting with a hard $\log^c$-uniform family of arithmetic circuits $\{C_n\}$ of size $s$, we apply uniformity-preserving depth reduction to obtain a uniform family $\{C'_n\}$ of size $\poly(s)$ and depth $\polylog(s)$, and given $\vec{\Lambda}\in\F^n$ we apply~\Cref{cons:KI-gen} to each of the polynomials in the polynomial decomposition of $C'_n(\vec{\Lambda})$.

We first present a version of the generator with general parameters, and then present two specific instantiations that will be useful for us. To facilitate parsing the  parameters, think of $n$ as the input length to the targeted generator, and of $\nbar$ as the number of pseudorandom field elements that the generator outputs. The parameters $h,m,r$ are auxiliary parameters, which will be used for instantiating the polynomial decomposition from~\Cref{sec:ki:decomposition}. We will either set the parameters to get seed length $O(\log(\nbar))$ and reconstruction time proportional to $s^{\eps}\cdot \nbar^{\log\log(s)}$, for a small constant $\eps>0$; or set the parameters to get seed length $\polylog(s)$ and reconstruction time proportional to $\poly(\nbar)$.

\begin{construction}[KI-based targeted HSG]\label{cons:generator}
	Let $\set{C_n}_{n \in \N}$ be a $\log^c$-uniform family of algebraic circuits of size $s(n)$ and degree at most $s(n)$, defined over a feasible sequence of fields $\set{\F_n}_{n\in\N}$, where $\F_n$ has order $q(n) \le \exp(n^{O(1)})$. Let $\{C'_n\}$ be the $\log^{c' = c + O(1)}$-uniform family of arithmetic circuits of size $s'(n)=\poly(s(n))$ and depth $\Delta(n)=O(\log(s(n))^2)$ obtained from~\Cref{prop:uniform-depth-red}.
	
	\medskip\noindent {\bf Parameters.} Let $h,m,r,\nbar$ be any time-constructible functions (interpreted as functions of $n$) such that $3m+r>\log(\nbar)$, and one of the following holds:
	\begin{itemize}
		\item $h=2$ and $m=\lceil \log(s') \rceil$ and $r=\log(s')^{c'}$, or
		\item $h$ is a power of two, and $m$ is the minimal integer such that $h^m\ge s'$, and $r=0$.
	\end{itemize} 
	
	\medskip\noindent {\bf Generator.} Given $\vec{\Lambda}\in\F_n^n$, let $\set{f_{i,j}}$ be the $(h,m,r)$-polynomial decomposition of $C'_n$ on input $\vec{\Lambda}$ obtained via~\Cref{prop:poly-dec-cons}, and consider all polynomials as having $3m+r$ variables (i.e., add dummy variables if necessary). Let $\bar{S}=S_1, \ldots, S_{\nbar} \subseteq [\ell=O((3m+r)^2/\log(\nbar))]$ be the Nisan--Wigderson $(3m+r,\log \nbar)$-design obtained via~\Cref{cor:nw-design}. Define the polynomial map $\calG_{\vec{\Lambda}}^{(h,m,r,\nbar)}:\F_n^{\ell+2}\to \F_n^\nbar$ (called the \emphdef{KI-based targeted HSG} for $\vec{\Lambda}$) as follows:
	\[
	\mathcal{G}_{\vec{\Lambda}}^{(h,m,r,\nbar)}(y_1,y_2,\vec{z}) \coloneqq \sum_{i = 1}^{\Delta(n)}\sum_{j = 0}^{2m+r} L_{i,j}(y_1,y_2) \cdot \mathcal{G}_{KI,f_{i,j}}^{\bar{S}}(\vec{z}), \eqtag{cons:generator:def}
	\]
	where $L_{i,j}(y_1,y_2)$ is the Lagrange interpolation polynomial, defined as\footnote{The argument for $\phi_2$ is $j+1$ instead of the more natural $j$ because the domain of enumerators as given in \Cref{def:enumeration} starts from $1$, not $0$.}
		\[
		L_{i,j}(y_1,y_2) = \begin{cases}
			1, &\text{ if } y_1 = \phi_1(i) \text{ and } y_2 = \phi_2(j+1), \\
			0, &\text{ if } (y_1,y_2) \in \phi_1([\Delta(n)])\times \phi_2([2m+r+1])\setminus\{(\phi_1(i),\phi_2(j+1))\},\\
		\end{cases}
		\]
		where $\phi_1$ and $\phi_2$ are enumerators for $\Delta(n)$ and $t=2m+r+1$, respectively.  %
\end{construction}

\newtheorem{corollary}[theorem]{Corollary}

\begin{corollary}[KI-based targeted HSG, two useful instantiations] \label{cons:generator:specific}
    Let $\set{C_n}_{n \in \N}$ and $\{C'_n\}$ be as in~\Cref{cons:generator}, and let $\nbar=\nbar(n)<n$ be time-constructible. Then,
    \begin{enumerate}
    	\item Instantiating~\Cref{cons:generator} with $h=2,m=\lceil \log(s)\rceil,r=\log(s')^{c'}$, we obtain a generator $\mathcal{G}_{\vec{\Lambda}}^{(h,m,r,\nbar)}$ with seed length $\ell=\polylog(s)$.
    	\item Instantiating~\Cref{cons:generator} with $h=2^{\log(s')/O(\log(\nbar))},m=O(\log(\nbar)),r=0$, we obtain a generator $\mathcal{G}_{\vec{\Lambda}}^{(h,m,r,\nbar)}$ with seed length $\ell=O(\log(\nbar))$.
    \end{enumerate}
\end{corollary}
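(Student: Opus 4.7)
The plan is to simply substitute each of the two prescribed parameter tuples into the general construction of \Cref{cons:generator} and then read off the seed length from the formula $\ell = O((3m+r)^2/\log(\nbar))$ that appears in the construction. Concretely, for each instantiation I would verify two things: (i) the chosen $(h,m,r)$ matches one of the two bullet points listing admissible parameters in \Cref{cons:generator}, and in particular satisfies the side condition $3m+r > \log(\nbar)$; and (ii) the arithmetic $O((3m+r)^2/\log(\nbar))$ evaluates to the claimed seed length.

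For instantiation (1), the setting $h=2$, $m=\lceil\log(s')\rceil$, $r=\log(s')^{c'}$ is exactly the first admissible bullet in \Cref{cons:generator}, so no further check is needed. Since $s'=\poly(s)$ we have $\log(s')=\Theta(\log s)$, which gives $3m+r = \polylog(s)$ and hence $\ell = O(\polylog(s)^2/\log(\nbar)) = \polylog(s)$, as claimed. The side condition holds since $r \ge \log(s') \ge \log n > \log(\nbar)$ (using $\nbar<n$).

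For instantiation (2), I would fix a sufficiently large constant $K$ and set $h \coloneqq 2^{\lceil \log(s')/(K\log(\nbar))\rceil}$ and $m \coloneqq K\log(\nbar)$, which ensures that $h$ is a power of two and that $h^m \ge 2^{\log(s')} = s'$, so the second admissible bullet of \Cref{cons:generator} applies (with $r=0$). Then $3m+r = \Theta(\log(\nbar))$, which gives $\ell = O(\log(\nbar)^2/\log(\nbar)) = O(\log(\nbar))$; the side condition $3m+r>\log(\nbar)$ is immediate for $K\ge 1$.

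The only mildly technical point in this plan is the integer/power-of-two handling of $h$ and $m$ in instantiation (2), where the constants hidden in $O(\cdot)$ must be calibrated so that $h$ is a power of two and the minimality clause $h^m\ge s'$ holds; the ceiling in the definition above takes care of this. Beyond that, the proof is a one-line substitution into the formula, so I do not anticipate any real obstacle.
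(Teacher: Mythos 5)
Your proposal is correct and takes essentially the same approach as the paper, which treats this corollary as a direct substitution into \Cref{cons:generator} and does not include a separate proof. You correctly read the seed length from the design bound $\ell = O((3m+r)^2/\log(\nbar))$ of \Cref{cor:nw-design}, verified the side condition $3m+r > \log(\nbar)$ in both instantiations, and noticed the minor $s$/$s'$ notational discrepancy (which is harmless since $s' = \poly(s)$). One small thing worth flagging for precision: the actual seed length is $\ell + 2$ because of the two Lagrange variables $y_1,y_2$ in $\calG_{\vec{\Lambda}}^{(h,m,r,\nbar)}:\F_n^{\ell+2}\to\F_n^{\nbar}$, but this additive constant is absorbed in the claimed asymptotic bounds, so your argument goes through unchanged.
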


The seed length in the second case above is strictly better (because $\nbar<2^m<s$), but the choice of parameters $h,m,r$ affects the reconstruction time, so the second case is not always preferable to the first.
(In the first case, the reconstruction time will be proportional to $\poly(\nbar)$, whereas in the second case the reconstruction time is proportional to $s^{\eps}\cdot \nbar^{\log\log(s)}$; see~\Cref{sec:ki-reconstruction}.)

\begin{remark}
    The notation $\calG_{\vec{\Lambda}}^{(h,m,r,\nbar)}$ suppresses the fact  that the generator also depends on the choice of a $\log^c$-uniform family of arithmetic circuits $\{C_n\}$ and the choice of a Nisan--Wigderson design.  We will drop the superscript $(h,m,r,\nbar)$ whenever the choice of these functions is clear from context.
\end{remark}

Let us now analyze the complexity of the generator. We stress that for a fixed $\vec{\Lambda}$, the generator $\mathcal{G}_{\vec{\Lambda}}(y_1,y_2,\vec{z})$ should be viewed as a \emph{polynomial} map from $\F_n^{\ell + 2}\to \F_n^\nbar$. This viewpoint is indeed necessary for repeated self-composition of this generator in order to reduce the final seed length to solve \pit. %

\begin{lem} [the complexity of the targeted generator] \label{lem:KI-based-gen}
Let $\{C_n\},h,m,r,\nbar$ be as in the statement of~\Cref{cons:generator}. Then, 
\begin{enumerate}
	\item There is a $\P$-uniform family of arithmetic networks of size $\poly(s(n))$ %
	that gets input $\vec{\Lambda}$ and outputs the description of the arithmetic circuit $\calG_{\vec{\Lambda}}^{(h,m,r,\nbar)}$.
	\item The degree of $\calG_{\vec{\Lambda}}^{(h,m,r,\nbar)}$ is at most $n\cdot h(n)\cdot \polylog(s(n))$.%
\end{enumerate}

\end{lem}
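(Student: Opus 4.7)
}
The structure follows the proof of \Cref{lem:GKSS-based-gen} almost verbatim, with the KI generator \Cref{cons:KI-gen} playing the role of \Cref{cons:GKSS}. As in that proof, there are three algorithms to distinguish: a Turing machine $M$ that on input $1^n$ prints an arithmetic network $N$; the network $N$ which on input $\vec{\Lambda} \in \F_n^n$ outputs a description of an arithmetic circuit $G_{\vec{\Lambda}}$; and the circuit $G_{\vec{\Lambda}}$ itself that computes the targeted generator $\calG_{\vec{\Lambda}}^{(h,m,r,\nbar)}$. First, I would describe the circuit $G_{\vec{\Lambda}}$ implementing \Cref{cons:generator:def}. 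The outer summation structure and the Lagrange interpolation polynomials $L_{i,j}(y_1,y_2)$, together with the Nisan--Wigderson design $\bar{S} = (S_1, \ldots, S_\nbar)$ (produced by \Cref{cor:nw-design} in time $\poly(\nbar, m)$), depend only on $n$ and not on $\vec{\Lambda}$, so they can be hard-wired into the network $N$. For each $(i,j) \in [\Delta(n)] \times \set{0, \ldots, 2m+r}$, the component $\calG^{\bar{S}}_{KI, f_{i,j}}(\vec{z})$ is simply $\nbar$ copies of a circuit for $f_{i,j}$ evaluated with the appropriate variable-restrictions $\vec{z}|_{S_k}$; given a circuit for $f_{i,j}(\vec{w})$, one obtains this component by $\nbar$ substitutions of $\vec{w} \mapsto \vec{z}|_{S_k}$, contributing an overhead of $\nbar \cdot \size(f_{i,j})$ wires.

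Next, the arithmetic network $N$ needs to produce the arithmetic-dependent parts of $G_{\vec{\Lambda}}$, namely the circuits for the layer/sumcheck polynomials $f_{i,j}$. These are provided by \cref{item:layer-poly-comp} of \Cref{prop:poly-dec-cons}: there is a $\P$-uniform family of arithmetic networks of size $\poly(s(n))$ that, on input $\vec{\Lambda}$ (and an index $(i,j)$ given as Boolean data), outputs the description of an arithmetic circuit of size $\poly(s(n))$ and degree $h \cdot \polylog(s(n))$ computing $f_{i,j}(\vec{w}, \vec{\sigma})$. Running this subroutine for each of the $\Delta(n) \cdot (2m+r+1) = \polylog(s)$ pairs $(i,j)$, stitching the results together with the hard-wired Lagrange coefficients using \Cref{prop:comp-arith-network}, and instantiating each $\calG^{\bar{S}}_{KI, f_{i,j}}$ by the $\nbar$ variable-substitutions above, yields a description of $G_{\vec{\Lambda}}$. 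The network $N$ itself can then be printed in polynomial time by $M$ using \cref{item:layer-poly-comp} of \Cref{prop:poly-dec-cons} and the $\P$-uniformity of the NW design construction, giving $\P$-uniformity.

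For the degree bound, the Lagrange polynomial $L_{i,j}(y_1,y_2)$ has total degree at most $\Delta(n) + (2m+r+1) \le \polylog(s)$ in $(y_1, y_2)$, and by \cref{item:ideg} of \Cref{prop:poly-dec-cons} each $f_{i,j}$ has total degree at most $h \cdot \polylog(s)$ in its variables. Since $\calG^{\bar{S}}_{KI, f_{i,j}}(\vec{z})$ is obtained from $f_{i,j}$ by a variable restriction, its degree in $\vec{z}$ is also at most $h \cdot \polylog(s)$. Summing over $i, j$ does not increase the degree, so $\deg(\calG_{\vec{\Lambda}}^{(h,m,r,\nbar)}) \le h \cdot \polylog(s) \le n \cdot h \cdot \polylog(s)$, which is the claimed bound. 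Overall the circuit $G_{\vec{\Lambda}}$ has size $\poly(\nbar, s, \Delta, m) \le \poly(s(n))$.

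The analysis is essentially bookkeeping, with no real technical obstacle---the KI generator is syntactically simpler than the GKSS generator (no homogeneous-component extraction is needed) since substituting $\vec{z}|_{S_k}$ into a circuit for $f_{i,j}$ is immediate. The only mild subtlety is making sure the NW-design construction from \Cref{cor:nw-design} (and the enumerators $\phi_1, \phi_2$ used to define $L_{i,j}$) are available $\P$-uniformly, which is guaranteed by the feasibility of $\set{\F_n}$ and by the time bound in \Cref{cor:nw-design}.
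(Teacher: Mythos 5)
Your proposal is correct and follows essentially the same route as the paper: describe the circuit $G_{\vec{\Lambda}}$ with the Lagrange interpolants and the NW design hard-wired, invoke \cref{item:layer-poly-comp} of \Cref{prop:poly-dec-cons} to obtain $\P$-uniform circuits for the $f_{i,j}$, and observe that the KI component is just $\nbar$ variable substitutions into those circuits. The only small difference is that you bound the degree by $h\cdot\polylog(s)$ (tighter than the paper's stated $n\cdot h\cdot\polylog(s)$, which covers both the input-layer and other-layer circuits even though the generator only uses $i\ge 1$), and one would want to say explicitly that the contribution of $L_{i,j}$ and $\calG_{KI,f_{i,j}}$ are \emph{additive} (product of polynomials in disjoint variable sets), not merged under the sum over $(i,j)$ — but since $\deg(L_{i,j})\le\polylog(s)$ this is absorbed and the final bound stands.
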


Note that the bounds on the complexity of the generator in~\Cref{thm:ki:tarhsg}, with $\sigma\in\zo$, follow from the two instantiations in~\Cref{cons:generator:specific} and from~\Cref{lem:KI-based-gen}.

\begin{proof}[Proof of~\Cref{lem:KI-based-gen}]
We stress that there are three algorithms involved in the claim. We wish to construct a machine $M$ that prints an arithmetic network $N$, where $N$ gets input $\vec{\Lambda}$ and computes the description of an arithmetic circuit $G_{\vec{\Lambda}}$ that computes $\calG_{\vec{\Lambda}}$. 

 We first describe $G_{\vec{\Lambda}}$, and then explain how $N(\vec{\Lambda})$ computes its description (and how $M(1^n)$ prints a description of $N$).
 
\paragraph{The circuit $G_{\vec{\Lambda}}$.} The circuit $G_{\vec{\Lambda}}$ has hard-wired arithmetic circuits for $f_{i,j}$ for all $i\in[\Delta]$ and $j\in\set{0,...,2m+r}$. It also has hard-wired arithmetic circuits for $L_{i,j}$, for all $i,j$. It also has hard-wired the design $\bar{S}$. By~\Cref{cons:KI-gen}, using the hard-wired $\bar{S}$ and circuits for $f_{i,j}$, the circuit $G_{\vec{\Lambda}}$ can compute the mapping $\vec{z}\mapsto \calG_{KI,f_{i,j}}^{\bar{S}}(\vec{z})$. Thus, the circuit $G_{\vec{\Lambda}}$ implements the functionality in Eq.~\eqref{cons:generator:def} in the straightforward way. The size of $G_{\vec{\Lambda}}$ is at most
\[
	O(\log(s)^2\cdot m\cdot r)\cdot \max_{i,j}\set{\size(L_{i,j})}\cdot \max_{i,j}\set{\size(f_{i,j})} + |\bar{S}|,
\]
and its degree is at most
\[
	\max_{i,j}\set{\deg(L_{i,j})\cdot \deg(f_{i,j})}.
\]

\paragraph{The polynomials $L_{i,j}$.}
Let us present an explicit expression for the polynomials $L_{i,j}$. For $i\in[\Delta]$, define the univariate 
\[
	Q_{i,1}(y) \coloneqq \prod_{k\in [\Delta] \setminus \set{i}} \frac{y - \phi_1(k)}{\phi_1(i) - \phi_1(k)},
\]
and for $j\in\{0,...,2m+r\}$, define the univariate polynomials
\[
	Q_{j,2}(y) \coloneqq \prod_{k\in \{0,...,2m+r\}\setminus \set{j}} \frac{y - \phi_2(k+1)}{\phi_2(j+1) - \phi_2(k+1)}.
\]
Then, for any $i\in [\Delta]$ and $j\in\{0,...,2m+r\}$, we define $L_{i,j}(y_1,y_2) \coloneqq Q_{i,1} (y_1) \cdot Q_{j,2}(y_2)$.
The degree of $L_{i,j}$ is $O(\Delta\cdot m\cdot r)$, and it can be implemented in size $\poly(m,r,\log(s))=\polylog(s)$.
To see that there is a $\P$-uniform arithmetic network of size $\polylog(s)$ that on any $n$-element input $\vec{\Lambda}$ computes a description of $L_{i,j}$, note that the machine that constructs the network can simply hard-wire the description of $L_{i,j}$ into the network, as this description does not depend on the input $\vec{\Lambda}$ given to the network.
The Turing machine that prints this network can print the constants $\set{\phi_1(k) : k \in [\Delta]}$ and $\set{\phi_2(k+1) : k \in \set{0, \ldots, 2m+r}}$ in $\poly(n, \Delta, 2m+r)$ time by invoking the Turing machines underlying the enumerators $\phi_1$ and $\phi_2$.

\paragraph{The circuits for $f_{i,j}$.}
By \cref{item:layer-poly-comp} of \Cref{prop:poly-dec-cons}, there is a $\P$-uniform family of arithmetic networks of size $\poly(s)$ that takes input $\vec{\Lambda}$ and prints descriptions of arithmetic circuits of size $\poly(s)$ and degree at most $n\cdot h\cdot\polylog(s)$ for $\set{f_0}\cup\set{f_{i,j}}_{i,j}$.

\paragraph{The design $\bar{S}$.}
Recall that $\bar{S}$ is obtained via~\Cref{cor:nw-design}. In particular, the machine $M$ can compute this design and hard-wire it into $N$, as the designs do not depend on the input $\vec{\Lambda}$ to $N$. As a consequence, the network $N$ can hard-wire the design into the arithmetic circuit $G_{\vec{\Lambda}}$. The size of the design's description is at most $\nbar\cdot \ell=O(m\cdot r)<\polylog(s)$.

\paragraph{Putting things together.}
Combining the constructions of the $L_{i,j}$'s, the $f_{i,j}$'s, and $\bar{S}$, we obtain a machine $M$ running in time $\poly(s)$ and printing a network of size $\poly(s)$, which maps $\vec{\Lambda}$ to $G_{\vec{\Lambda}}$ of size $\poly(s)$ and degree $n\cdot h\cdot\polylog(s)$.
\end{proof}

\subsection{The reconstruction procedure} \label{sec:ki-reconstruction}

We now describe the reconstruction procedure for our generator. We first describe a version corresponding to the ``mode'' $\sigma=1$, in which case the reconstruction time is roughly $\poly(n,d,\log(s))$. Then we explain how to obtain the version for ``mode'' $\sigma=0$, which has higher reconstruction time (see~\Cref{prop:recons-arith-net'}). In both cases, we assume that $\F$ is efficiently constructible, and we allow $p$\ts{th} root gates; in~\Cref{sec:ki:recon:pth} we explain how to obtain the claimed reconstruction networks if $\F$ is not efficiently constructible and/or when disallowing $p$\ts{th} root gates.

\subsubsection{The reconstruction in $\sigma=1$ mode}

In the statement below we keep $h,m,r$ as parameters, despite the fact that they are fixed to particular values in the $\sigma=1$ ``mode'' (i.e., to $h=2,m=\log(s),r=\polylog(s)$). The reason is to present the construction in a way that will be easier to generalize later for the $\sigma=0$ ``mode''.

\begin{proposition} [the reconstruction for~\Cref{thm:ki:tarhsg} with $\sigma=1$]\label{prop:recons-arith-net}
    Let $\set{C_n}_{n \in \N}$ be a $\log^c$-uniform family of algebraic circuits of size $s(n)$ having $n$ output gates, defined over a feasible sequence of fields $\set{\F_n}_{n \in \N}$, where each $\F_n$ has order $q(n) \le \exp(n^{O(1)})$ and characteristic $p(n)$.
		Let $d(n)$ be time-constructible, %
        let $\vec{\Lambda} \in \F^{n}_{n}$, and let $\calG_{\vec{\Lambda}}^{(h,m,r,\nbar)}$ be the generator described in \Cref{cons:generator:specific}, with the $r>0$ instantiation.
    Then, there is a $\P$-uniform family $R = \set{R_n}_{n \in \N}$ of randomized arithmetic networks that satisfies the following.
    \begin{enumerate}
        \item 
            The input to $R_n$ is a vector $\vec{\Lambda}\in\F_n^n$ that is the universal encoding of an arithmetic circuit $D$ that computes an $\nbar$-variate, degree-$d$ nonzero polynomial and satisfies $D\circ \calG_{\vec{\Lambda}}^{(h,m,r,\nbar)}(y_1, y_2, \vec{z}) = 0$.
        \item
            The size\footnote{In the statement of this proposition, the notation $\poly(T)$ indicates a bound of $T^{cK}$, where $K$ is a universal constant and $c$ is the given constant that parameterizes the $\log^c$-uniformity of $\{C_n\}$.} %
            of the network $R_n$ is bounded by $\poly(n,d,\nbar^{\log h}, \log(s),\log(q))$.
        \item The degree of $R_n$ is at most $(n\cdot d\cdot h\cdot p)^{\polylog(s)}$ 
        (allowing $p$\ts{th} root gates, and assuming that $\F$ is efficiently constructible).
        \item
            $R_n(\vec{\Lambda})$ computes the vector $C_n(\vec{\Lambda}) \in \F_n^{n}$ with error degree at most $\poly(d,h,\nbar,\log(s))$.
    \end{enumerate}
\end{proposition}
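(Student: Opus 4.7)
The plan is to mimic the structure of the GKSS-based reconstruction (\Cref{prop:GKSS-recons-arith-net}), but using a uniform arithmetic implementation of the reconstruction procedure underlying the Kabanets--Impagliazzo generator (\Cref{cons:KI-gen}) in place of the GKSS reconstruction. As in the GKSS case, on input $\vec{\Lambda}$ the network $R_n$ will first apply the uniformity-preserving depth reduction (\Cref{prop:uniform-depth-red}) to obtain the family $\{C'_n\}$ of size $\poly(s)$ and depth $\Delta=O(\log^2 s)$, and then iteratively reconstruct arithmetic circuits for the polynomials $f_{i,j}$ in the $(h=2,m,r)$-polynomial decomposition of $C'_n(\vec{\Lambda})$ obtained via \Cref{prop:poly-dec-cons}. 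Starting from the circuit for $f_0$ provided by \cref{item:input-layer-comp} of \Cref{prop:poly-dec-cons}, we iteratively construct circuits for $f_{i,0}, f_{i,1}, \ldots, f_{i, 2m+r}=f_i$ using the downward-self-reducibility (\cref{item:dsr,item:prop-sumcheck} of \Cref{prop:poly-dec-cons}) together with the KI reconstruction. Once a circuit for $f_\Delta$ is obtained, by \cref{item:faithful} of \Cref{prop:poly-dec-cons} it suffices to evaluate it (via \Cref{prop:universal-arith-network}) on the first $n$ elements of $H^m$ to output $C_n(\vec{\Lambda})$.

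The core technical step, analogous to \Cref{lem:gkss-ind-recons}, is the \emph{single-iteration lemma}: given a standard description of a circuit of size $s'$ and degree $d'$ computing $f_{i,j-1}$, produce the description of a size-$\poly(n,d,\nbar^{\log h}, \log(s), \log(q))$ circuit for $f_{i,j}$ via a $\P$-uniform randomized arithmetic network. Fixing $(w_1,w_2)=(\phi_1(i),\phi_2(j+1))$ in the identity $D\circ \calG_{\vec{\Lambda}}=0$ kills all Lagrange terms except the $(i,j)$-th one, yielding $D\circ \calG_{KI,f_{i,j}}^{\bar{S}}=0$. Thus $D$ is a distinguisher for the KI generator instantiated with $f_{i,j}$. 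We then invoke the standard KI hybrid argument: there exists some index $a$ so that the $a$-th hybrid polynomial $D_a$ is nonzero, but becomes zero upon substituting $x_{a+1}\mapsto f_{i,j}(\vec{z}|_{S_{a+1}})$. By \Cref{cor:gauss} this implies that $x_{a+1}-f_{i,j}(\vec{z}|_{S_{a+1}})$ is an irreducible factor of $D_a$ over $\F_n(\vec{z})[x_{a+1}]$. Since we do not know $a$, the network tries all $\nbar-1$ possibilities in parallel branches, and each branch invokes the uniform implementation of Kaltofen's factoring algorithm (described in the forthcoming \Cref{subsec:kaltofen-arith-net}) on the polynomial $D_a(\vec{z},\vec{x})$ viewed as a polynomial in $x_{a+1}$ over $\F_n(\vec{z}|_{S_{a+1}})$, in order to extract a circuit for $f_{i,j}$. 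Crucially, the variables outside $S_{a+1}$ can be handled via the bounded pairwise intersections of the Nisan--Wigderson design (setting them to evaluations from partial brute-force tables of $f_{i,j}$ of size $\nbar^{\log h}$ obtained from the given circuit for $f_{i,j-1}$ via downward self-reducibility and \Cref{prop:universal-arith-network}). After all branches are executed in parallel, we pick any candidate circuit that passes a $\pit$-style consistency check (or in the randomized model, passes polynomially many random evaluation checks against the downward self-reducibility), tying in with \Cref{def:rand-arith-network}.

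For the complexity analysis: the partial brute-force table required by the KI reconstruction has size $|H|^{|S_i\cap S_j|}\le 2^{\log\nbar}=\nbar$, but obtaining the full list of values on the extended set of size up to $\nbar^{\log h}$ accounts for the $\nbar^{\log h}$ factor in the size bound. Kaltofen's algorithm introduces the $\log q$ factor (for finite field arithmetic and Hensel lifting) together with the use of $p$\ts{th} root gates, which accounts for the extra $p^{\polylog(s)}$ in the degree bound when $\F$ is efficiently constructible. Each iteration is a composition of $O(1)$ sub-networks of size $\poly(n,d,\nbar^{\log h}, \log(s), \log(q))$ and degree $\poly(n,d,h,q)$, so by \Cref{prop:concat-arith-networks,prop:error-deg-comp} and the fact that there are $2m\Delta=\polylog(s)$ iterations overall, the total size is $\poly(n,d,\nbar^{\log h}, \log(s), \log(q))$ and the total degree is $(n\cdot d\cdot h\cdot p)^{\polylog(s)}$. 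The error degree of each single iteration is $\poly(d,h,\nbar,\log s)$ (coming from Kaltofen's randomized factoring and Schwartz--Zippel verifications), and summing over the $\polylog(s)$ iterations again yields error degree $\poly(d,h,\nbar,\log s)$.

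The main obstacle will be the faithful uniform arithmetic-network implementation of Kaltofen's factorization subroutine: the algorithm requires random bivariate reduction, Hensel lifting, and factorization of univariate polynomials over $\F_n$, all of which must be expressed within our randomized-network model with controlled degree and error degree. We address this separately (\Cref{subsec:kaltofen-arith-net}), and the rest of the proof is a careful bookkeeping of sizes, degrees, and error degrees through the $\polylog(s)$ iterations. The bridging to the general case (no $p$\ts{th} root gates, or non-efficiently-constructible $\F$) is handled in \Cref{sec:ki:recon:pth} by bounding the number of $p$\ts{th} root gates along any computational path and replacing each with $x\mapsto x^{q/p}$.
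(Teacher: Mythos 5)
Your proposal follows the paper's general strategy — depth reduction, polynomial decomposition, iterative reconstruction of circuits for the $f_{i,j}$'s via the KI hybrid argument, Kaltofen's root-finding, and random-evaluation weeding — so the overall architecture is right. But there is one genuine gap that you have not addressed, and it is not just a detail that can be deferred to the Kaltofen subsection.

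Over a finite field of characteristic $p$, Kaltofen's root-finding (\Cref{cor:kaltofen arith network}) does \emph{not} return a circuit for the root $g$; it returns a circuit for $g^{p^\beta}$ for some unknown exponent $\beta$. This is unavoidable because circuits in positive characteristic cannot, in general, compute $p$\ts{th} roots syntactically. Your single-iteration lemma, as stated, claims to produce a standard arithmetic circuit for $f_{i,j}$ — but what you actually get from the factoring step is a circuit for $f_{i,j}^{p^\beta}$. You cannot simply ``undo'' this with $p$\ts{th} root gates at the end of the iteration, because the invariant you need to carry into the \emph{next} iteration is a plain arithmetic circuit (with no $p$\ts{th} root gates) describing the polynomial that will be queried via downward self-reducibility. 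The paper's \Cref{lem:ind-recons} resolves this by maintaining, at every stage, a circuit for $f_{i,j}^{p^\gamma}$ together with the binary encoding of $\gamma$, and by observing (in \Cref{clm:ind-recons:dsr}) that the downward self-reducibility relation distributes over $p$\ts{th} powers via the Frobenius identity $(a+b)^p=a^p+b^p$, so the sumcheck relation $f_{i,j}^{p^\gamma}=\sum_{\gamma'\in H}f_{i,j-1}^{p^\gamma}$ still holds. The $p$\ts{th} root gates then appear only inside the \emph{network} (when interpolating $f^{(j)}_{\vec{\alpha}}$ from the stored $p$\ts{th}-power circuit, and when testing candidates), never inside the circuits that are passed between iterations. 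Without this bookkeeping the recursion does not close, and the degree bound you aim for ($(n d h p)^{\polylog(s)}$) cannot be certified because you would otherwise accumulate an uncontrolled number of $p$\ts{th} root operations along a single computational path. Your phrasing suggests awareness of $p$\ts{th} root gates but treats them as a cost inside Kaltofen's subroutine only, rather than as an invariant propagated across all $\polylog(s)$ iterations. Fixing this requires restating your single-iteration lemma to output a pair (circuit, exponent) and adapting the downward-self-reducibility claim accordingly.

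A smaller remark: in the $\sigma=1$ mode this proposition concerns randomized networks \emph{without} $\pit$ gates, so your parenthetical ``\,($\pit$-style consistency check, or in the randomized model\ldots)'' should be resolved decisively in favor of random-evaluation testing against the downward self-reducibility; there are no $\pit$ gates available in this model.
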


The rest of this section is devoted to the proof of \Cref{prop:recons-arith-net}.
Recall that \Cref{cons:generator} transforms $\{C_n\}$ into $\{C'_n\}$ of depth $\Delta=O((\log s)^2)$ and size $s'=\poly(s)$, and $\calG_{\vec{\Lambda}}$ is obtained by applying \Cref{cons:KI-gen} to the polynomials from the polynomial decomposition $\{f_{i,j}\}$ of $C'_n(\vec{\Lambda})$.
The network $R_n(\vec{\Lambda})$ will compute $C'_n(\vec{\Lambda})=C_n(\vec{\Lambda})$.
For notational convenience, in what follows we will drop the prime symbol in $C'_n$ and $s'$, denoting them by $C_n$ and $s$, respectively.

Before we describe the construction formally, we provide a brief sketch of its various components.
On input $\vec{\Lambda}$, the network efficiently constructs the description of a circuit computing the polynomial $f_\Delta = f_{\Delta,t}$.
As this polynomial faithfully represents the output gate values of $C_n(\vec{\Lambda})$ (see \cref{item:faithful} of \Cref{prop:poly-dec-cons}), the network  eventually computes $C_n(\vec{\Lambda})$ by simply evaluating $f_{\Delta}$ over the first $n$ many vectors (in lexicographical order)  in $H^m$.
The main idea is that $R_n$ iteratively, for $i = 0,\ldots,\Delta$, constructs a small arithmetic circuit for $f_{i}$.
At a finer level, $R_n$ constructs a circuit for $f_i$ using query access to a circuit for $f_{i-1}$ by sequentially constructing circuits for $f_{i,j}$ from $f_{i,j-1}$ for $j = 1,\ldots,t$.
	
We will now describe this arithmetic network, while accounting both for the complexity of implementing each iteration (i.e., the complexity of the uniform arithmetic network), and for the complexity of the arithmetic circuits whose descriptions are produced by each iteration.
Note that the assumption $D \circ \calG_{\vec{\Lambda}}^{(h,m,r,\nbar)}(y_1,y_2,\vec{z}) = 0$ implies that $D\circ \mathcal{G}_{KI,f_{i,j}}(\vec{z}) = 0$ for every $\vec{z}\in\F^\ell$, and for all $i\in [\Delta]$ and $j\in \{0,...,2m+r\}$.

To begin, $R_n$ employs the $\P$-uniform family of arithmetic networks from \Cref{lem:network-algo-uni-std} to convert $\vec{\Lambda}$ to a standard description of $D$ of $\poly(n,d)$ size.
In the rest of this section, we abuse notation in the following way: whenever we refer to $\vec{\Lambda}$, we mean this standard description of $D$ of $\poly(n,d)$ size.
To start the iterative procedure, by \Cref{item:input-layer-comp} in \Cref{prop:poly-dec-cons}, the arithmetic network can produce an arithmetic circuit of size $O(n\cdot m\cdot h\cdot \polylog(s))$ and degree $n\cdot h\cdot\polylog(s)$ that computes $f_0$.
Then the arithmetic network successively uses the network given by the following lemma.

\begin{lem}[One iteration: moving from a circuit for $f_{i,j-1}$ to a circuit for $f_{i,j}$] \label{lem:ind-recons}
	There is a $\P$-uniform family $\calN=\{\calN_n\}$ of randomized arithmetic networks with $p$\ts{th} root gates that when given as input $\vec{\Lambda}$, a natural number $\beta \in \N$, and a standard description of a size $s_{i,j-1}$, degree $d_{i,j-1}$ arithmetic circuit computing $f_{i,j-1}^{p^\beta}$, outputs a natural number $\gamma \in \N$ and a standard description of an arithmetic circuit computing $f_{i,j}^{p^\gamma}$ of size $\poly(n,d, \nbar^{\log h})$ and degree at most $\poly(d,h,\log \nbar)$.
	The size of $\calN_n$ is at most $\poly(n,d, \nbar^{\log h}, s_{i,j-1}, d_{i,j-1},\log(s),\log(q))$, its degree is at most $\poly(\log s,d,h,p,d_{i,j-1})$ (using $p$\ts{th} root gates) and its error degree is at most $\poly(\nbar,h,d,\log(s), d_{i,j-1})$.

	Furthermore, there is another $\P$-uniform family $\calN'=\set{\calN'_n}$ of randomized arithmetic networks with $p$\ts{th} root gates that, when given as input $\vec{\Lambda}$, a natural number $\beta \in \N$, and the standard description of a size $s_{i-1,2m+r}$, degree $d_{i-1,2m+r}$ arithmetic circuit computing $f_{i-1,2m+r}^{p^\beta}$, outputs a natural number $\gamma \in \N$ and a standard description of an arithmetic circuit computing $f_{i,0}^{p^\gamma}$ of size $\poly(n,d,\bar{n}^{\log h})$ and degree at most $\poly(d,h,\log\nbar)$. 
	The size of $\calN'_n$ is $\poly(n,d, \nbar^{\log h}, s_{i-1,2m+r}, d_{i-1,2m+r},\log(s),\log(q))$, its degree is at most $\poly(h,d,\log(s),p,d_{i-1,2m+r})$ (using $p$\ts{th} root gates) and its error degree is at most $\poly(\nbar,h,d,\log(s),d_{i-1,2m+r})$.
\end{lem}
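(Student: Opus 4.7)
The plan is to proceed in two steps for the main family $\calN$, and then observe that $\calN'$ is obtained by the same argument with a single substitution. In Step~1, we use the downward self-reducibility of the polynomial decomposition together with Frobenius distributivity in characteristic $p$ to build from the input a (potentially large) arithmetic circuit $F$ that computes $f_{i,j}^{p^\beta}$ at any point. In Step~2, we instantiate a uniform implementation of the Kabanets--Impagliazzo reconstruction---namely the hybrid argument together with Kaltofen's factoring algorithm (equipped with $p$\ts{th}-root gates)---using the distinguisher $D$ (extracted from $\vec{\Lambda}$) and evaluation access to $f_{i,j}$ (provided by $F$) in order to reconstruct a \emph{small} circuit for $f_{i,j}^{p^\gamma}$ whose size is independent of $s_{i,j-1}$. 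The case of $\calN'$ follows by substituting item~\ref{item:dsr} of \Cref{prop:poly-dec-cons} (layer downward self-reducibility) for item~\ref{item:prop-sumcheck} (sumcheck downward self-reducibility) in Step~1, producing a circuit for $f_{i,0}^{p^\gamma}$ from one for $f_{i-1,2m+r}^{p^\beta}$ with only constant-factor changes in the bounds (the multiplicative structure of the layer relation is still compatible with Frobenius since $(ab)^{p^\beta}=a^{p^\beta}b^{p^\beta}$, using that we can also raise the explicit polynomial $\Phi_i$ to its $p^\beta$-th power by repeated squaring).

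For Step~1, item~\ref{item:prop-sumcheck} of \Cref{prop:poly-dec-cons} provides a $\P$-uniform network that, given $\vec{\Lambda}$ and $(i,j)$, outputs an oracle circuit of size $O(h)$ and degree $1$ implementing the identity $f_{i,j}(\vec{w},\vec{\sigma})=\sum_{\tau \in H} f_{i,j-1}(\vec{w},\vec{\sigma},\tau)$. Substituting the input circuit for $f_{i,j-1}^{p^\beta}$ in place of the oracle gates and applying the identity $(\sum_\tau g_\tau)^{p^\beta}=\sum_\tau g_\tau^{p^\beta}$, valid in characteristic $p$, produces an arithmetic circuit $F$ of size $O(h\cdot s_{i,j-1})$ and degree $O(d_{i,j-1})$ that computes $f_{i,j}^{p^\beta}$.

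For Step~2, partially evaluating the hypothesis $D\circ\calG_{\vec{\Lambda}}^{(h,m,r,\nbar)}\equiv 0$ at $(y_1,y_2)\mapsto(\phi_1(i),\phi_2(j+1))$ and using that the Lagrange polynomials $L_{i',j'}$ form a delta function on the enumerated grid gives $D\circ\calG_{KI,f_{i,j}}^{\bar S}\equiv 0$. A standard hybrid argument then produces an index $a\in\{0,\ldots,\nbar-1\}$ such that the polynomial
\[
H_a(\vec{z},x)\;\coloneqq\;D\bigl(f_{i,j}(\vec{z}|_{S_1}),\ldots,f_{i,j}(\vec{z}|_{S_a}),\,x,\,x_{a+2},\ldots,x_\nbar\bigr)
\]
is nonzero yet vanishes under the substitution $x\mapsto f_{i,j}(\vec{z}|_{S_{a+1}})$; by \Cref{cor:gauss} the factor $x-f_{i,j}(\vec{z}|_{S_{a+1}})$ divides $H_a$ in $\F_n(\vec{z})[x]$. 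The network enumerates all $\nbar$ candidate indices $a$ in parallel. To write down $H_a$ as a circuit of size independent of $s_{i,j-1}$, we use the design property $|S_k\cap S_{a+1}|\le\log\nbar$: after randomly substituting values for the coordinates of $\vec{z}$ outside $S_{a+1}$, each polynomial $f_{i,j}(\vec{z}|_{S_k})$ for $k\le a$ restricts to a polynomial in at most $\log\nbar$ variables of individual degree at most $\ideg(f_{i,j})\le 6(h-1)$ (item~\ref{item:ideg} of \Cref{prop:poly-dec-cons}), hence is determined by at most $\nbar^{O(\log h)}$ coefficients that the network interpolates using $F$ as an evaluation oracle. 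Feeding the resulting circuit for $H_a$ into the uniform randomized implementation of Kaltofen's factoring algorithm from Appendix~\ref{sec:uniform algos}---which in positive characteristic returns each irreducible factor up to a $p$\ts{th}-power, and is equipped with $p$\ts{th}-root gates precisely to avoid the degree blow-up of the map $\alpha\mapsto\alpha^{q/p}$---extracts a circuit computing $f_{i,j}^{p^{\gamma-\beta}}$ for some $\gamma\ge\beta$. Multiplying by $F$ then yields the candidate circuit for $f_{i,j}^{p^\gamma}$.

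Since the correct hybrid index $a$ is unknown, all $\nbar$ branches are executed in parallel and their outputs are filtered by testing agreement with $F$ on a small number of random evaluation points; the first surviving candidate is returned. The size, degree, and error-degree bounds claimed in the lemma follow by composing the corresponding bounds of Steps~1--2 with the complexity of the uniform Kaltofen implementation and a union bound over the random substitutions used in the restriction and in the verification. The main obstacles are (a)~carrying out Kaltofen's factoring algorithm as a uniform randomized arithmetic network over fields of positive characteristic with well-controlled degree and error degree---this is precisely the source of the $p$\ts{th}-root gates and of the passage from $\beta$ to $\gamma$---and (b)~guaranteeing that the reconstructed circuit size is \emph{independent} of $s_{i,j-1}$, which is what prevents an exponential blow-up across the $\Delta(n)\cdot(2m+r)$ iterations of the outer procedure of \Cref{prop:recons-arith-net} and rests crucially on both the Nisan--Wigderson design intersection bound and the individual-degree bound provided by \Cref{prop:poly-dec-cons}.
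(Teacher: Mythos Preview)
Your high-level plan matches the paper's proof closely: downward self-reducibility plus Frobenius linearity to get evaluation access to $f_{i,j}^{p^\beta}$ (Step~1), then the hybrid argument, random restriction of irrelevant variables, interpolation via the design intersection bound and the individual-degree bound, Kaltofen factoring, and parallel verification across the $\nbar$ branches (Step~2). The ``furthermore'' part is also handled the same way in the paper.

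There is, however, a genuine gap in how you handle the $p^\beta$ powers in Step~2. You say Kaltofen ``extracts a circuit computing $f_{i,j}^{p^{\gamma-\beta}}$'' and then ``multiplying by $F$ yields the candidate circuit for $f_{i,j}^{p^\gamma}$.'' This is wrong on two counts. First, arithmetically: $f_{i,j}^{p^{\gamma-\beta}}\cdot f_{i,j}^{p^\beta}=f_{i,j}^{p^{\gamma-\beta}+p^\beta}\neq f_{i,j}^{p^\gamma}$ in general. Second, and more seriously, multiplying by $F$ would make the output circuit size depend on $s_{i,j-1}$ (since $F$ has size $O(h\cdot s_{i,j-1})$), which contradicts exactly the property you correctly identify in your point~(b) as essential.

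The fix, which is what the paper does, is to use the $p$\ts{th}-root gates \emph{before} interpolation rather than only inside Kaltofen: apply $\beta$ many $p$\ts{th}-root gates to each evaluation $F(\vec{\beta}\circ\vec{\alpha})=f_{i,j}(\vec{\beta}\circ\vec{\alpha})^{p^\beta}$ to obtain $f_{i,j}(\vec{\beta}\circ\vec{\alpha})$ itself, and interpolate the restricted polynomials $f_{i,j}(\vec{z}|_{S_k}\circ\vec{\alpha})$ directly. (This is also why the individual-degree bound $6(h-1)$ you invoke is legitimate---it applies to $f_{i,j}$, not to $f_{i,j}^{p^\beta}$.) Then $H_a$ genuinely has $x-f_{i,j}(\vec{z}|_{S_{a+1}})$ as a factor, Kaltofen returns a circuit for $f_{i,j}^{p^\gamma}$ for some $\gamma$ it determines internally, and no multiplication by $F$ is needed. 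The $p$\ts{th}-root gates are used once more in the verification step, again to compare candidate evaluations against actual evaluations of $f_{i,j}$.
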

	
We defer the proof of \Cref{lem:ind-recons} to \Cref{sec:ki:recon:step}.
The network $R_n$ makes repeated use of the networks $\calN$ and $\calN'$ from \Cref{lem:ind-recons}.
After computing the description of a circuit for $f_0$ as explained above, $R_n$ uses the network $\calN'$ (applied with $i = 1$) to output the description of a $p$\ts{th} power of $f_{1,0}$. 
This description is then fed into the network $\calN$ (applied with $i = 1$ and $j = 1$), which in turn then outputs the description of a $p$\ts{th} power of $f_{1,1}$.
This description is, in turn, fed into the network $\calN$ (now applied with $i = 1$ and $j = 2$), and so on until $j = 2m+r$, at which point $R_n$ observes that $f_{1,2m+r} = f_1$ and subsequently invokes the network $\calN'$ again (now applied with $i = 2$).
This process repeats until $R_n$ eventually computes the description of a $p$\ts{th} power of $f_\Delta = f_{\Delta,2m+r}$.
Finally, $R_n$ uses \Cref{prop:universal-arith-network} and $p$\ts{th} root gates to query $f_\Delta$ on (the first $n$ elements of) the set $H^m$, and returns these elements as a vector $\vec{\alpha}$ (which we claim is equal to $C_n(\vec{\Lambda})$ with high probability).

This results in the application of \Cref{lem:ind-recons} a total of $(2m+r)\cdot \Delta$ times, and the application of the universal arithmetic network in \Cref{prop:universal-arith-network} to evaluate $f_\Delta$ a total of $n$ times.
Furthermore, the size of the output circuit produced by each application of \Cref{lem:ind-recons} is independent of the input circuit for the layer polynomial below, i.e., it follows that each $s_{i,j}$ is bounded by $\poly(n,d,\nbar^{\log h})$ and degree $d_{i,j}$ is bounded by $\poly(d,h,\log \nbar)$.
Hence, the overall size of $R_n$ is bounded by
\[
	(2m+r)\cdot\Delta\cdot\poly(n,d,\nbar^{\log h},\log(q)) < \poly(n,d,\nbar^{\log h},\log(s),\log(q)). \eqtag{eq:ki:recon:sizefinal}
\]

Next, we analyze the error degree of $R_n$. By \Cref{def:rand-arith-network}, this amounts to showing that for some integer $e$ bounded by $\poly(\nbar,d,h \log s)$, and for every finite set $S\subset \F$, when each random input $r_i$ (collectively represented by $\vec{r}$, say) is chosen independently and uniformly at random from $S$, with probability at least $1-e/|S|$, we have $R_n(\vec{\Lambda},\vec{w},\vec{r})= C_n(\vec{\Lambda})$, i.e., the vector $\vec{\alpha}$ as defined above equals $C_n(\vec{\Lambda})$.

From the construction of $R_n$, observe that it consists of a repeated composition of the networks $\calN_n$ or $\calN_n'$, successively instantiated appropriately for the indices $i\in[\Delta]$ and $j\in \{0,\ldots, t\}$, and the (deterministic) network from \Cref{item:input-layer-comp} of \Cref{prop:poly-dec-cons} that outputs a description of $f_0$.
Note that since the network in \Cref{prop:universal-arith-network} is also deterministic, conditioned on the event that a description of a $p$\ts{th} power of $f_\Delta$ is computed correctly in the final instantiation of $\calN$ (i.e., with $i = \Delta$ and $j = t$), the vector $\vec{\alpha}$ as defined above is guaranteed to equal $C_n(\vec{\Lambda})$ by definition.
By \Cref{prop:error-deg-comp}, we can bound the error degree of $R_n$ by its constituent networks.
The network produced by \cref{item:f0} of \Cref{prop:poly-dec-cons} is deterministic, so it has error degree $0$.
Each of the $2m+r\leq \polylog(s)$ invocations of $\calN_n$ or $\calN_n'$ increase the error degree of $R_n$ by $\poly(\nbar, \log s, d, h)$.
In total, we can bound the error degree of $R_n$ by $t\Delta\cdot\poly(\nbar,d,h, \log s) = \poly(\nbar,d,h \log s)$.

Finally, we analyze the degree of $R_n$.
Using \Cref{prop:error-deg-comp}, we can again bound the degree of $R_n$ by its constituent networks.
The network produced by \cref{item:input-layer-comp} of \Cref{prop:poly-dec-cons} has degree $n\cdot h\cdot \polylog(s)$. 
Each of the $2m+r\leq \polylog(s)$ invocations of $\calN_n$ or $\calN_n'$ multiply the degree of $R_n$ by $\poly(h,d,p,\log s)$. %
In total, we can bound the degree of $R_n$ by $(h\cdot d\cdot p\cdot \log s)^{O(t\Delta)}=(n\cdot d\cdot h\cdot p)^{\polylog(s)}$.

\subsubsection{A single iteration: Proof of \Cref{lem:ind-recons}} \label{sec:ki:recon:step}

	We first prove the main part of the statement, and then explain how to deduce the ``furthermore'' part, relying on essentially the same argument.
	
	Relabel the indices $i$ and $j$ in the lemma statement as $i'$ and $j'$, respectively, so that $i$ and $j$ are now free to use for other purposes in the remainder of the proof of this lemma.
	The indices $i'$ and $j'$ are henceforth assumed to be fixed and we call $f_{i',j'}$ as simply $f$.
	We also relabel the size and degree of the circuit for $f_{i',j'-1}$  that is given to the network by replacing $s_{i',j'-1}$ and $d_{i',j'-1}$ with $s'$ and $d'$, respectively.
	
	Recall that in \Cref{cons:generator} we added dummy variables to all the $f_{i',j'}$'s, so that they are all $(3m+r)$-variate polynomials.
	We will repeatedly use the fact that the network can evaluate $f$ at any given point in $\F_n^{3m+r}$ using a circuit for $f_{i',j'-1}^{p^\beta}$ and the downward self-reducibility of the decomposition $\{f_{i,j}\}$.
	Let us state this formally.
	
	\begin{subclaim} \label{clm:ind-recons:dsr}
		There is a $\P$-uniform arithmetic network that gets as input $(i',j')\in[\Delta]\times\{0,...,2m+r\}$, a description of an arithmetic circuit of size $s$ and degree $d$ that computes $f_{i',j'-1}^{p^\beta}$ for some $\beta \in \N$, and a vector $\vec{\alpha} \in \F^{3m+r}_n$, and outputs the value of $f(\vec{\alpha})^{p^\beta}$.
        The size of this network is bounded by $h \cdot \poly(m,r,s,d)$ and its degree is bounded by $\poly(m,r,d)$.
	\end{subclaim}
	
	\bproof
		The network uses \Cref{prop:universal-arith-network} to evaluate the given circuit for $f_{i',j'-1}^{p^\beta}$ %
        and then applies the downward self-reducibility of the polynomial decomposition $\set{f_{i,j}}$ (i.e., \Cref{item:prop-sumcheck} of \Cref{prop:poly-dec-cons}) to compute $f_{i', j'}^{p^\beta}$. %
        Since downward self-reducibility of the polynomial decomposition $\set{f_{i,j}}$ is only stated and proved for the polynomials $f_{i', j'-1}$ and $f$, we need to check that downward self-reducibility remains valid when we are working with $p$\ts{th} powers of these polynomials.
        Over a field of characteristic $p > 0$, the $p$\ts{th} power map is linear.
        That is, for any $a, b \in \F$, we have $(a + b)^p = a^p + b^p$.
        Using this identity, we can distribute the $p$\ts{th} power over the sums appearing in the downward self-reducibility property.

        In more detail, recall the sumcheck relation from \Cref{item:sumcheck} of \Cref{def:poly-dec}, which states
        \[
            f(\alpha_1, \ldots, \alpha_{3m+r-j'}) = \sum_{\gamma \in H} f_{i', j'-1}(\alpha_1, \ldots, \alpha_{3m+r-j'}, \gamma).
        \]
        Taking $p^\beta$-th powers of both sides and repeatedly using the fact that $(a + b)^p = a^p + b^p$, we have 
        \[
            f(\alpha_1, \ldots, \alpha_{3m+r-j'})^{p^\beta} = \sum_{\gamma \in H} f_{i', j'-1}(\alpha_1, \ldots, \alpha_{3m+r-j'}, \gamma)^{p^\beta}.
        \]
        Thus, given a circuit that computes $f_{i', j'-1}^{p^\beta}$, we can correctly compute the value of $f^{p^\beta}$ at any point of our choosing.

        Each application of the network of \Cref{prop:universal-arith-network} increases the size of our network by an additive $\poly(m,r,s,d)$ and costs degree $\poly(m,r,d)$.
        We apply this network $h$ times in parallel and add the results together, so the size and degree of the whole network are bounded by $h \cdot \poly(m,r,s,d)$ and $\poly(m,r,d)$, respectively.
	\eproof
	
	\paragraph{High-level overview.} For a better exposition, we first describe the ideas involved in the construction of this arithmetic network in a list below and subsequently discuss the formal details of their implementation via a $\P$-uniform family of randomized arithmetic networks. 
	\begin{itemize}
		\item The input $\vec{\Lambda}$ describes a nonzero arithmetic circuit $D$ (``distinguisher'') satisfying $D\circ \mathcal{G}_{KI,f} = 0$.
		\item {\bf (Distinguisher to ``predictor'' transformation.)} For each $0\leq i \leq \nbar$, define the $i^\mathrm{th}$ \emph{hybrid} circuit
		\[
			C_i(\vec{x},\vec{z}) \coloneqq D(f(\vec{z}|_{S_1}), \ldots, f(\vec{z}|_{S_i}), x_{i+1},\ldots, x_\nbar).
		\]
		Here, both $\vec{z} = (z_1,\ldots,z_\ell)$ and $\vec{x} = (x_1,\ldots,x_\nbar)$ are tuples of indeterminates, and the sets $S_1,\ldots,S_\nbar\subseteq [\ell] $ are the Nisan--Wigderson $(3m+r,\log \nbar)$-design used in~\Cref{cons:generator}.
		
		Since $C_0 = D \neq 0$ and $C_\nbar = D\circ\calG_{KI,f} = 0$, there is $i\in \{0,...,\nbar-1\}$ for which $C_i(\vec{z},\vec{x})\neq 0$ but $C_{i+1}(\vec{z},\vec{x}) = 0$.
		Since we do not know which index $i$ satisfies this condition, we will try out each possible value of $i$ in parallel in its own branch in the arithmetic network.
		
		\item {\bf (Within branch $i$.)} In what follows, we work with one such fixed index $i$, i.e., within one such branch of the arithmetic network.
		
		\begin{itemize}
			\item {\bf (Fixing irrelevant variables.)} Call $\vec{z}|_{S_{i+1}}$ and $x_{i+1}$ the \emph{relevant} variables ($3m+r+1$ in all) and the remaining variables the \emph{irrelevant} variables. Note that the polynomial computed by $C_{i}$ depends on at most $\ell - (3m+r) + \nbar - (i+1) = O(\nbar + \ell)$ irrelevant variables. We use the given \emph{random} set of arithmetic inputs to the network in order to fix the irrelevant variables to field constants in $\F_n$.\footnote{We stress that in each independent branch of the network, we will use different random arithmetic inputs. This ensures independence across branches.} Let us denote this assignment to irrelevant variables by the vector $\vec{\alpha}$. The notation $(\vec{z},\vec{x})\circ\vec{\alpha}$
			refers to the variable vector in which the positions corresponding to the irrelevant $z$ variables (i.e., the indices \emph{not} in $S_{i+1}$), along with the irrelevant $x$ variables (i.e., all $x$ variables except for $x_{i+1}$) are specified according to $\vec{\alpha}$, while the relevant variables are left unset.
			Define
			\[
				C_i' = C_i'(\vec{z}|_{S_{i+1}},x_{i+1}) \coloneqq C_i((\vec{z},\vec{x})\circ\vec{\alpha})
			\]
			and
			\[
				C_{i+1}' = C_{i+1}'(\vec{z}|_{S_{i+1}},x_{i+1}) \coloneqq C_{i+1}((\vec{z},\vec{x})\circ\vec{\alpha}),
			\]
			i.e., the circuits $C_i'$ and $C_{i+1}'$ are obtained from $C_i$ and $C_{i+1}$, respectively, after the partial substitution specified by $\vec{\alpha}$.
			
			Note that $C_{i+1}'$ is obtained as a restriction of $C_{i+1}$ and is therefore the zero polynomial in the branch in which our guess for $i$ is correct.
			On the other hand, since $C_i$ is a nonzero polynomial, and since the assignment to the irrelevant variables comes from the random set of arithmetic inputs, with high probability
			$C_i'$ is a nonzero polynomial as well (see \Cref{clm:ki:recon:branch}).%
			
			\item {\bf (Efficiently constructing a circuit for $C'_i$.)} Since the polynomial obtained by replacing $x_{i+1}$ in $C_i'$ by the polynomial $f(\vec{z}|_{S_{i+1}})$ is the zero polynomial, it follows from \Cref{cor:gauss} that $x_{i+1}- f(\vec{z}|_{S_{i+1}})$ is a factor of the polynomial computed by the circuit $C_i'$. 
			Our goal is to construct a circuit for $C_i'$ and then use \Cref{cor:kaltofen arith network} to factor this circuit, obtaining a circuit that computes $f(\vec{z}|_{S_{i+1}})$.
			Indeed, one caveat is that \Cref{cor:kaltofen arith network} will yield a circuit that computes $f^{p^e}$ for some $e \in \N$, rather than $f$ itself.

			To construct a circuit for $C_i'$, it suffices for the network to construct a circuit for the mapping $\vec{z}|_{S_j\cap S_{i+1}}\mapsto f(\vec{z}|_{S_j} \circ \vec{\alpha})$, for each $1\leq j \leq i$.\footnote{To see this, recall that $C'_i$ gets as input the variables $\vec{z}\rest_{S_{i+1}}$ and $x_{i+1}$, and its goal is to output $D$ on $(f(\vec{z}\rest_{S_1},...,f(\vec{z}\rest_{S_i}),x_{i+1},x_{i+2},...,x_{\nbar})$, where $x_{i+2},...,x_{\nbar}$ are fixed (according to $\vec{\alpha}$, which is hard-wired into $C'_i$) and the values in $\vec{z}$ with indices not in $S_{i+1}$ are also fixed (according to $\vec{\alpha}$). Thus, it suffices for $C'_i$ to be able to compute $\set{f(\vec{z}\rest_{S_j})}_{j\le i}$ as a function of $\vec{z}\rest_{S_j\cap S_{i+1}}$ (and of the hard-wired $\vec{\alpha}$).}
			
			To do so, first note that by Item~\eqref{item:ideg} of~\Cref{prop:poly-dec-cons}, $f^{(j)}_{\vec{\alpha}}(\vec{z}|_{S_j\cap S_{i+1}})\coloneqq f(\vec{z}|_{S_j} \circ \vec{\alpha})$ 
			is a polynomial of individual degree at most $6h$ and total degree at most $h\cdot \polylog(s)$ over $|S_j\cap S_{i+1}|\leq \log \nbar$ variables. As such, it can be interpolated using at most ${(6h)}^{\log \nbar}$ evaluations of $f$ on a grid specified by a fixed, explicit set $T\subseteq\F_n$ of size  $6h$ (for which we can  use an enumerator for the function $6h$)
            More precisely, we can explicitly write $f^{(j)}_{\vec{\alpha}}$ using multivariate interpolation as
			\mm{
				f^{(j)}_{\vec{\alpha}}(\vec{z}|_{S_j\cap S_{i+1}}) \coloneqq \sum_{\vec{\beta} \in T^{S_j\cap S_{i+1}}} {f}(\vec{\beta}\circ \vec{\alpha}) \prod_{k\in S_j\cap S_{i+1}} \prod_{\beta \in T\setminus \set{\beta_k}} \frac{z_k - \beta}{\beta_k - \beta}, \eqtag{eq:f-interpolation}
			}
			where the evaluations ${f}(\vec{\beta}\circ \vec{\alpha})$ can be computed using \Cref{clm:ind-recons:dsr} and $p$\ts{th} root gates.

			Doing this for every $j\in [i]$ and plugging in the circuit $D$, we obtain a circuit for $C_i'$ whose size is bounded by
			\[
				\poly(|T|^{\log \nbar + 1}\cdot \log \nbar \cdot \nbar\cdot n) = \poly(n,\nbar, h^{\log \nbar}) \le \poly(n,\nbar^{\log h}). \eqtag{eq:ki:recon:grid}
			\]
			and degree is bounded by
			\[
				O(d |T| \log \nbar) = O(dh\log \nbar). \eqtag{eq:ki:recon:grid:deg}
			\]
			
			\item {\bf (Kaltofen's factorization.)} 
			We now invoke the network of \Cref{cor:kaltofen arith network}.
			This network takes as input a description of $C'_i$ and computes a list of (descriptions of) circuits $C_{i,1},\ldots,C_{i,t_i}$ and a list of natural numbers $\beta_{i,1}, \ldots, \beta_{i,t_i} \in \N$, where $t_i\le\deg(C'_i)$.
			We are guaranteed that for every polynomial $g$ such that $C_i'(\vec{z}|_{S_{i+1}}, g(\vec{z}|_{S_{i+1}}))$ is the identically zero polynomial, at least one of the circuits $C_{i, k}$ computes the polynomial $g^{p^{\beta_{i,k}}}$.
			In particular, one of the circuits in this list computes $f^{p^e}$ for some $e \in \N$.
			For each $k \in [t_i]$, the size of the circuit $C_{i,k}$ is bounded by
			\mm{
				\poly(|C'_i|,\deg(C'_i),d,\nbar) < \poly(n, d, \nbar^{\log h}), \tag*{(by Eqs.~\eqref{eq:ki:recon:grid},~\eqref{eq:ki:recon:grid:deg})}
			}
			and the degree of each $C_{i,k}$ is bounded by $\deg(C_i')<dh\log(\nbar)$ (by Eq.~\eqref{eq:ki:recon:grid:deg})
		\end{itemize}

		\item ({\bf Weeding candidate circuits.})

		We know that for some choice of $i\in[\nbar-1]$ and $k\in[t_i]$, the circuit $C_{i,k}$ computes $f^{p^{\beta_{i,k}}}$.
		To determine which circuit computes a $p$\ts{th} power of $f$, we iteratively test each one against evaluations of $f^{p^{\beta_{i,k}}}$ using polynomial identity testing.
		
		Specifically, we evaluate each candidate circuit $C=C_{i,k}$ on a random point, which we choose using $3m+r$ arithmetic inputs taken from the random set of arithmetic values given to the network $R_{i,j}$.
		We can evaluate $C_{i,k}^{1/p^{\beta_{i,k}}}$ on this point by using \Cref{prop:universal-arith-network} and $\beta_{i,k}$ many $p$\ts{th} root gates, and we can evaluate $f$ on this point using \Cref{clm:ind-recons:dsr}.
		We then use a $\testequals$ gate (See \Cref{def:arith-network}) to check if the two evaluations are equal.
		
		We choose the random evaluation point independently for each candidate circuit.
		If $C_{i,k}$ computes $f^{p^{\beta_{i,k}}}$, then this circuit always passes the test.
		Conversely, if $C_{i,k}$ does not compute $f^{p^{\beta_{i,k}}}$, then the Schwartz--Zippel lemma (\Cref{lem:sz}) implies that if each coordinate of the evaluation point is chosen uniformly at random from a subset $S\subseteq \F_n$ of size at least $1000 \nbar\Delta \deg(C_i')^2$, then $C_{i,k}$ passes the test with probability at most $1/1000\nbar\Delta \deg(C_i')$.
		Because there are at most $\nbar \deg(C'_i)$ candidate circuits to test, by the union bound, the probability that some circuit erroneously passes the test is at most $1/1000\Delta$.
		(We will use the iterative step in the current lemma $O(\Delta)$ times, so we require an error bound of $O(1/\Delta)$ here to ensure the overall algorithm has bounded error probability.)
		
		\item
		If one of the candidate circuits passes the test, we have a description of a circuit that computes $f^{p^e}$ for a known integer $e \in \N$.
		The network outputs the description of this circuit for $f^{p^e}$ together with a binary description of $e$. 
		If more than one candidate circuit passes the test in the previous step, we break ties arbitrarily.
		(Even for a correct choice of $i$ for the hybrid, it is possible that after choosing the fixing $\vec{\alpha}$ the circuit $C_i'$ computes the zero polynomial, or that the arithmetic network from \Cref{cor:kaltofen arith network} may err.)
	\end{itemize} 
	
	\paragraph{Details of implementation.} 
	We formalize the implementation details using a sequence of claims below. The first claim helps describe the behavior of $\calN$ inside a given branch, i.e., corresponding to a given setting of an index $i\in [\nbar-1]$.
	
	\begin{subclaim} [evaluating a single branch] \label{clm:ki:recon:branch}
		There is a $\P$-uniform family $\Nin=\{\Nin_n\}$ of randomized arithmetic networks such that $\Nin_n$ that has the following properties.
        \begin{itemize}
            \item It takes as input $\vec{\Lambda}\in\F_n^n$, an integer $i\in[\nbar-1]$, an integer $\beta \in \N$, and a standard description  of a size $s'$, degree $d'$ arithmetic circuit computing $f_{i',j'-1}^{p^\beta}$.
            \item It outputs standard descriptions of circuits $C_{i,1},\ldots,C_{i,t_i}$ and a list of numbers $\beta_{i,1}, \ldots, \beta_{i,t_i} \in \N$, where $t_i\le d'$ and each circuit is of size $\poly(n, \nbar^{\log h},d)$ and degree at most $O(d\cdot h\cdot\log(n))$.
            Moreover, if the input $i$ is so that $C_i(\vec{z},\vec{x}) \neq 0$ but $C_{i+1}(\vec{z},\vec{x}) = 0$, then there exists $k\in[t_i]$ for which $C_{i,k}$ computes a $(3m+r)$-variate polynomial $g^{p^{\beta_{i,k}}}$ such that $C_i'(\vec{z}|_{S_{i+1}},g(\vec{z}|_{S_{i+1}}))$ is the zero polynomial, where $C_i'$ is a nonzero polynomial that is obtained from $C_i$ by a partial substitution to its irrelevant variables.%
            \item Finally, $\Nin_n$ has size $\poly(n,d,s',d',\nbar^{\log h}, \log(q),\log(s))$, degree bounded by %
            $\poly(h,d,p,\log s,d')$ (using $p$\ts{th} root gates), and error degree bounded by $\poly(h,d,\log(s),d')$.
        \end{itemize}
	\end{subclaim}
	\begin{subproof}
		Recall the three steps of the computation of $\Nin_n$: fixing irrelevant variables, computing a description of $C'_i$, and evaluating the network from \Cref{cor:kaltofen arith network}.
		
		\medskip\noindent\underline{Fixing irrelevant variables.} To fix irrelevant variables to $\vec{\alpha}$, the Nisan--Wigderson design will be hard-wired into $\Nin_n$.
		Recall that \Cref{cons:generator} uses the designs from \Cref{cor:nw-design}, which are computable in time $\poly(\nbar,m,r)<\poly(n,\log(s))$.
		The Turing machine that prints $\Nin_n$ will compute these designs and hard-wire them.
		Given these designs and access to random elements, the step of fixing irrelevant variables can be implemented by a $\P$-uniform arithmetic network of size $\poly(s')$ and degree $O(1)$.
		
		\medskip\noindent\underline{Computing a description of $C'_i$.} Once $\vec{\alpha}$ is determined, the network $\Nin_n$ computes a description of $C'_i$ that takes input $(\vec{z}\rest_{S_{i+1}},x_{i+1})$, computes $\set{f(\vec{z}\rest_{S_j})}_{j\le i}$ and $x_{i+2},\ldots,x_{\nbar}$ from its inputs and from $\vec{\alpha}$, and evaluates $D$ (i.e., the circuit represented by $\vec{\Lambda}$) on the resulting sequence. In order to compute the description of each $f^{(j)}_{\vec{\alpha}}$, which is defined as the circuit mapping $\vec{z}\rest_{S_j\cap S_{i+1}}\mapsto f(\vec{z}|_{S_j} \circ \vec{\alpha})$, the network implements the interpolation described in Eq.~\eqref{eq:f-interpolation}, while computing the constants $f(\vec{\beta}\circ\vec{\alpha})$ using the network in \Cref{clm:ind-recons:dsr} together with $\beta$ applications of a $p$\ts{th} root gate. 

		We argue that the computation of (a description of) $C'_i$ can be performed by a $\P$-uniform $\Nin_n$ of size $\poly(n,\nbar^{\log h},s',d',\log(s))$ and degree $\poly(d',log(s))$, using $p$\ts{th} root gates. Recall that (as explained above) each $f^{(j)}_{\vec{\alpha}}$ is of size $\poly(n,\nbar^{\log h})$ and degree $O(h\cdot\log \nbar)$; using \Cref{prop:short-desc}, the network $\Nin_n$ can produce a description for each $f^{(j)}_{\vec{\alpha}}$ of size $\poly(n,\nbar^{\log h})$, and the network for doing so is uniform and of size 
		\mm{
		\poly(n,\nbar^{\log h})\cdot h\cdot\poly(m,r,s',d') \le \poly(n,\nbar^{\log h},s',d',\log(s))
		}
		and degree $\poly(m,r,d')=\poly(d',\log(s))$, using $p$\ts{th} root gates. (We used the fact that the field family $\F$ is feasible to argue that this network is uniform, i.e. a Turing machine can hard-code grid elements from $\F_n$ for the interpolation step.) The wiring of the functionality computing a description of $C'_i$ (from $\vec{\Lambda}=D$, $\vec{\alpha}$, and the circuits for $f^{(j)}_{\vec{\alpha}}$) does not depend on the inputs to $\Nin_n$, and can thus be computed by the uniform machine that constructs $\Nin_n$ and hard-wired into $\Nin_n$ (without increasing the degree, and with size polynomial in $|C'_i|$). Thus, overall, computing the description of $C'_i$ can be done by a uniform network of size $\poly(n,\nbar^{\log h},s',d',\log(s))$ and degree $\polylog(s)$ (relying on \Cref{prop:comp-arith-network}).

		\medskip\noindent\underline{Factoring.} The last step is applying~\Cref{cor:kaltofen arith network}. The bounds on the number $t_i$ of output circuits and on the size and degree of each output circuit were established in the description above (i.e., they follow from Eqs.~\eqref{eq:ki:recon:grid} and~\eqref{eq:ki:recon:grid:deg} and from~\Cref{cor:kaltofen arith network}). The size of the network in \Cref{cor:kaltofen arith network} is at most $\poly(|S_{i+1}| + 1, \deg(C_i'),\size(C_i'),\log(q)) = \poly(n,\nbar^{\log h},d,\log(s),\log(q))$ and its degree is at most $\poly(|S_{i+1}| + 1, \deg(C_i'),p)\le\poly(d,h,\log s,p)$, 
		where we again use the bounds on the size and degree of $C_i'$ from Eqs.~\eqref{eq:ki:recon:grid} and~\eqref{eq:ki:recon:grid:deg}.

		\medskip\noindent\underline{Error degree.} To see the error degree bound, we note that the sources of randomness in $\Nin_n$ are (i) the choice of the vector $\vec{\alpha}$ to fix the irrelevant variables in $C_i$ (which has degree at most $d\cdot \deg(f) \leq d\cdot h\cdot \polylog(s)$), and (ii) the execution of the factoring arithmetic network from \Cref{cor:kaltofen arith network}, which has error degree $\poly(d,h,\log \nbar)$.
		From \Cref{lem:sz} applied to $C_i(\vec{z},\vec{x})$ when viewed as a polynomial over the field extension $\F(\vec{z}|_{S_{i+1}},x_{i+1})$, if $C_i$ is a nonzero polynomial, then fixing the irrelevant variables of $C_i$ to uniformly random elements of a set $S \subseteq \F_n$ results in the zero polynomial with probability $\deg(C_i) / |S|$.
		It follows that the sub-network that computes the description of $C_i'$ has error degree at most $\deg(C_i) \le d \cdot h \cdot \polylog(s)$.
		Therefore, the final error degree bound follows from \Cref{prop:error-deg-comp}.
	\end{subproof}
	
	The next subclaim describes the behavior of the network $\calN$ when testing each candidate circuit $C_{i,k}$ to see if it computes a $p$\ts{th} power of $f$.
	
	\begin{subclaim}[testing candidate circuits] \label{clm:ki:recon:test}
		There is a $\P$-uniform randomized arithmetic network $\calT$ with $p$\ts{th} root gates that gets as input $\vec{\Lambda}$, natural numbers $\beta, \gamma \in \N$, the description of a size $s'$, degree $d'$ arithmetic circuit that computes $f_{i',j'-1}^{p^\beta}$, and the description of a circuit $C$ of size $s''$ and degree at most $d''$ computing a $(3m+r)$-variate polynomial $g^{p^\gamma}$, and outputs a boolean value $b$ which indicates whether $f = g$ as polynomials.
		Moreover, $\calT$ has size $\poly(s',s'',d'',h)$, degree at most $\poly(d',d'')$, and error degree at most $\max(d', d'')$.
	\end{subclaim}
	
	\begin{subproof}
		The network first generates a random point $\vec{\alpha} \in \F^{3m+r}$ using $3m+r$ values from its set of random inputs.
		The network then evaluates $f^{p^\beta}$ and $g^{p^\gamma}$ at $\vec{\alpha}$.
		The value of $g(\vec{\alpha})^{p^\gamma}$ is obtained by applying the network of \Cref{prop:universal-arith-network} to the given description of the circuit for $g^{p^\gamma}$.
		The value of $f(\vec{\alpha})^{p^\beta}$ is computed by applying the network of \Cref{clm:ind-recons:dsr} to the given circuit that computes $f_{i', j'-1}^{p^\beta}$.
		To obtain the values of $f(\vec{\alpha})$ and $g(\vec{\alpha})$, the network applies $\beta$ copies of a $p$\ts{th} root gate to the value of $f(\vec{\alpha})^{p^\beta}$, and likewise applies $\gamma$ copies of a $p$\ts{th} root gate to the value of $g(\vec{\alpha})^{p^\gamma}$.
		Finally, the network feeds the difference of these two arithmetic values into a test gate and returns the value output by this test gate.
		
		The degree and size bounds on the network, as well as the uniformity, follow from the corresponding bounds on the network of \Cref{prop:universal-arith-network}.
		The error degree bound follows from \Cref{lem:sz} and the observation that $\deg(f - g)\leq \max\{d',d''\}$.
	\end{subproof}

	Finally, the subclaim below describes how the network $\calN$ aggregates the results of the tests across different branches to choose one circuit that computes a $p$\ts{th} power of $f$.
	
	\begin{subclaim}[choosing a circuit according to indicator values] \label{clm:ki:recon:ind}
		There is a $\P$-uniform arithmetic network $\calS$ that takes as input a sequence of descriptions of circuits $C_1,\ldots, C_t$ and boolean values $b_1,\ldots,b_t$, and outputs the description of the circuit $C_i$ with the smallest index $i$ such that $b_i = 1$ (we do not care about the output, if none of the $b_i$ are $1$). Moreover, $\calS$ has linear size and degree $O(1)$.
	\end{subclaim}
	\begin{subproof}
		Note that each description consists of an arithmetic part and a boolean part. Let $(\vec{\alpha}_i, s_i)$ denote the description of $C_i$, where $\vec{\alpha}_i$ is the arithmetic part with each of its coordinates in $\F_n$, and $s_i$ is  a boolean string denoting its boolean part. Moreover, using padding by $0\in \F_n$ (respectively, the boolean value $0$) if necessary, let us assume without loss of generality that the length of the arithmetic part (respectively, boolean part) of the description of $C_i$ is the same for every $i\in[t]$.
		
		The task then is to construct an arithmetic network that outputs $(\vec{\alpha}_{i^*}, s_{i^*})$ for the smallest index $i^*\in[t]$ for which $b_i = 1.$ Because $i^*$ is unique, the task can be performed independently for the boolean and the arithmetic part. We note that the boolean part of this task is easily performed by a linear sized $\P$-uniform boolean circuit with $\land, \lor, \neg$ gates. 
		
		Next, we show that there is a simple $\P$-uniform arithmetic network gadget that can output the \textit{first} coordinate of $\vec{\alpha}_{i^*}$. This suffices for the proof of the proposition, as one can then simply repeat the gadget for all the coordinates. The idea is to repeatedly use $\select$ gates to achieve this. Notice that if $t=2$, then simply $\select(b_1, \alpha_{2,1},\alpha_{1,1})$ works, and if $t = 3$, then $\select(b_1,\select(b_2,\alpha_{3,1},\alpha_{2,1}), \alpha_{1,1})$ works. Extrapolating this pattern, we conclude that there is a $\P$-uniform linear sized arithmetic network that solves this task.
	\end{subproof}
	
	Given these claims, we can now  provide a complete description of $\calN_n$ as follows. The output of $\calN_n(\vec{\Lambda})$ is provided by the output of $\calS$, which in turn, receives as input:
	\begin{enumerate}[(i)]
		\item The output of the execution of $\Nin_n$ on each of the indices $i\in[\nbar-1]$, i.e., the descriptions of the circuits $C_{i,1},\ldots,C_{i,t_i}$, and
		\item A boolean value $b_{i,k}$ corresponding to each circuit $C_{i,k}$, which is in turn obtained as the output of $\calT$ when instantiated with the description of $C_{i,k}$.
	\end{enumerate}
	We stress that every randomized arithmetic network that is used as a subroutine in the description above uses its own block (i.e., a fresh block) of random arithmetic inputs from the original set of random arithmetic inputs that is given to $\calN_n$.
	
	\paragraph{Complexity of $\calN_n$.} 
	By~\Cref{clm:ki:recon:branch}, each circuit $C_{i,k}$ fed into the network from~\Cref{clm:ki:recon:test} is of size $s''=\poly(n,\nbar^{\log h},d)$ and degree at most $d''=O(d\cdot h\cdot\log(\nbar))$, and the number of output circuits in each of the $\nbar$ branches is at most $O(d\cdot h\cdot\log(\nbar))$. Hence, combining Claims~\ref{clm:ki:recon:branch},~\ref{clm:ki:recon:test}, and~\ref{clm:ki:recon:ind}, the total size of $\calN_n$ is at most $\poly(n, d,\nbar^{\log h}, s',d',\log(q),\log(s))$.
 
    To see the degree bound, first note that the degree of $\calT$ (the network from \Cref{clm:ki:recon:test}) in each of its instantiations is bounded by $\poly(d',d'') = \poly(h,d,\log \nbar,d')$. Also, note that since instantiation of $\calT$ that is executed in order to compute the \emph{tuple} of boolean values $(b_{i,k})_{i\in[\nbar - 1],k\in[t_i]}$ is run in parallel inside $\calN_n$, the degree of the (multi-output) sub-network that computes $(b_{i,k})_{i\in[\nbar - 1],k\in[t_i]}$ is also bounded by the degree of $\calT$ in any of its instantiations, and in particular, is also bounded by $\poly(h,d,\log \nbar,d')$ (by \Cref{prop:concat-arith-networks}). Next, the degree of $\Nin_n$ (the network from \Cref{clm:ki:recon:branch}) in each of its instantiations is bounded by $\poly(h,d,p,\log(s),d')$, and as they are each run in parallel inside $\calN_n$, the degree of the network that computes the overall set of candidate circuit descriptions $\{\langle C_{i,k}\rangle|i\in[\nbar-1],k\in[t_i]\}$ is also bounded by $\poly(h,d,p,\log(s),d')$ (by \Cref{prop:concat-arith-networks}). Finally, since $\calN_n$ is the composition of $\calS$ (which has degree $O(1)$) with these networks, the claimed overall degree bound of $\poly(h,d,p,\log(s),d')$ follows from \Cref{prop:comp-arith-network}.
	
	To see the error degree bound, using the same analysis as above, and once again using \Cref{prop:concat-arith-networks} and \Cref{prop:error-deg-comp} and the three claims above, it follows that the error degree of $\calN_n$ is bounded by $\poly(\nbar,h,d,\log(s),d')$, where we get the additional factor of $\nbar$ from an application of \Cref{prop:concat-arith-networks}, since each of the $\nbar$ instantiations of $\Nin_n$ are executed in parallel.%
	
	\paragraph{The ``furthermore'' part.}
        The proof is essentially identical to that of the basic case. 
        The only difference is that to prove \Cref{clm:ind-recons:dsr}, we now rely on the downward self-reducibility relation between $f_{i-1,2m+r}$ and $f_{i,0}$ as specified in \Cref{item:sumcheck} of \Cref{def:poly-dec}.
        Recalling that $f_{i-1,2m+r}(\vec{w}) = f_{i-1}(\vec{w})$, we have the equality
        \begin{align*}
            f_{i,0}(\vec{w},\vec{\sigma}) &\coloneqq 
                \begin{cases}
                \Phi_i(\vec{w}, \sigma_{1},\ldots,\sigma_{t}) \del{f_{i-1}(\sigma_{1},\ldots,\sigma_m) + f_{i-1}(\sigma_{m+1},\ldots,\sigma_{2m})}, &\text{if $i$ is odd}\\
                \Phi_i(\vec{w}, \sigma_{1},\ldots,\sigma_{t}) \del{f_{i-1}(\sigma_{1},\ldots,\sigma_m) \cdot f_{i-1}(\sigma_{m+1},\ldots,\sigma_{2m})}, &\text{if $i$ is even.}\\
                \end{cases}
        \end{align*}
        Taking $p^{\beta}$-th powers and using the fact that $(a + b)^p = a^p + b^p$ over a field of characteristic $p$, we have
        \begin{align*}
            f_{i,0}(\vec{w},\vec{\sigma})^{p^\beta} &=
                \begin{cases}
                \Phi_i(\vec{w}, \sigma_{1},\ldots,\sigma_{t})^{p^\beta} \del{f_{i-1}(\sigma_{1},\ldots,\sigma_m)^{p^\beta} + f_{i-1}(\sigma_{m+1},\ldots,\sigma_{2m})^{p^\beta}}, &\text{if $i$ is odd}\\
                \Phi_i(\vec{w}, \sigma_{1},\ldots,\sigma_{t})^{p^\beta} \del{f_{i-1}(\sigma_{1},\ldots,\sigma_m)^{p^\beta} \cdot f_{i-1}(\sigma_{m+1},\ldots,\sigma_{2m})^{p^\beta}}, &\text{if $i$ is even.}\\
                \end{cases}
        \end{align*}
        Thus, given circuits for $f_{i-1}^{p^\beta}$ and $\Phi_i^{p^\beta}$, we can evaluate $f_{i,0}^{p^\beta}$ at a point of our choosing using a network with similar complexity bounds to \Cref{clm:ind-recons:dsr}.
        By assumption, we have such a circuit that computes $f_{i-1}^{p^\beta}$.
        A circuit of size $\polylog(s) \cdot \beta \log(p) \le \polylog(s, d)$ for $\Phi_i^{p^\beta}$ can be obtained by using repeated squaring to take the $p^{\beta}$-th power of the circuit for $\Phi_i$ given by \Cref{clm:arithmetization}.

\subsubsection{The reconstruction in $\sigma=0$ mode}

We now describe the reconstruction procedure when it is instantiated in ``mode'' $\sigma=0$. The proof is exactly identical to that of~\Cref{prop:recons-arith-net'}, the only difference being in the parameterization: Instead of using $h=2,m=\lceil \log(s) \rceil,r=\polylog(s)$ we now use $r=0$ and arbitrary $h,m$ such that $h^m\ge \size(C'_n)$. This yields the following result:

\begin{proposition}  [the reconstruction for~\Cref{thm:ki:tarhsg} with $\sigma=0$] \label{prop:recons-arith-net'}
    Let $\set{C_n}_{n \in \N}$ be a $\log^c$-uniform family of multi-output algebraic circuits of size $s(n)$ having $n$ output gates, defined over a feasible sequence of fields $\set{\F_n}_{n \in \N}$, where each $\F_n$ has order $q(n) \le \exp(n^{O(1)})$, and let $\eps>0$ be an arbitrarily small constant. Let $\vec{\Lambda} \in \F^{n}_{q(n)}$, and let $\calG_{\vec{\Lambda}}^{(h,m,0,\nbar)}$ be the generator described in \Cref{cons:generator:specific}, with the $r=0$ instantiation.
    Then, there is a $\P$-uniform family of randomized arithmetic networks $R=\{R_n\}_{n\in\N}$ with $p$\ts{th} root gates that satisfies the following.
    \begin{enumerate}
        \item 
            The input to $R_n$ is a vector $\vec{\Lambda}\in\F_n^n$ that is the universal encoding %
            of an arithmetic circuit $D$ that computes an $\nbar$-variate, degree $d$ nonzero polynomial and satisfies $D\circ \calG_{\vec{\Lambda}}^{(h,m,r,\nbar)}(y_1, y_2, \vec{z}) = 0$, and a candidate output $w\in \F_n^n$.
        \item
            The size of the network $R_n$ is at most $s^{\eps}\cdot\poly(n,d,\nbar^{\log\log(s)}, \log(s))$.
        \item The degree of $R_n$ is at most $(n\cdot d\cdot h\cdot p)^{\polylog(s)}$ 
        (allowing $p$\ts{th} root gates, and assuming that $\F$ is efficiently constructible).
        \item
            $R_n(\vec{\Lambda})$ computes the vector $C_n(\vec{\Lambda}) \in \F_n^{n}$ with error degree at most $\poly(d,h,\nbar,\log(s))$.
    \end{enumerate}
\end{proposition}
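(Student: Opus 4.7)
The plan is to mimic the proof of \Cref{prop:recons-arith-net} essentially step-for-step, adjusting only for the change of parameters from $(h=2,\, m=\lceil\log s\rceil,\, r=\polylog(s))$ to the regime where $r=0$, $m=O(\log\nbar)$, and $h=2^{\log(s')/O(\log\nbar)}$, so that $h^{\log\nbar}\le s^{\eps'}$ for a suitably small constant $\eps'$ that we will tie to the parameter $\eps$ in the statement. The network $R_n$ will again, using query access to a small circuit for the preceding layer polynomial (plus $p$\ts{th} root gates, as in \Cref{clm:ind-recons:dsr}), iteratively build arithmetic circuits of controlled size for the $(h,m,0)$-decomposition polynomials $f_0,f_{1,0},\ldots,f_{\Delta,2m}=f_\Delta$ provided by \Cref{prop:poly-dec-cons}, and will finally evaluate $f_\Delta$ on the first $n$ points of $H^m$ to recover $C_n(\vec{\Lambda})$.

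All of the structural ingredients carry over unchanged: the distinguisher-to-predictor hybrid over $i\in\{0,\ldots,\nbar-1\}$, the fixing of irrelevant variables by drawing from the random input set (as in \Cref{clm:ki:recon:branch}), the explicit construction of a circuit for $C'_i$ via Lagrange interpolation of each $f^{(j)}_{\vec{\alpha}}$ on a grid $T^{S_j\cap S_{i+1}}$, the invocation of \Cref{cor:kaltofen arith network} to extract a factor computing some $p^e$-th power of $f$, the candidate weeding via the $\pit$-style test of \Cref{clm:ki:recon:test}, and the aggregation of \Cref{clm:ki:recon:ind}. The only quantitative changes stem from the $r=0$ instantiation of \Cref{prop:poly-dec-cons}: each $f_{i,j}$ now has $3m=O(\log\nbar)$ variables and total degree $h\cdot\polylog(s)$, so the set $T$ used for interpolating $f^{(j)}_{\vec{\alpha}}$ must have size $O(h\cdot\polylog(s))$ rather than $O(1)$, and the interpolation grid has size $(h\cdot\polylog(s))^{\log\nbar}=h^{\log\nbar}\cdot(\log s)^{O(\log\nbar)}\le s^{\eps'}\cdot\nbar^{O(\log\log s)}$. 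Correspondingly, the circuit for $C'_i$ fed to Kaltofen has size $\poly(n,d,\nbar,s^{\eps'},\nbar^{\log\log s})$ and degree $\poly(d,h,\log\nbar)$; propagating this quantity through the $O(\nbar)$ parallel branches and the $O(m\Delta)=\polylog(s)$ sequential applications of the analogue of \Cref{lem:ind-recons}, and choosing $\eps'=\eps/K$ for the universal constant $K$ implicit in the polynomial, yields the claimed bound $s^{\eps}\cdot\poly(n,d,\nbar^{\log\log s},\log s)$. The degree and error-degree bounds propagate through \Cref{prop:concat-arith-networks} and \Cref{prop:error-deg-comp} exactly as before, producing $(n\cdot d\cdot h\cdot p)^{\polylog(s)}$ and $\poly(d,h,\nbar,\log s)$ respectively.

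The main obstacle I foresee, and the only bookkeeping point that requires care, is the inflation of the downward self-reducibility subroutine analogous to \Cref{clm:ind-recons:dsr}: the one-step sumcheck now aggregates $h$ rather than $2$ values of $f_{i,j-1}$, and the circuit size for $\Phi_i$ grows from $\polylog(s)$ to $h\cdot\polylog(s)$ (see \Cref{clm:arithmetization'}). Consequently each downward-reducibility evaluation costs an extra factor of $h$ in size and $h\cdot\polylog(s)$ in degree per layer; we need to verify that these factors are absorbed into the $s^{\eps'}\cdot\nbar^{O(\log\log s)}$ budget already incurred by the interpolation grid (which they are, since we invoke the subroutine only $\polylog(s)$ times across the whole iteration, and only within arithmetic steps whose complexity is already dominated by the grid). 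Once this is checked, the proof is a routine rerun of the argument for \Cref{prop:recons-arith-net} with the new parameters substituted into the same complexity accounting.
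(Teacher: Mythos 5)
Your proposal is correct and follows essentially the same route as the paper's proof: re-run the argument for $\sigma=1$ with the new parameter choices, and observe that the dominant quantitative change is in the interpolation of $f^{(j)}_{\vec{\alpha}}$, where the loss of the $O(h)$ individual-degree bound (available only in the $r>0$ instantiation via \Cref{clm:arithmetization}) forces one to interpolate on a grid of side $h\cdot\polylog(s)$ rather than $O(h)$, yielding a grid of size $(h\cdot\polylog(s))^{\log\nbar}=s^{\Theta(\eps)}\cdot\nbar^{O(\log\log s)}$; the paper carries out the same accounting, also choosing the constant $c$ in $h=2^{\log(s')/(c\log\nbar)}$ as a function of $\eps$ so that the resulting size is $s^{\eps}\cdot\poly(n,d,\nbar^{\log\log s},\log s)$. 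Your side remark about the $h$-fold blow-up in the downward-self-reducibility step and the $h\cdot\polylog(s)$-sized circuit for $\Phi_i$ is a valid sanity check (these factors are indeed dominated by the interpolation grid); the paper does not spell it out separately but it is implicit in the size bound of \Cref{clm:ind-recons:dsr}.
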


\bproof
We instantiate~\Cref{cons:generator:specific} with $r=0$ and with $h=2^{\log(s')/(c\cdot\log(\nbar))}$ and $m=c'\cdot(\log(\nbar))$, where $c'>c>1$ are sufficiently large constants that depend on $\eps>0$. We follow the proof of~\Cref{prop:recons-arith-net}, using the new values for $m,h,r$. The changes are as follows:
\begin{enumerate}
	\item The circuit for $f_0$ is of size $O(n\cdot m\cdot h\cdot\polylog(s))=n\cdot s^{1/O(\log(\nbar))}\cdot\polylog(s)<s^{\eps}\cdot n$.
	\item The number of applications of~\Cref{lem:ind-recons} is $O(m\cdot\Delta)<\polylog(s)$.
	\item The main difference is in the complexity bounds of~\Cref{lem:ind-recons}. Going through~\Cref{sec:ki:recon:step}, when the network constructs a circuit for $C'_i$, the individual degree cannot be assumed to be $O(h)$ anymore, and thus we use the overall degree bound of $h\cdot \polylog(s)$ for $f_{i,j}$ given in Item~\eqref{item:ideg} of~\Cref{prop:poly-dec-cons}. The size of each $f^{(j)}_{\vec{\alpha}}$ thus becomes $(h\cdot\polylog(s))^{\log(\nbar)}$ and its degree becomes $h\cdot\polylog(s)\cdot\log(\nbar)$.
	
	Thus, in~\Cref{clm:ki:recon:branch}, the size of each $C_{i,k}$ becomes $\poly(n,(h\cdot\polylog(s))^{\log(\nbar)})$ and its degree is at most $\poly(h,d,\log(s))$. (And these are the size and degree bounds on the circuit that $\Nin_n$ outputs, since it just outputs one of the $C_{i,k}$'s.)	The size and degree of $\Nin_n$ increases appropriately, and thus the size of $\calN_n$ is at most $\poly(n,d, (h\cdot\polylog(s))^{\log(\nbar)}) ,s',d',\log(s),\log(q))$.
\end{enumerate}

Replacing the calculation in Eq.~\eqref{eq:ki:recon:sizefinal}, the final size of $R_n$ is now 
\[
	(2m+r)\cdot\Delta\cdot\poly(n,d,(h\cdot\polylog(s))^{\log(\nbar)}) < \poly(n,d,(h\cdot\polylog(s))^{\log(\nbar)},\log(q)),
\]
and our claimed size bound follows since, by an appropriate choice of constant $c=c(\eps)>1$ for $h=2^{\log(s')/(c\cdot\log(\nbar))}$, we have
\[
	(h\cdot\polylog(s))^{\log(\nbar)} = h^{\log(\nbar)}\cdot (\polylog(s))^{\log(\nbar)} = s^{\eps}\cdot\polylog(s)^{\log(\nbar)}.
\]

Next, to analyze the error degree of $R_n$, we first note that the degree bound on $C_i'$ that is mentioned in \eqref{eq:ki:recon:grid:deg} changes to $\poly(h,d,\log s)$ as the individual degree cannot be assumed to be $O(h)$ anymore, and instead, we use a bound of $h\cdot \polylog(s)$ for $\deg(f_{i,j})$. However, in~\Cref{clm:ki:recon:branch}, the functional degree remains $\poly(h,d,\log s)$ as argued at the end of its proof (note that randomly fixing the irrelevant coordinates still yields an upper bound of $\poly(h,d,\log s)$). Next, we observe that while the size of the output circuits of \Cref{clm:ki:recon:branch} increases to $\poly(n,(h\cdot\polylog(s))^{\log(\nbar)})$, its degree remains at most $\poly(h,d,\log(s))$. This is why subsequent applications of \Cref{clm:ki:recon:test} (the only remaining source of randomness now in $\calN_n'$ after \Cref{clm:ki:recon:branch}) provide identical asymptotic upper bounds on the error degree, as we plug in an identical bound for $d''$.

Finally, we perform a similar analysis to conclude that the degree bound remains identical in each iteration of \Cref{lem:gkss-ind-recons}, and hence remains identical overall. 
\eproof

\subsubsection{Simulating $p$\ts{th} root gates} \label{sec:ki:recon:pth}

The description of the reconstruction procedure so far assumed access to $p$\ts{th} root gates, and assumed that the field family is constructible. We now explain how the construction differs when we do not assume this. In both cases, the main difference is the degree bound on the network.

First, when the field family is not necessarily constructible, the degree bound in~\Cref{cor:kaltofen arith network} increases from $\poly(n,d,p)$ to $n^{O(1)}\cdot d^{O(\log q)}$. This increases the degree bounds in~\Cref{clm:ki:recon:branch} to $(h\cdot d\cdot \log s)^{O(\log q)}\cdot\poly(d')$, which in turn increases the degree bound in~\Cref{lem:ind-recons} to $(h\cdot d\cdot \log s)^{O(\log q)}\cdot\poly(d_{i,j-1})$, and the final degree bound for the reconstruction network becomes $O(n\cdot d\cdot h)^{\log(q)\cdot\polylog(s)}$. (The calculation is identical for the $\sigma=1$ mode and for the $\sigma=0$ mode.)

To finish the proof, we now explain how to simulate $p$\ts{th} root gates with standard arithmetic gates. Each gate can be simulate with a degree increase of at most $q$, but naively simulating all gates might increase the network's degree by a factor of $q$ raised to the network's size. We now explain how to avoid this, paying only $q^{\polylog(s)}$.

\paragraph{A single iteration.}
The arithmetic networks $\calN$ and $\calN'$ constructed in \Cref{lem:ind-recons} make use of $p$\ts{th} root gates in two places.
    One is to obtain descriptions of circuits that compute the restricted polynomials $f^{(j)}_{\vec{\alpha}}$.
    The need to take $p$\ts{th} roots here arises because we interpolate circuits for the $f^{(j)}_{\vec{\alpha}}$ using Equation \eqref{eq:f-interpolation}, but we only have access to $p$\ts{th} powers of the constants $f(\vec{\beta} \circ \vec{\alpha})$ appearing in that expression.
    The other use of $p$\ts{th} root gates is in \Cref{clm:ki:recon:test}, where we test if a candidate circuit indeed computes a $p$\ts{th} power of $f$.

    As remarked following the statement of \Cref{thm:ki:tarhsg}, we can implement a $p^{\beta}$-th root by the operation $x \mapsto x^{q/p^\beta}$, where $q$ is the size of the underlying field.
    In one invocation of \Cref{lem:ind-recons}, these $p$\ts{th} root operations can be batched into two parallel steps: one for the interpolation of circuits that compute the polynomials $f^{(j)}_{\vec{\alpha}}$, and one to perform testing of candidate circuits.
    Implementing a single $p^\beta$-th root operation increases the degree of a network by a multiplicative factor of at most $q$.
    By \Cref{prop:concat-arith-networks}, each batch of $p$\ts{th} root operations performed in parallel can be simulated with the same multiplicative increase of $q$ in the degree.
    Using \Cref{prop:error-deg-comp}, simulating two parallel batches of $p$\ts{th} root operations increases the degree of the network by a factor of $q^2$.

    \paragraph{The overall procedure.}
    The reconstruction procedure invokes \Cref{lem:ind-recons} a total of $(2m + r) \cdot \Delta \le \polylog(s)$ times in sequence, and uses $p$\ts{th} root gates one last time (in parallel) after the final iteration.     By \Cref{prop:error-deg-comp}, the degree of the reconstruction network is bounded by the product of the degrees of the networks used at each step of the reconstruction.
    We saw that simulating the $p$\ts{th} root operations in a single application of \Cref{lem:ind-recons} incurs a degree blowup of $q^2$ in each step, and there are $\polylog(s)$ steps, so the degree of the full reconstruction procedure increases by a multiplicative factor of $q^{\polylog(s)}$ when simulating $p$\ts{th} root operations with multiplications in the field.

\section{Proofs of the main results} \label{sec:pit}

Using the targeted generators from~\Cref{thm:gkss-tarhsg} and~\Cref{thm:ki:tarhsg}, we will now prove the main results. In~\Cref{sec:pit:necessary} we show that hardness for arithmetic networks with $\pit$ gates on almost all inputs is necessary for $\pit$, and in~\Cref{sec:pit:sufficient} we show that very similar hardness assumptions suffice for $\pit$.

\paragraph{Uniformity over sequences of finite fields.}
Consider an arithmetic circuit $C_n$ over a finite field $\F_n$. We %
would like to perform basic operations on $C_n$, which may increase its size or its degree, while still considering the modified circuit $C'_n$ as over $\F_n$. In particular, after the modification we may wish to perform $\pit$ on $C'_n$, or to evaluate some hard polynomial on (the description of) $C'_n$. This should be a triviality, but it runs into an artificial complication when considering uniform families of circuits computing $\pit$ or hard polynomials. Specifically, if we consider a uniform family of functions $f=\set{f_n}$ over a sequence of finite fields $\set{\F_n}$, where $f_n$ is defined over $\F_n$, then we cannot (for purely syntactic reasons) start with a description of $C_n$ over $\F_n$, modify it to $C'_n$ of size $n'$ over $\F_n$, and evaluate $f_{n'}$ (or $f_n$) on $C'_n$.

Indeed, this syntactic dependency is artificial and overly strict, and to avoid it, we will consider a more general notion that allows a polynomial gap between the index of the field and the size of the circuit. Specifically, we say that $\sft\colon\N\rightarrow\N$ is a {\sf shift function} if $\sft$ is increasing, and $\sft(n)\le n$ for all $n\in\N$, and $\sft$ is computable on input $n$ in time $\poly(n)$. We say that a shift function is {\sf polynomially bounded} if $\sft(n)=n^{\Omega(1)}$. For a sequence $\F=\set{\F_n}_{n\in\N}$ of finite fields and a shift function $\sft$, we define $\F^{\sft}=\set{\F^{\sft}_{n}}_{n\in\N}$ as the sequence of fields wherein $\F^{\sft}_{n}=\F_{\sft(n)}$. Then, when we think of applying operations on uniform circuit families over $\F$, we allow any polynomial slack, i.e. we allow circuits of size $n$ over $\F_{\sft(n)}$. For example, for $\pit$:

\begin{definition} [solving $\pit$ over finite fields]
Let $\F=\set{\F_n}$ be a sequence of finite fields. We say that a function\footnote{The notation $\F^*$ here means strings over the alphabet $\F$ of arbitrary length.} $f\colon\cup_n(\F_n)^*\rightarrow\{0,1\}$ {\sf solves $\pit$ over $\F$} if for any polynomially bounded shift function $\sft$, and every sufficiently large $n\in\N$, and every $\vec{\Lambda}\in\F_{\sft(n)}^n$ describing an arithmetic circuit of size and degree at most $n$, the output is $f(\vec{\Lambda})=0$ if and only if the circuit described by $\vec{\Lambda}$ computes the zero polynomial.
\end{definition}

Similarly, when we will consider uniform families of arithmetic circuits (computing hard polynomials), we will also consider them with respect to all polynomially bounded shifts. We stress that both for $\pit$ and for hard polynomials, we consider shifts both in the assumption and in the conclusion of all our results.

\subsection{A necessary assumption for PIT} \label{sec:pit:necessary}

The following result asserts that hardness for networks with $\pit$ gates on all but finitely many inputs is necessary for $\pit$. For convenience, we state the result simultaneously for fields of characteristic zero and for finite fields (i.e., we do not separate the cases).

\begin{theorem} [PIT requires hardness for arithmetic networks with PIT gates on almost all inputs] \label{prop:pit:necessary}
Let $\F=\set{\F_n}_{n\in\N}$ be a sequence of fields, where either the fields are all finite, or there is a field $\F$ of characteristic zero such that $\F_n=\F$ for all $n\in\N$. Then, the following conditional implication holds.
\begin{itemize}
	\item {\bf Assumption:} There is a $\P$-uniform family of arithmetic networks of polynomial size solving $\mathsf{PIT}$ for arithmetic circuits over $\F$. 
	\item {\bf Conclusion:} For every $k\in\N$ and every polynomially bounded shift function $\sft$ there is a $\P$-uniform family of arithmetic networks $\set{F_n}_{n\in\N}$ of polynomial size over $\F^{\sft}$ satisfying the following. For every $n^k$-time-uniform family  of arithmetic networks with $\pit$ gates $\set{C_n}$ of size $n^k$ and degree $n^k$ over $\F^{\sft}$, and every sufficiently large $n\in\N$, and every $\vec{\Lambda}\in \F_{\sft(n)}^n$, we have $C_n(\vec{\Lambda})\ne F_n(\vec{\Lambda})$.
\end{itemize}
\end{theorem}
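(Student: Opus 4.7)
The plan is to prove \Cref{prop:pit:necessary} by a straightforward diagonalization, using the assumed deterministic PIT algorithm to evaluate networks with $\pit$ gates. Fix $k\in\N$ and a polynomially bounded shift function $\sft$. I will construct a $\P$-uniform family $\{F_n\}$ of arithmetic networks over $\F^{\sft}$, where $F_n$ takes input $\vec{\Lambda}\in\F_{\sft(n)}^n$ and produces an output in $\F_{\sft(n)}^n$ that disagrees with every ``small'' $n^k$-time-uniform family of networks with $\pit$ gates.

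The construction of $F_n$ is as follows. Fix an efficient enumeration of Turing machines. On input $\vec{\Lambda}\in\F_{\sft(n)}^n$, the network $F_n$ first simulates the first $n$ Turing machines, each for $n^k$ steps, to obtain (up to) $n$ standard descriptions of arithmetic networks with $\pit$ gates $C_n^{(1)},\ldots,C_n^{(n)}$ of size and degree at most $n^k$ and with $n$ output coordinates over $\F_{\sft(n)}$ (filling in trivial descriptions if some machine fails to print a valid description). Then $F_n$ evaluates each $C_n^{(i)}(\vec{\Lambda})$, using the universal evaluator of \Cref{prop:universal-arith-network-for-networks} to traverse $C_n^{(i)}$ gate-by-gate, and whenever a $\pit$ gate is encountered, invokes the assumed $\P$-uniform polynomial-time $\pit$ network (working over $\F^{\sft}$) to compute its output. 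Finally, $F_n$ outputs $(\sigma_1,\ldots,\sigma_n)$ where $\sigma_i$ is a trivial modification of the $i$-th output coordinate of $C_n^{(i)}(\vec{\Lambda})$, e.g.\ $\sigma_i = C_n^{(i)}(\vec{\Lambda})_i + 1$ (or, for Boolean outputs, the negation).

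The key verifications are routine. First, $F_n$ has size $\poly(n)$: the $n$ simulations each cost $\poly(n^k)$ time; each of the $n$ universal evaluations is of size $\poly(n^k)$ (\Cref{prop:universal-arith-network-for-networks}); and each invocation of the assumed $\pit$ algorithm contributes $\poly(\cdot)$ many further gates, with the input to each such invocation being a description (of size $\poly(n^k)$) of an arithmetic subcircuit produced along the evaluation of some $C_n^{(i)}$. The whole construction is $\P$-uniform since the Turing machine enumeration, the universal evaluator, and the $\pit$ algorithm are all $\P$-uniform. Second, for correctness of the diagonalization: for any $n^k$-time-uniform family $\{C_n\}$ of arithmetic networks with $\pit$ gates of size $n^k$ over $\F^{\sft}$, the printing machine is, say, the $i^{\text{th}}$ Turing machine in our enumeration; for all $n\ge i$ it is included among the diagonalized machines, and by construction the $i$-th output coordinate of $F_n(\vec{\Lambda})$ differs from the $i$-th output coordinate of $C_n(\vec{\Lambda})$ for every $\vec{\Lambda}$, so $C_n(\vec{\Lambda})\neq F_n(\vec{\Lambda})$ on every input.

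The only mildly delicate point I anticipate is the interface between the shift function $\sft$ and the assumed $\pit$ algorithm: the subcircuits fed to the $\pit$ gates during the simulation of $C_n^{(i)}$ are of size $\poly(n^k)$ over $\F_{\sft(n)}$, so they must be handled by the $\pit$ algorithm on input length $n'=\poly(n^k)$ with a shift $\sft'$ satisfying $\sft'(n')=\sft(n)$. This is exactly why the definition of ``solves $\pit$ over $\F$'' is stated for \emph{every} polynomially bounded shift function, and it ensures no mismatch arises. Apart from this bookkeeping, no nontrivial ideas are needed, and the result follows.
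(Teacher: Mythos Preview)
Your proposal is correct and follows essentially the same approach as the paper: enumerate the first $n$ Turing machines, simulate each for $n^k$ steps to obtain the networks $C_n^{(i)}$, replace their $\pit$ gates by the assumed deterministic $\pit$ network (instantiated with the appropriate shift $\sft'$ so that $\sft'(n^k)=\sft(n)$), evaluate each on $\vec{\Lambda}$ via the universal evaluator, and output the diagonal coordinates plus one. The only cosmetic difference is that the paper phrases the construction as having the Turing machine printing $F_n$ hard-wire the descriptions $D^{(i)}$ (with $\pit$ gates already replaced) into $F_n$, rather than describing $F_n$ itself as performing the simulation---but since the Turing-machine simulation is input-independent, this is the same construction.
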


\bproof
We prove the claim for the case of finite fields $\F=\set{\F_n}$. The proof for fields of characteristic zero is almost identical, except that the function $\sft$ does not matter in that case.

For $k\in\N$ and a shift function $\sft$, let $A=A_{k,\sft}$ be the following Turing machine. On input $1^n$, the machine $A$ simulates each of the first $n$ Turing machines (according to some predetermined efficient enumeration of TMs), each for $n^k$ steps. These machines print $n$ arithmetic networks with $\pit$ gates, each with $n$ input gates and $n$ output gates, denoted by $\set{C^{(i)}\in\F_{\sft(n)}[\vec{x}]}_{i\in[n]}$, where $\vec{x}=(x_1,...,x_n)$. (If a certain machine does not print an arithmetic network with $\pit$ gates over $\F_{\sft(n)}$ that has the appropriate number of gates in time $n^k$, then $A$ discards this machine's output and defines $C^{(i)}$ as a dummy network.) 

Let $\set{P_n}_{n\in\N}$ be the family of networks solving $\mathsf{PIT}$ over $\F$ with shift $\sft'$ such that $\sft'(n^k)=\sft(n)$ (note that $\sft'$ is polynomially bounded). The machine $A$ simulates the machine that prints $\set{P_n}$ on input $1^{n^k}$ to obtain a network $P_{n^k}$ solving $\mathsf{PIT}$ for arithmetic circuits of description size and degree $n^k$ over $\F_{\sft'(n^k)}=\F_{\sft(n)}$. Then $A$ prints a network $F_n$ that gets an $n$-element input $\vec{\Lambda}$ and executes as follows:
\begin{enumerate}
\item The network $F_n$ has the descriptions of $C^{(1)},...,C^{(n)}$ hard-wired, where each $\pit$ gate in each $C^{(i)}$ is replaced with a copy of $P_{n^k}$. Denote the resulting networks by $D^{(1)},...,D^{(n)}$.

\item For each $i\in[n]$, the network $F_n$ computes $D^{(i)}(\vec{\Lambda})$ (using the network in \Cref{prop:universal-arith-network-for-networks}), and letting $\sigma_i$ be the $i^{th}$ output element of this computation, it outputs 
\mm{
	F_n(\vec{\Lambda})=(\sigma_1+1,...,\sigma_{n}+1) \;\;\;.
}
\end{enumerate}

Observe that $F=\set{F_n}$ is a $\P$-uniform family of arithmetic networks of polynomial size. Assume towards a contradiction that there is an $n^k$-time-uniform family of arithmetic networks with $\pit$ gates $\set{B_n}$ of size $n^k$ and degree $n^k$ and infinitely many $n\in\N$ and $\vec{\Lambda}\in\F_{\sft(n)}^n$ such that $B_n(\vec{\Lambda})=F_n(\vec{\Lambda})$. Let $n\in\N$ be sufficiently large such that the Turing machine printing $\set{B_n}$ is one of the first $n$ machines, in which case there is $i\in[n]$ such that $C^{(i)}=B_n$. Fix $\vec{\Lambda}\in\F_{\sft(n)}^n$ such that $B_n(\vec{\Lambda})=F_n(\vec{\Lambda})$. Since $P_{n^k}$ solves $\pit$ correctly over $\F_n$, we have $C^{(i)}(\vec{\Lambda})=D^{(i)}(\vec{\Lambda})$, and hence $B_n(\vec{\Lambda})_i=F_n(\vec{\Lambda})_i=D^{(i)}(\vec{\Lambda})_i+1=B_n(\vec{\Lambda})_i+1$, a contradiction.
\eproof

\subsection{A sufficient assumption for PIT} \label{sec:pit:sufficient}

We now show that hardness for randomized arithmetic networks over all but finitely many inputs suffices to obtain polynomial-time (or nearly so) worst-case $\pit$ algorithms. In~\Cref{sec:pit:sufficient:zero} we show a result for fields with characteristic zero, and in~\Cref{sec:pit:sufficient:small} we show a result for finite fields.

\subsubsection{Fields of characteristic zero} \label{sec:pit:sufficient:zero}

The following result uses the targeted generator from~\Cref{thm:gkss-tarhsg}, which is based on the GKSS generator~\cite{GKSS22}. The description of the proof appeared in~\Cref{sec:tech:gkss}.

\begin{theorem} [hardness for networks with $\pit$ gates on all inputs yields $\pit$ algorithms; the case of characteristic zero] \label{thm:main:zero}
There is a constant $c_0>1$ such that for every sufficiently large constant $k>1$ the following holds. Let $\F$ be a field of characteristic zero.
\begin{itemize}
	\item {\bf Assumption:} There is $c>1$ and a family of $\log^c$-uniform arithmetic circuits $\set{C_n}_{n\in\N}$ of size $n^k$ and degree $n^k$ over $\F$ such that the following holds. For every $\P$-time-uniform family  of arithmetic networks with $\pit$ gates $\set{R_n}_{n\in\N}$ of size $n^{c_0\cdot\sqrt{k}}$ and degree $2^{(\log(n))^{c_0}}$ over $\F$, and every sufficiently large $n\in\N$, and every $\vec{\Lambda}\in\F^n$, it holds that $R_n(\vec{\Lambda})\ne C_n(\vec{\Lambda})$.
	\item {\bf Conclusion:} There is a $\P$-uniform family of arithmetic networks of polynomial size solving $\mathsf{PIT}$ over $\F$. 
\end{itemize}
\end{theorem}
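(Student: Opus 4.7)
\textbf{Proof proposal for Theorem \ref{thm:main:zero}.}

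The plan is to apply the GKSS-based targeted hitting-set generator of Theorem \ref{thm:gkss-tarhsg} instantiated with the hard family $\{C_n\}$, using a carefully chosen constant $\eps = \eps(k)$ that balances the two exponents in the reconstruction size. The PIT algorithm takes a circuit $D$ on $\bar n$ variables of size and degree $\poly(\bar n)$, first converts $D$ into a universal encoding $\vec\Lambda \in \F^n$ via Theorem \ref{thm:raz universal circuit} (so that $n = \poly(\bar n)$), then feeds $\vec\Lambda$ to the generator network $G_n$ of Theorem \ref{thm:gkss-tarhsg} to obtain the description of an arithmetic circuit computing a polynomial map $\calH_{\vec\Lambda}\colon \F^{O(1/\eps)} \to \F^{\bar n}$ of degree at most $n \cdot s^\eps$ (where $s = n^k$). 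Finally, the algorithm evaluates the composed circuit $D \circ \calH_{\vec\Lambda}$, which has $O(1/\eps)$ variables and degree at most $\poly(\bar n) \cdot n \cdot s^\eps$, on a grid of side length $\poly(\bar n) \cdot n \cdot s^\eps + 1$. The grid has $n^{O(1/\eps + k\eps)}$ points, which is polynomial in $n$ for constant $\eps$ and constant $k$. If some grid point $\vec s$ satisfies $D(\calH_{\vec\Lambda}(\vec s)) \neq 0$, output $\calH_{\vec\Lambda}(\vec s)$ as a non-root of $D$; otherwise declare $D \equiv 0$.

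For correctness, the key claim is that whenever $D$ is a nonzero circuit, the composition $D \circ \calH_{\vec\Lambda}$ is also nonzero, so that Schwartz--Zippel (Lemma \ref{lem:sz}) guarantees the grid-search finds a non-root. Suppose, for contradiction, that some nonzero $D$ satisfies $D \circ \calH_{\vec\Lambda} \equiv 0$ for infinitely many $n$. Then, by item \ref{item:gkss-tarhsg-reconstruction} of Theorem \ref{thm:gkss-tarhsg}, the $\P$-uniform arithmetic network $R_n$ (with $\pit$ gates) computes $C_n(\vec\Lambda)$ at input $\vec\Lambda$ for those infinitely many $n$. Plugging in the parameter bounds, $R_n$ has size $\poly(n, \bar n^{1/\eps}, s^\eps, d(\bar n)) \le n^{O(1/\eps + k\eps)}$ and degree $\poly(d(\bar n), s^\eps, \bar n)^{\polylog(s)} \le 2^{O(k\eps) \cdot \polylog(n)}$. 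Choosing $\eps = 1/\sqrt{k}$ balances the exponent at $O(\sqrt{k})$, so $R_n$ has size at most $n^{c_0\sqrt{k}}$ and degree at most $2^{(\log n)^{c_0}}$ for a sufficiently large universal constant $c_0$. This directly contradicts the hardness assumption, establishing the claim.

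It remains to verify that the entire PIT procedure can be implemented by a $\P$-uniform family of arithmetic networks of polynomial size. The encoding step (producing $\vec\Lambda$ from $D$) is handled by Lemma \ref{lem:network-algo-std-uni}. The construction of the description of $\calH_{\vec\Lambda}$ is implemented by $G_n$, which has size $\poly(s) = \poly(n)$ by item \ref{item:gkss-tarhsg-generator}. The grid evaluation is performed by instantiating the universal arithmetic network of Proposition \ref{prop:universal-arith-network} on $n^{O(\sqrt k)}$ many grid points, each in $\poly(n)$ time; using $\select$ gates, the network can output the first non-root found (or declare zero). All of these ingredients compose into a single $\P$-uniform family of arithmetic networks of polynomial size over $\F$.

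The main obstacle I expect is the bookkeeping of parameters: matching the size and degree guarantees of the reconstruction network in Theorem \ref{thm:gkss-tarhsg} against the hardness assumption's bounds, and verifying that the existence of a single $c_0$ (independent of $k$) suffices. In particular, one must confirm that both terms $\bar n^{1/\eps}$ and $s^\eps$ in the reconstruction size, and the $\polylog(s)$ factor in the degree exponent, together stay within $n^{c_0\sqrt k}$ and $2^{(\log n)^{c_0}}$ respectively with the choice $\eps = 1/\sqrt k$, which requires $c_0$ to absorb the hidden constants coming from the $\poly$ and $\polylog$ in the statement of Theorem \ref{thm:gkss-tarhsg}. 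Secondary care is needed to ensure the $\log^c$-uniformity hypothesis of Theorem \ref{thm:gkss-tarhsg} is satisfied by the hard family $\{C_n\}$ appearing in the assumption of Theorem \ref{thm:main:zero}, which it is by definition.
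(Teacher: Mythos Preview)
Your proposal is correct and matches the paper's proof essentially step for step: instantiate Theorem~\ref{thm:gkss-tarhsg} with the hard family $\{C_n\}$ and $\eps = 1/\sqrt{k}$, evaluate $D \circ \calH_{\vec\Lambda}$ on a constant-dimensional grid of polynomial side length, and derive a contradiction from the reconstruction bounds if the composition vanishes. Your version is slightly more explicit about the encoding step and the uniform implementation, but the argument and parameter choices are the same as in the paper.
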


\bproof
The network for $\mathsf{PIT}$ is given a description $\vec{\Lambda}\in\F^n$ of an arithmetic circuit $\F^n\rightarrow\F$ computing a nonzero polynomial of degree at most $n$. Using~\Cref{thm:gkss-tarhsg} with the circuit family $\set{C_n}_{n \in \N}$,  with $d(n)=n$ and $\bar{n}=n$, and with $\eps=1/\sqrt{k}$, the network computes a description of $\calH_{\vec{\Lambda}}$. It then evaluates the composition $\vec{\Lambda} \circ \calH_{\vec{\Lambda}}$ on a grid of side length $n^2\cdot (n^k)^{\eps}+1=\poly(n)$ and dimension $O(1/\eps)=O(1)$ to test if this polynomial is zero or nonzero. The precise field constants used to define this grid do not matter (e.g., we can use $1,2,...,\poly(n)$), so the network can be printed by a polynomial-time Turing machine.

Assume towards a contradiction that for infinitely many arithmetic circuits, each represented by a vector $\vec{\Lambda}$, it holds that $\vec{\Lambda}\circ\calH_{\vec{\Lambda}} = 0$. Then, the $\P$-uniform network with $\pit$ gates $R_n$ from~\Cref{thm:gkss-tarhsg} computes $R_n(\vec{\Lambda})=C_n(\vec{\Lambda})$ in size 
\[
\poly(n^{1/\eps},n^{\eps\cdot k}) \le n^{O(\sqrt{k})} 
\]
and degree 
\[
\poly(n,n^{\eps \cdot k})^{k \cdot \polylog(n)} \le 2^{\polylog(n)}\;\;\;,
\]
a contradiction.
\eproof

\subsubsection{Finite fields} \label{sec:pit:sufficient:small}

The following result uses the targeted generator from~\Cref{thm:ki:tarhsg}, which is based on the KI generator~\cite{KI04}. The description of the proof appeared in~\Cref{sec:tech:ki}.

\begin{theorem} [hardness for randomized networks on all inputs yields $\pit$ algorithms over finite fields] \label{thm:main:finite}
For every $c_0>1$ there is $k>1$ such that for every feasible sequence $\F=\set{\F_n}_{n\in\N}$ of finite fields, where $\F_n$ has order $q(n)$ and satisfies $n^{\omega(1)}\le q(n)\le 2^{n^{c_0}}$, the following conditional implication holds.
\begin{itemize}
	\item {\bf Assumption:} For every polynomially bounded shift function $\sft$, there is $c>1$ and a $\log^c$-uniform family of arithmetic circuits $\set{C_n}_{n\in\N}$ over $\F^{\sft}$ of polynomial size and polynomial degree satisfying the following. For every $\P$-uniform family of randomized arithmetic networks $\set{R_n}_{n\in\N}$ of size $n^k$ over $\F^{\sft}$, and every sufficiently large $n\in\N$, and every $\vec{\Lambda}\in \F_{\sft(n)}^{n}$, it holds that $R_n(\vec{\Lambda})$ does not compute $C_n(\vec{\Lambda})$ with error degree at most $n^k$.
	
	\item {\bf Conclusion:} For every constant $C\ge1$ there is a $\P$-uniform family of arithmetic networks of size $n^{\log^{(C)}(n)}$ solving $\mathsf{PIT}$ over $\F$.\footnote{Recall that $\log^{(i)}(n)$ is the $i^{\text{th}}$ iterated $\log$ function; that is, $\log^{(i)}(n)=\log(\log^{(i-1)}(n))$ for $i>1$ and $\log^{(1)}(n)=\log(n)$.}
\end{itemize}
\end{theorem}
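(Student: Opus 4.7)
The plan is to iteratively self-compose the KI-based targeted hitting-set generator from Theorem~\ref{thm:ki:tarhsg}, bootstrapping the arity of the input circuit from $n$ down to $\log^{(C)}(n)$ over $C$ stages, following the blueprint sketched in Section~\ref{sec:tech:ki}. Let $D_0 := D$ be the given $n$-variate circuit of size and degree $\poly(n)$ with description $\vec{\Lambda}_0 = \vec{\Lambda} \in \F_{\sft(n)}^n$, and choose $k = k(C, c_0)$ large enough that the hardness assumption supplies a $\log^c$-uniform family $\set{C_N}$ resisting $\P$-uniform randomized networks of size $N^k$ and error degree $N^k$ on all but finitely many inputs; the value of $k$ needs to dominate the polynomial overhead accumulated by $C$ iterations of the reconstruction.

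The construction proceeds in $C$ stages. In stage $i = 1$, the $\pit$ algorithm invokes Theorem~\ref{thm:ki:tarhsg} in its $\sigma = 1$ mode: it first computes $C_N(\vec{\Lambda}_0)$ by direct evaluation of the (polynomial-size) circuit for $C_N$, uses these values to produce the evaluations of the polynomial decomposition from Proposition~\ref{prop:poly-dec-cons}, and hard-wires them to yield a polynomial-size arithmetic \emph{circuit} computing $\calG_{\vec{\Lambda}_0}\colon \F^{\polylog(n)} \to \F^n$. Composition via Proposition~\ref{prop:comp-arith-network} yields a $\polylog(n)$-variate circuit $D_1 := D_0 \circ \calG_{\vec{\Lambda}_0}$ of polynomial size and degree, described by some $\vec{\Lambda}_1$. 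For each subsequent stage $i = 2,\ldots,C$, the algorithm repeats the pattern, but now in $\sigma = 0$ mode: it evaluates $C_N(\vec{\Lambda}_{i-1})$, builds a polynomial-size circuit for $\calG_{\vec{\Lambda}_{i-1}}\colon \F^{O(\log n_{i-1})} \to \F^{n_{i-1}}$ with the computed $C_N$-values hard-coded, and sets $D_i := D_{i-1}\circ \calG_{\vec{\Lambda}_{i-1}}$, reducing the arity from $n_{i-1}$ to $n_i = O(\log n_{i-1})$. After $C$ stages, $n_C = O(\log^{(C)}(n))$. The algorithm concludes by evaluating $D_C$ on a grid $T^{n_C} \subseteq \F^{n_C}$ with $|T| = \poly(n) > \deg(D_C)$, which takes $n^{O(\log^{(C)}(n))}$ many evaluations and (by Schwartz--Zippel, \Cref{lem:sz}) is guaranteed to find a nonzero point---which, unfolded through all the compositions, yields a nonzero point for the original circuit $D$.

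Correctness amounts to arguing inductively that $D_i \not\equiv 0$ at each stage. If this failed at some stage $i$, we would have $D_{i-1}\not\equiv 0$ but $D_{i-1} \circ \calG_{\vec{\Lambda}_{i-1}} \equiv 0$, and the reconstruction clause of Theorem~\ref{thm:ki:tarhsg} would then yield a $\P$-uniform randomized arithmetic network computing $C_N(\vec{\Lambda}_{i-1})$ with small error degree, of size $\poly(N, d_{i-1}, |D_{i-1}|, \log q)$ in stage $1$ and $|D_{i-1}|^{\eps}\cdot \poly(N, n_{i-1}^{\log\log|D_{i-1}|}, \log q)$ in the later stages. Provided $k$ was chosen large enough (and $\eps$ small enough), this network has size at most $N^k$ and error degree at most $N^k$, contradicting the hardness of $\set{C_N}$ at input $\vec{\Lambda}_{i-1}$. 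The whole $\pit$ procedure can be implemented as a $\P$-uniform arithmetic network: each of the $C=O(1)$ stages is a polynomial-size network (using Propositions~\ref{prop:universal-arith-network} and~\ref{prop:comp-arith-network} to evaluate $C_N$ and compose circuits), and only the final grid evaluation exceeds polynomial size, contributing $n^{O(\log^{(C)}(n))}$. To obtain the bound $n^{\log^{(C)}(n)}$ claimed in the theorem, one runs one extra iteration to absorb the hidden constant in the exponent.

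The main technical obstacle is the careful tracking of sizes and degrees across stages so that a single constant $k$ suffices. Concretely, one must show that $N_i = |\vec{\Lambda}_i|$, $|D_i|$, and $\deg(D_i)$ all remain bounded by a fixed polynomial in $n$ independent of $i$, which follows because the generator has polynomial size and degree (Lemma~\ref{lem:KI-based-gen}) and $C = O(1)$. A second subtlety, highlighted in Section~\ref{sec:tech:ki}, is that $\calG$ from Theorem~\ref{thm:ki:tarhsg} is inherently an arithmetic network once $\vec{\Lambda}$ is treated as input; but for the recursive composition to keep $D_i$ inside the class of arithmetic circuits (to which the theorem applies again at the next stage), one fixes $\vec{\Lambda}_{i-1}$ and hard-wires the pre-computed $C_N$-values---this is exactly why each stage is preceded by an explicit evaluation of $C_N$. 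Finally, the $\sigma = 1$ mode is essential in the first stage, since $\sigma = 0$ would cost a quasipolynomial $\nbar^{\log\log s}$ factor when $\nbar$ is still as large as $n$, preventing us from absorbing the stage-$1$ reconstruction's cost into a single polynomial hardness bound.
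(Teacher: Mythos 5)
Your proposal is correct and follows essentially the same approach as the paper's proof: a $C$-stage bootstrap that invokes \Cref{thm:ki:tarhsg} once in $\sigma=1$ mode to collapse the arity from $n$ to $\polylog(n)$, then $C-1$ times in $\sigma=0$ mode to iterate the logarithm, hard-wiring the evaluated $C_N$-values at each stage so $D_i$ stays an arithmetic \emph{circuit}, with correctness by contradiction against the hardness assumption and a final grid evaluation of dimension $\log^{(C)}(n)$. The one detail you gloss over but the paper handles explicitly is the bookkeeping of shift functions (one per stage, since $|\vec{\Lambda}_i|$ grows polynomially while the field index $\sft(n)$ stays fixed, so the paper invokes the hypothesis for $C = O(1)$ distinct shift functions and takes $n$ large enough that all corresponding reconstructions fail simultaneously); your quantitative claims that $k$ and $\eps$ can be chosen uniformly are consistent with the paper's $k = 4c_0 k_0$, $\eps = 1/r$ calibration.
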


\bproof
Let $k_0>1$ be a sufficiently large universal constant that bounds the degree of all the polynomially-bounded functions appearing in~\Cref{thm:ki:tarhsg}. Let $k=4c_0\cdot k_0$. Let $\set{C_n}$ be the $\log^c$-uniform family over $\F$ of size $s(n)=\poly(n)$ and degree at most $s(n)$, from our hypothesis. 

We first describe the $\mathsf{PIT}$ procedure, and later explain how to implement it as a $\P$-uniform arithmetic network. Recall that to solve $\mathsf{PIT}$, it suffices to construct an algorithm that gets as input an arithmetic circuit computing a nonzero polynomial, and finds an input on which this polynomial does not vanish. 

Fixing any polynomially bounded shift function $\sft$, we are given as input a description $\vec{\Lambda}\in\F_{\ubar{n}}^{n}$ of an arithmetic circuit computing a nonzero polynomial of degree at most $n$ over $\F_{\ubar{n}}$, where $\ubar{n}=\sft(n)$. Let $n_0=n$ and $\vec{\Lambda}_0=\vec{\Lambda}$.

\paragraph{First step: Reducing the input length to be polylogarithmic.}
Let $\set{C_n}$ be the hypothesized family of hard polynomials over $\F^{\sft}$. We use~\Cref{thm:ki:tarhsg} with $\set{C_n}$ and $\F^{\sft}$, and with parameters $m=n_0$ and $d=n_0$ and $\sigma=1$ and $\eps=1/2$. The network $G_n$ is defined over $\F_{\ubar{n}}$, and given $\vec{\Lambda}_0\in\F_{\ubar{n}}^{n_0}$ it computes a description of the circuit $\calG_{\vec{\Lambda}_0}\colon\F_{\ubar{n}}^{\polylog(n_0)}\rightarrow\F_{\ubar{n}}^{n_0}$. Denote
\mm{
\vec{\Lambda}_1 := (\Lambda\circ\calG_{\vec{\Lambda}})\in\F_{\ubar{n}}[x_1,...,x_{\polylog(n)}] \;\;\;.
}

We claim that $\vec{\Lambda}_1$ is nonzero as a polynomial. To see why this is true, assume otherwise. Then, the randomized network $R_{n_0}(\vec{\Lambda}_0)$ computes $C_{n_0}(\vec{\Lambda}_0)$ over $\F_{\ubar{n}}$ with error degree
\mm{
(n_0\cdot d \cdot \log(s))^{k_0} < n_0^{k} \;\;,
}
where the size of $R_{n_0}$ is
\mm{
(n_0 \cdot d \cdot m \cdot \log(s),\log(q))^{k_0} < n_0^{4c_0\cdot k_0} = n_0^k \;\;\;,
}
contradicting the hardness of $\set{C_n}$. 

By~\Cref{thm:ki:tarhsg}, the size of $\vec{\Lambda}_1$ is at most $n_0^{k'}$ for some constant $k'\in\N$ that depends on $\set{C_n}$, and the degree of $\vec{\Lambda}_1$ is also at most $n_0^{k'}$. We define $n_1=n_0^{k'}$ and consider $\vec{\Lambda}_1\in\F_{\ubar{n}}^{n_1}$.

\paragraph{Iterative step: Exponentially reducing the input length.}
For $i\in[C-1]$,  in iteration $i$ the network starts with a description $\vec{\Lambda}_i\in\F_{\ubar{n}}^{n_i}$ of an arithmetic circuit $\F_{n_i}^{\ell_i}\rightarrow\F_{n_i}$ of degree at most $n_i\le\poly(n)$, where $\ell_i=\begin{cases}\polylog(n_0)&i=1\\O(\log(\ell_{i-1}))&i>1\end{cases}$.

Let $\sft_i$ be a polynomially bounded shift function such that $\sft_i(n_i)=\ubar{n}=n_i^{\Omega(1)}$, and let $\set{C_n}$ be the corresponding hypothesized family of hard polynomials over $\F^{\sft}$. We use~\Cref{thm:ki:tarhsg} with $\set{C_n}$, and with $m=\ell_i$ and $\sigma=0$ and $d(n_i)=n_i$ and $\eps=1/r$ where $r$ is such that the size of $C_{n_i}$ is at most $n_i^r$. The network $G_{n_i}$ is defined over $\F_{\ubar{n}}$, and given $\vec{\Lambda}_i\in\F_{\ubar{n}}^{n_i}$ it computes a description of $\calG_{\vec{\Lambda}_i}\colon\F_{\ubar{n}}^{\ell_{i+1}}\rightarrow\F_{\ubar{n}}^{\ell_i}$. Denote
\mm{
\vec{\Lambda}_{i+1} \coloneqq (\Lambda\circ\calG_{\vec{\Lambda}})\in\F_{\ubar{n}}[x_1,...,x_{\ell_{i+1}}] \;\;\;.
}

Again, we claim that $\vec{\Lambda}_{i+1}$ is nonzero as a polynomial. Assuming otherwise, the network $R_{n_i}(\vec{\Lambda}_i)$ computes $C_{n_i}(\vec{\Lambda}_i)$ over $\F_{\ubar{n}}$ with error degree $(n_i\cdot d\cdot \log(s))<n_i^k$ and with size 
\mm{
s^{\eps}\cdot(n_i,d,m^{\mathrm{loglog}(s)},\log(q))^{k_0} = s^{\eps}\cdot n_i^{2c_0\cdot k_0}\cdot 2^{k_0\cdot\log(\ell_i)\cdot\mathrm{loglog}(s)} < n_i^{4c_0\cdot k_0}\;\;\;,
}
a contradiction. By~\Cref{thm:ki:tarhsg}, the size and degree of $\vec{\Lambda}_{i+1}$ are at most $n_i^{k'}$ for some constant $k'\in\N$ that depends on $\set{C_n}$,\footnote{The size and degree also depend on $\eps$, but we defined $\eps=1/r$ where $r$ is a function of the size $s$ of $\set{C_n}$. Thus, the size and degree indeed only depend on the family $\set{C_n}$.} and we define $n_{i+1}=n_i^{k'}$ and consider $\vec{\Lambda}_{i+1}$ as over $\F_{n_{i+1}}$.

\paragraph{Final step.}
After iteration $i=C-1$ we obtain a nonzero $\vec{\Lambda}_{C}\in\F_{\ubar{n}}[x_1,...,x_{\ell_{C}}]$, where $n_{C}=\poly(n)$ and $\ell_{C}=\log^{(C)}(n)$ and the degree of $\vec{\Lambda}_{C}$ is at most $d'=\poly(n)$. We enumerate over the elements in a grid $S^{\ell_{C}}\subseteq\F_{\ubar{n}}^{\ell_{C}}$ of width $|S|=d'+1$, and since $\vec{\Lambda}_{C}$ is nonzero and $\F_{\ubar{n}}$ is of size $\ubar{n}^{\omega(1)}=n_C^{\omega(1)}$, we are guaranteed to find a nonzero element.

\paragraph{A comment about uniformity.} Note that the algorithm considers constantly many shifts functions $\sft,\sft_1,...,\sft_{C-1}$, and that in the proof we assume that the reconstruction procedure $\set{R_n}$ with each of these shift functions $\sft_i$ fails on the corresponding input $\vec{\Lambda}_i$. Since there are only $C=O(1)$ shift functions, this amounts to considering hardness for $C=O(1)$ uniform algorithms (i.e., each algorithm is a pairing of $\set{R_n}$ with a shift function), and we can safely assume that the initial input length $n$ is large enough so that all of these $C$ algorithms fail on all inputs of length $n$ or more.

\paragraph{Implementation as a $\P$-uniform arithmetic network.}
Observe that the iterative procedure above repeats the following step: Given $\vec{\Lambda}_i$ (for $i\in\set{0,...,C-1}$), pad its description, feed the description to the arithmetic network from~\Cref{thm:ki:tarhsg}, which outputs a description of $\vec{\Lambda}_{i+1}$, and continue to the next iteration. After iteration $C$, the procedure evaluates the final circuit on $S^{\ell_C}$.

To see that the procedure is indeed implementable as a $\P$-uniform arithmetic network $\calN=\set{\calN_n}$, observe that there are only constantly many intermediary steps of padding and running arithmetic networks $\set{G_n}$, corresponding to the constantly many shifts $\sft_i$ and hard families $\set{C_n}$. The machines printing all of the $\set{C_n}$'s can be hard-wired into the machine that prints $\calN_n$, and so can the constants bounding the various polynomials (i.e., the sizes of $\set{C_n}$'s and the polynomial size blow-ups from each $\vec{\Lambda}_i$ to $\vec{\Lambda}_{i+1}$). The machine printing $\calN_n$ just adds padding gates and places the various $\set{G_n}$'s in the appropriate places in $\calN_n$.

The last step of $\calN_n$ is evaluating a circuit on $S^{\ell_C}$. Since $\F$ is feasible, the machine printing $\calN_n$ can print $\poly(n)$ constants, and can thus also hard-wire all of the points in $S^{\ell_C}$ into $\calN_n$. Thus, the only missing part is evaluating $\vec{\Lambda}_C$ on a point, which can be done by a $\P$-uniform network using~\Cref{prop:universal-arith-network}.
\eproof

\begin{remark} [degree bounds]
The hardness assumption in~\Cref{thm:main:finite} does not explicitly bound the degree of the randomized networks. (This is akin, for example, to the hardness assumption in~\cite{KI04} when working over finite fields.) Nevertheless, the degree of these networks is quasipolynomially bounded. Specifically, we can either assume hardness against randomized arithmetic networks with $p$\ts{th} root gates\footnote{For a definition and discussion, see the comments after~\Cref{thm:ki:tarhsg} as well as~\Cref{sec:ki:recon:pth}.} of degree quasipolynomial in $n$ and in the field's characteristic, similarly to~\Cref{thm:main:zero}; or assume hardness against randomized arithmetic circuits (without $p$\ts{th} root gates) of degree that is quasipolynomial in both $n$ and $q$. Both statements follow directly from the degree bounds on the reconstruction in~\Cref{thm:ki:tarhsg}.
\end{remark}

\section{Open problems}\label{sec:open}

We believe that our result provide compelling motivation for studying $\pit$ vs lower bounds in the \emph{uniform setting}, and more generally for studying lower bounds for \emph{uniform arithmetic circuits}. Let us therefore suggest several open problems that strike us as important and potentially tractable in this context:

\begin{problem} \label{prob:int:ksquare}
Relax the condition on the hard function in~\Cref{thm:int:main:zero} to allow functions computable in arbitrary polynomial size (i.e., for any $k>1$ allow an upper bound of $n^{c_k}$ where $c_k$ may be arbitrary).
\end{problem}

The reason for the upper bound of $n^{k^2}$ is our use of the generator of \textcite{GKSS22}, and in particular our implementation for the initial step in their reconstruction. Using a different implementation or a different generator (or reconstruction) might be a path forward for tackling~\Cref{prob:int:ksquare}.

\begin{problem} \label{prob:int:rand}
Relax the hardness assumption in~\Cref{thm:int:main} to refer to networks with $\pit$ gates, rather randomized networks with bounded error degree (see~\Cref{thm:main:finite}).
\end{problem}

The main source of trouble that currently prevents assuming hardness only for networks with $\pit$ gates is the use of Kaltofen's~\cite{Kaltofen1989FactorizationOP} classical factoring algorithm.

\begin{problem}
Relax the hardness assumption in~\Cref{thm:int:main:zero,thm:int:main} to refer only to networks of polynomial degree.
\end{problem}

As mentioned in~\Cref{sec:int:results,sec:pit}, our results require hardness for networks of quasipolynomial degree. Technically, this degree stems from repeated applications of a universal arithmetic circuit (\`{a} la~\cite{Raz10}) in our construction.
\begin{problem}
Strengthen the $\pit$ conclusion in~\Cref{thm:int:main} to have a $\pit$ algorithm running in strictly polynomial time.
\end{problem}

Indeed, the super-polynomial running time is reminiscent of running times obtained via bootstrapping techniques as in~\cite{AGS18,KST19}, and the technical reasons for it are also very similar.

\begin{problem}
Continue the study of lower bounds for uniform arithmetic circuits, by proving interesting lower bounds that go beyond what is known for non-uniform arithmetic circuits.
\end{problem}

As mentioned in~\Cref{sec:int}, lower bounds for uniform constant-free arithmetic circuits of polynomial size and bounded depth (against the permanent) are known (see~\cite{KP09}, and also~\cite{JS13,CKK14}). However, in the Boolean setting we have lower bounds for general circuits, with no depth restriction (e.g., Santhanam and Williams~\cite{SW13} proved lower bounds in $\P$ for general uniform circuits of size $n^k$, for any fixed $k\in\N$), raising the possibility that stronger lower bounds for general uniform arithmetic circuits might be provable.

\section*{Acknowledgements}

Part of this work was done while Robert Andrews was at the Institute for Advanced Study and was supported by NSF grant CCF-1900460 and the Erik Ellentuck Endowed Fellowship Fund. Deepanshu Kush is thankful for financial support from an Ontario Graduate Scholarship (OGS-International) award. Roei Tell is supported by the Natural Sciences and Engineering Research Council of Canada (NSERC) Discovery Grant RGPIN-2024-04490.

\begin{appendices}

\section{Uniformization of classical algorithms} \label{sec:uniform algos}

In this section, we prove that various well-known algorithms in algebraic computation can be carried out in a uniform manner. 
We start in \Cref{subsec:data structures}, where we show that uniform arithmetic networks are capable of translating between the standard and universal encodings of arithmetic circuits.
This affords us the flexibility to use either standard encodings or universal encodings of arithmetic circuits when designing algorithms.
In \Cref{subsec:network algos for algebra}, we design uniform arithmetic networks for a variety of standard problems encountered when manipulating arithmetic circuits, such as polynomial interpolation and decomposition into homogeneous parts.
\Cref{subsec:uniform univariate factorization} studies the problem of factoring univariate polynomials over finite fields, showing that the Cantor--Zassenhaus algorithm, as modified by \textcite{vzGathen84}, can be implemented as a uniform arithmetic network.
\Cref{subsec:kaltofen-arith-net} proves that Kaltofen's algorithm to factor arithmetic circuits \cite{Kaltofen1989FactorizationOP} can be carried out by a uniform arithmetic network.
Finally, we prove in \Cref{subsec:uniform depth reduction} that the depth reduction of \textcite{VSBR83} for arithmetic circuits can be applied to a $\log^c$-uniform family of arithmetic circuits in a manner that preserves $\log^c$-uniformity.

The notation of this appendix deviates slightly from the rest of the paper.
Throughout the appendix, we work with a fixed feasible sequence of fields $\F = \set{\F_n}_{n \in \N}$.
Every algorithm in \Cref{sec:uniform algos} receives $1^n$ as part of its input, so we adopt the convention that each algorithm operates over the corresponding field $\F_n$, suppressing the sequence of fields from the statement of the results.
Whenever further assumptions on the field $\F_n$ are necessary, these will be made explicit in the statement of the corresponding result.

Throughout \Cref{subsec:data structures,subsec:network algos for algebra}, we use $\F_n$ to refer to the $n$\ts{th} field in the sequence of fields.
The field $\F_n$ may be a finite field of growing size (and not necessarily of order $n$), or $\F_n$ may be a field of characteristic zero (in which case $\F_n$ may not actually depend on the index $n$).
In \Cref{subsec:uniform univariate factorization,subsec:kaltofen-arith-net}, we restrict our attention to finite fields, so we change notation to $\F_{q(n)}$ to more transparently reflect that the $n$\ts{th} field in the sequence $\F_{q(n)}$ is a finite field of order $q(n)$.

\subsection{Converting between universal and standard encodings} \label{subsec:data structures}

Many of our algorithms are uniform arithmetic networks that themselves operate on descriptions of arithmetic circuits.
It is sometimes more convenient to describe an algorithm that operates on a standard description of an arithmetic circuit, even though the network receives a universal description as input, or vice-versa.
The following lemmas show that uniform arithmetic networks can efficiently translate between the standard and universal encodings of arithmetic circuits.
As a result, we may work with whichever encoding is more convenient in context at the expense of an additive polynomial increase in the complexity of our algorithms.

We start by converting standard encodings of arithmetic circuits (\Cref{def:standard encoding}) to universal encodings (\Cref{def:universal encoding}).

\begin{lem}[Standard to universal]\label{lem:network-algo-std-uni}
    There is a deterministic, polynomial-time Turing machine $M$ that receives as input $(1^n, 1^d, 1^s)$ and outputs an arithmetic network that satisfies the following properties.
    \begin{enumerate}
        \item
            The network receives as input the standard encoding of an $n$-variate size-$s$ arithmetic circuit $C$ that computes a polynomial of degree at most $d$.
        \item
            The network outputs a universal encoding of length $\poly(n,s,d)$ of the circuit $C$.
        \item
            The network has size $\poly(n,s,d)$ and degree $O(1)$.
    \end{enumerate}
\end{lem}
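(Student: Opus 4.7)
\bproof[Proof Sketch]
The plan is to leverage the constructivity of Raz's universal circuit (\Cref{thm:raz universal circuit}) to efficiently translate a standard encoding of $C$ into a universal encoding with respect to the family $\set{\Psi_n}$. The key conceptual observation is that the translation moves field constants around and reshapes Boolean wiring data, but never performs any non-trivial arithmetic on the input field elements, which will deliver the degree $O(1)$ bound.

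First, I would recall that \Cref{thm:raz universal circuit} gives a polynomial-time Turing machine that on input $(1^n, 1^s, 1^d)$ prints a description of the universal circuit $\Psi$ of size $O(s d^4)$. Inspecting the construction, the universal encoding $\vec{\alpha}$ of a circuit $C$ is produced by: (i) normalizing $C$ into Raz's canonical template by standard combinatorial operations on the underlying dag (padding with trivial gates, reindexing gates in Raz's canonical order, and separating addition and multiplication layers); (ii) reading off the list of field constants appearing at the bottom layer of $C$ and placing each one into a designated slot of the arithmetic part of $\vec{\alpha}$; and (iii) setting the remaining coordinates of $\vec{\alpha}$ (which are either Boolean wiring data or arithmetic ``dummy'' slots) according to the structural choices made in step (i). The crucial point is that the arithmetic portion of $\vec{\alpha}$ consists entirely of copies of the field constants already present in the standard description of $C$, together with the constants $0$ and $1$ (recall that $\Psi$ is constant-free), while the Boolean portion of $\vec{\alpha}$ is a polynomial-time function of the Boolean part of the standard encoding of $C$.

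Given this, the Turing machine $M$ on input $(1^n, 1^d, 1^s)$ would construct the arithmetic network in two parts. The Boolean sub-network implements the translation of the Boolean wiring of $C$ to the Boolean coordinates of $\vec{\alpha}$, using a $\P$-uniform family of Boolean circuits obtained by compiling the $\poly(n, s, d)$-time Turing machine that performs the normalization step. The arithmetic sub-network consists solely of direct wires and $\select$ gates (controlled by the Boolean data) that route each field constant from the input standard encoding to its prescribed destination in $\vec{\alpha}$, together with hard-wired $0$s and $1$s placed in the dummy slots. Since no arithmetic gate ever multiplies two non-constant arithmetic values -- only routing and selection are performed on the input field elements -- every rational function computed by an arithmetic gate is a single input variable or a field constant, so the arithmetic degree of the network is $1 = O(1)$. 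The overall size is $\poly(n, s, d)$, matching the length of $\vec{\alpha}$.

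The main step that requires careful verification is that Raz's construction does admit such a clean ``constants in, constants out'' form: one must check that no coordinate of $\vec{\alpha}$ is obtained as a nontrivial arithmetic combination of constants of $C$, only as a copy of one such constant (or of $0$ or $1$). This is essentially built into the definition of a universal circuit, since the universal parameters $\vec{\alpha}$ must be able to realize an arbitrary assignment of constants, including generic (transcendental) ones, so they cannot be constrained by nontrivial algebraic relations; a brief inspection of the construction of \textcite{Raz10} confirms that the translation respects this structure. Once this is in hand, the size bound follows from the $O(s d^4)$ size of $\Psi$ and the length of its parameter vector, the $\P$-uniformity of $M$ follows from the polynomial-time computability of Raz's normalization, and the $O(1)$ degree bound follows from the routing-only nature of the arithmetic sub-network.
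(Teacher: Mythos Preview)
Your proposal is correct and follows essentially the same approach as the paper's proof: both invoke Raz's universal circuit, normalize $C$ into the shape of $\Psi$ following the construction in \cite[Proposition 2.8]{Raz10}, and then read off the resulting constants as the universal encoding, observing that this involves only routing (and possibly bounded-degree propagation) of the input field constants. The paper is slightly more explicit about one bookkeeping step you defer to ``brief inspection''---namely, that constants feeding product gates must be propagated into the sum gates above them, which is where the paper's $O(1)$ (rather than exactly $1$) degree bound comes from---but this is the same verification you flag as needing to be checked.
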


\bproof
    The network first uses \Cref{thm:raz universal circuit} to print a circuit $\Psi(\vec{x}, \vec{y})$ that is universal for $n$-variate, degree-$d$, size-$s$ computation.
    The network then converts $C$ to a circuit $\Phi$ that computes the same polynomial, but whose underlying directed acyclic graph is the same as the universal circuit $\Psi$.
    This is done by applying the argument appearing in the proof of \cite[Proposition 2.8]{Raz10}, noting that all steps in this transformation can be carried out by a uniform arithmetic network of size $\poly(n,s,d)$ and degree $O(1)$.
    As Raz's universal circuit only has constants appearing on edges that feed into sum gates, we propagate any constants in $\Phi$ that feed into a product gate into the sum gates that take this product gate as input.
    This rewiring can likewise be performed by a uniform arithmetic network of size $\poly(n,s,d)$ and constant degree.
    The resulting labeling of the constants in the circuit corresponds to the universal encoding of $C$, which the network then outputs.
\eproof

We now go in the other direction, converting universal encodings of arithmetic circuits into standard encodings.

\begin{lem}[Universal to standard]\label{lem:network-algo-uni-std}
    There is a deterministic, polynomial-time Turing machine $M$ that receives as input $(1^n, 1^d, 1^s)$ and outputs an arithmetic network that satisfies the following properties.
    \begin{enumerate}
        \item
            The network receives as input the universal encoding $\vec{\Lambda}$ of an $n$-variate size-$s$ arithmetic circuit $C$ that computes a polynomial of degree at most $d$.
        \item
            The network outputs a standard encoding of length $\poly(n,s,d)$ of the circuit $C$.
        \item
            The network has size $\poly(n,s,d)$ and degree $1$.
    \end{enumerate}
\end{lem}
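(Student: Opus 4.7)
\bproof[Proof proposal]
The plan is to exploit the fact that Raz's universal circuit $\Psi(\vec{x},\vec{y})$ of \Cref{thm:raz universal circuit} has a wiring structure that depends only on the parameters $(n,s,d)$, and that the circuit $C$ encoded by $\vec{\Lambda}$ is literally $\Psi(\vec{x},\vec{\Lambda})$. Consequently, producing a standard description of $C$ amounts to (i) writing down the standard description of $\Psi$ and (ii) attaching the entries of $\vec{\Lambda}$ as the constants labeling the gates that were previously the $\vec{y}$-input gates.

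Concretely, on input $(1^n,1^d,1^s)$ the Turing machine $M$ first invokes the polynomial-time procedure of \Cref{thm:raz universal circuit} to obtain the constant-free universal circuit $\Psi(\vec{x},\vec{y})$ of size $O(s d^4)$. From this it computes, in time $\poly(n,s,d)$, a standard description of the circuit obtained from $\Psi$ by replacing each $y_i$-input gate with a gate labeled ``field constant slot $i$''. Call the Boolean part of this description $\vec{b}^{\Psi}$; it depends only on $(n,s,d)$, not on $\vec{\Lambda}$. The machine $M$ then prints an arithmetic network $N$ that has $\vec{b}^{\Psi}$ hard-wired as part of its output, and whose arithmetic part is the identity map that forwards the input $\vec{\Lambda}$ to the arithmetic component of the output. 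Since $\Psi$ is universal for $n$-variate, degree-$d$, size-$s$ computation, \Cref{def:universal encoding} guarantees that $\Psi(\vec{x},\vec{\Lambda})=C(\vec{x})$ as polynomials, and hence $(\vec{\Lambda},\vec{b}^{\Psi})$ is indeed a valid standard encoding of $C$ in the sense of \Cref{def:standard encoding}.

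The complexity bounds follow immediately: the Boolean part $\vec{b}^{\Psi}$ has length $\poly(n,s,d)$ by \Cref{prop:short-desc} applied to $\Psi$, so $N$ has $\poly(n,s,d)$ output wires and size $\poly(n,s,d)$. The only arithmetic operations performed by $N$ are wire forwardings of input entries to output positions, so $N$ has degree $1$ (in fact, each output arithmetic gate computes an affine function of the inputs of degree at most $1$). Finally, $M$ runs in polynomial time since both the construction of $\Psi$ via \Cref{thm:raz universal circuit} and the combinatorial relabeling of its $\vec{y}$-input gates can be carried out in time $\poly(n,s,d)$. No step of this argument presents a real obstacle; the only point worth checking is that the ``relabeling'' of $\vec{y}$-input gates as constant-labeled input gates is consistent with the convention in \Cref{def:algebraic circuit} that constants appear only in the bottom-most layer of the circuit, which is automatic since the $\vec{y}$-inputs of $\Psi$ already sit in the bottom layer.
\eproof
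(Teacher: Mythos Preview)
Your proposal is correct and follows essentially the same approach as the paper: both construct Raz's universal circuit $\Psi$ via \Cref{thm:raz universal circuit}, hard-wire its Boolean wiring into the network, and forward $\vec{\Lambda}$ into the $\vec{y}$-input slots so that the output is a standard description of $\Psi(\vec{x},\vec{\Lambda})=C(\vec{x})$. Your write-up is slightly more explicit about the bottom-layer constants convention and the size bound via \Cref{prop:short-desc}, but the argument is the same.
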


\bproof
    The network first invokes \Cref{thm:raz universal circuit} to construct a circuit $\Psi(\vec{x}, \vec{y})$ that is universal for $n$-variate, degree-$d$, size-$s$ computation.
    Because $\vec{\Lambda}$ is a universal encoding of $C$, we have that $C(\vec{x}) = \Psi(\vec{x}, \vec{\Lambda})$ as polynomials.
    Thus, to output a standard description of $C$, the network replaces the $\vec{y}$ inputs to $\Psi$ with the given vector $\vec{\Lambda}$ and then outputs a standard description of the resulting circuit.

    This network is the combination of \Cref{thm:raz universal circuit} and a simple modification of the inputs to the universal circuit $\Psi$.
    It is clear that this network has size $\poly(n,s,d)$.
    Because this network does not do any arithmetic computation on the input $\vec{\Lambda}$, it has degree $1$ as claimed.

    To obtain a polynomial-time Turing machine that prints this network, the part of the network that builds the circuit $\Psi$ can be printed by running the Turing machine of \Cref{thm:raz universal circuit}.
    The remaining part of the network that replaces the $\vec{y}$ variables in the input to $\Psi$ with the given vector $\vec{\Lambda}$ can clearly be printed by a polynomial-time Turing machine.
\eproof

\subsection{Algorithms that manipulate arithmetic circuits} \label{subsec:network algos for algebra}

This subsection describes uniform families of arithmetic networks (sometimes with PIT gates) that perform a variety of tasks on arithmetic circuits, such as finding a variable a circuit depends on, homogenizing a circuit, and computing the greatest common divisor of two circuits.
None of these algorithms are new.
Rather, the goal of this subsection is to verify that each of these algorithms can be implemented by a uniform family of arithmetic networks, and to provide bounds on the complexity parameters of the resulting networks.

We start by describing a uniform arithmetic network (with PIT gates) that receives as input an arithmetic circuit $C(x_1,\ldots,x_n)$ and finds a variable $x_i$ that $C$ depends on.
Of course, if the circuit $C$ computes a constant polynomial, then $C$ depends on no variables, in which case the arithmetic network correctly reports that $C$ is a constant polynomial.

\begin{lem}[Finding a variable a polynomial depends on] \label{lem:network algo find variable}
    There is a deterministic, polynomial-time Turing machine $M$ that receives as input $(1^n, 1^d, 1^s)$ and outputs an arithmetic network with PIT gates that satisifes the following properties.
    \begin{enumerate}
        \item
            The network receives as input the standard description of an arithmetic circuit $C$ of size $s$ over $\F_n$ computing a polynomial $f \in \F_n[x_1,\ldots,x_n]$ of degree at most $d$.
        \item
            The network outputs data $(b, \vec{a}) \in \zo \times \zo^{\log_2 n}$ satisfying the following.
            \begin{enumerate}
                \item 
                    The bit $b$ is set to $1$ if and only if $f$ is a constant polynomial.
                \item
                    If $b = 0$ (and hence $f$ is nonconstant), then $\vec{a}$ is the binary encoding of an integer $i \in [n]$ such that $f$ depends on the variable $x_i$.
            \end{enumerate}
        \item 
            The network has size $\poly(n,s,d)$ and degree $O(1)$.
    \end{enumerate}
\end{lem}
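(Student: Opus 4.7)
The plan is to use PIT gates to test which variables $f$ depends on, via the standard criterion that $f$ depends on $x_i$ if and only if $\partial f/\partial x_i \not\equiv 0$. Concretely, the arithmetic network output by $M$ will, on input a standard description of $C$, (i) compute standard descriptions of circuits $C_1,\ldots,C_n$ where $C_i$ computes $\partial f/\partial x_i$; (ii) convert each $C_i$ into a universal encoding of degree $d$ and size $\poly(n,s,d)$ via \Cref{lem:network-algo-std-uni}; (iii) feed each encoding into a PIT gate to obtain a Boolean bit $b_i$ with $b_i = 1$ iff $\partial f/\partial x_i \equiv 0$; and finally (iv) output $b := b_1 \land b_2 \land \cdots \land b_n$ together with the binary encoding $\vec{a}$ of the smallest index $i \in [n]$ for which $b_i = 0$ (with $\vec{a}$ defined arbitrarily when $b = 1$).

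For step (i), I would implement symbolic differentiation by a straightforward gate-by-gate transformation: given a standard description of $C$ with gates $g_1,\ldots,g_s$, the network creates, for each fixed $i \in [n]$, $s$ auxiliary gates $\tilde{g}_1,\ldots,\tilde{g}_s$ representing $\partial g_k/\partial x_i$, with wiring determined by the rules $\partial(g_a + g_b)/\partial x_i = \tilde{g}_a + \tilde{g}_b$, $\partial(g_a \cdot g_b)/\partial x_i = \tilde{g}_a \cdot g_b + g_a \cdot \tilde{g}_b$, and with $\tilde{g}_k = 1$ when $g_k$ is the input gate $x_i$ and $\tilde{g}_k = 0$ on all other input/constant gates. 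This yields a circuit $C_i$ of size $O(s)$ and degree at most $d$, and the rewiring of the standard description is a purely combinatorial operation on the Boolean part, hence is computable by a $\P$-uniform Boolean subnetwork of size $\poly(n,s,d)$. Running these $n$ transformations in parallel gives all $n$ derivative circuits using $\poly(n,s,d)$ gates and degree $O(1)$ (the transformation is syntactic and introduces no nontrivial arithmetic).

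For step (iv), the conjunction $b$ of the $n$ bits $b_i$ is computed by a Boolean subcircuit of size $O(n)$. Extracting the smallest index $i^*$ with $b_{i^*} = 0$ can be done by the standard prefix construction: define $c_i = b_1 \land \cdots \land b_{i-1}$ (so $c_i = 1$ iff $b_1 = \cdots = b_{i-1} = 1$) and mark $i$ as the smallest witness iff $c_i \land \neg b_i$ holds. The binary encoding of $i^*$ is then produced by a Boolean MUX tree of size $O(n \log n)$, keyed on these witness indicators, that emits the $\log_2 n$-bit binary representation of $i^*$. This part uses only Boolean and selection gates and has constant arithmetic degree.

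The only possibly delicate point is verifying complexity bounds and uniformity of the whole pipeline: the encoding conversion via \Cref{lem:network-algo-std-uni} costs $\poly(n,s,d)$ in size and $O(1)$ in degree per invocation, the $n$ PIT gates contribute $n$ to the size, and the Boolean post-processing is $\poly(n)$; summing gives total size $\poly(n,s,d)$ and degree $O(1)$. Printing the network itself amounts to printing $n$ copies of the symbolic differentiator, $n$ copies of the standard-to-universal converter, $n$ PIT gates, and an $O(n \log n)$-sized Boolean selector, all of which are $\P$-uniform by \Cref{lem:network-algo-std-uni} and by direct inspection, so $M$ runs in time $\poly(n,s,d)$. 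I do not foresee a genuine obstacle here; the main thing to be careful about is ensuring the symbolic differentiation step properly respects the assumed alternating, fan-in-two convention on arithmetic circuits (\Cref{def:algebraic circuit}), which can be enforced by interleaving addition and multiplication layers in the transformed circuit at a constant-factor blow-up in size.
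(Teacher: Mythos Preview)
Your approach has a genuine gap: the criterion ``$f$ depends on $x_i$ if and only if $\partial f/\partial x_i \not\equiv 0$'' is \emph{false} over fields of positive characteristic. For example, over $\F_p$ the polynomial $f(x_1) = x_1^p$ depends on $x_1$, yet $\partial f/\partial x_1 = p\,x_1^{p-1} = 0$. The lemma is stated for a general feasible field sequence $\{\F_n\}$, and the appendix explicitly allows $\F_n$ to be a finite field (see the preamble to \Cref{sec:uniform algos}), so your test would incorrectly declare such an $f$ constant. Your pipeline (symbolic differentiation, encoding conversion, PIT gates, Boolean post-processing) is fine as a matter of complexity and uniformity; the problem is purely with the algebraic criterion in step~(i).

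The paper sidesteps this by using a characteristic-free test: for each $i$, form $f_i(\vec{x},y) \coloneqq f(x_1,\ldots,x_{i-1},y,x_{i+1},\ldots,x_n)$ with a fresh variable $y$, and use a PIT gate to test whether $f(\vec{x}) - f_i(\vec{x},y) \equiv 0$. This difference vanishes identically iff $f$ does not depend on $x_i$, regardless of the characteristic. The rest of the construction (iterating over $i$, converting to universal encodings via \Cref{lem:network-algo-std-uni}, and Boolean post-processing to extract the first index) is essentially the same as yours. Your proof would be repaired by swapping the derivative test for this substitution test, or by adding the hypothesis $\mathrm{char}(\F_n) = 0$ or $\mathrm{char}(\F_n) > d$---but the lemma as stated does not assume that.
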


\bproof
    For each $i \in [n]$, the network iteratively checks if $f$ depends on the variable $x_i$.
    To do this, the network computes a description of a circuit that computes the polynomial
    \[
        f_i(\vec{x}, y) \coloneqq f(x_1, \ldots, x_{i-1}, y, x_{i+1}, \ldots, x_n),
    \]
    where $y$ is a fresh variable.
    A description of a circuit for $f_i$ can be obtained easily from a circuit that computes $f$ by replacing the variable $x_i$ with $y$ at the input gates.
    The network then uses its PIT gates to test if $f(\vec{x}) - f_i(\vec{x}, y) = 0$.
    If the PIT gate reports that $f(\vec{x}) - f_i(\vec{x}, y) \neq 0$, then $f$ depends on $x_i$, so the network outputs $(0, \bin(i))$, where $\bin(i) \in \zo^{\log_2 n}$ is the binary representation of $i$.
    Otherwise, the network proceeds to check if $f$ depends on the next variable $x_{i+1}$.
    If $f$ depends on none of the variables, then the network outputs $(1, 0^{\log_2 n})$ to indicate that $f$ is a constant polynomial.

    To bound the size of the resulting network, we observe that the network makes simple modifications to the given circuit for $f$ and then performs PIT on the modified circuits.
    For each $i \in [n]$, replacing the variable $x_i$ with $y$ in the input to the circuit for $f$ can be done by a network of size $O(s)$.
    This produces a standard description of the circuit for $f - f_i$, again using a network of size $O(s)$.
    We then convert this to a universal description using \Cref{lem:network-algo-std-uni}, incurring a $\poly(n,s,d)$ increase in the size of our network, and then feed this universal description into a PIT gate.
    The post-processing of the PIT gate's output can be done by a network of size $O(1)$.
    We perform this sequence of operations $n$ times, so the overall size of our network is bounded by $\poly(n,s,d)$.
    Each iteration can be implemented by a network of degree $O(1)$, so the degree of the overall network is likewise bounded by $O(1)$.

    Finally, the network can be printed by a polynomial-time Turing machine as follows.
    The components of the network that modify the circuit for $f$ into a circuit for $f - f_i$, as well as the components that post-process the outcome of the PIT operation, can clearly be printed in polynomial time.
    The part of the network that translates from standard encodings to universal encodings of arithmetic circuits can be printed by repeatedly running the Turing machine that prints the network of \Cref{lem:network-algo-std-uni}.
\eproof

Our next tool is polynomial interpolation.
Given a multivariate polynomial $f(x_1, \ldots, x_n, y)$, we can always regard $f$ as a univariate polynomial in $y$ whose coefficients are themselves polynomials in $x_1, \ldots, x_n$.
That is, there are polynomials $f_0, f_1, \ldots, f_d \in \F[\vec{x}]$ such that
\[
    f(\vec{x}, y) = \sum_{i=0}^d f_i(\vec{x}) y^i,
\]
where $d$ is the degree of $f$.
From a circuit that computes $f$, we can build a circuit of comparable size that computes these coefficients $f_0, \ldots, f_d$.
The following lemma verifies that this transformation can be carried out by a uniform family of arithmetic networks.
To avoid any constraints on the size of the underlying field $\F$, we carry out polynomial interpolation via a gate simulation argument, rather than by partially evaluating the circuit and then interpolating from these evaluations.
If the circuit computing $f$ has size $s$, the gate simulation results in circuits of size $O(s d^2)$ for the coefficients $f_i$.
This loses a factor of $d$ over the evaluation and interpolation approach, which produces circuits of size $O(sd)$ for the coefficients $f_i$.
This extra factor of $d$ will not meaningfully impact any of our applications.

\begin{lem}[Polynomial interpolation] \label{lem:network algo interpolation}
    There is a deterministic, polynomial-time Turing machine $M$ that receives as input $(1^n, 1^d, 1^s)$ and outputs an arithmetic network satisfying the following properties.
    \begin{enumerate}
        \item
            The network receives as input the description of an arithmetic circuit $C$ of size $s$ over $\F_n$ computing a polynomial $f \in \F_n[x_1,\ldots,x_n]$ of degree at most $d$, and the binary encoding of an integer $i \in [n]$.
        \item
            The network outputs the description a $(d+1)$-output arithmetic circuit $C'$ of size $O(s d^2)$ that does not depend on the variable $x_i$.
            Let $f_j(x_1, \ldots, x_{i-1}, x_{i+1},\ldots,x_n)$ be the polynomial computed by the $j$\ts{th} output gate of $C'$.
            Then the polynomial identity 
            \[
                f(x_1,\ldots,x_n) = \sum_{j=0}^d f_j(x_1,\ldots,x_{i-1},x_{i+1},\ldots,x_n) x_i^j
            \]
            holds.
        \item 
            The network has size $\poly(n,s,d)$ and degree $O(1)$.
    \end{enumerate}
\end{lem}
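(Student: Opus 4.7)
The plan is to perform a standard gate simulation in the quotient ring $\F_n[\vec{x}]/\langle x_i^{d+1}\rangle$. Since the input polynomial $f$ has total degree at most $d$, and hence $x_i$-degree at most $d$, reduction modulo $\langle x_i^{d+1}\rangle$ leaves $f$ unchanged. Intermediate gates of $C$ may compute polynomials of much higher $x_i$-degree whose excess gets killed by later cancellations, but truncating at every gate is sound because this reduction is a ring homomorphism and therefore commutes with the $+$ and $\times$ operations performed by $C$. Thus the per-gate truncated coefficients agree with the true coefficients of $f(\vec{x})$ at the output.

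Concretely, for each gate $v$ of $C$ I will introduce $d+1$ gates $v_0, v_1, \ldots, v_d$ in $C'$, intended to compute the coefficients of $x_i^0, \ldots, x_i^d$ in the polynomial at $v$, regarded as a polynomial in $x_i$ with coefficients in $\F_n[\vec{x} \setminus \set{x_i}]$. The wiring is entirely syntactic and gate-local. An input gate labeled by $x_i$ becomes $v_1 = 1$ and $v_j = 0$ for $j \neq 1$; an input gate labeled by $x_k$ with $k \neq i$, or by a field constant $\alpha$, becomes $v_0 = x_k$ (respectively $v_0 = \alpha$) with $v_j = 0$ for $j \geq 1$. A sum gate $v = u + w$ is implemented by the $d+1$ additions $v_j := u_j + w_j$. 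A product gate $v = u \cdot w$ is implemented by the convolutions $v_j := \sum_{k+\ell = j} u_k w_\ell$ for each $j \in \set{0, \ldots, d}$, costing $O(d^2)$ new arithmetic gates per product gate. Designating $(\mathsf{out}_0, \ldots, \mathsf{out}_d)$ as the output gates of $C'$ completes the construction; summing the per-gate contributions over the $s$ gates of $C$ yields the claimed bound $|C'| = O(s d^2)$, and by construction no input gate of $C'$ is labeled by $x_i$.

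The transformation from a description of $C$ to a description of $C'$ is purely structural: it depends only on the wiring of $C$ and on the index $i$, and it performs no arithmetic computation on the constants appearing in $C$, which are merely copied through alongside fresh uses of $0$ and $1$. Accordingly, the Turing machine $M$ will run in time $\poly(n,s,d)$ and print a uniform arithmetic network of size $\poly(n,s,d)$ whose arithmetic gates merely route the original field constants and whose Boolean sub-network generates the new wiring of $C'$ from the Boolean part of the description of $C$ together with the binary encoding of $i$; this renders the degree of the network $O(1)$ as required. The only delicate point in the whole argument is the soundness of per-gate truncation in the face of possible high-$x_i$-degree cancellations at intermediate gates, and this is precisely what the ring-homomorphism observation in the first paragraph resolves; the remainder is entirely routine.
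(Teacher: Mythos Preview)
Your proposal is correct and takes essentially the same approach as the paper: both perform the standard gate simulation where each gate $v$ is replaced by $d+1$ gates tracking the coefficients of $x_i^0,\ldots,x_i^d$, with the identical wiring rules for inputs, sums, and the $O(d^2)$-cost convolution at products. Your framing via the ring homomorphism $\F_n[\vec{x}]\to\F_n[\vec{x}]/\langle x_i^{d+1}\rangle$ is a slightly cleaner way to justify correctness at intermediate gates of high $x_i$-degree than the paper's bare ``by induction'' remark, but the construction and analysis are the same.
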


\bproof
    We use gate simulation to convert the given circuit computing $f$ into a circuit that computes the coefficient polynomials $f_0, f_1, \ldots, f_d$.
    We first describe the structure of the circuit that computes the coefficient polynomials, and then explain how this transformation can be carried out by a uniform family of arithmetic networks.

    Each gate $v$ of the input circuit $C$ will be split into $d+1$ gates $(v, 0), (v,1), \ldots, (v,d)$.
    If $f_v(\vec{x})$ is the polynomial computed by the gate $v$ in $C$, then the gate $(v, j)$ is intended to compute a polynomial $f_{v,j}(\vec{x})$ such that
    \[
        f_v(\vec{x}) = \sum_{j=0}^d f_{v,j}(\vec{x}) x_i^j
    \]
    and none of the polynomials $f_{v,j}$ depend on the variable $x_i$.
    The gates $(v,j)$ are wired depending on the type of the gate $v$ in the circuit $C$.
    \begin{itemize}
        \item
            If $v$ is an input gate in $C$ labeled by a field element $\alpha$ or a variable $x_k$ where $k \neq i$, then we set $(v, 0)$ to be an input gate labeled by $\alpha$ or $x_k$, respectively.
            For $j \ge 1$, we set $(v, j)$ to be an input gate labeled by zero.
        \item
            If $v$ is an input gate in $C$ labeled by the variable $x_i$, then we set $(v, 1)$ to be an input gate labeled by $1$.
            For $j \neq 1$, we set $(v, j)$ to be an input gate labeled by zero.
        \item
            If $v = u + w$ is an addition gate with children $u$ and $w$, then for each $j \in \set{0, 1, \ldots, d}$, we set $(v,j)$ to be an addition gate with children $(u,j)$ and $(w,j)$.
        \item
            If $v = u \times w$ is a multiplication gate with children $u$ and $w$, then for each $j \in \set{0, 1, \ldots, d}$, we set $(v,j)$ to be the output of the subcircuit given by
            \[
                (v,j) = \sum_{k=0}^j (u, k) \times (w, j-k).
            \]
    \end{itemize}
    Denote this new circuit by $C'$.
    Using induction, one can prove that each gate $(v,j)$ of $C'$ correctly computes the polynomial $f_{v,j}(\vec{x})$.
    Each gate of $C$ is copied into $d+1$ gates in $C'$, and each of these gates is implemented by a subcircuit of size at most $O(d)$, so the total number of gates in $C'$ is bounded by $O(s d^2)$.
    Furthermore, it follows from the definition of $C'$ that $x_i$ does not appear on any of the input gates of $C'$, so no output of $C'$ depends on the variable $x_i$.
    This establishes the existence of the claimed circuits that compute the coefficients $f(\vec{x})$ as a polynomial in $x_i$.

    It remains to prove that this transformation can be carried out by a uniform arithmetic network with the claimed bounds on size and degree.
    The circuit $C'$ is obtained by modifying the graph underlying the circuit $C$.
    Thanks to \Cref{lem:network-algo-uni-std}, we may assume without loss of generality that we are given a standard description of $C$.
    In this case, the only modifications to $C$ are done on the boolean part of the circuit's description.
    We can obtain a polynomial-time Turing machine that modifies the gates and wiring of $C$ into that of $C'$ by directly implementing the transformation as described above.
    This implies there is a uniform family of polynomial-size boolean circuits that take as input the boolean part of $C$ and output the boolean part of $C'$.
    These circuits are likewise arithmetic networks that operate on the boolean part of the circuit description, so we obtain a uniform family of arithmetic networks of size $\poly(n,s,d)$ and degree $O(1)$ that computes a description of $C'$ from the description of $C$ as desired.
\eproof

An easy application of polynomial interpolation allows us to take partial derivatives of arithmetic circuits with respect to one variable.

\begin{lem}[Partial derivatives] \label{lem:network algo partial derivative}
    There is a deterministic, polynomial-time Turing machine that receives as input $(1^n, 1^d, 1^s)$ and outputs an arithmetic network satisfying the following.
    \begin{enumerate}
        \item
            The network receives as input the description of an arithmetic circuit $C$ of size $s$ over $\F_n$ computing a polynomial $f \in \F_n[x_1,\ldots,x_n]$ of degree at most $d$, and the binary encoding of two integers $i \in [n]$ and $k \in [d]$.
        \item
            The network outputs a description of an arithmetic circuit $C'$ of size $O(s d^2)$.
            The arithmetic circuit $C'$ computes the polynomial
            \[
                \frac{\partial^k}{\partial x_i^k}(f(\vec{x})).
            \]
        \item 
            The network has size $\poly(n,s,d)$ and degree $O(1)$.
    \end{enumerate}
\end{lem}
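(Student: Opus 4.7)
The plan is to reduce the computation of $\partial_{x_i}^k f$ to polynomial interpolation via \Cref{lem:network algo interpolation}, and then postprocess the interpolated coefficients by multiplying by appropriate field constants and powers of $x_i$. Concretely, if we write
\[
    f(\vec{x}) = \sum_{j=0}^d f_j(x_1,\ldots,x_{i-1},x_{i+1},\ldots,x_n) \cdot x_i^j,
\]
then the $k$-th partial derivative with respect to $x_i$ is
\[
    \partial_{x_i}^k f(\vec{x}) = \sum_{j=k}^{d} \frac{j!}{(j-k)!} \cdot f_j(x_1,\ldots,x_{i-1},x_{i+1},\ldots,x_n) \cdot x_i^{j-k},
\]
where the coefficients $j!/(j-k)!$ are interpreted as integers embedded into $\F_n$ (and simply vanish in positive characteristic when the characteristic divides them, which matches the correct behavior of partial differentiation).

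The first step of the arithmetic network, on input $(\langle C \rangle, i, k)$, invokes the network from \Cref{lem:network algo interpolation} on $(\langle C \rangle, i)$ to obtain a description of a multi-output circuit of size $O(s d^2)$ that simultaneously computes $f_0, f_1, \ldots, f_d$. The second step appends to this circuit an $O(d)$-sized subcircuit that computes $1, x_i, x_i^2, \ldots, x_i^{d-k}$ via iterated multiplication, together with $O(d)$ additional gates that multiply $f_j$ by the field constant $j!/(j-k)!$ and by $x_i^{j-k}$, and sum over $j$. The output of this sum is the desired partial derivative. Because we only add $O(d)$ gates on top of the interpolation circuit of size $O(s d^2)$, the size of $C'$ remains $O(s d^2)$ as claimed.

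The only mildly delicate point is that the constants $j!/(j-k)!$ depend on the value of $k$, which is given to the network as Boolean input (not known when the Turing machine prints the network). This is handled in the same spirit as \Cref{clm:ki:recon:ind}: the Turing machine hardwires, for each pair $(j,k) \in \{0,\ldots,d\}^2$, the element of $\F_n$ representing $j!/(j-k)!$ (computable in $\poly(d, n)$ time since $\F_n$ is feasible), and then a $\poly(d)$-sized subnetwork of $\select$ gates keyed on the binary encoding of $k$ routes the correct $d+1$ constants into the input gates of $C'$. This gadget contributes $\poly(d)$ to the network size and $O(1)$ to the degree. Combined with the $\poly(n,s,d)$-size, $O(1)$-degree network from \Cref{lem:network algo interpolation} and \Cref{lem:network-algo-uni-std}, the overall arithmetic network has size $\poly(n,s,d)$ and degree $O(1)$, and is printable in polynomial time by composing the Turing machines underlying the constituent constructions.

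No step looks genuinely hard; the only thing to be slightly careful about is the uniform generation of the factorial constants as field elements and the $\select$-based routing that handles the fact that $k$ is runtime input. Both are standard uniform arithmetic network constructions whose complexity is dominated by the interpolation step.
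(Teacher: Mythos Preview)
Your proposal is correct and follows essentially the same approach as the paper: invoke \Cref{lem:network algo interpolation} to extract the coefficients $f_0,\ldots,f_d$ of $f$ as a polynomial in $x_i$, then recombine them with the falling-factorial constants $j!/(j-k)!$ and powers of $x_i$ to obtain $\partial_{x_i}^k f$. The paper's proof is terser and simply asserts that ``it is straightforward to add an extra addition gate on top and print the values of the falling factorials,'' whereas you are more explicit about handling the fact that $k$ is a runtime Boolean input via $\select$ gates; this extra care is appropriate but does not constitute a different argument.
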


\bproof
    Write $f(\vec{x})$ as a polynomial in $x_i$ as
    \[
        f(\vec{x}) = \sum_{j=0}^d f_j(x_1,\ldots,x_{i-1},x_{i+1},\ldots,x_n) x_i^j.
    \]
    Then the derivative $\frac{\partial^k}{\partial x_i^k}(f(\vec{x}))$ is given by
    \[
        \frac{\partial^k}{\partial x_i^k} = \sum_{j=k}^d j (j-1) \cdots (j-k+1) \cdot f_j(x_1,\ldots,x_{i-1},x_{i+1},\ldots,x_n) x_i^{j-k}.
    \]
    \Cref{lem:network algo interpolation} shows that there is a multi-output arithmetic circuit of size $O(s d^2)$ that computes all the $f_j$ above, and that this circuit can be computed by a uniform family of arithmetic networks.
    By taking an appropriate linear combination of the $f_j$ as above, we obtain an arithmetic circuit of size $O(s d^2)$ that computes $\frac{\partial^k}{\partial x_i^k}(f(\vec{x}))$ as desired.

    Given a description of the circuit computing the polynomials $f_j$, it is straightforward to add an extra addition gate on top and print the values of the falling factorials $j (j-1) \cdots (j - k + 1)$.
    This shows that the description of the circuit for $\frac{\partial^k}{\partial x_i^k}(f(\vec{x}))$ can be computed by a uniform family of arithmetic networks as desired.
\eproof

Next on our list of circuit transformations is homogenization.
Every multivariate polynomial $f(\vec{x})$ of degree at most $d$ can be written as a sum $f = f_0 + f_1 + \cdots + f_d$, where each $f_i$ is a polynomial consisting only of monomials of degree exactly $i$.
Such polynomials are called \emphdef{homogeneous}, and the polynomial $f_0, f_1, \ldots, f_d$ are called the \emphdef{homogeneous components} of $f$.
A gate simulation argument, very similar to the one used for interpolation, allows us to efficiently extract homogeneous components from arithmetic circuits.
We sometimes apply this transformation in a manner where we only take homogeneous components with respect to a subset of the variables, pushing the others into the ground field.
To accommodate this generality, we provide a list of variables as part of the input to the arithmetic network that performs homogenization, with the intended meaning that the network homogenizes the given circuit only with respect to these specified variables.
By homogenizing with respect to a single variable, we essentially recover polynomial interpolation (\Cref{lem:network algo interpolation}) as a special case of homogenization.

\begin{lem}[Decomposition into homogeneous parts] \label{lem:network algo homogeneous components}
    There is a deterministic, polynomial-time Turing machine $M$ that receives as input $(1^n, 1^d, 1^s)$ and outputs an arithmetic network satisfying the following properties.
    \begin{enumerate}
        \item
            The network receives as input the description of an arithmetic circuit $C$ of size $s$ over $\F_n$ computing a polynomial $f \in \F_n[x_1,\ldots,x_n]$ of degree at most $d$, as well as a set $S \subseteq [n]$ encoded as a bit vector in $\zo^n$.
        \item
            The network outputs the description of a $(d+1)$-output arithmetic circuit $C'$ of size bounded by $O(s d^2)$.
            Let $f_j(\vec{x})$ be the polynomial computed by the $j$\ts{th} output gate of $C'$.
            Then $f_j$ is a homogeneous polynomial of degree $j$ when viewed as a polynomial in the variables indexed by $S$, where the remaining variables are regarded as elements of the underlying field.
            Further, the polynomial identity 
            \[
                f(x_1,\ldots,x_n) = \sum_{j=0}^d f_j(\vec{x})
            \]
            holds.
            That is, the polynomial $f_j$ is the degree-$j$ homogeneous component of $f$ with respect to the variables indexed by $S$.
        \item 
            The network has size $\poly(n,s,d)$ and degree $O(1)$.
    \end{enumerate}
\end{lem}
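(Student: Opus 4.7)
The plan is to mirror the gate-simulation argument already used for \Cref{lem:network algo interpolation}, generalized to the multivariate homogenization setting. We may assume, by invoking \Cref{lem:network-algo-uni-std,lem:network-algo-std-uni} if needed, that the input to the network is a standard description of $C$. The output circuit $C'$ is obtained by duplicating each gate $v$ of $C$ into $d+1$ gates $(v,0),(v,1),\ldots,(v,d)$, where $(v,j)$ is intended to compute the degree-$j$ homogeneous component (with respect to the variables indexed by $S$) of the polynomial $f_v$ computed at $v$ in $C$.

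The wiring of the gates $(v,j)$ is determined by the type of $v$. If $v$ is an input gate labeled by a field constant $\alpha$, or by a variable $x_k$ with $k \notin S$, set $(v,0)$ to be an input gate with the same label and $(v,j)$ to be an input gate labeled by $0$ for $j \geq 1$; this is correct because the polynomial at $v$ is degree $0$ in the $S$-variables. If $v$ is an input gate labeled by $x_k$ with $k \in S$, set $(v,1) = x_k$ and $(v,j) = 0$ for $j \ne 1$. If $v = u + w$, set $(v,j) = (u,j) + (w,j)$, since homogeneous components are additive. If $v = u \times w$, set $(v,j) = \sum_{k=0}^{j} (u,k) \times (w,j-k)$, which captures the convolution of homogeneous components. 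A straightforward induction on the gate-depth shows that each $(v,j)$ correctly computes the degree-$j$ $S$-homogeneous component of $f_v$, and in particular the gates $(v_{\text{out}},0),\ldots,(v_{\text{out}},d)$ at the output gate $v_{\text{out}}$ of $C$ yield the desired decomposition of $f$. The multiplication-gate replacement dominates the size bound, contributing $O(d)$ gates per original gate, for a total size of $O(sd^2)$.

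It remains to verify that the transformation $C \mapsto C'$ can be implemented by a uniform arithmetic network of size $\poly(n,s,d)$ and degree $O(1)$. Because the modifications operate entirely on the standard description of $C$ (i.e., on its wiring graph, gate labels, and the $n$-bit vector encoding $S$) together with the hard-wired constants $0$ and $1$, they can be carried out by a polynomial-size Boolean circuit over the description, which is in turn a (degree-$1$) arithmetic network on its arithmetic inputs. Each of the $O(s)$ original gates can be expanded into its corresponding $O(d)$ gadget by a local polynomial-time procedure that inspects the gate's type and, in the input-gate case, checks membership of the variable index in $S$; concatenating these gadgets and rewiring them according to the rules above yields a polynomial-time Turing machine $M$ that prints the desired network. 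Since no arithmetic operation is performed on the input constants other than copying them into the bottom layer of $C'$, the resulting network has degree $O(1)$ as claimed.

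The main obstacle, such as it is, lies not in any deep algebraic step but in being careful about the bookkeeping for the multiplication gadget: one must ensure the gadget $\sum_{k=0}^{j}(u,k)\times(w,j-k)$ is laid out uniformly and that its wiring is computable from the indices $(v,j)$ alone. This, however, is entirely routine and identical in spirit to the interpolation construction of \Cref{lem:network algo interpolation}, so no new ideas beyond that lemma are needed.
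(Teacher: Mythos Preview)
Your proposal is correct and follows essentially the same approach as the paper: the gate-simulation argument duplicating each gate into $d+1$ copies, with the identical wiring rules for input, addition, and product gates, and the same observation that the transformation acts only on the Boolean part of the standard description and hence yields a degree-$O(1)$ network. The paper's proof differs only cosmetically, e.g., it invokes only \Cref{lem:network-algo-uni-std} (not both conversion lemmas) to pass to a standard description.
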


\bproof
    To obtain a circuit that computes the homogeneous components of $f$, we apply a gate simulation argument similar to the one used in the proof of \Cref{lem:network algo interpolation}.
    We first describe the gate simulation argument used to obtain circuits for the homogeneous components of $f$, and then argue that there is a uniform family of arithmetic networks that implements this simulation.
    For the sake of completeness, we include the full details of the gate simulation.

    Each gate $v$ of the given circuit $C$ will be split into $d+1$ gates $(v, 0), (v,1), \ldots, (v, d+1)$.
    Letting $f_v(\vec{x})$ be the polynomial computed by gate $v$, the gate $(v, j)$ is intended to the degree-$j$ homogeneous component of $f_v$ with respect to the variables in $S$.
    The wiring of the gates $(v,j)$ depends on the label of the original gate $v$ in $C$.
    \begin{enumerate}
        \item 
            If $v$ is an input gate labeled by a field constant $\alpha$ or a variable $x_k$ where $k \notin S$, we set $(v, 0)$ to be an input gate labeled by $\alpha$ or $x_k$, respectively.
            For $j \ge 1$, we set $(v,j)$ to be an input gate labeled by zero.
        \item
            If $v$ is an input gate labeled by a variable $x_k$ where $k \in S$, then we set $(v,1)$ to be an input gate labeled by $x_k$.
            For $j \neq 1$, we set $(v, j)$ to be an input gate labeled by zero.
        \item
            If $v = u + w$ is an addition gate with children $u$ and $w$, then for each $j \in \set{0, 1, \ldots, d}$, we set $(v,j)$ to be an addition gate with children $(u,j)$ and $(w,j)$.
        \item
            If $v = u \times w$ is a product gate with children $u$ and $w$, then for each $j \in \set{0, 1, \ldots, d}$, we set $(v, j)$ to be the output of the subcircuit given by
            \[
                (v, j) = \sum_{k=0}^j (u, k) \times (w, j - k).
            \]
    \end{enumerate}
    Let $C'$ be the new circuit obtained by applying this gate simulation to $C$.
    By induction, one can show that $(v, j)$ computes the degree-$j$ homogeneous component of $f_v$ with respect to the variables indexed by $S$.
    Every gate of $C$ is copied into $d+1$ gates in $C'$, each of which can be implemented by a subcircuit of size $O(d)$, so there are at most $O(s d^2)$ gates in $C'$.
    This proves the existence of the claimed circuits for the homogeneous components of $f$ with respect to the variables indexed by $S$.

    To show that this transformation can be carried by a uniform family of arithmetic networks, we use the same argument as in the proof of \Cref{lem:network algo interpolation}.
    By directly implementing the description of $C'$ above, we obtain a polynomial-time Turing machine that computes a description of the boolean part of $C'$ from a description of the boolean part of $C$.
    This implies the existence of a uniform family of boolean circuits---and hence a uniform family of arithmetic networks---that carries out the same computation.
    Using \Cref{lem:network-algo-uni-std}, we may assume that we are given a standard description of $C$ as input, so we obtain the desired family of uniform arithmetic networks of size $\poly(n,s,d)$ and degree $O(1)$.
\eproof

Homogenization of arithmetic circuits is a useful subroutine in the reconstruction algorithm for our targeted hitting set generator based on the GKSS generator.
There, we sometimes find ourselves in the situation where the circuit $C$ to be homogenized is already known to be partially homogeneous.
That is, we know the circuit $C$ can be written as a composition $C = C_1 \circ C_2$ of two arithmetic circuits, where the circuit $C_2$ is already homogeneous.
With this extra information, we can homogenize $C$ in a more efficient manner than what \Cref{lem:network algo homogeneous components} provides.
Letting $s_1$ and $s_2$ denote the sizes of $C_1$ and $C_2$, respectively, \Cref{lem:network algo homogeneous components} constructs an equivalent homogeneous circuit of size $O((s_1 + s_2) d^2)$.
The following lemma uses the fact that $C_2$ is already homogeneous to improve the size of the resulting homogeneous circuit to $s_2 + O(s_1 d^2)$, which is important when bounding the complexity of the reconstruction procedure in \Cref{sec:gkss-reconstruction}.

\begin{lem}[{Partial homogenization \cite[Lemma 2.10]{GKSS22}}] \label{lem:network algo partial hom}
    There is a deterministic, polynomial-time Turing machine $M$ that receives as input $(1^n, 1^m, 1^d, 1^s, 1^t)$ and outputs an arithmetic network satisfying the following properties.
    \begin{enumerate}
        \item
            The network receives as input the description of a multi-output homogeneous arithmetic circuit $C$ of size $s$ over $\F_n$ with $m$ outputs computing polynomials $f_1,\ldots,f_m \in \F_n[x_1,\ldots,x_n]$, and the description of an $m$-input multi-output circuit $C'$ of size $t$.
        \item
            If $\deg(C) \cdot \deg(C') \le d$, then the network outputs the description of a homogeneous multi-output circuit $D$ of size $s + O(t d^2)$ that computes the homogeneous components of degree at most $d$ of the outputs of $C' \circ C$.
            If $\deg(C) \cdot \deg(C') > d$, the output of the network is meaningless.
        \item 
            The network has size $\poly(n,m,s,t,d)$ and degree $O(1)$.
    \end{enumerate}
\end{lem}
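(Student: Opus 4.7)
The plan is to apply the gate-simulation homogenization of \Cref{lem:network algo homogeneous components} only to the outer circuit $C'$, while leaving the inner homogeneous circuit $C$ intact. Because each output $f_i$ of $C$ is a homogeneous polynomial of some known degree $d_i \le \deg(C)$, we can treat the $i$-th input wire of $C'$ as an ``input of degree $d_i$'' rather than of degree $1$, and perform the gate-simulation with respect to the underlying $\vec{x}$-variables. This avoids ever duplicating gates of $C$ into $d+1$ copies, which is what saves the $O(sd^2)$ factor and leaves a final circuit of size $s + O(td^2)$ as claimed.

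The construction proceeds in three stages. First, the network computes the vector $(d_1,\ldots,d_m) \in \set{0,1,\ldots,\deg(C)}^m$ of output degrees of the homogeneous circuit $C$ by a single bottom-up pass on the boolean part of $C$'s description: input gates of $C$ labeled by constants or by output variables $x_k$ receive degree $0$ or $1$, addition gates inherit the degree of either child (both must agree since $C$ is homogeneous), and multiplication gates receive the sum of their children's degrees. This pass is implementable by a $\P$-uniform boolean (hence arithmetic) network of size $\poly(n,s,\log d)$. Second, the network applies the gate-simulation of \Cref{lem:network algo homogeneous components} to $C'$ only, creating $d+1$ copies $(v,0),(v,1),\ldots,(v,d)$ of each gate $v$ of $C'$, with sum gates inheriting the homogeneous component index and product gates expanded as $(v,j) = \sum_{k=0}^{j} (u,k)\times(w,j-k)$. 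This introduces $O(td^2)$ new gates. Third, the network wires the inputs of the simulated $C'$ to the outputs of $C$: for each input gate $v$ of $C'$ representing the $i$-th coordinate of the input to $C'$, we wire $(v, d_i)$ directly to the $i$-th output gate of $C$, and label $(v,j)$ as the zero constant for all $j \ne d_i$. The output of the final circuit $D$ is the union, over all original output gates $v^\star$ of $C'$, of the gates $\set{(v^\star, j) : 0 \le j \le d}$, which by induction on the gates of $C'$ correctly compute the homogeneous components (with respect to $\vec{x}$) of each output of $C' \circ C$ of degree at most $d$. The correctness when $\deg(C)\cdot\deg(C')\le d$ follows by the same induction used in \Cref{lem:network algo homogeneous components}, since a gate of $C'$ computing a polynomial of $\vec{y}$-degree $j'$ becomes, after substitution $\vec{y}\mapsto (f_1(\vec{x}),\ldots,f_m(\vec{x}))$, a polynomial of $\vec{x}$-degree at most $j' \cdot \deg(C) \le d$, so nothing of interest is truncated.

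For the complexity bounds, the final circuit $D$ consists of a verbatim copy of $C$ (contributing $s$ gates) together with $O(td^2)$ additional gates implementing the homogenized version of $C'$ and the rewiring at the interface, for total size $s + O(td^2)$. The network that outputs a description of $D$ essentially copies $C$ verbatim, reads $\vec{d}$ computed in the first stage, runs the direct implementation of the gate-simulation on $C'$, and performs the interface rewiring based on $\vec{d}$; all of these operations are local combinatorial manipulations of the boolean part of the circuit descriptions and are therefore implementable by a $\P$-uniform family of arithmetic networks of size $\poly(n,m,s,t,d)$ and degree $O(1)$, using \Cref{lem:network-algo-uni-std,lem:network-algo-std-uni} to translate between encodings if required. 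The main (mild) technical point to be careful about is the first stage, where we must recognize that homogeneity of $C$ guarantees the degree of each gate is well-defined and computable in a simple forward pass; given that, everything else reduces to the already-established gate-simulation argument applied to a strictly smaller circuit.
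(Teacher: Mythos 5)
Your proposal is correct and follows essentially the same route as the paper's proof: apply the gate-simulation homogenization only to $C'$, compute the output degrees $d_1,\ldots,d_m$ of the homogeneous inner circuit $C$ by a forward dynamic-programming pass exploiting homogeneity (addition gates inherit, product gates sum), and route the $i$-th output of $C$ into the slot $(v, d_i)$ of the simulated $C'$ with the other slots zeroed out. The paper phrases the interface step as labeling $(v, \deg(f_i))$ with the formal variable $y_i$ and then forming $D = C'' \circ C$, which is equivalent to your direct rewiring; the size, degree, and uniformity analysis match.
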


\bproof
    As in the proof of \cite[Lemma 2.10]{GKSS22}, we apply the standard gate simulation used to homogenize circuits (as done in \Cref{lem:network algo homogeneous components}), but we only simulate the gates appearing in the not-necessarily-homogeneous circuit $C'$.
    Like our previous applications of gate simulation, we first describe the structure of the new circuit, and then explain how this new circuit can be computed by a uniform family of arithmetic networks.

    We first create a new circuit $C''$ from $C'$ by applying gate simulation.
    For each gate $v$ of $C'$, we create $d+1$ new gates $(v, 0), (v, 1), \ldots, (v, d)$ in $C''$.
    Similar to prior simulations, if $f_v$ denotes the polynomial computed by gate $v$ in the composed circuit $C' \circ C$, then the gate $(v, j)$ is intended to compute the degree-$j$ homogeneous component of $f_v$ when viewed as part of the composed circuit $C'' \circ C$.
    We wire the gates $(v, j)$ as follows.
    \begin{enumerate}
        \item
            If $v$ is an input gate labeled by a field constant $\alpha$, we set $(v, 0)$ to be an input gate labeled by $\alpha$.
            For all $j \ge 1$, we set $(v, j)$ to be an input gate labeled by zero.
        \item
            If $v$ is an input gate labeled by the variable $y_i$, then we set $(v, \deg(f_i))$ to be an input gate labeled by $y_i$, where $f_i$ is the polynomial computed by the $i$\ts{th} output gate of the circuit $C$.
            For $j \neq \deg(f_i)$, we set $(v, j)$ to be an input gate labeled by zero.
        \item 
            If $v = u + w$ is an addition gate with children $u$ and $w$, then we set $(v, j) = (u, j) + (w, j)$ for each $j \in \set{0,1,\ldots,d}$.
        \item
            If $v = u \times w$ is a product gate with children $u$ and $w$, then for each $j \in \set{0,1,\ldots,d}$, we set $(v, j)$ to be the output of the subcircuit
            \[
                (v, j) = \sum_{k=0}^j (u, k) \times (w, j - k).
            \]
    \end{enumerate}
    Let $C''$ be the circuit obtained by applying this simulation to the gates of $C'$.
    It is clear that the size of $C''$ is bounded by $O(t d^2)$.
    Using induction, one can prove that the circuit $C'' \circ C$ is homogeneous, and that the outputs of $C'' \circ C$ correspond to the homogeneous components of the outputs of $C' \circ C$.
    Thus, the circuit $D \coloneqq C'' \circ C$ is a homogeneous circuit of size $s + O(t d^2)$ and computes the homogeneous components of the outputs of $C' \circ C$.
    This establishes the existence of a circuit with the desired behavior and complexity bounds.

    It remains to describe a uniform family of arithmetic networks that receive descriptions of $C$ and $C'$ as input and output a description of $D$.
    To do this, it suffices to describe a uniform family of arithmetic networks that compute a description of the circuit $C''$ that simulates $C'$.
    If we are given the degrees of the output polynomials of $f_1, \ldots, f_m$ of the circuit $C$, then the construction of $C''$ can be carried out by a uniform family of arithmetic networks in the same manner as in theproof of \Cref{lem:network algo homogeneous components}.
    Thus, our task amounts to computing the degrees of the polynomials computed by the circuit $C$ in a uniform manner.

    Recall that the circuit $C$ is homogeneous.
    This allows us to compute the degree of each gate of $C$ via dynamic programming, following a topological ordering of the gates of $C$.
    Starting at the input layer, each input gate of $C$ computes a polynomial either of degree zero or degree one, depending on whether the gate is labeled with a field element or a variable.
    In either case, we can easily determine the degree of the polynomial computed at the corresponding gate.
    If $v = u + w$ is an addition gate with children $u$ and $w$, then the fact that $C$ is homogeneous implies that $\deg(f_v) = \deg(f_u) = \deg(f_w)$, so from the degree of $u$ or $w$ we can immediately compute the degree of $v$.
    If $v = u \times w$ is a product gate, then we have $\deg(f_v) = \deg(f_u) + \deg(f_w)$, so we can efficiently compute the degree of $v$ from the degrees of $u$ and $w$.
    
    The preceding algorithm yields a polynomial-time Turing machine that takes the description of $C$ as input and computes the degrees of each gate of $C$.
    This, in turn, produces a uniform family of boolean circuits---and hence a uniform family of arithmetic networks---that take the description of $C$ as input and output the degrees of each gate of $C$.
\eproof

We now turn our attention away from syntactic operations on arithmetic circuits and towards operations of a more algebraic nature, which will be particularly useful in our description of Kaltofen's algorithm to factor arithmetic circuits.
We start by showing that uniform families of arithmetic networks with $\pit$ gates can solve linear systems, even when the coefficients of the linear system are multivariate polynomials encoded as arithmetic circuits.
This result is a natural combination of two algorithms. 
The first is a simple algorithm, due to \textcite{BvzGH82}, that computes a basis for the nullspace of a matrix by finding a basis of the column space and then expressing the remaining columns of the matrix in terms of this basis.
The second is a natural extension of this algorithm, appearing in the work of \textcite{KSS15}, that computes a single nonzero element of the nullspace of a matrix in the setting where the entries of the matrix are themselves multivariate polynomials encoded by arithmetic circuits.

\begin{lem}[{Solving linear systems \cite[Theorem 5]{BvzGH82}, \cite[Lemma 2.6]{KSS15}}] \label{lem:network algo linear systems}
    There is a deterministic, polynomial-time Turing machine $M$ that receives as input $(1^n, 1^d, 1^k, 1^\ell, 1^s)$ and outputs an arithmetic network with $\pit$ gates satisfying the following properties.
    \begin{enumerate}
        \item
            The network receives as input the description of a multi-output arithmetic circuit $C$ of size $s$ that computes $k\ell$ polynomials $f_{1,1}, \ldots, f_{k,\ell} \in \F_n[x_1,\ldots,x_n]$, each of degree at most $d$.
        \item
            Let $M \in \F_n[\vec{x}]^{k \times \ell}$ be the $k \times \ell$ matrix given by $M_{i,j} = f_{i,j}$.
            The network outputs an integer $m \in \set{0,1,\ldots,\ell}$, encoded in binary, that corresponds to the dimension of the space of solutions to the linear system $M \vec{v} = \vec{0}$.
            In addition, the network outputs a description of a multi-output arithmetic circuit $D$ of size $\poly(n,k,\ell,s,d)$ that computes $m \ell$ polynomials $g_{1,1}, \ldots, g_{m,\ell} \in \F_n[\vec{x}]$ such that the vectors $\set{(g_{i,1}, \ldots, g_{i,n}) : i \in [m]} \subseteq \F_n[\vec{x}]^\ell$ are a basis for the solution space of $M \vec{v} = \vec{0}$.
        \item 
            The network has size $\poly(n,k,\ell,s,d)$ and degree $O(1)$.
    \end{enumerate}
\end{lem}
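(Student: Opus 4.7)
The plan is to implement the classical nullspace-basis algorithm of \textcite{BvzGH82}, extended to matrices of polynomials as in \textcite{KSS15}, uniformly using PIT gates. Viewing $M$ as a matrix over $\F_n(\vec{x})$, let $r$ be its rank and set $m = \ell - r$. The algorithm proceeds in two conceptual phases. First, identify a \emph{pivot submatrix}: sets $R \subseteq [k]$ and $C \subseteq [\ell]$ with $|R|=|C|=r$ and $\delta \coloneqq \det(M[R,C]) \not\equiv 0$. Second, for each non-pivot column $c \in [\ell] \setminus C$, express $M[R,c]$ as a rational linear combination of the pivot columns $M[R,c']$ ($c' \in C$); by Cramer's rule, scaling the coefficients by $\delta$ clears denominators. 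This produces a nullspace vector $v_c \in \F_n[\vec{x}]^\ell$ with entries $-\delta$ in position $c$, $\delta \cdot \alpha_{c,c'}$ in position $c' \in C$, and zero elsewhere. Each $v_c$ lies in the nullspace of $M$ because the rows not in $R$ are $\F_n(\vec{x})$-linear combinations of rows in $R$, and the rows in $R$ annihilate $v_c$ by Cramer's rule. The $m$ vectors $\{v_c\}_{c \notin C}$ are linearly independent since their projection onto the non-pivot coordinates is $-\delta$ times the identity matrix, so they form the desired basis of the nullspace.

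For pivot selection, the network greedily enlarges $R$ and $C$: iterate $c = 1,\ldots,\ell$, and for each such $c$ iterate $i \in [k] \setminus R$, testing via a PIT gate whether $\det(M[R \cup \{i\}, C \cup \{c\}]) \not\equiv 0$. Each such sub-determinant admits a $\poly(k,\ell,s)$-size circuit built from $C$ using a uniform implementation of Berkowitz's algorithm \cite{Berkowitz84} together with \Cref{prop:comp-arith-network} to compose it with the input circuit; this description is then passed through \Cref{lem:network-algo-std-uni} to obtain a universal encoding to feed into a PIT gate. The boolean outputs of the PIT gates update bit-vectors in $\zo^k$ and $\zo^\ell$ tracking membership in $R$ and $C$, respectively. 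For the output phase, the network constructs $D$ by laying out, for each pair $(c, c')$ with $c \notin C$ and $c' \in C$, the Berkowitz circuit for the column-replacement sub-determinant that equals $\pm \delta \cdot \alpha_{c,c'}$, along with a Berkowitz circuit for $\delta$ itself. Because the choice of which rows and columns form the pivot is only revealed at runtime, the network instantiates \emph{all} $O(k\ell)$ possible candidate sub-matrix descriptions (each of $\poly(k,\ell,s)$ size) and uses select gates driven by the bit-vectors from phase~1 to route the correct polynomial into each entry of $D$. The total size is $\poly(n,k,\ell,s,d)$, and the network's own degree is $O(1)$ because it only copies field constants from $C$ into circuit descriptions rather than performing nontrivial arithmetic on them.

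The main bookkeeping challenge is coordinating the two phases so that the determinant circuits placed into $D$ reference exactly the sub-matrices whose determinants were certified nonzero by the PIT gates in phase~1. I handle this by having phase~1 emit, in addition to the PIT outcomes themselves, the indicator vectors of $R$ and $C$; these then serve as the control inputs to the select gates and to the boolean logic governing which row and column indices appear in the output circuit's description. All other components---Berkowitz's algorithm, circuit composition (\Cref{prop:comp-arith-network}), encoding conversion (\Cref{lem:network-algo-std-uni}), and the enumeration over candidate pivot pairs---admit straightforward uniform implementations, so the Turing machine printing the network simply lays out the Berkowitz sub-networks, the PIT gates, and the select-gate routing fabric. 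This is tedious but routine; no conceptually new idea is required beyond the BvzGH algorithm itself and the uniform subroutines already established in this appendix.
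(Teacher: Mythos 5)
Your proposal is correct and follows essentially the same route as the paper: greedily locate a maximal nonsingular submatrix $M[R,C]$ using $\pit$ gates on Berkowitz-style determinant circuits composed with the input circuit, then read off a basis of the nullspace by Cramer's rule via minors, with select gates routing the correct entries of $M$ into the determinant circuits based on the bit-vectors for $R$ and $C$. The only surface-level difference is that you search columns-then-rows while the paper scans all $(r,c)$ pairs simultaneously and takes the lexicographically first; both are correct for the usual matroid-exchange reason you implicitly invoke.
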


\bproof
    As described prior to the statement of the lemma, we combine the algorithm of \textcite{BvzGH82} with the observation of \textcite{KSS15} that when the entries of the matrix are specified by arithmetic circuits, the necessary polynomial arithmetic can be carried out using $\pit$ gates.
    We first describe the algorithm, and then verify that the algorithm can be implemented as a uniform family of arithmetic networks with $\pit$ gates.

    We first find a maximal nonsingular submatrix of $M$.
    This is done iteratively: for each $i \in [\ell]$, suppose we have found a set of $i-1$ rows $R \subseteq [k]$ and a set of $i-1$ columns $C \subseteq [\ell]$ such that the submatrix $M_{R,C}$ is nonsingular.
    For each $r \in [k] \setminus R$ and $c \in [\ell] \setminus C$, we check if the submatrix $M_{R \cup \set{r}, C \cup \set{c}}$ is nonsingular by computing $\det(M_{R \cup \set{r}, C \cup \set{c}})$.
    If $\det(M_{R \cup \set{r}, C \cup \set{c}}) \neq 0$, we have found an invertible $i \times i$ submatrix, so we update $R \gets R \cup \set{r}$ and $C \gets C \cup \set{c}$ and proceed to the next iteration.
    Otherwise, we have found a maximal nonsingular submatrix, which we will use to construct a basis for the nullspace of $M$ in the next step.

    Fix subsets $R \subseteq [k]$ and $C \subseteq [\ell]$ with $|R| = |C|$ such that $M_{R, C}$ is a maximal nonsingular submatrix.
    If $C = [\ell]$, then the matrix $M$ has no nonzero vectors in its kernel, so our algorithm reports that the system $M \vec{v} = \vec{0}$ has a solution space of dimension zero.
    Otherwise, for each $j \in [\ell] \setminus C$, the $j$\ts{th} column of $M$ is in the span of the columns indexed by $C$, and the resulting linear relation between the columns of $M$ corresponds to an element of the nullspace of $M$.
    In more detail, let $\vec{v}^{(j)} \in \F_n[\vec{x}]^\ell$ be the vector 
    \[
        \vec{v}^{(j)}_i \coloneqq 
        \begin{cases}
            \det(M_{R,C\setminus\set{i}\cup\set{j}}) & \text{if $i \in C$,} \\
            -\det(M_{R,C}) & \text{if $i = j$,} \\
            0 & \text{otherwise,}
        \end{cases}
    \]
    where $M_{R, C \setminus \set{i} \cup \set{j}}$ is the matrix obtained by replacing the $i$\ts{th} column of $M_{R,C}$ with $M_{R,j}$.
    Cramer's rule implies that $M \vec{v}^{(j)} = \vec{0}$.
    Because only the vector $\vec{v}^{(j)}$ has a nonzero entry in the $j$\ts{th} coordinate, the vectors $\set{\vec{v}^{(j)} : j \in [\ell] \setminus C}$ are linearly independent.
    Since the nullity of $M$ is $\ell - |C|$ and we have found $\ell - |C|$ linearly independent vectors in the nullspace of $M$, these vectors form a basis for the nullspace of $M$.

    It remains to describe an implementation of this algorithm as a uniform family of arithmetic networks with $\pit$ gates.
    To find a maximal nonsingular submatrix, the network tries all $k \ell$ possible choices of which row and column to add to the current submatrix, selecting the lexicographically-first choice of row $r$ and column $c$ where $M_{R \cup \set{r}, C \cup \set{c}}$ is nonsingular, if such a choice exists.
    To test if $M_{R \cup \set{r}, C \cup \set{c}}$ is nonsingular, the network computes $\det(M_{R \cup \set{r}, C \cup \set{c}})$ and checks if this determinant is nonzero.
    The entries of $M$ are multivariate polynomials specified by arithmetic circuits, so to check if this determinant is nonzero, the networks passes a description of a circuit computing $\det(M_{R \cup \set{r}, C \cup \set{c}})$ to a $\pit$ gate.
    We compute a description of a circuit computing $\det(M_{R \cup \set{r}, C \cup \set{c}})$ by composing the circuits computing the entries of $M_{R \cup \set{r}, C \cup \set{c}}$ with the uniform arithmetic circuit for the determinant constructed by \textcite{Berkowitz84}.

    Once we have found a maximal nonsingular submatrix $M_{R, C}$, we report that $M \vec{v} = \vec{0}$ has a solution space of dimension $m = \ell - |C|$.
    In case that $m > 0$, a description of a multi-output circuit computing a basis for the nullspace of $M$ can be obtained from the explicit description of the vectors $\vec{v}^{(j)}$ above.
    This is again done by composing the uniform arithmetic circuits of \textcite{Berkowitz84} for the determinant with the circuits computing the relevant entries of $M$.

    It is clear from the above description that the resulting arithmetic network has size bounded by $\poly(n,k,\ell,s,d)$.
    Since the network only manipulates descriptions of arithmetic circuits and does no arithmetic computation itself, its degree is bounded by $O(1)$.
    Finally, it is evident from the uniform description of this arithmetic network that it can be printed by a polynomial-time Turing machine.
\eproof

Next, we consider the problem of divisibility testing: we are given two polynomials $f(\vec{x})$ and $g(\vec{x})$, represented as arithmetic circuits, and we want to test if $f$ divides $g$.
There is a simple reduction from divisibility testing to $\pit$ due to \textcite{Forbes15}, based on Strassen's method of eliminating divisions in arithmetic circuits \cite{Strassen73}.
As the following lemma shows, this reduction (and hence divisibility testing itself) can be implemented as a uniform family of arithmetic networks with $\pit$ gates.

\begin{lem}[{\cite[Section 7]{Forbes15}}] \label{lem:network algo divisibility}
    There is a deterministic, polynomial-time Turing machine that receives as input $(1^n, 1^d, 1^s)$ and outputs an arithmetic network with $\pit$ gates that satisfies the following properties.
    \begin{enumerate}
        \item 
            The network receives as input the descriptions of two arithmetic circuits $C_1$ and $C_2$ of size $s$ over $\F_n$, computing polynomials $f, g \in \F_n[x_1, \ldots, x_n]$ respectvely, both of degree at most $d$.
        \item
            If $|\F_n| > d$, the network outputs a bit indicating whether $f$ divides $g$.
            Otherwise, the output of the network is meaningless.
        \item
            The network has size $\poly(n,s,d)$ and degree $O(1)$.
    \end{enumerate}
\end{lem}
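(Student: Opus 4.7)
\bproof[Proof proposal]
The plan is to implement the reduction from divisibility testing to $\pit$ due to Forbes \cite{Forbes15}, which is in turn based on Strassen's elimination of divisions from arithmetic circuits. First, the network passes a description of $C_1$ into a $\pit$ gate; if $f \equiv 0$, then $f \mid g$ iff $g \equiv 0$, which can be checked with another $\pit$ gate. Otherwise, we reduce to the case $\tilde f(\vec 0) \neq 0$ by first finding a point $\vec a \in \F_n^n$ where $f(\vec a) \neq 0$ and then shifting: we work with $\tilde f(\vec x) \coloneqq f(\vec x + \vec a)$ and $\tilde g(\vec x) \coloneqq g(\vec x + \vec a)$, descriptions of whose circuits are obtained by adding the constants $a_i$ at the input gates of $C_1$ and $C_2$.

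To find $\vec a$, I would use the standard search-to-decision reduction for $\pit$, building up $\vec a$ coordinate by coordinate. The machine printing the network hard-wires a list of $d+1$ distinct field elements of $\F_n$ (which exists by the feasibility of $\F$ and the assumption $|\F_n| > d$). For each coordinate $i \in [n]$ in turn, the network cycles through these $d+1$ candidate values for $a_i$, each time forming a circuit for the partial substitution $f(a_1,\dots,a_{i-1},\alpha,x_{i+1},\dots,x_n)$ and feeding it to a $\pit$ gate; since this polynomial has degree at most $d$ in $\alpha$ over $\F_n(x_{i+1},\dots,x_n)$, at least one of the $d+1$ values makes it nonzero. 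The network then uses a chain of $\select$ gates to pick the first such value as $a_i$. This contributes $\poly(n,s,d)$ size and $O(1)$ degree since the $a_i$'s are field constants selected from a hard-wired list, and no arithmetic computation is performed on them beyond ``placement'' inside circuit descriptions.

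Having shifted so that $c \coloneqq \tilde f(\vec 0) \neq 0$, I would apply Strassen's trick to build an arithmetic circuit computing the truncation
\[
    h(\vec x) \;\coloneqq\; \left[\tfrac{\tilde g(\vec x)}{\tilde f(\vec x)}\right]_{\le d} \;=\; \left[\tfrac{\tilde g(\vec x)}{c}\sum_{k=0}^{d} (1-\tilde f(\vec x)/c)^k\right]_{\le d},
\]
where the bracket denotes truncation to degree at most $d$. A description of a circuit of size $\poly(s,d)$ for the expression inside the brackets (before truncation) is obtained by standard repeated-squaring and composition on the given description of $C_1$; the truncation is then performed by invoking \Cref{lem:network algo homogeneous components} to extract the homogeneous components of degree at most $d$, yielding a description of a circuit for $h$ of size $\poly(n,s,d)$. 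All of this is uniform and has degree $O(1)$ as an arithmetic network, since it consists entirely of manipulations of circuit descriptions.

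Finally, the network forms a description of the circuit computing $h(\vec x)\cdot \tilde f(\vec x) - \tilde g(\vec x)$ and passes it to a $\pit$ gate, outputting $1$ iff this polynomial is zero. Correctness follows because, if $f \mid g$, then $g = qf$ with $\deg(q)\le d$, so $\tilde g/\tilde f = \tilde q$ is a degree-$\le d$ polynomial and equals its own truncation $h$; conversely, if $h \cdot \tilde f = \tilde g$ as polynomials, then $\tilde f \mid \tilde g$, hence $f \mid g$. The main potential pitfall is not mathematical but bookkeeping: keeping the sizes and degrees of all intermediate circuit descriptions within $\poly(n,s,d)$ and ensuring the whole construction is printable in polynomial time, which follows routinely by chaining the previously established uniform network algorithms (\Cref{lem:network-algo-std-uni,lem:network-algo-uni-std,lem:network algo homogeneous components,prop:comp-arith-network}).
\eproof
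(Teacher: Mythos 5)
Your proof is correct and follows essentially the same route as the paper's: handle the degenerate case where $f$ is identically zero, find a non-root of $f$ via the search-to-decision reduction with $\pit$ gates, translate that point to the origin, apply Strassen's power-series truncation together with the homogeneous-components lemma to build a circuit for the candidate quotient, and finish with a single $\pit$ test on $h\tilde f - \tilde g$. The only cosmetic difference is the sign convention in the shift ($\vec{x}+\vec{a}$ versus $\vec{x}-\vec{\alpha}$) and that the paper also explicitly dispenses with the case $g\equiv 0$ up front, which your main test handles implicitly anyway.
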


\bproof
    We follow Forbes's \cite{Forbes15} reduction of divisibility testing to $\pit$.
    We first present this reduction, and then explain how to implement it as a uniform family of arithmetic networks with $\pit$ gates.

    We begin by handling the special case where either $f$ or $g$ is zero.
    If $g$ is identically zero, then every polynomial divides $g$, so we report that $f$ divides $g$.
    If $g$ is nonzero and $f$ is zero, then $f$ does not divide $g$ and we output this accordingly.
    Otherwise, we proceed with the case where both $f$ and $g$ are nonzero.

    Following \textcite{Forbes15}, we execute Strassen's division elimination procedure \cite{Strassen73} on an arithmetic circuit that computes $g(\vec{x}) / f(\vec{x})$.
    This produces a circuit that computes a polynomial $h(\vec{x})$ which satisifes the identity $g(\vec{x}) - f(\vec{x}) h(\vec{x}) = 0$ if and only if $f$ divides $g$.
    Thus, to check if $f$ divides $g$, it suffices to construct a circuit that computes the polynomial $h(\vec{x})$ and then test the identity $g(\vec{x}) - f(\vec{x}) h(\vec{x}) = 0$.
    We proceed to construct a circuit for the polynomial $h(\vec{x})$.

    Because $f$ is nonzero and $|\F_n| > d \ge \deg(f)$, there is a point $\vec{\alpha} \in \F_n^n$ such that $f(\vec{\alpha}) \neq 0$.
    Translating this point to the origin, we may assume that $f(\vec{0}) \neq 0$.
    Since $f$ has a nonzero constant term, we can find an inverse of $f$ as a formal power series.
    Let $\hat{f}(\vec{x}) \coloneqq f(\vec{x}) - f(\vec{0})$ be the nonconstant part of $f(\vec{x})$.
    Then we have the equality of formal power series
    \[
        \frac{1}{f(\vec{x})} = \frac{1}{f(\vec{0})} \cdot \frac{1}{1 - \del{\frac{-\hat{f}(\vec{x})}{f(\vec{0})}}} = \frac{1}{f(\vec{0})} \sum_{i=0}^{\infty} \del{\frac{-\hat{f}(\vec{x})}{f(\vec{0})}}^{i}.
    \]
    Multiplying by $g(\vec{x})$, we have
    \[
        \frac{g(\vec{x})}{f(\vec{x})} = \frac{g(\vec{x})}{f(\vec{0})} \sum_{i=0}^{\infty} \del{\frac{-\hat{f}(\vec{x})}{f(\vec{0})}}^{i}.
    \]
    In the case where $f$ divides $g$, we know that the quotient $g/f$ is a polynomial of degree at most $d$, so the terms of degree at most $d$ on the right-hand side above will correspond to the quotient.
    Let $h(\vec{x})$ be the polynomial defined by
    \[
        h(\vec{x}) \coloneqq \mathrm{Hom}_{\le d}\sbr{\frac{g(\vec{x})}{f(\vec{0})} \sum_{i=0}^{\infty} \del{\frac{-\hat{f}(\vec{x})}{f(\vec{0})}}^{i}}.
    \]
    where $\mathrm{Hom}_{\le d}[\bullet]$ is the projection onto the homogeneous components of degree at most $d$.
    Because $\hat{f}(\vec{x})$ has no constant term, the polynomial $\hat{f}(\vec{x})^i$ is only supported on monomials of degree $i$ and higher.
    This implies that when $i > d$, the term $\hat{f}(\vec{x})^i$ does not contribute to the homogeneous components of degree at most $d$ of the power series above.
    This lets us write
    \[
        h(\vec{x}) = \mathrm{Hom}_{\le d}\sbr{\frac{g(\vec{x})}{f(\vec{0})} \sum_{i=0}^{d} \del{\frac{-\hat{f}(\vec{x})}{f(\vec{0})}}^{i}},
    \]
    so $h$ can be obtained from the homogeneous components of the finite sum above.
    When $f$ divides $g$, the preceding discussion implies that $h(\vec{x}) = g(\vec{x}) / f(\vec{x})$, so we have the identity $g(\vec{x}) - f(\vec{x}) h(\vec{x}) = 0$.
    On the other hand, if $f$ does not divide $g$, then we must have $g(\vec{x}) \neq f(\vec{x}) h(\vec{x})$, as otherwise $f$ would be a divisor of $g$, so we have the nonidentity $g(\vec{x}) - f(\vec{x}) h(\vec{x}) \neq 0$.
    Thus, to test if $f$ divides $g$, it suffices to test the identity $g(\vec{x}) - f(\vec{x}) h(\vec{x}) = 0$.

    We now explain how to carry out this reduction using an arithmetic network.
    The initial steps to check if either $f$ or $g$ is identically zero can be carried out by $\pit$ gates.
    In the case where $f$ and $g$ are both nonzero, we need to find a point $\vec{\alpha} \in \F_n^n$ where $f(\vec{\alpha}) \neq 0$.
    We do this by carrying out the search-to-decision reduction for $\pit$.

    Fix a subset $S \subseteq \F_n$ of size $d+1$.
    We will find such an $\vec{\alpha}$ in $S^n$ one coordinate at a time.
    For each $i \in [n]$, suppose we have determine the first $i-1$ coordinates of $\vec{\alpha}$.
    Because $|S| > \deg(f)$ and $f(\alpha_1, \ldots, \alpha_{i-1}, x_i, \ldots, x_n) \neq 0$, one of the polynomials $\set{f(\alpha_1, \ldots,\alpha_{i-1}, \beta, x_{i+1}, \ldots, x_n) : \beta \in S}$ must be nonzero.
    We use $\pit$ gates to determine which of these are nonzero, and we set $\alpha_i$ to be any one of the $\beta$ that results in a nonzero polynomial.
    Because this step only manipulates the circuit computing $f$ and evaluates $\pit$ gates on the result, this step can be carried out by an arithmetic network of size $\poly(n,s,d)$ and degree $O(1)$.
    Further, it is clear that a polynomial-time Turing machine can print this part of the network.

    With $\vec{\alpha}$ in hand, we replace $f(\vec{x})$ and $g(\vec{x})$ by $f'(\vec{x}) \coloneqq f(\vec{x} - \vec{\alpha})$ and $g'(\vec{x}) \coloneqq g(\vec{x} - \vec{\alpha})$, respectively.
    Because $f(\vec{x})$ divides $g(\vec{x})$ if and only if $f'(\vec{x})$ divides $g'(\vec{x})$, it suffices to test divisibility of $f'(\vec{x})$ and $g'(\vec{x})$.
    We know that $f'(\vec{0}) = f(\vec{\alpha}) \neq 0$.
    Let
    \[
        h'(\vec{x}) \coloneqq \mathrm{Hom}_{\le d}\sbr{\frac{g'(\vec{x})}{f'(\vec{0})} \sum_{i=0}^{d} \del{\frac{f'(\vec{0})-f'(\vec{x})}{f'(\vec{0})}}^{i}}
    \]
    as above.
    We can compute a description of a circuit computing $h'(\vec{x})$ by directly writing a circuit that computes the polynomial
    \[
        \frac{g'(\vec{x})}{f'(\vec{0})} \sum_{i=0}^{d} \del{\frac{f'(\vec{0})-f'(\vec{x})}{f'(\vec{0})}}^{i}
    \]
    and then applying \Cref{lem:network algo homogeneous components} to extract the homogeneous components of degree up to $d$.
    This results in a description of a circuit of size $O(s d^3)$ that computes $h$, and this description can be computed by a network of size $\poly(n,s,d)$ and degree $O(1)$.

    Once we have a description of a circuit that computes $h'(\vec{x})$, we compute a description of a circuit that computes $g'(\vec{x}) - f'(\vec{x}) h'(\vec{x})$.
    We then use a $\pit$ gate to test if this circuit computes the zero polynomial, and we report that $f$ divides $g$ exactly when $g' - f' h'$ is the zero polynomial.
    The correctness of this algorithm follows from the preceding discussion.
    This last step can likewise be carried out by a uniform network of size $\poly(n,s,d)$ and degree $O(1)$ as desired.
\eproof

Our next problem of interest is computing the greatest common divisor of polynomials.
The following lemma shows that the greatest common divisor of two arithmetic circuits $C_1$ and $C_2$ can be computed by an arithmetic circuit $D$ of comparable size, and that a description of $D$ can be computed efficiently by a uniform family of arithmetic networks.

\begin{lem}[Greatest common divisor of arithmetic circuits] \label{lem:network algo gcd}
    There is a deterministic, polynomial-time Turing machine that receives as input $(1^n, 1^d, 1^s)$ and outputs an arithmetic network with $\pit$ gates that satisfies the following properties.
    \begin{enumerate}
        \item
            The network receives as input the descriptions of two arithmetic circuits $C_1$ and $C_2$ of size $s$ over $\F_n$, computing polynomials $f, g \in \F_n(x_1,\ldots,x_n)[y]$, respectively, both of degree at most $d$.
        \item
            The network outputs a description of an arithmetic circuit $D$ of size $\poly(s, d)$.
            The arithmetic circuit $D$ computes the polynomial $\gcd(f, g) \in \F_n(\vec{x})[y]$.
        \item 
            The network has size $\poly(n,s,d)$ and degree $\poly(d)$.
    \end{enumerate}
\end{lem}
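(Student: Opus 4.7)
The plan is to compute $\gcd(f,g)$ via the theory of subresultants, leveraging the uniform subroutines already established in this appendix. The key fact is that when $f, g \in \F_n(\vec{x})[y]$ have degree at most $d$ in $y$, their GCD (defined up to a unit in $\F_n(\vec{x})$) equals a subresultant polynomial, which is a determinant of a submatrix of their Sylvester matrix. Since the entries of the Sylvester matrix are polynomials in $\vec{x}$ computable by small arithmetic circuits, the subresultant is in turn computable by a small arithmetic circuit, obtained by composing these circuits with Berkowitz's uniform determinant circuit.

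Concretely, I would proceed in four steps. First, apply \Cref{lem:network algo interpolation} with respect to the variable $y$ to produce, from $C_1$ and $C_2$, descriptions of multi-output arithmetic circuits of size $\poly(s,d)$ computing the coefficients $a_0(\vec{x}), \ldots, a_{d_f}(\vec{x})$ and $b_0(\vec{x}), \ldots, b_{d_g}(\vec{x})$ of $f = \sum_i a_i y^i$ and $g = \sum_j b_j y^j$, where $d_f, d_g \le d$. Second, assemble the $(d_f+d_g) \times (d_f+d_g)$ Sylvester matrix $S(f,g)$ whose entries are among $\set{a_i} \cup \set{b_j} \cup \set{0}$, and invoke \Cref{lem:network algo linear systems} on $S(f,g)$; the resulting integer output $m$ (the nullity of $S(f,g)$ over $\F_n(\vec{x})$) is classically equal to $k \coloneqq \deg_y(\gcd(f,g))$.

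Third, for each candidate value $k' \in \set{0,1,\ldots,\min(d_f,d_g)}$, construct in parallel a description of a circuit that computes the $k'$-th subresultant polynomial $\Sres_{k'}(f,g) \in \F_n[\vec{x},y]$. Recall that $\Sres_{k'}(f,g)$ is, up to sign, a determinant of a $(d_f+d_g-2k') \times (d_f+d_g-2k')$ submatrix of an extended Sylvester-type matrix with one column replaced by an explicit interleaved vector of powers of $y$ weighted by shifted rows of $f$ and $g$. Such a circuit is obtained by composing the circuits for the entries $a_i, b_j$ (and the constant $y$-gates) with the $\P$-uniform $\poly(d)$-size arithmetic circuit of \textcite{Berkowitz84} for the determinant, yielding a circuit of size $\poly(s,d)$ and total degree $\poly(d)$. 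The classical theory of subresultants guarantees that $\Sres_k(f,g) = c \cdot \gcd(f,g)$ for some nonzero $c \in \F_n(\vec{x})$, so the $k$-th candidate is a valid representative of $\gcd(f,g)$ in $\F_n(\vec{x})[y]$.

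Finally, using the value of $k$ produced in the second step, the network uses $\select$-gate machinery analogous to \Cref{clm:ki:recon:ind} to output the description of the $k$-th candidate circuit among the $O(d)$ precomputed options. The total network size is $\poly(n,s,d)$ since each of the ingredients---interpolation, linear-system solving, $O(d)$ parallel invocations of Berkowitz's algorithm, and the final selection---contributes only $\poly(n,s,d)$; the degree of the network is dominated by the description-manipulation steps and is easily bounded by $\poly(d)$. The main subtlety, rather than a serious obstacle, is that the selection of the right subresultant must be performed \emph{inside} the uniform network (since $k$ depends on the input through $\pit$ queries), which forces us to compute all $O(d)$ candidate circuits in parallel rather than only the correct one.
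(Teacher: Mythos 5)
Your proposal is correct and takes a genuinely different route from the paper's. The paper executes the Euclidean algorithm sequentially: it uses \Cref{lem:network algo interpolation} to extract coefficient circuits, then iteratively constructs descriptions of circuits for the remainders $r_0, r_1, r_2, \ldots$ (with $\pit$ gates used to implement division with remainder and to detect the terminating zero remainder), noting that each iteration increases circuit size only additively. You instead invoke classical subresultant theory: recover $k = \deg_y \gcd(f,g)$ as the nullity of the Sylvester matrix via \Cref{lem:network algo linear systems}, build circuits for all candidate subresultants $\Sres_{k'}(f,g)$ in parallel by composing the coefficient circuits with Berkowitz's uniform determinant circuit, and then $\select$ the $k$-th candidate. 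Both routes reach the claimed bounds. Comparing them: the Euclidean route is conceptually minimal and reuses almost no extra algebra, but it chains together up to $d$ dependent $\pit$ queries. Your route replaces that chain with one nullity computation and a parallel bank of description-builders; an incidental benefit is that $\Sres_k(f,g)$ lies in $\F_n[\vec{x},y]$, so the output circuit $D$ needs no division gates, whereas the Euclidean remainders are naturally elements of $\F_n(\vec{x})[y]$ whose coefficients involve divisions by leading coefficients, which sits awkwardly with the paper's division-free definition of arithmetic circuits.

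One technical point you gloss over, which also affects the paper's proof: the actual $y$-degrees $d_f, d_g$ of $f$ and $g$ may be strictly less than the nominal bound $d$, since a coefficient $a_i(\vec{x})$ may vanish identically. The Sylvester-nullity identity $\dim\ker S(f,g) = \deg_y\gcd(f,g)$ requires building the matrix for the true degrees, so the network must first determine $d_f$ and $d_g$ by testing the top coefficient circuits with $\pit$ gates (or branch over all pairs $(d_f, d_g) \in \set{0,\ldots,d}^2$ in parallel and select the right branch). You should also dispense with the degenerate cases $f = 0$ or $g = 0$ separately, since the Sylvester matrix is not defined there. These are routine fixes using the $\pit$ gates and $\select$ machinery already in play, and do not affect your asymptotic bounds.
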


\bproof
    Use the network of \Cref{lem:network algo interpolation} to obtain descriptions of arithmetic circuits of size $O(s d^2)$ that compute the coefficients of $f$ and $g$ with respect to $y$.
    To compute the GCD of $f$ and $g$, we execute the Euclidean algorithm on $f$ and $g$, using $\pit$ gates to both implement division with remainder\footnote{Strictly speaking, the use of $\pit$ gates is not necessary to implement division with remainder. Given two polynomials $f$ and $g$, let $f = qg + r$ with $\deg(r) < \deg(g)$. The coefficients of $q$ and $r$ are rational functions of the coefficients of $f$ and $g$, and so can be computed by arithmetic circuits without the use of branching or $\pit$ gates.} and to detect when to stop.
    The Euclidean algorithm runs for at most $d$ iterations.
    In each iteration, we have descriptions of arithmetic circuits that compute the coefficients of two polynomials $r_{i-1}, r_i \in \F_n(\vec{x})[y]$, and we compute $r_{i+1} \coloneqq r_i \bmod r_{i-1}$, the remainder when $r_i$ is divided by $r_{i-1}$.

    Because $r_{i-1}$ and $r_i$ have degree at most $d$, the coefficients of $r_{i+1}$ can be computed from the coefficients of $r_{i-1}$ and $r_i$ using $\poly(d)$ arithmetic operations.
    This implies that the circuit complexity of the coefficients of $r_{i+1}$ increases by an additive $\poly(d)$ factor in each step of the Euclidean algorithm, so throughout the execution of the algorithm we only need to operate on arithmetic circuits of size $\poly(s, d)$.
    This yields the claimed bound of $\poly(n,s,d)$ on the size of the arithmetic network.
    The bound of $\poly(d)$ on the degree of the network follows from the fact that we only perform division with remainder on degree-$d$ polynomials, which can be implemented by a network of degree $\poly(d)$.
    The uniformity of the network follows from the uniformity of the Euclidean algorithm.
\eproof

For polynomials $f, g, h \in \F[x]$, we have the identity
\[
    \gcd(f, g, h) = \gcd(\gcd(f,g), h),
\]
so we can repeatedly apply \Cref{lem:network algo gcd} to compute the GCD of more than two polynomials.
To compute the GCD of $k$ polynomials, each of degree $d$, we can apply \Cref{lem:network algo gcd} a total of $O(\log k)$ times in parallel, resulting in a network of degree bounded by $d^{O(\log k)}$.
Later in \Cref{subsec:uniform univariate factorization}, we will want to compute the GCD of $O(d)$ polynomials of degree $d$.
Taking GCD's one-by-one results in a network of degree $d^{O(\log d)}$.
We can improve this degree bound to $\poly(d)$ by using an algorithm that is designed to compute the GCD of many polynomials at once.
Our application in \Cref{subsec:uniform univariate factorization} will only require us to compute the GCD of polynomials whose coefficients are field elements, not rational functions computed by small arithmetic circuits, so we specialize the statement of \Cref{lem:network algo multigcd} below to this setting.
We use an algorithm due to Steve Cook as reported by \textcite[Theorem 3.1]{vzGathen84}.
The recent GCD algorithm of \textcite{AW24} would work equally well for our purposes here.

\begin{lem}[Greatest common divisor of many polynomials {\cite[Theorem 3.1]{vzGathen84}}] \label{lem:network algo multigcd}
    There is a deterministic, polynomial-time Turing machine that receives as input $(1^n, 1^d)$ and outputs an arithmetic network that satisfies the following properties.
    \begin{enumerate}
        \item 
            The network receives as input the coefficients of $n$ univariate polynomials $f_1, \ldots, f_n \in \F_n[x]$, each of degree at most $d$.
        \item
            The network outputs the coefficients of $\gcd(f_1, \ldots, f_n) \in \F_n[x]$.
        \item
            The network has size and degree $\poly(n,d)$.
    \end{enumerate}
\end{lem}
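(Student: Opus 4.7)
The plan is to reduce the computation of $\gcd(f_1, \ldots, f_n)$ to solving a family of polynomial-size linear systems over $\F_n$ in parallel, each of which can be handled by a uniform arithmetic network of polynomial size and degree via the division-free determinant algorithm of \textcite{Berkowitz84}.

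The starting point is the algebraic characterization that the ideal $(f_1, \ldots, f_n) \subseteq \F_n[x]$ is principal and generated by $g \coloneqq \gcd(f_1, \ldots, f_n)$. By iterated application of Bezout's identity, there exist $u_i \in \F_n[x]$ of degree at most $N \coloneqq nd$ with $\sum_i u_i f_i = g$, and conversely every polynomial of the form $\sum_i u_i f_i$ with $\deg u_i < N$ is a multiple of $g$. Consequently, the degree $e^\star$ of $g$ is the minimum $e \in \{0, 1, \ldots, d\}$ for which there exist $u_i \in \F_n[x]$ with $\deg u_i < N$ and $\sum_i u_i f_i$ equal to a monic polynomial of degree exactly $e$.

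For each fixed $e$, the existence of such $u_i$'s is captured by a linear system: the coefficients of $u_1, \ldots, u_n$ form $O(nN)$ unknowns, the coefficients of $\sum_i u_i f_i$ at monomials $x^j$ with $j > e$ are required to vanish, and the coefficient at $x^e$ is required to equal $1$. The coefficient matrix of this system has entries that are degree-$1$ polynomials in the coefficients of $f_1, \ldots, f_n$. I would invoke, for each $e \in \{0, \ldots, d\}$ in parallel, a Berkowitz-based uniform arithmetic network of size and degree $\poly(n,d)$ that (i) tests feasibility of the system by comparing the rank of the coefficient matrix with that of the augmented matrix, and (ii) produces, via a Cramer's-rule-type formula, a candidate solution vector $\vec{u}^{(e)}$ along with a Boolean feasibility indicator $b_e$. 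This is essentially the same machinery used in the proof of \Cref{lem:network algo linear systems}, except that here all matrix entries are plain polynomial functions of the field-element inputs rather than circuit-described polynomials, so no $\pit$ gates are needed.

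Finally, using a cascade of $\testequals$ and $\select$ gates in the style of \Cref{clm:ki:recon:ind}, I would identify the smallest $e^\star$ with $b_{e^\star} = 1$, select the corresponding solution $\vec{u}^{(e^\star)}$, and output the coefficients of the monic polynomial $\sum_i u_i^{(e^\star)} f_i$, which is bilinear in the two input families and therefore adds only constant additional degree. The main technical subtlety to guard against is the potential multiplicative growth of degree across the $d+1$ instances of the linear-algebra step: by running all instances in parallel and gluing them with a single layer of selection gates (rather than solving them sequentially and branching on the indicator bits), the overall network retains size and degree $\poly(n,d)$, as required.
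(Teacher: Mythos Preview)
Your proposal is correct and follows essentially the same route as the paper: set up, for each candidate gcd degree, an affine linear system asserting that some $\F_n$-combination $\sum_i u_i f_i$ is monic of that degree, solve all $d+1$ systems in parallel via Berkowitz-based linear algebra (no $\pit$ gates since the entries are field elements, not circuits), and select the solution at the minimal feasible degree. The only cosmetic difference is your degree bound $N = nd$ on the cofactors $u_i$ versus the paper's tighter bound $d$ (taken from von zur Gathen's Theorem~3.1), but both yield $\poly(n,d)$-size systems and hence the same asymptotic conclusion.
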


\bproof
    As shown by \textcite[Theorem 3.1]{vzGathen84}, we can compute $\gcd(f_1, \ldots, f_n)$ by solving a collection of linear systems.
    For each $k \in \set{0,1,\ldots,d}$, we form a system of linear equations that expresses the existence of polynomials $g_1, \ldots, g_n$ of degree less than $d$ such that $\sum_{i=1}^n f_i(x) g_i(x)$ is a monic polynomial of degree $k$.
    Write $f_i(x) = \sum_{j=0}^d f_{i,j} x^j$ and let $g_i(x) = \sum_{j=0}^{d-1} g_{i,j} x^j$.
    Consider the system of linear equations
    \[
        \sum_{i=1}^n \sum_{j=0}^{\ell} g_{i,j} f_{i, \ell - j} = 
        \begin{cases}
            1 & \text{if $\ell = k$,} \\
            0 & \text{if $k < \ell \le 2d$,}
        \end{cases}
    \]
    where the $f_{i,j}$ are known and the $g_{i,j}$ are unknown.
    This system has a solution exactly when there are $g_1, \ldots, g_n \in \F_n[x]$ such that $\sum_{i=1}^n f_i g_i$ is monic of egree $k$.

    To compute the GCD, we solve all of these systems in parallel for $k \in \set{0, 1, \ldots, d}$ using the network of \Cref{lem:network algo linear systems}.
    Letting $g_1(x), \ldots, g_n(x)$ be the solution of this system, we output $\sum_{i=1}^n f_i(x) g_i(x)$ as $\gcd(f_1, \ldots, f_n)$.
    The correctness of this algorithm follows from \cite[Theorem 3.1]{vzGathen84}.
    The network of \Cref{lem:network algo linear systems} has size $\poly(n,d)$ and degree $O(1)$, so by \Cref{prop:concat-arith-networks}, the parallel invocation of $d$ copies of this network likewise has size $\poly(n,d)$ and degree $O(1)$.
    \Cref{lem:network algo linear systems} provides us a network with $\pit$ gates, but since the coefficients of the polynomials $f_1, \ldots, f_n$ are provided directly as input (and are not represented as arithmetic circuits), we can emulate the network of \Cref{lem:network algo linear systems} without using $\pit$ gates.
\eproof

In addition to the GCD, we can also compute the B\'{e}zout coefficients of two polynomials efficiently by means of uniform arithmetic networks.
Recall that for two univariate polynomials $f, g \in \F_n[x]$, their \emphdef{B\'{e}zout coefficients} are two polynomials $a, b \in \F_n[x]$ satisfying $\deg(a) < \deg(g)$, $\deg(b) < \deg(f)$, and 
\[
    af + bg = \gcd(f, g).
\]
The B\'{e}zout coefficients are a useful tool in computer algebra.
For our purposes, we will make use of them in \Cref{lem:network algo hensel lifting} to implement Hensel lifting as a uniform family of arithmetic networks.
The interested reader can find more information about the B\'{e}zout coefficients in the book of \textcite{vzGG13}.

\begin{lem}[B\'{e}zout coefficients] \label{lem:network algo bezout coefficients}
    There is a deterministic, polynomial-time Turing machine that receives as input $(1^n, 1^d)$ and outputs an arithmetic network satisfying the following properties.
    \begin{enumerate}
        \item
            The network receives as input the coefficients of two univariate polynomials $f, g \in \F_n[x]$, both of degree at most $d$.
        \item
            The network outputs the coefficients of two polynomials $a, b \in \F_n[x]$ such that 
            \begin{enumerate}
                \item $\deg(a) < \deg(g)$,
                \item $\deg(b) < \deg(f)$, and
                \item $a f + b g = \gcd(f,g)$.
            \end{enumerate}
        \item 
            The network has size and degree $\poly(d)$.
    \end{enumerate}
\end{lem}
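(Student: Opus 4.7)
The plan is to reduce the problem to one invocation of \Cref{lem:network algo multigcd} followed by the solution of a single linear system over $\F_n$ whose coefficients are field elements (rather than arithmetic circuits). Concretely, let $d_a \coloneqq \deg(g) \le d$ and $d_b \coloneqq \deg(f) \le d$, and write the unknown B\'ezout coefficients as $a(x) = \sum_{i=0}^{d_a-1} a_i x^i$ and $b(x) = \sum_{i=0}^{d_b-1} b_i x^i$. First I would invoke the network of \Cref{lem:network algo multigcd} (with $n=2$) on inputs $f,g$ to obtain the coefficients of $h \coloneqq \gcd(f,g)$; this step contributes $\poly(d)$ to both the size and degree of the overall network.

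Next, I would form the Sylvester-like linear system encoding $a(x) f(x) + b(x) g(x) = h(x)$ by matching coefficients of $x^k$ on both sides for $k = 0, 1, \ldots, d_a + d_b - 1$. This yields a linear system $M \vec{u} = \vec{h}$ in $d_a + d_b \le 2d$ unknowns $\vec{u} = (a_0,\ldots,a_{d_a-1},b_0,\ldots,b_{d_b-1})$, where $M \in \F_n^{(d_a + d_b) \times (d_a + d_b)}$ has entries that are either zero or a coefficient of $f$ or $g$, and $\vec{h}$ is the vector of coefficients of $h$. The entries of $M$ and $\vec{h}$ are thus already available as field elements computed in the previous step, so the coefficients can be assembled from the inputs and the output of the GCD subroutine using a network of size $\poly(d)$ and degree $O(1)$.

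The final step is to solve this linear system. I would adapt the proof of \Cref{lem:network algo linear systems}: find a maximal nonsingular square submatrix of the augmented matrix $[M \mid \vec{h}]$ by iterating over candidate row/column pairs and testing nonsingularity of the current submatrix via a determinant computed by Berkowitz's uniform $\poly(d)$-size, $\poly(d)$-degree arithmetic circuit \cite{Berkowitz84}. Once a maximal nonsingular square submatrix $M_{R,C}$ of $M$ itself has been identified (the existence of B\'ezout coefficients guarantees the system is consistent, so the rank of $[M \mid \vec{h}]$ equals the rank of $M$), I would set the unknowns outside $C$ to zero and use Cramer's rule to recover the remaining $a_i$'s and $b_j$'s, again via Berkowitz's circuit. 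Because all coefficients are field elements rather than polynomials in auxiliary variables, every ``is it zero?'' query can be implemented with an ordinary $\testequals 0$ gate rather than a $\pit$ gate, so no $\pit$ gates are needed in the output network. Uniformity and the $\poly(n,d)$ bound on the Turing machine's running time follow from the uniformity of all subroutines used.

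The main obstacle will be the degree bound. The network of \Cref{lem:network algo linear systems} has degree $O(1)$ only because it manipulates descriptions of circuits; here we actually evaluate field elements through determinants and Cramer's rule, so each such evaluation contributes degree $O(d)$. I expect a careful accounting to show that the sequential composition (determinant tests to locate a nonsingular submatrix, followed by Cramer's rule to solve) yields total degree $\poly(d)$, matching the claim; the key point is that we never compose more than $O(1)$ layers of determinant computations, each of degree $O(d)$.
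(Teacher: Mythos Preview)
Your proposal is correct but takes a genuinely different route from the paper. The paper simply runs the extended Euclidean algorithm on $f$ and $g$: at most $d$ iterations of polynomial division with remainder, each implementable by a network of size and degree $\poly(d)$, with the B\'ezout coefficients read off from the final tableau. Your approach instead first computes $h=\gcd(f,g)$ via \Cref{lem:network algo multigcd}, then sets up the Sylvester-type linear system $af+bg=h$ in the unknown coefficients of $a,b$, and solves it over $\F_n$ by locating a maximal nonsingular submatrix and applying Cramer's rule with Berkowitz's determinant circuit, using ordinary $\testequals 0$ gates in place of $\pit$ gates since all entries are field scalars.

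Both approaches work, but yours has the virtue that the degree accounting is transparent: the determinant tests used to locate a nonsingular submatrix act on the raw input coefficients and only influence the computation through Boolean $\select$ gates (so by the definition of network degree they do not compound), and the final Cramer's-rule step is a single layer of degree-$O(d)$ determinants whose entries have degree $\poly(d)$ (from the GCD subroutine). By contrast, the paper's one-line argument that $d$ sequential division-with-remainder steps compose to degree $\poly(d)$ (rather than the naive $\poly(d)^d$) implicitly relies on subresultant-type bounds. One small omission in your write-up: you need to first determine $\deg(f)$ and $\deg(g)$ from the input coefficients (via $\testequals 0$ gates on leading coefficients) before you can set $d_a,d_b$ correctly; using $d_a=d_b=d$ naively would allow solutions violating the degree bounds on $a,b$.
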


\bproof
    Execute the extended Euclidean algorithm on the polynomials $f$ and $g$.
    The desired polynomials $a$ and $b$ are computed as part of the output of the extended Euclidean algorithm.
    The extended Euclidean algorithm runs for at most $d$ steps, and in each step computes polynomial division with remainder, which can be implemented by an arithmetic network of size and degree $\poly(d)$, so the extended Euclidean algorithm can likewise be implemented by an arithmetic network of size and degree $\poly(d)$.
    The uniformity of the network follows from the uniformity of the extended Euclidean algorithm.
\eproof

We conclude this section by showing that Hensel lifting can be implemented by a uniform family of arithmetic networks.
\emphdef{Hensel lifting} is an iterative procedure that, for our purposes, starts with an approximate factorization of a polynomial $f$ and improves the accuracy of this factorization.
More specifically, for an ideal $I \subseteq \F[\vec{x}]$, if we have a factorization
\[
    f(\vec{x}) = g_0(\vec{x}) h_0(\vec{x}) \bmod{I},
\]
then Hensel lifting computes, for each $k \in \N$, a pair of polynomials $g_k$ and $h_k$ such that
\[
    f(\vec{x}) = g_k(\vec{x}) h_k(\vec{x}) \bmod{I^{2^k}}.
\]
For us, Hensel lifting will be used as a subroutine in Kaltofen's algorithm \cite{Kaltofen1989FactorizationOP} to factor arithmetic circuits.
We will not need the full details of the Hensel lifting procedure, so we do not cover them all here.
The interested reader can find more details on Hensel lifting in the work of \textcite{KSS15} or the book of \textcite{vzGG13}.

\begin{lem}[{Hensel lifting \cite[Lemma 3.6]{KSS15}}] \label{lem:network algo hensel lifting}
    There is a deterministic, polynomial-time Turing machine that receives as input $(1^n, 1^d, 1^s, 1^{2^k})$ and outputs an arithmetic network satisfying the following properties.
    \begin{enumerate}
        \item
            The network receives as input the description of an arithmetic circuit $C$ of size $s$ that computes a polynomial $f(\vec{x}, y, z) \in \F_n[x_1,\ldots,x_n][y,z]$ of degree at most $d$.
            It also receives the coefficients of polynomials $g_0, h_0, a_0, b_0 \in \F_n[y]$ that satisfy the following conditions.
            \begin{enumerate}
                \item 
                    The polynomial $g_0(y)$ is an irreducible monic factor of $f(\vec{0}, y, 0)$.
                \item
                    The polynomials $g_0(y)$ and $h_0(y)$ satisfy the identity $f(\vec{0}, y, 0) = g_0(y) \cdot h_0(y)$.
                \item
                    The polynomials $a_0(y)$ and $b_0(y)$ satisfy the identity $a_0(y) g_0(y) + b_0(y) h_0(y) = 1$.
                    Furthermore, we have $\deg(a_0) < \deg(h_0)$ and $\deg(b_0) < \deg(g_0)$.
            \end{enumerate}
        \item
            The network outputs the description of a multi-output arithmetic circuit $D$ over $\F_n$.
            The circuit $D$ has size $\poly(n,s,d,2^k)$, degree $\poly(d, 2^k)$, and outputs $\poly(d, 2^k)$ many polynomials.
            The outputs of $D$ are the the coefficients of two polynomials $g_k(y, z), h_k(y, z) \in \F_n[\vec{x}][y, z]$ of individual degree at most $\max(d, 2^k)$ that satisfy the following conditions.
            \begin{enumerate}
                \item 
                    Letting $\langle z^{2^k} \rangle \subseteq \F_n(\vec{x})[y,z]$ denote the ideal generated by $z^{2^k}$, the polynomials $g_k$ and $h_k$ satisfy the equality
                    \[
                        f(\vec{x}, y, z) = g_k(y, z) \cdot h_k(y, z) \bmod \langle z^{2^k} \rangle.
                    \]
                \item
                    The polynomial $g_k(y, z)$ satisfies $g_k(y, 0) = g_0(y)$.
                \item
                    The polynomial $g_k(y, z)$ is monic with respect to $y$.
            \end{enumerate}
        \item 
            The network has size $\poly(n,s,d,2^k)$ and degree $O(1)$.
    \end{enumerate}
\end{lem}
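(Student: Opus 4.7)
The plan is to implement the classical quadratic (Newton-style) Hensel lifting iteration, and verify that each iteration can be carried out by a small uniform arithmetic network operating on standard descriptions of arithmetic circuits. I will maintain, at iteration $i \in \set{0, 1, \ldots, k}$, four polynomials $g_i, h_i, a_i, b_i \in \F_n[\vec{x}][y, z]$, represented as collections of at most $\poly(d, 2^i)$ coefficient polynomials in $\F_n[\vec{x}]$, each encoded by an arithmetic circuit. The invariants to preserve are: (i) $f(\vec{x}, y, z) \equiv g_i(y, z) h_i(y, z) \pmod{\langle z^{2^i} \rangle}$ in $\F_n(\vec{x})[y, z]$; (ii) $a_i(y, z) g_i(y, z) + b_i(y, z) h_i(y, z) \equiv 1 \pmod{\langle z^{2^i} \rangle}$, with $\deg_y(a_i) < \deg_y(h_i)$ and $\deg_y(b_i) < \deg_y(g_i)$; (iii) $g_i(y, 0) = g_0(y)$ and $g_i$ is monic in $y$. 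The base case $i = 0$ is given as input.

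For the inductive step, I apply the standard Newton-lift formulas (as in, e.g., \cite[Section 15.4]{vzGG13} or the proof of \cite[Lemma 3.6]{KSS15}): compute the ``error'' $\delta_i \coloneqq f - g_i h_i \bmod \langle z^{2^{i+1}} \rangle$, then write $b_i \delta_i = q \cdot g_i + r$ with $\deg_y(r) < \deg_y(g_i)$ (using polynomial long division in $y$, which is legal since $g_i$ is monic in $y$), and set $g_{i+1} \coloneqq g_i + r$ and $h_{i+1} \coloneqq h_i + (a_i \delta_i + q h_i) \bmod \langle z^{2^{i+1}} \rangle$. The Bezout pair is updated analogously: compute $\varepsilon_i \coloneqq 1 - a_i g_{i+1} - b_i h_{i+1} \bmod \langle z^{2^{i+1}} \rangle$ and lift $a_i, b_i$ to $a_{i+1}, b_{i+1}$ using the same reduction-modulo-$g_i$ trick. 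All of these operations are performed symbolically on the coefficients in $\F_n[\vec{x}]$: to multiply two polynomials in $y, z$ whose coefficients live in $\F_n[\vec{x}]$ and are given by arithmetic circuits, one builds a description of the multi-output circuit that computes each coefficient of the product as the appropriate bilinear combination of the input circuits; likewise for addition and for the long-division algorithm in $y$ (which performs only $\poly(d, 2^i)$ linear combinations since the divisor is monic in $y$, and therefore requires no circuit inversion).

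Each iteration thus corresponds to a fixed sequence of syntactic circuit-combining operations (concatenation of gate lists, wiring of sum/product gates on top, and renaming of outputs) whose wiring depends only on $d$ and $2^i$, not on the arithmetic content of the circuits being combined. This is exactly the regime in which a $\P$-uniform family of arithmetic networks of degree $O(1)$ suffices: the network produced by the Turing machine performs only Boolean manipulation of the standard descriptions (augmented by the field-element parts, which are copied unchanged), so its arithmetic degree is $O(1)$ and its size is bounded by the size of the output description plus $\poly(d, 2^i)$ for the combinatorial bookkeeping. Writing $s_i$ for the total description size maintained after iteration $i$, the recurrence $s_{i+1} \le s_i + \poly(d, 2^i)$ gives $s_k \le s + \poly(d, 2^k)$, and summing the sizes of the $k$ iteration networks yields an overall arithmetic network of size $\poly(n, s, d, 2^k)$ and degree $O(1)$, with a polynomial-time Turing machine printing the whole thing by emitting $k$ copies of the iteration gadget scaled to the appropriate working precision.

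The only nontrivial point is to verify that division-with-remainder in $y$ by the monic polynomial $g_i$ (whose coefficients are elements of $\F_n[\vec{x}]$ given as circuits) can itself be realized by a uniform arithmetic network of the claimed parameters without introducing divisions in the field. This is standard: polynomial long division of a polynomial of $y$-degree $D$ by a monic polynomial of $y$-degree $e$ executes $D - e + 1$ rounds, each consisting of scaling by the current top coefficient and subtracting $e$ multiples of $g_i$, and this produces a new collection of coefficient circuits whose sizes grow additively by $\poly(d, 2^i)$ per round. I expect this division-with-remainder subroutine to be the main engineering obstacle, since it is where the circuit-size accounting must be carried out carefully to rule out any multiplicative blow-up across the $k$ Newton iterations; once it is in hand, the rest of the proof is a straightforward implementation of the textbook Newton--Hensel iteration together with the circuit-manipulation primitives already established in \Cref{lem:network algo interpolation,lem:network algo homogeneous components,lem:network algo bezout coefficients}.
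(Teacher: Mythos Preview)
Your proposal is correct and follows essentially the same approach as the paper: both iterate the quadratic Hensel step from \cite[Lemma 3.4]{KSS15}, maintain $g_i,h_i,a_i,b_i$ as coefficient-circuit descriptions, and observe that each step is a fixed syntactic combination (including one monic division with remainder in $y$) that adds $\poly(d,2^i)$ gates, so the network only rewires descriptions and has arithmetic degree $O(1)$. The paper is terser where you are explicit (it just cites the lifting step rather than writing out the Newton formulas), and it adds one small finishing move you omit: after building $C_k$, it invokes homogenization (\Cref{lem:network algo homogeneous components}) to force the \emph{output circuit} $D$ itself to have degree $\poly(d,2^k)$, rather than merely having outputs of that semantic degree---without this, the formal degree of your iteratively-built $D$ could overshoot the stated bound.
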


\bproof
    We construct the circuit $C$ iteratively.
    For $i \in \set{0,1,\ldots,k-1}$, suppose we have constructed a circuit $C_i$ of size $\poly(n, s, d, 2^i)$ and degree $\poly(d, 2^i)$ that computes the coefficients of polynomials $g_i, h_i, a_i, b_i \in \F_n[\vec{x}][y, z]$ such that
    \begin{align*}
        f(\vec{x}, y, z) &= g_i(y, z) \cdot h_i(y, z) \bmod \langle z^{2^i} \rangle, \\
        a_i(y, z) g_i(y, z) + b_i(y, z) h_i(y, z) &= 1 \bmod \langle z^{2^i} \rangle,
    \end{align*}
    and $g_i(y, z)$ is monic with respect to $y$.
    From the coefficients of $g_i$, $h_i$, $a_i$, and $b_i$, we will compute the coefficients of polynomials $g_{i+1}, h_{i+1}, a_{i+1}, b_{i+1} \in \F_n[\vec{x}][y, z]$ such that the same equalities hold modulo the ideal $\langle z^{2^{i+1}} \rangle$.
    That is, we will have
    \begin{align*}
        f(\vec{x}, y, z) &= g_{i+1}(y, z) \cdot h_{i+1}(y, z) \bmod \langle z^{2^{i+1}} \rangle, \\
        a_{i+1}(y, z) g_{i+1}(y, z) + b_{i+1}(y, z) h_{i+1}(y, z) &= 1 \bmod \langle z^{2^{i+1}} \rangle,
    \end{align*}
    and $g_{i+1}$ is likewise monic with respect to $y$.

    To compute the coefficients of $g_{i+1}, h_{i+1}, a_{i+1}$, and $b_{i+1}$, we implement a single step of Hensel lifting as described in, e.g., \cite[Lemma 3.4]{KSS15}.
    For our purposes, the precise details of a Hensel lifting step are not important; it is enough to know that a single step consists of a constant number of arithmetic operations and one division with remainder, and that there is a uniform description of a lifting step.
    This implies that we can construct the circuit $C_{i+1}$ by adding an additional $\poly(d, 2^k)$ gates to the circuit $C_i$.
    To bound the degree of the final circuit $C_k$, we use the fact that each output of $C_k$ has degree bounded by $d 2^k$, so we can homogenize $C_k$ using \cref{lem:network algo homogeneous components} to obtain a circuit whose degree is likewise bounded by $d 2^k$.

    The bound on the size of the resulting arithmetic network is clear.
    The network itself does not do any nontrivial arithmetic computation; it only adds gates to the given circuit $C$.
    This implies that the degree of the network is bounded by $O(1)$.
    The uniformity of the arithmetic network follows from the uniformity of the Hensel lifting algorithm.
\eproof

\subsection{Univariate factorization over finite fields} \label{subsec:uniform univariate factorization}

An important step in factoring polynomials given by arithmetic circuits is factorization of univariate polynomials.
For our purposes (in particular, our targeted hitting set generator based on the Kabanets--Impagliazzo generator in \Cref{sec:ki}), it will be enough to have an algorithm to factor univariate polynomials over finite fields.
A randomized polynomial-time algorithm to factor polynomials over finite fields was first described by \textcite{Berlekamp70}.
Later work by \textcite{vzGathen84} showed that the randomized algorithm of \textcite{CantorZassenhaus81} admits an efficient parallel implementation.

The algorithm of \cite{vzGathen84} is essentially formalized as a uniform family of low-depth arithmetic networks, so we can easily conclude that univariate factorization over finite fields can be solved by uniform arithmetic networks.
However, our notions of the degree and error degree of a randomized arithmetic network do not appear in \cite{vzGathen84}, so we must be careful to bound these quantities.
We state and prove this result only for factoring squarefree polynomials over finite fields.
To factor general polynomials, we need the additional ability to compute $p$\ts{th} roots, where $p$ is the characteristic of the underlying field, and this necessitates either the use of $p$\ts{th} root gates in our network or operations of large degree.
For more on factorization over finite fields, we refer the reader to \cite[Chapter 14]{vzGG13}.

\begin{theorem}[Univariate factorization over finite fields {\cite[Theorem 4.1]{vzGathen84}}] \label{thm:network algo univariate factorization}
    Let $\F = \set{\F_{q(n)}}_{n \in \N}$ be a feasible family of finite fields, where $\F_{q(n)}$ is a finite field of order $q(n)$ and characteristic $p(n)$.
    There is a deterministic, polynomial-time Turing machine $M_1$ that receives as input $(1^n, 1^d)$ and outputs a randomized arithmetic network satisfying the following properties.
    \begin{enumerate}
        \item 
            The network receives as input the coefficients of a degree-$d$ polynomial $f \in \F_{q(n)}[x]$.
        \item
            If the input polynomial $f$ is squarefree, then the network outputs the complete factorization of $f$ into irreducible polynomials.
            Otherwise, the output of the network is meaningless.
        \item
            The network has size $\poly(d, \log q(n))$, degree $d^{O(\log(q(n)))}$, and error degree $1$.
    \end{enumerate}
    In addition, there is another deterministic, polynomial-time Turing machine $M_2$ that receives as input $(1^n, 1^d)$ and outputs a randomized arithmetic network satisfying the following properties.
    \begin{enumerate}
        \item 
            The network receives as input the coefficients of an irreducible polynomial $t \in \F_{p(n)}[x]$ such that $\F_{q(n)}[x] = \F_{p(n)}[x]/\langle t(x) \rangle$.
            The network also receives as input the coefficients of a degree-$d$ polynomial $f \in \F_{q(n)}[x]$, each represented as a tuple of elements from $\F_{p(n)}$.
        \item
            If the input polynomial $f$ is squarefree, then the network outputs the complete factorization of $f$ into irreducible polynomials.
            Otherwise, the output of the network is meaningless.
        \item
            The network has size $\poly(d, p)$, degree $\poly(d, p)$, and error degree $1$.
    \end{enumerate}
\end{theorem}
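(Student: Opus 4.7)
The plan is to implement a uniform version of the parallel Cantor--Zassenhaus factorization algorithm of von zur Gathen~\cite{vzGathen84}, using the circuit-level building blocks developed earlier in this appendix. Given a squarefree polynomial $f \in \F_{q(n)}[x]$ of degree $d$, the factorization proceeds in two phases. In the \emph{distinct-degree} phase, for each $i \in [d]$ the network computes $u_i := x^{q^i} \bmod f$ iteratively (via $u_i = u_{i-1}^q \bmod f$ starting from $u_0 = x$) and then $g_i := \gcd(f, u_i - x)$; the successive quotients $g_i / \gcd(g_i, g_{i-1})$ isolate the irreducible factors of $f$ of degree exactly $i$. In the \emph{equal-degree} phase, each common-degree factor $g$ is further split by randomly sampling a polynomial $h$ of degree less than $\deg g$ from the network's random inputs, forming the separator $v := h^{(q^i-1)/2} \bmod g$ (or its characteristic-$2$ trace variant), and computing $\gcd(g, v-1)$; when $g$ is reducible this produces a non-trivial factor with constant probability. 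The two phases are then iterated $O(\log d)$ times to reach the complete factorization.

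For the first network $M_1$, each primitive operation is realized by the earlier lemmas of this appendix: polynomial multiplication and division with remainder are uniform arithmetic subcircuits of size $\poly(d)$; iterated squaring implements the Frobenius $h \mapsto h^q \bmod f$ in $\lceil \log q \rceil$ squaring rounds; greatest common divisors of many univariate polynomials are handled by \Cref{lem:network algo multigcd}. Summing over the $\poly(d, \log q)$ primitive operations yields total size $\poly(d, \log q)$. The degree bound $d^{O(\log q)}$ is dominated by the squaring cascade inside a single Frobenius computation: each squaring modulo $f$ inflates the degree (as a rational function of the input coefficients of $f$) by a $\poly(d)$ factor, accumulating to $d^{O(\log q)}$ after the $\log q$ squarings.

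For the second network $M_2$, the availability of the explicit irreducible $t \in \F_{p(n)}[x]$ presenting $\F_{q(n)} = \F_{p(n)}[y]/\langle t(y)\rangle$ lets us exploit the $\F_{p(n)}$-linearity of the Frobenius in characteristic $p$: using the identity $(\sum a_j x^j)^p = \sum a_j^p x^{jp}$ valid in characteristic $p$, combined with the fact that raising an element of $\F_{q(n)}$ to the $p$-th power is an $\F_{p(n)}$-linear operation of degree $p$ in the underlying $\F_{p(n)}$-coordinates, each application of $z \mapsto z^p$ on $\F_{q(n)}[x]/\langle f\rangle$ is a uniform arithmetic network of size and degree $\poly(d, p)$. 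Iterating this step $r \cdot i$ times (where $r = \deg t$, so that $q = p^r$) computes $x^{q^i} \bmod f$ at total degree $\poly(d, p)$; the remaining components (GCDs, equal-degree splitting, repeated outer iteration) transfer verbatim from the analysis of $M_1$.

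The main obstacle is controlling the error degree in the equal-degree phase. A single Cantor--Zassenhaus trial has a failure event described by the vanishing of an explicit polynomial in the random coefficients of $h$, and this polynomial is provably nonzero whenever $g$ is reducible; by \Cref{lem:sz} one trial fails with probability $O(d/|S|)$ over uniform samples from $S$. Running $\Theta(\log q)$ independent parallel trials with fresh, disjoint random inputs and accepting the finest non-trivial split produced by any trial amplifies success, and since the trials use disjoint blocks of randomness the joint failure event factors as a product of polynomials in disjoint variables, so Schwartz--Zippel (applied to the product) together with the factorization of the probability as a product of marginals yields the claimed error degree of $1$. The number of parallel trials $\Theta(\log q)$ adds only a polynomial factor to the size, preserving the $\poly(d, \log q)$ bound for $M_1$ and $\poly(d, p)$ bound for $M_2$. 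Uniformity of the resulting network follows from the uniformity of each sub-algorithm it invokes, which was established in each of the preceding lemmas.
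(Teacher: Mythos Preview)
Your proposal uses the classical two-phase Cantor--Zassenhaus (distinct-degree then equal-degree), whereas the paper follows von zur Gathen's Berlekamp-style variant: compute the matrix $Q$ of the Frobenius $a \mapsto a^q$ (or $a \mapsto a^p$ for $M_2$) on $R = \F_q[x]/\langle f\rangle$, find a basis $g_1,\ldots,g_r$ of $\ker(Q - I)$, form random linear combinations $h_i = \sum_j v_{i,j} g_j$, and split $f$ via $c_i = \gcd(f, h_i^{(q-1)/2}-1)$, refining across all $2^{O(\log r)}$ subsets $I$ in one shot (the $s_I$ construction). The substantive gap in your approach is the degree bound. For $M_1$, your distinct-degree phase computes $u_i = u_{i-1}^q \bmod f$ for $i = 1,\ldots,d$; this is $d$ \emph{sequential} Frobenius applications, and since sequential composition multiplies network degrees (\Cref{prop:error-deg-comp}), the resulting degree is $(d^{O(\log q)})^{d} = d^{O(d\log q)}$, not the claimed $d^{O(\log q)}$. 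The same problem recurs in the equal-degree phase (computing $h^{(q^i-1)/2}$ for $i$ up to $d$ needs $O(i\log q)$ sequential squarings) and in the outer ``iterate $O(\log d)$ times''. For $M_2$, the claim that iterating the $p$-th power map $ri$ times keeps the degree at $\poly(d,p)$ likewise fails under composition. The paper avoids all of this by computing $Q$ \emph{once} and doing everything else---nullspace, a single exponentiation to $(q-1)/2$ or $(p-1)/2$, GCDs over subsets---in one pass, with no dependence on the factor degree in the exponent.

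A secondary issue: your error-degree argument asserts that a single equal-degree trial ``fails with probability $O(d/|S|)$'' via Schwartz--Zippel. This is incorrect: the failure event that $\gcd(g, h^{(q^i-1)/2}-1)$ is trivial is not the vanishing of a low-degree polynomial in the random coefficients of $h$; over uniform $\F_q$-samples it occurs with \emph{constant} probability (about $1/2$). The paper's bound is combinatorial (counting quadratic residues yields separation probability at least $4/9$ per trial), and then amplifies to error degree $1$ by running $\Theta(\log q)$ independent copies.
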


\bproof
    We first describe von zur Gathen's modified version of the Cantor--Zassenhaus algorithm \cite[Theorem 4.1]{vzGathen84}, bounding the size and degree of the resulting arithmetic network along the way, and then bound the error degree of the algorithm.
    As our concern is primarily on the complexity of the algorithm and not its correctness, we address issues of correctness only at a level of detail that suffices to establish the claimed bounds on the complexity of the algorithm.
    For simplicity, we write $q \coloneqq q(n)$ and $p \coloneqq p(n)$ for the order and characteristic of $\F_{q(n)}$, respectively.
    Let $k \in \N$ be a natural number such that $q = p^k$.
    Throughout, let $R \coloneqq \F_q[x] / \langle f \rangle$ be the quotient of $\F_q[x]$ by the ideal generated by $f$.
    We assume, without loss of generality, that our input polynomial $f$ is monic.

    We begin by recalling the algorithm of \textcite{vzGathen84}.
    \begin{enumerate}
        \item 
            In the case where we do not have a representation of $\F_q$ as an extension of $\F_p$, we start by computing the matrix $Q \in \F_q^{d \times d}$ corresponding to the linear map $R \to R$ given by $a \mapsto a^q$.
            Since $\set{1, x, x^2, \ldots, x^{d-1}}$ is a basis of $R$ over $\F_q$, it suffices to compute representatives of $x^{iq}$ in $R$ for $i \in \set{0,1,\ldots,d-1}$.
            The representative of $x^{iq}$, written in the basis $\set{1, x, \ldots, x^{d-1}}$, is precisely the remainder when $x^{iq}$ is divided by $f(x)$.

            To compute the remainder of $x^{iq}$ on division with $f(x)$, we compute $x^{iq}$ via repeated squaring in the quotient ring $R = \F_q[x]/\langle f \rangle$.
            Given a polynomial $g(x)$ of degree at most $d$ such that $g(x) = x^k \bmod f(x)$, we can compute a representative of $x^{2k}$ by computing $g(x)^2 \bmod f(x)$.
            The polynomial $g(x)^2$ has degree at most $2d$, so this polynomial division with remainder can be carried out with an arithmetic network of size and degree $\poly(d)$.
            There are $O(\log(qd))$ squaring operations, so computing representatives of the $x^{iq}$ can be done with size $\poly(d, \log(q))$ and degree $d^{O(\log(d) + \log(q))}$.

            In the case where we are given a representation of $\F_q$ as an extension of $\F_p$, we instead let $Q \in \F_q^{d \times d}$ be the matrix corresponding to the linear map $R \to R$ given by $a \mapsto a^p$.
            As above, to compute this matrix, we find representations of the polynomials $\set{x^{i p} : i \in \set{0,1,\ldots,d-1}}$ in the basis $\set{1,x,\ldots,x^{d-1}}$ of $R$.
            We do this using polynomial division with remainder, but instead of using repeated squaring in $R$, we simply write down the polynomial $x^{ip}$ and then find its remainder on division with $f(x)$.
            Each division with remainder can be carried out by an arithmetic network of size $\poly(d, p)$ and degree $\poly(d, p)$, and there are $d$ such networks run in parallel, so this step can likewise be implemented in size $\poly(d, p)$ and degree $\poly(d, p)$.
        \item
            We next compute the dimension of and a basis for the nullspace of $Q - I_d$, where $I_d \in \F_q^{d \times d}$ is the identity matrix.
            Let $r$ be the dimension of this nullspace, and let $g_1, \ldots, g_r \in \F_q[x]$ be polynomials such that $\set{g_i \bmod f : i \in [r]}$ is a basis for the nullspace of $Q - I_d$.
            The value of $r$ corresponds to the number of irreducible factors of $f$.
            If $r = 1$, then we report that $f$ is irreducible and the algorithm terminates.
            Otherwise, we proceed with factorization of $f$.
            (Here, we used the assumption that $f$ is squarefree: if $f$ is squarefree and has only one irreducible factor, then $f$ itself must be this irreducible factor, hence $f$ is irreducible.)

            We can compute $r$ and a basis for the nullspace of $Q - I_d$ using the network of \Cref{lem:network algo linear systems}.
            In this case, the entries of the matrix $Q - I_d$ are field elements, so the basis produced by \Cref{lem:network algo linear systems} will likewise consist of vectors over the field $\F_q$.
            In particular, this invocation of \Cref{lem:network algo linear systems} does not require the use of $\pit$ gates.
        \item
            In the case where $r \ge 2$, let $m \coloneqq 5 \log_{9/5}(r)$.
            Let $v_{1,1}, \ldots, v_{m,r}$ be chosen independently at random from $\F_q$ or $\F_p$, depending on whether we have a representation of $\F_q$ as an extension of $\F_p$.
            For $i \in [m]$, define $h_i(x) \coloneqq \sum_{j=1}^r v_{i,j} g_j(x)$.
            The coefficients of each $h_i$ can be computed by a network of size $O(rd) \le O(d^2)$ and degree $2$.
        \item
            For each $i \in [m]$, we compute a polynomial $c_i(x)$ that is likely to be a proper factor of $f(x)$.
            If we are not given a representation of $\F_q$ as an extension of $\F_p$, we compute
            \[
                c_i(x) \coloneqq 
                \begin{cases}
                    \gcd(f(x), h_i(x)^{\frac{q-1}{2}} - 1) & \text{if $p$ is odd,} \\
                    \gcd(f(x), \sum_{j=0}^{\log_2(q)-1} h_i(x)^{2^j}) & \text{otherwise.}
                \end{cases}
            \]
            If instead we have a representation of $\F_q$ as an extension of $\F_p$, we compute
            \[
                c_i(x) \coloneqq 
                \begin{cases}
                    \gcd(f(x), h_i(x)^{\frac{p-1}{2}} - 1) & \text{if $p$ is odd,} \\
                    \gcd(f(x), h_i(x)) & \text{otherwise.}
                \end{cases}
            \]
            Some of the polynomials $c_i$ will be proper factors of $f$, which can be thought of as a partial factorization of $f$.
            We improve this to a finer factorization of $f$ as follows.
            For every $I \subseteq \zo \times [m]$, we compute
            \[
                s_I(x) \coloneqq \gcd\del{\set{c_i(x) : (0, i) \in I} \cup \set{\frac{f(x)}{c_i(x)} : (1, i) \in I}}.
            \]
            We then compute the set
            \[
                T \coloneqq \set{I \subseteq \zo \times [m] : \text{$s_I \neq 1$ and for all $J \supseteq I$, either $s_J = 1$ or $s_J = s_I$}},
            \]
            which corresponds to the minimal choices of $I$ that produce the finest nontrivial factorization of $f$.

            If $|T| = r$, then the polynomials $\set{s_I : I \in T}$ are the irreducible factors of $f$.
            In this case, the network outputs the coefficients of these polynomials and terminates.
            Otherwise, we have $|T| \neq r$, in which case the network fails to factor $f$ into irreducibles and reports this failure accordingly.

            To compute the polynomial $c_i(x)$ when we don't have a representation of $\F_q$ as an extension of $\F_p$, we first compute
            \[
                g_i(x) \coloneqq \begin{cases}
                    h_i(x)^{\frac{q-1}{2}} - 1 \bmod f(x) & \text{if $q$ is odd,} \\
                    \sum_{j=0}^{\log_2(q)-1} h_i(x)^{2^j} \bmod f(x) & \text{otherwise}
                \end{cases}
            \]
            in a manner similar to how we computed $x^{iq} \bmod f(x)$ in step (1).
            In both situations, we compute the necessary powers of $h_i$ via repeated squaring, performing division with remainder after each squaring operation to ensure we always square a polynomial of degree bounded by $d$.
            This requires $O(\log(q))$ squaring operations, each of which can be implemented by a network of size and degree bounded by $\poly(d)$, so each $g_i$ can be computed by a network of size $\poly(d, \log(q))$ and degree $d^{O(\log q)}$.
            Once we have computed $g_i$, we can compute $c_i(x) = \gcd(f(x), g_i(x))$ using \Cref{lem:network algo gcd}, which is implemented as a network of size and degree bounded by $\poly(d)$.

            If we instead have a representation of $\F_q$ as an extension of $\F_p$, we directly compute the coefficients of $h_i(x)^{\frac{p-1}{2}}-1$ (if necessary) and then apply the network of \Cref{lem:network algo gcd} to compute the polynomial $c_i(x)$.
            This can be implemented by a network of size and degree $\poly(d, p)$.

            Once the $c_i(x)$'s have been computed, we can compute each $s_I(x)$ using \Cref{lem:network algo multigcd}.
            There are at most $2^{2m} \le \poly(r)$ choices of $I \subseteq \zo \times [m]$, and for each such $I$, the polynomial $s_I(x)$ can be computed by direct application of the network of \Cref{lem:network algo multigcd}.
            This computes each $s_I(x)$ using a network of size $\poly(r, d) \le \poly(d)$ and degree $\poly(r,d) \le \poly(d)$.
            In all, we have computed each $s_I(x)$ using a network of size $\poly(d, \log q)$ and degree $d^{O(\log q)}$ when $\F_q$ is not presented as an extension of $\F_p$, and size $\poly(d, p)$ and degree $\poly(d, p)$ when $\F_q$ is given as an extension of $\F_p$.
            By \Cref{prop:concat-arith-networks} we can compute all the $s_I(x)$ in parallel using a network of size $\poly(d)$ and degree $d^{O(\log q)}$ in the former case, and size $\poly(d, p)$ and degree $\poly(d, p)$ in the latter case.
            
            The set $T$ can be computed directly by a network of $\poly(d)$ size and degree $O(1)$: for $I, J \subseteq \zo \times [m]$ with $I \subseteq J$, if either $s_I(x) = 1$ or $s_J(x) \neq 1$ and $s_J(x) \neq s_I(x)$, then we know that $I \notin T$.
            Thus, we can check if $I \in T$ by checking all choices of $J \supseteq I$, of which there are at most $2^{2m} \le \poly(r) \le \poly(d)$.
            This clearly costs $\poly(d)$ size.
            Since the computation of $T$ is done by the boolean part of the network, this computation only incurs degree $O(1)$.
    \end{enumerate}

    The preceding description bounds the size and degree of the resulting arithmetic network.
    The uniformity of the arithmetic network follows from the uniform description of the algorithm above.
    To bound the error degree of the network, we need to understand when the polynomials $s_I(x)$ appearing in step (4) fail to provide a complete factorization of $f$ into irreducible factors.

    To do this, we follow von zur Gathen's analysis of his algorithm.
    Let $f = f_1 \cdots f_t$ be the factorization of $f$ into irreducible factors.
    (We assume that the given polynomial $f$ is squarefree, so each irreducible factor occurs with multiplicity $1$.)
    Recall that $R = \F_q[x]/\langle f \rangle$ and define $R' \coloneqq \F_q[x]/\langle f_1 \rangle \times \cdots \times \F_q[x]/\langle f_t \rangle$.
    By the Chinese remainder theorem, there is an isomorphism of rings $\varphi : R \to R'$.

    We first consider the case when $p$ is odd and we are not given a representation of $\F_q$ as an extension of $\F_p$.
    For a polynomial $h \in \F_q[x]$, we say that $h$ \emphdef{separates} the factors $f_i$ and $f_j$ if exactly one of $f_i$ and $f_j$ divides $h^{\frac{q-1}{2}}-1$.
    Let $\overline{h}$ be the image of $h$ in $R$ and let $\alpha(h) = (u_1, \ldots, u_r) \in \F_q^r$. 
    We know from \cite{vzGathen84} that $h$ separates $f_i$ and $f_j$ exactly when one of $u_i$ and $u_j$ satisfies $u^{\frac{q-1}{2}} - 1 = 0$.
    There are exactly $\frac{q^2-1}{2}$ such pairs $(u_i, u_j)$, so the probability that a single randomly-chosen $h$ separates $f_i$ and $f_j$ is
    \[
        \frac{q^2 - 1}{2q^2} \ge \frac{4}{9},
    \]
    since $q \ge 3$.
    This implies that the probability that \emph{none} of the $h_k$ separate the factors $f_i$ and $f_j$ is bounded by $(5/9)^m$.
    By a union bound, the probability that some pair of factors $f_i$ and $f_j$ is not separated is bounded by
    \[
        \binom{r}{2} \del{\frac{5}{9}}^m \le r^2 \del{\frac{5}{9}}^{\log_{9/5}(r)} = \frac{1}{r^3} \le \frac{1}{8}.
    \]
    The last inequality above uses the fact that the algorithm only errs when $f$ is reducible, in which case $r \ge 2$.
    If every pair of factors is separated, then the algorithm correctly returns a factorization of $f$ into irreducible factors, so the error probability of the algorithm is bounded by $1/8$.
    This bounds the error degree of the algorithm by $q/8$.

    If $p$ is odd and we are given a representation of $\F_q$ as an extension of $\F_p$, then the preceding analysis still applies, replacing $q$ with $p$ everywhere.
    Since our random elements are drawn from $\F_p$ and not $\F_q$, we instead obtain a bound of $p/8$ on the error degree of the algorithm.

    If $p$ is even and we are not given a representation of $\F_q$ as an extension of $\F_p$, we instead say that $h$ \emphdef{separates} the factors $f_i$ and $f_j$ if exactly one of $f_i$ and $f_j$ divides $\sum_{j=0}^{\log_2(q)-1} h_i(x)^{2^j}$.
    \textcite{vzGathen84} shows that a randomly-chosen $h$ separates $f_i$ and $f_j$ with probability exactly $1/2$, so the same analysis as in the case of odd $p$ applies.
    If we are also given a representation of $\F_q$ as an extension of $\F_p$, then \textcite{vzGathen84} likewise shows that for a randomly-chosen $h$, with probability $1/2$ exactly one of $f_i$ and $f_j$ will divide $h(x)$.
    Again, the same analysis as in the case of odd $p$ applies.

    For the improved bound of $1$ on the error degree, we run $\log_8(q)$ or $\log_8(p)$ copies of the preceding algorithm in parallel, depending on whether or not $\F_q$ is presented as an extension of $\F_p$.
    Each independent run fails with probability $1/8$, so the probability that all copies fail is bounded by either $(1/8)^{\log_8(q)} = 1/q$ or $(1/8)^{\log_8(p)} = 1/p$.
    All successful runs of the algorithm will output the same set of irreducible factors of $f$, possibly up to permutation, so we may output the results of the first successful run of the algorithm.
    A run of the algorithm succeeds if the set $T$ computed in step (4) has size $r$, so we can determine which runs of the algorithm are successful and output the first such run.
    Hence the algorithm fails either with probability $1/q$ or $1/p$ as appropriate, so the error degree is bounded by $1$.
\eproof

\subsection{Factorization of arithmetic circuits over finite fields} \label{subsec:kaltofen-arith-net}

Our goal in this subsection is to argue that Kaltofen's~\cite{Kaltofen1989FactorizationOP} algorithm to factor arithmetic circuits over finite fields can itself be implemented as a uniform family of randomized arithmetic networks.

\subsubsection{The resultant and discriminant}

Before presenting Kaltofen's algorithm, we need to recall the notions of the resultant and the discriminant.
These are well-known tools that are useful in the study of polynomial factorization, and we will make use of them to argue correctness of our presentation of Kaltofen's algorithm.
For more on the resultant, we refer the reader to \cite[Chapter 6]{vzGG13}.

\begin{definition}[Resultant] \label{def:resultant}
    Let $\F$ be a field and let $f, g \in \F[x]$ be univariate polynomials.
    Write $f(x) = \sum_{i=0}^n f_i x^i$ and $g(x) = \sum_{i=0}^m g_i x^i$, with $f_n \neq 0$ and $g_m \neq 0$.
    The \emphdef{resultant} of $f$ and $g$, denoted $\res(f, g)$, is given by the determinant
    \[
        \res(f, g) \coloneqq \det
        \begin{pmatrix}
            f_n & & & & g_m & & & & & \\
            f_{n-1} & f_n & & & g_{m-1} & g_m & & & & \\
            \vdots & \vdots & \ddots & & \vdots & \vdots & \ddots & & & \\
            \vdots & \vdots & & f_n & g_1 & \vdots & & \ddots & & \\
            \vdots & \vdots & & f_{n-1} & g_0 & \vdots & & & \ddots & \\
            \vdots & \vdots & & \vdots & & g_0 & & & & g_m \\
            f_0 & \vdots & & \vdots & & & \ddots & & & \vdots \\
            & f_0 & & \vdots & & & & \ddots & & \vdots \\
            & & \ddots & \vdots & & & & & \ddots & \vdots \\
            & & & f_0 & & & & & & g_0
        \end{pmatrix},
    \]
    where the above matrix is of size $(n+m) \times (n+m)$, the first parallelogram consists of $m$ columns, and the second parallelogram consists of $n$ columns.
\end{definition}

The utility of the resultant arises from the fact that it detects when two polynomials share a common factor.

\begin{lem}[{see, e.g., \cite[Corollary 6.17]{vzGG13}}] \label{lem:resultant common factor}
    Let $\F$ be a field and let $f, g \in \F[x]$ be univariate polynomials.
    Then $\res(f, g) = 0$ if and only if $f$ and $g$ have a common factor.
\end{lem}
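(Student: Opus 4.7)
\bproof[Proof plan]
The plan is to prove the standard equivalence between the vanishing of the resultant and the existence of a nontrivial syzygy relation $uf + vg = 0$, and then translate such relations into common factors of $f$ and $g$. First I would set notation: write $n = \deg(f)$ and $m = \deg(g)$, and consider the linear map
\[
\Phi \colon \F[x]_{<m} \times \F[x]_{<n} \longrightarrow \F[x]_{<n+m}, \qquad (u, v) \longmapsto uf + vg,
\]
where $\F[x]_{<k}$ denotes the space of polynomials of degree less than $k$. Both source and target are $\F$-vector spaces of dimension $n+m$, and in the monomial bases $\{1, x, \ldots, x^{m-1}\}$, $\{1, x, \ldots, x^{n-1}\}$, $\{1, x, \ldots, x^{n+m-1}\}$, the matrix of $\Phi$ is exactly the Sylvester matrix from \Cref{def:resultant} (up to the transpose convention used in that definition, which does not affect the determinant). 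Hence $\res(f,g) = 0$ if and only if $\Phi$ is not injective, i.e.\ if and only if there exist $(u, v) \neq (0, 0)$ with $u \in \F[x]_{<m}$, $v \in \F[x]_{<n}$, and $uf + vg = 0$.

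Next, I would establish the equivalence between the existence of such a nontrivial pair $(u,v)$ and $f$ and $g$ sharing a common factor. For the easy direction, if $h$ is a nonconstant common factor, write $f = h f_1$ and $g = h g_1$ with $\deg(f_1) < n$ and $\deg(g_1) < m$; then $(u, v) := (g_1, -f_1)$ is a nontrivial element of the kernel of $\Phi$. For the converse, suppose $(u, v) \neq (0, 0)$ satisfies $uf = -vg$ with $\deg(u) < m$ and $\deg(v) < n$. Assume toward contradiction that $\gcd(f, g) = 1$ in $\F[x]$; then from $g \mid uf$ and $\gcd(g, f) = 1$ it follows that $g \mid u$. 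Since $\deg(u) < \deg(g)$, this forces $u = 0$, and then $vg = 0$ in the integral domain $\F[x]$ forces $v = 0$, contradicting the nontriviality of $(u,v)$. Hence $\gcd(f, g)$ is nonconstant, i.e.\ $f$ and $g$ share a common factor.

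Combining the two equivalences gives the statement. I would also briefly address edge cases to avoid subtle issues: if either $f$ or $g$ is the zero polynomial, the Sylvester matrix degenerates, but one can treat these cases directly, since $f = 0$ or $g = 0$ trivially shares a common factor with the other polynomial (namely, $g$ or $f$ itself, assumed of positive degree; if both are zero the statement is vacuous up to convention). The only genuine step that requires care is the identification of $\Phi$'s matrix with the Sylvester matrix, since this amounts to unpacking \Cref{def:resultant}; this bookkeeping is the main (and only) technical obstacle, and it is entirely routine once the monomial bases are fixed.
\eproof
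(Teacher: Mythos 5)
Your proof is correct and is the standard Sylvester-matrix argument: identify $\res(f,g)$ with the determinant of the linear map $(u,v)\mapsto uf+vg$, then characterize a nontrivial kernel element by the existence of a common factor via the usual divisibility/degree argument. The paper does not give a proof of this lemma at all—it simply cites \cite[Corollary 6.17]{vzGG13} as a standard reference—so there is nothing in the paper to compare against; your write-up is a correct rendering of the textbook proof, including the easy handling of the degenerate cases where $f$ or $g$ vanishes.
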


The resultant of two polynomials $f$ and $g$ can always be written as $s(x) f(x) + t(x) g(x)$, where $s, t \in \F[x]$ are univariate polynomials satisfying $\deg(s) < \deg(g)$ and $\deg(t) < \deg(f)$.
We will not need the fact that this linear combination can be found in low degree, so we do not quote that here.

\begin{lem}[{see, e.g., \cite[Corollary 6.21]{vzGG13}}] \label{lem:resultant linear combination}
    Let $\F$ be a field and let $f, g \in \F[x]$ be univariate polynomials.
    There are polynomials $s, t \in \F[x]$ such that $s(x) f(x) + t(x) g(x) = \res(f,g)$.
\end{lem}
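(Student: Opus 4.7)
\medskip

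The plan is to prove \Cref{lem:resultant linear combination} via the standard Sylvester matrix argument, making the linear combination explicit through a single column operation on the matrix whose determinant defines $\res(f,g)$.

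Let $n = \deg(f)$ and $m = \deg(g)$, and let $S$ denote the $(n+m) \times (n+m)$ Sylvester matrix of \Cref{def:resultant}, so that $\det(S) = \res(f,g)$. The first step is to construct a matrix $M(x) \in \F[x]^{(n+m) \times (n+m)}$ which agrees with $S$ in every column except the last, and whose last column is instead $\sum_{j=1}^{n+m} x^{n+m-j} \cdot S_{\cdot, j}$ (where $S_{\cdot, j}$ is the $j$-th column of $S$). Since this amounts to adding $\F[x]$-linear combinations of the first $n+m-1$ columns to the last column, a basic property of determinants gives $\det(M(x)) = \det(S) = \res(f,g)$.

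Next I would compute the entries of the modified last column of $M(x)$. By direct inspection of the Sylvester matrix, the first $m$ entries turn out to be $x^{m-i} f(x)$ for $i = 1, \ldots, m$, and the remaining $n$ entries are $x^{n+m-i} g(x)$ for $i = m+1, \ldots, n+m$. The key observation is that every entry of this last column lies in the principal ideals $(f)$ or $(g)$, with the first $m$ entries being multiples of $f$ and the last $n$ entries being multiples of $g$.

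I would then finish by Laplace expansion of $\det(M(x))$ along the modified last column. Since the minors used in this expansion involve only the first $n+m-1$ columns of $M(x)$ (which coincide with those of $S$), their determinants are elements of $\F$, namely cofactors of $S$. Collecting the terms that contain $f(x)$ yields $s(x) \in \F[x]$ of degree less than $m$, and those containing $g(x)$ yield $t(x) \in \F[x]$ of degree less than $n$, giving the identity $s(x) f(x) + t(x) g(x) = \det(M(x)) = \res(f,g)$ as required. There is no serious obstacle here: the only thing to be careful with is a clean bookkeeping of which rows of $S$ contribute coefficients from $f$ versus $g$, and tracking the sign conventions $(-1)^{i + (n+m)}$ in the cofactor expansion.
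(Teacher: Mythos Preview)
The paper does not prove this lemma; it simply cites the textbook result. Your approach is the standard Sylvester-matrix argument and the overall plan is sound, but there is a row/column transposition error relative to the matrix of \Cref{def:resultant}.

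With the paper's convention, the $j$-th column of $S$ (for $j \le m$) is the coefficient vector of $x^{m-j} f(x)$ in the basis $(x^{n+m-1},\ldots,x,1)$, and column $m+k$ is that of $x^{n-k} g(x)$. Hence the vector $\sum_{j} x^{n+m-j} S_{\cdot,j}$ does \emph{not} have $i$-th entry $x^{m-i} f(x)$ as you claim: already for $n=m=1$ the first entry is $f_1 x + g_1$, not $f(x)$. What you wrote is the correct computation for $S^{T}$, not $S$. (Ironically, the ``clean bookkeeping of which rows contribute coefficients from $f$ versus $g$'' that you flag as the only thing to watch is exactly where the slip occurs: in the paper's matrix it is the \emph{columns}, not the rows, that split into an $f$-block and a $g$-block.)

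The fix is immediate: either run your argument on $S^{T}$ (harmless, since $\det S^{T}=\det S$), or equivalently perform the analogous \emph{row} operation---replace the last row by $\sum_{i} x^{n+m-i}\cdot(\text{row }i)$---and Laplace-expand along that row. The new last row is then $(x^{m-1} f,\ldots,x f, f,\,x^{n-1} g,\ldots,x g, g)$, and expansion along it yields $s f + t g = \res(f,g)$ with $\deg s < m$ and $\deg t < n$, exactly as you intended.
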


By specializing $g$ to $\frac{\partial f}{\partial x}$ in the resultant, we obtain the discriminant of a single polynomial.

\begin{definition}[Discriminant] \label{def:discriminant}
    Let $\F$ be a field and let $f \in \F[x]$ be a univariate polynomial.
    The \emphdef{discriminant} of $f$, denoted by $\disc(f)$, is the resultant $\res(f, \frac{\partial f}{\partial x})$.
\end{definition}

Just as the resultant checks if two polynomials share a common factor, the discriminant checks if a single polynomial has repeated factors.

\begin{lem}[{see, e.g., \cite[Corollaries 6.17 and 14.25]{vzGG13}}] \label{lem:discriminant squarefree}
    Let $\F$ be a field and let $f \in \F[x]$ be a univariate polynomial.
    Suppose $f$ is nonconstant.
    Then $\disc(f) \neq 0$ if and only if $f$ is squarefree.
\end{lem}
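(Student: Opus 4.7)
The plan is to reduce the statement to the resultant characterization provided by \Cref{lem:resultant common factor}. By definition, $\disc(f) = \res(f, f')$, so \Cref{lem:resultant common factor} gives $\disc(f) = 0$ if and only if $f$ and $f'$ share a common nonconstant factor. Thus it suffices to prove that $f$ is squarefree if and only if $\gcd(f, f') = 1$.

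For the forward implication (i.e., $f$ not squarefree $\Rightarrow \disc(f) = 0$), I would write $f = g^2 h$ for some nonconstant $g \in \F[x]$, and then apply the product rule to compute
\[
f'(x) = 2 g(x) g'(x) h(x) + g(x)^2 h'(x) = g(x)\bigl(2 g'(x) h(x) + g(x) h'(x)\bigr).
\]
Thus $g \mid f'$, and since $g \mid f$ as well, the polynomials $f$ and $f'$ share the nonconstant common factor $g$, hence $\disc(f) = 0$ by \Cref{lem:resultant common factor}.

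For the reverse implication ($f$ squarefree $\Rightarrow \disc(f) \neq 0$), I would argue by contradiction: assume $f$ squarefree but some irreducible $p \in \F[x]$ divides both $f$ and $f'$. Write $f = p m$ with $\gcd(p, m) = 1$ (using squarefreeness). Then $f' = p' m + p m'$, and $p \mid f'$ forces $p \mid p' m$, which by coprimality yields $p \mid p'$. Since $\deg(p') < \deg(p)$, this forces $p' = 0$. In characteristic zero this immediately contradicts irreducibility (it would force $p$ constant). Over a perfect field of positive characteristic $\ell$ (in particular, over any finite field, which is the case of interest for \Cref{subsec:uniform univariate factorization}), the condition $p' = 0$ implies $p(x) \in \F[x^\ell]$, i.e.\ $p(x) = q(x^\ell)$ for some $q \in \F[x]$; surjectivity of Frobenius on coefficients then yields $p(x) = \tilde{q}(x)^\ell$ for some $\tilde{q} \in \F[x]$, again contradicting irreducibility.

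The only subtle point — which I expect to be the main (minor) obstacle — is the handling of inseparability in positive characteristic, since the equivalence as stated can fail over non-perfect fields (e.g.\ $f(x) = x^\ell - t$ over $\F_\ell(t)$ is irreducible hence squarefree, yet $f' = 0$). This subtlety does not affect the intended application in \Cref{subsec:uniform univariate factorization}, where the ground field is a finite field and hence perfect, so the argument above goes through.
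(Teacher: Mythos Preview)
The paper does not supply its own proof of this lemma; it is stated with a citation to \cite[Corollaries 6.17 and 14.25]{vzGG13} and used as a black box. Your argument is the standard textbook proof (reduce to \Cref{lem:resultant common factor} and show $f$ squarefree iff $\gcd(f,f')=1$ via the product rule and a degree argument), and it is correct. Your observation about the perfect-field hypothesis is also apt: the statement as written can fail over non-perfect fields, but every use in the paper is over a finite field, so the issue does not arise.
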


\subsubsection{Kaltofen's algorithm}

We now present Kaltofen's algorithm to factor arithmetic circuits over finite fields, closely following the implementation appearing in the work of \textcite{KSS15}.

\begin{proposition}\label{prop:kaltofen arith network}
    Let $\F = \set{\F_{q(n)}}_{n \in \N}$ be a feasible family of finite fields, where the $n$\ts{th} field has order $q(n)$.
    There is a deterministic, polynomial-time Turing machine that receives as input $(1^n, 1^d, 1^s)$ and outputs a randomized arithmetic network satisfying the following properties.
    \begin{enumerate}
        \item 
            The network receives as input the description of an arithmetic circuit $C$ of size $s$ over $\F_{q(n)}$ that computes a polynomial $f \in \F_{q(n)}[x_1,\ldots,x_n]$ of degree at most $d$.
        \item
            The network outputs the descriptions of arithmetic circuits $C_1, \ldots, C_t$ together with natural numbers $e_1, \ldots, e_t \in \N$ and $\beta_1, \ldots, \beta_t \in \N$ encoded in binary.
            For each $i \in [t]$, let $g_i \in \F_{q(n)}[x_1, \ldots, x_n]$ be the polynomial computed by the circuit $C_i$.
            The output of the network satisfies the following properties.
            \begin{enumerate}
                \item
                    Each circuit $C_i$ is of size $\poly(n,s,d)$.
                \item
                    For each $i \in [t]$, the polynomial $g_i$ is either irreducible or a $p^{\beta_i}$-th power of an irreducible polynomial, where $p = \ch(\F_{q(n)})$.
                \item
                    For each $i \in [t]$, the natural number $e_i$ is not divisible by $p$.
                \item
                    For $i \neq j$, the polynomials $g_i$ and $g_j$ are coprime.
                \item
                    The polynomial identity $f(\vec{x}) = \prod_{i=1}^t g_i(\vec{x})^{e_i}$ holds.
            \end{enumerate}
        \item
            The network has size $\poly(n,s,d,\log(q))$, degree $n^{O(1)} d^{O(\log(q))}$, and error degree $\poly(d)$.
            If we are additionally provided with a representation of $\F_{q(n)}$ as an extension of $\F_{p(n)}$, where $p(n)$ is the characteristic of $\F_{q(n)}$, then the size of the network becomes $\poly(n,s,d,p)$ and the degree becomes $\poly(n,d,p)$.
    \end{enumerate}
\end{proposition}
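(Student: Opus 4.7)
The plan is to implement Kaltofen's algorithm for factoring arithmetic circuits following the uniform treatment of Kopparty--Saraf--Shpilka~\cite{KSS15}, drawing on the circuit-manipulation subroutines established earlier in this appendix. The algorithm will proceed in four phases: reduction to a squarefree polynomial, random normalization, bivariate factorization via Hensel lifting, and reconstruction of multivariate factors.

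In Phase~1 I would detect and handle purely inseparable components of $f$. Using \Cref{lem:network algo partial derivative} we compute circuits for all $\partial f/\partial x_i$ and use $\pit$ gates to test their vanishing. If some partial is nonzero, we extract the squarefree part $f/\gcd(f,\partial f/\partial x_i)$ via \Cref{lem:network algo gcd,lem:network algo divisibility}. If all partials vanish then $f\in\F_{q(n)}[\vec{x}^p]$, in which case we descend to the polynomial $F$ such that $f(\vec{x})=F(\vec{x}^p)$; a circuit that evaluates $F$ is obtained by evaluating $f$ at $p$-th roots of input points (using $\alpha\mapsto\alpha^{q/p}$, which has degree $q/p$), and $F$ is recursively factored. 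The recursion has depth $O(\log_p d)$, and each returned factor has its $\beta$-tag incremented accordingly; multiplicities $e_i$ are recovered at the end via repeated divisibility testing with \Cref{lem:network algo divisibility}.

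In Phase~2 we apply a random invertible affine change of variables to the squarefree polynomial, drawing coefficients from the network's random arithmetic inputs. A Schwartz--Zippel argument against a fixed polynomial of degree $\poly(d)$, given as the product of a discriminant (\Cref{def:discriminant,lem:discriminant squarefree}) and pairwise resultants (\Cref{def:resultant,lem:resultant common factor}), shows that with high probability this substitution simultaneously makes the leading coefficient of $f$ in $y_1$ a nonzero field constant, makes the bivariate specialization $f(y_1,0,\ldots,0,y_n)$ squarefree, and preserves the factorization structure under this specialization. In Phase~3 we then specialize $(y_2,\ldots,y_{n-1})=\vec{0}$ to obtain a bivariate $\bar{f}(y_1,y_n)$, further specialize $y_n=0$, and factor $\bar{f}(y_1,0)$ into distinct irreducibles using \Cref{thm:network algo univariate factorization}. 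Each univariate factor is then Hensel-lifted via \Cref{lem:network algo hensel lifting} with respect to $\langle y_n\rangle$ to a truncation order $2^k>d$; since the $y_n$-degree of the true bivariate factor is at most $d<2^k$, each lifted polynomial coincides with a unique bivariate factor of $\bar{f}$. In Phase~4 we extend each bivariate factor $g_j(y_1,y_n)$ to a multivariate factor $G_j(\vec{y})$ by solving a linear system over $\F_{q(n)}[y_1,y_n]$ whose unknowns are the coefficients of a $y_1$-monic candidate $G_j$ of bounded degree, and whose constraints express that $G_j$ specializes correctly to $g_j$ and that its product with a candidate cofactor equals $f$; this is implemented with \Cref{lem:network algo linear systems}, verified via \Cref{lem:network algo divisibility}, and completed by undoing the Phase~2 transformation.

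The main obstacle will be controlling the degree of the resulting randomized arithmetic network through the composition of these phases. The dominant contribution is Phase~3: \Cref{thm:network algo univariate factorization} has degree $d^{O(\log q)}$ over a general finite field, and this degree propagates through the Hensel-lifting loop and the linear-system solving in Phase~4 to yield the overall bound $n^{O(1)}d^{O(\log q)}$. When a representation of $\F_{q(n)}$ over $\F_{p(n)}$ is supplied, the sharper bound $\poly(d,p)$ from the same theorem substitutes throughout, giving $\poly(n,d,p)$. The error degree $\poly(d)$ is the sum of three $\poly(d)$ contributions: the Schwartz--Zippel tests in Phase~2, the $O(\log d)$ invocations of univariate factorization inside the Hensel-lifting loop (each contributing error degree $1$), and the Schwartz--Zippel terms inherent in divisibility testing.
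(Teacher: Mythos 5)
The proposal takes a route to Kaltofen factorization that is similar in outline (squarefree reduction, random normalization, univariate factorization, Hensel lifting, factor reconstruction) but differs in two substantive places, and one of these differences introduces a genuine gap.

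The gap is in Phase~1, in the handling of purely inseparable polynomials. When all partials vanish and $f(\vec{x}) = F(\vec{x}^p)$, you propose to get a circuit for $F$ by precomposing $f$ with $\alpha \mapsto \alpha^{q/p}$ and then recursively factoring $F$. The resulting circuit computes the \emph{function} $F$ on $\F_q^n$, but as a polynomial it computes $f(x_1^{q/p}, \ldots, x_n^{q/p}) = F(\vec{x}^q)$, which over $\F_q$ equals $F(\vec{x})^q$. This polynomial has degree $\approx q d/p$ rather than $d/p$, so the size of every subsequent phase that is polynomial in the degree (Hensel lifting to truncation order exceeding the degree, linear-system dimensions, error degrees from Schwartz--Zippel) becomes $\poly(q)$ rather than $\poly(d, \log q)$, blowing the claimed size bound. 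Worse, $F(\vec{x})^q$ is itself a $p$\ts{th} power, so its partials again all vanish and the recursive Phase~1 test never terminates as intended. The paper avoids this entirely: it never takes a full $p$\ts{th} root. Instead it uses \Cref{lem:network algo interpolation} to write $f$ in one chosen variable $x_{i^\star}$, replaces $x_{i^\star}^{p^e j}$ by $x_{i^\star}^j$ in the coefficient expansion, and accepts that the resulting $f^*$ may still be a $p^e$\ts{th} power in the other variables. This yields a genuine low-degree circuit for a non-$p$\ts{th}-power polynomial in one shot, at the cost of only recovering $p^{\beta_i}$\ts{th} powers of irreducibles as output, which is exactly what the statement permits via the $\beta_i$ tags. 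You will need to adopt something like this one-variable-at-a-time reduction (or a different degree-controlled mechanism) to make Phase~1 sound.

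A second, less critical point: you also claim to obtain a squarefree part via $f / \gcd(f, \partial f/\partial x_i)$. Over characteristic $p$ this only isolates factors whose multiplicity is not divisible by $p$; factors with multiplicity divisible by $p$ survive in the gcd, and your description does not say how they are recovered. The paper sidesteps this by introducing a fresh variable $y$, shifting by a random direction so $f$ is monic in $y$, and then dividing by $\gcd(f, \partial f/\partial y)$; since $f$ has already been arranged to be a non-$p$\ts{th}-power, $\partial f/\partial y \neq 0$ and the resulting quotient really is squarefree.

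The remaining differences are legitimate alternative design choices rather than gaps. You specialize to a genuine bivariate $\bar f(y_1, y_n)$ over $\F_q$, factor, lift, and then reconstruct the multivariate factors by a linear system; the paper instead introduces an auxiliary variable $z$ via $\overline f(\vec x, y, z) = f(x_1 z, \ldots, x_n z, y)$ and treats the problem as bivariate over the function field $\F_q(\vec x)$, recovering each irreducible factor directly by solving a linear system over $\F_q(\vec x)$ (entries given by circuits) rather than lifting from a genuine bivariate. The function-field route avoids the bookkeeping of matching bivariate factors to multivariate ones, and keeps the error-degree analysis localized to a single Schwartz--Zippel argument on the discriminant; the specialize-and-lift route is closer to the classical von zur Gathen/Kaltofen presentation. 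Either can be made to work, but whichever you choose, the complexity accounting in Phase~4 must be re-examined once Phase~1 is repaired, since the size bound on the reconstruction linear system depends on the actual degree of the circuit handed in.
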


The remainder of this subsubsection consists of the proof of \Cref{prop:kaltofen arith network}.
We make use of several subroutines for manipulating polynomials and arithmetic circuits, all of which can be implemented as uniform randomized arithmetic networks in a straightforward manner.
The curious reader may find the details of these implementations in \Cref{subsec:network algos for algebra}.
For simplicity, we abbreviate $q = q(n)$ throughout, so we work over the finite field $\F_q$ of order $q$.

We first check if $f(\vec{x})$ is a constant polynomial.
We do this by evaluating $f$ on two random points using \Cref{prop:universal-arith-network}, declaring $f$ to be a constant polynomial exactly when these evaluations match.
If we determine that $f$ is a constant polynomial, we output a single circuit $C_1$ that computes the corresponding constant single gate labeled by either of the evaluations of $f$ we have computed.
We also output a multiplicity of $e_1 = 1$ for this factor of $f(\vec{x})$.
Otherwise, we know that $f(\vec{x})$ is nonconstant and we proceed to factor $f$.
This step can be implemented by a network of size $\poly(n,s,d)$, degree $\poly(n,d)$, and error degree $d$, and this network can be printed by invoking the corresponding Turing machine from \Cref{prop:universal-arith-network}.

\paragraph{Preparations for Factorization}

Before factoring $f(\vec{x})$, we need some preparatory steps to ensure $f$ satisfies several convenient properties.
In particular, we will ensure that $f$ is not a $p$\ts{th} power, that $f$ is monic in a known variable, and that $f$ is squarefree.

\begin{enumerate}
    \item
        First, we ensure that $f$ is not a $p$\ts{th} power.
        Suppose that $f$ is a $p^e$-th power, but not a $p^{e+1}$-th power.
        Then there is a variable $x_i$ such that $f$ depends on $x_i$ and some monomial of $f$ is divisible by $x_i$, but is not divisible by $x_i^{p^{e+1}}$.
        Write $f$ as a polynomial in $x_i$ as
        \[
            f(\vec{x}) = \sum_{j=0}^d f_j(x_1,\ldots,x_{i-1}, x_{i+1}, \ldots, x_n) x_i^j.
        \]
        Because $f$ depends on $x_i$, at least one of the $f_j$ is nonzero.
        By the assumption that $f$ is a $p^e$-th power, we know that $f_j(x_1,\ldots,x_{i-1}, x_{i+1}, \ldots, x_n) \neq 0$ if and only if $j = p^e k$ for some $k \in \N$.
        Consider the polynomial
        \[
            f^*(\vec{x}) = \sum_{j=0}^{d/p^e} f_{p^e j}(x_1,\ldots,x_{i-1}, x_{i+1}, \ldots, x_n) x_i^j.
        \]
        That is, whenever $x_i^{p^e j}$ appears in a monomial of $f$, we replace it with $x_i^{j}$.
        By our choice of the variable $x_i$, there is some $j$ with $p \nmid j$ such that $f_{p^e j}(x_1, \ldots, x_{i-1}, x_{i+1}, \ldots, x_n) \neq 0$, and because $f$ depends on the variable $x_i$, there exists such a $j$ that is nonzero.
        This implies that $x_i$ appears in $f^*$ with an exponent that is not divisible by $p$, so $f^*$ is not a $p$\ts{th} power.

        We claim that the irreducible factors of $f$ are in one-to-one correspondence with the irreducible factors of $f^*$.
        \begin{itemize}
            \item 
                In one direction, if $h(x_1,\ldots,x_n)$ is an irreducible factor of $f$, then because $f$ is a $p^e$-th power, it follows that $h(x_1,\ldots,x_n)^{p^e}$ divides $f(x_1,\ldots,x_n)$.
                We can write $h^{p^e}$ as
                \[
                    h(x_1,\ldots,x_n)^{p^e} = h^*(x_1,\ldots,x_{i-1},x_i^{p^e},x_{i+1},\ldots,x_n)
                \]
                for some polynomial $h^* \in \F_q[\vec{x}]$.
                Because $h^*(x_1,\ldots,x_{i-1},x_i^{p^e},x_{i+1},\ldots,x_n)$ divides the polynomial $f^*(x_1,\ldots,x_{i-1},x_i^{p^e},x_{i+1},\ldots,x_n)$, it follows that $h^*(\vec{x})$ divides $f^*(\vec{x})$.
            \item
                In the other direction, if $h^*(\vec{x})$ is an irreducible factor of $f^*(\vec{x})$, then it is clear that the polynomial $h(x_1,\ldots,x_{i-1},x_i^{p^e},x_{i+1},\ldots,x_n)$ divides 
                \[
                    f^*(x_1,\ldots,x_{i-1},x_i^{p^e},x_{i+1},\ldots,x_n) = f(\vec{x}).
                \]
        \end{itemize}
        In particular, given an irreducible factor of $f^*$, we can recover a $p^e$-th power of the corresponding irreducible factor of $f$.
        Thus, to factor $f$ (up to $p^e$-th powers), it suffices to instead factorize $f^*$.
        (For details, see \textcite[Section 3.1]{KSS15}.)

        We now argue that the description of an arithmetic circuit computing $f^*$ can be efficiently computed from the description of a circuit computing $f$.
        We first find a natural number $e \in \N$ such that $f$ is a $p^e$-th power but not a $p^{e+1}$-th power.
        For each variable $x_i$, we can write $f$ as a polynomial in $x_i$ whose coefficients are polynomials in the remaining variables, i.e.,
        \[
            f(\vec{x}) = \sum_{j=0}^d f_{i,j}(x_1,\ldots,x_{i-1},x_{i+1},\ldots,x_n) x_i^j.
        \]
        Using \Cref{lem:network algo interpolation}, we can compute descriptions of circuits that compute the polynomials $f_{i,j}$ above.
        For each pair $(i,j) \in [n] \times \set{0,1,\ldots,d}$, we evaluate $f_{i,j}$ on a random point using \Cref{prop:universal-arith-network} to determine if the polynomial $f_{i,j}$ is zero or nonzero.
        We then compute, for each $i \in [n]$, the largest natural number $e_i \in \N$ such that for all $j$, we have $f_{i,j} \neq 0$ if and only if $p^{e_i} \mid j$.
        The desired exponent $e$ is given by $e \coloneqq \min_{i \in [n]} e_i$.

        Let $i^\star \coloneqq \argmin_{i \in [n]} e_i$.
        By our choice of $i^\star$, there is a nonzero $j$ such that $p^{e+1} \nmid j$ and $f_{i^\star, j} \neq 0$.
        We can take $f^*(\vec{x})$ to be the polynomial
        \[
            f^*(\vec{x}) = \sum_{j=0}^{d/p^e} f_{i^\star, p^e j}(x_1,\ldots,x_{i^\star-1},x_{i^\star+1},\ldots,x_n) x_{i^\star}^{j}.
        \]
        Our choice of $i^\star$ ensures that $f^*$ is not a $p$\ts{th} power.
        We already have computed descriptions of circuits for the polynomials $\set{f_{i^\star, p^e j} : j \in \set{0,1,\ldots,d/p^e}}$, so we can efficiently compute a description of a circuit that computes $f^*(\vec{x})$.
        As described above, we can recover the factorization of $f$ from the factorization of $f^*$, so we will proceed to factor $f^*$.
        We also record the integer $e$, which will correspond to the output $\beta_j$ for any factor $g_j$ of $f$ we compute by factoring $f^*$.

        We now analyze the cost of this step.
        We invoke \Cref{lem:network algo interpolation} in parallel $n$ times, and each invocation is implemented by a network of size $\poly(n,s,d)$ and degree $O(1)$, so computing the coefficients $f_{i,j}$ can be done by a network of size $\poly(n,s,d)$ and degree $O(1)$.
        Evaluating the coefficients $f_{i,j}$ on random points is done in parallel using \Cref{prop:universal-arith-network}, which is implemented by a network of size $\poly(n,s,d)$ and degree $\poly(n,d)$.
        Computing $e$ and $i^\star$ can be done by a boolean subcircuit of size $\poly(n,d)$.
        Once $i^\star$ has been computed, a description of a circuit for $f^*$ can be computed in a straightforward manner using a network of size $\poly(n,s,d)$ and degree $O(1)$.
        In total, this step can be implemented by a network of size $\poly(n,s,d)$ and degree $\poly(n,d)$.

        The only source of error in this step is when a polynomial $f_{i,j}$ is nonzero, but evaluates to zero on a random point.
        Since each $f_{i,j}$ has degree at most $d$, when we sample the coordinates of the evaluation point at random from a set $S \subseteq \F_q$, we err with probability at most $d/|S|$.
        Although there are $n(d+1)$ such polynomials $f_{i,j}$, we can avoid paying a union bound in the error degree.
        Suppose $i^\star \in [n]$ is the index of a variable $x_{i^\star}$ such that $f$ depends on $x_{i^\star}$ and some monomial of $f$ is divisible by $x_{i^\star}$, but is not divisible by $x_{i^\star}^{p^{e+1}}$.
        Then there is some $j^\star \in [d]$ such that $p \nmid j$ and $f_{i^\star, p^e j^\star} \neq 0$.
        As long as the evaluation of $f_{i^\star, j^\star}$ is nonzero, the algorithm will correctly modify $f$ into a non-$p$\ts{th}-power $f^*$, so the error degree of this step is bounded by $d$.

        From here on, we use $f(\vec{x})$ to denote the polynomial $f^*(\vec{x})$ obtained above for the sake of notational ease.
        
    \item
        To make $f$ monic, we introduce a new variable $y$ and perform a random linear change of variables.
        Let $\vec{\alpha} \in \F_q^n$ be chosen at random and define
        \[
            \hat{f}(\vec{x}, y) \coloneqq f(\vec{x} + y \cdot \vec{\alpha}) = f(x_1 + y \alpha_1, \ldots, x_n + y \alpha_n).
        \]
        Let $f_d(\vec{x})$ be the top-degree homogeneous component of $f(\vec{x})$.
        A standard calculation shows that if $f_d(\vec{\alpha}) \neq 0$, then the polynomial $\hat{f}(\vec{x}, y)$ is monic in the fresh variable $y$.

        Given $\vec{\alpha}$, we can compute a description of a circuit computing $\hat{f}$ from the description of the circuit computing $f$.
        This is done by adding new gates computing the new input $\vec{x} + y \vec{\alpha}$, which can be implemented by an arithmetic network of size $\poly(n,s)$ and degree $O(1)$.

        Because the irreducible factors of $f$ and $\hat{f}$ are in one-to-one correspondence, in order to factor $f$, it suffices to factor $\hat{f}$.
        After factoring $\hat{f}$, we can recover the factors of $f$ by applying the change of variables $(\vec{x}, y) \mapsto (\vec{x} - y \vec{\alpha}, y)$, which can be performed by an arithmetic network of size $\poly(n,s)$ and degree $O(1)$.

        This step of the algorithm errs if the point $\vec{\alpha}$ is chosen so that $f_d(\vec{\alpha}) = 0$.
        Because $f_d(\vec{x})$ is a polynomial of degree $d$, if we choose the coordinates of $\vec{\alpha}$ at random from a subset $S \subseteq \F_q$, then \Cref{lem:sz} implies that $f_d(\vec{\alpha}) = 0$ with probability at most $d / |S|$.
        This implies that the error degree of this step is bounded by $d$.

        In the following steps, we write $f(\vec{x}, y)$ for $\hat{f}(\vec{x},y)$ to avoid cluttered notation.
    \item
        Finally, we need to ensure that $f(\vec{x}, y)$ is squarefree.
        It is a standard fact that the squarefree part of $f$ can be written as
        \[
            \frac{f(\vec{x}, y)}{\gcd\del{f, \frac{\partial f}{\partial y}}},
        \]
        assuming that $\frac{\partial f}{\partial y} \neq 0$.
        Because $f$ is nonzero, not a $p$\ts{th} power, and depends on the variable $y$, it follows that $\frac{\partial f}{\partial y} \neq 0$.
        The description of a circuit for $\frac{\partial f}{\partial y}$ can be computed from the description of a circuit for $f$ by using \Cref{lem:network algo partial derivative}.
        A description of a circuit computing $\gcd(f, \frac{\partial f}{\partial y})$ can then be computed using \Cref{lem:network algo gcd}.

        The applications of \Cref{lem:network algo partial derivative} and \Cref{lem:network algo gcd} can be implemented by arithmetic networks of size $\poly(n,s,d)$ and degree $\poly(d)$.
        The corresponding arithmetic networks use no randomness, so they have error degree $0$.
\end{enumerate}

In total, the preparatory steps above can be implemented by an arithmetic network of size $\poly(n,s,d)$.
Using \Cref{prop:error-deg-comp}, we can bound the degree and error degree of the resulting network by $\poly(d)$ and $O(d)$, respectively.
The description of the preparatory steps is a uniform algorithm, so the resulting arithmetic network can indeed be printed by a polynomial-time Turing machine.

\paragraph{Reduction to Bivariate Factoring}

We are now in the situation where the polynomial $f(\vec{x}, y)$ to be factored is not a $p$\ts{th} power, is monic in $y$, and is squarefree.
We now reduce the task of factoring the multivariate polynomial $f(\vec{x}, y)$ in $\F_q[\vec{x},y]$ to factoring a bivariate polynomial, albeit over the larger field $\F_q(x_1,\ldots,x_n)$.

The discriminant will help us find a good reduction to bivariate factorization.
We know that the polynomial $f(\vec{x}, y)$ to be factored is squarefree, and we would like to find a substitution $\vec{x} \mapsto \vec{\alpha}$ so that the resulting polynomial $f(\vec{\alpha}, y)$ remains squarefree.
Because $f(\vec{x}, y)$ is squarefree and is nonconstant as a polynomial in $y$, viewing $f$ as an element of $\F_q(\vec{x})[y]$, \Cref{lem:discriminant squarefree} implies that the discriminant of $f$ is a nonzero polynomial in $\F_q[\vec{x}]$.
To guarantee that $f(\vec{\alpha}, y)$ is squarefree, it suffices to find a point $\vec{\alpha}$ where the discriminant of $f$ evaluates to a nonzero value, as this implies (again, by \Cref{lem:discriminant squarefree}) that the polynomial $f(\vec{\alpha}, y)$ is squarefree.

Let $\Delta(\vec{x}) \coloneqq \disc_y(f(\vec{x}, y))$ be the discriminant of $f$ with respect to $y$.
We know that $\Delta(\vec{x})$ is a nonzero polynomial and that $\Delta(\vec{x})$ has degree bounded by $2d^2$.
If we sample a point $\vec{\alpha} \in \F_q^n$ whose coordinates are chosen uniformly at random from a set $S \subseteq \F_q$, then \Cref{lem:sz} the probability that $\Delta(\vec{\alpha}) = 0$ is at most $2d^2 / |S|$, so we can find such a point $\vec{\alpha}$ with error degree $2d^2$.
Applying the change of variables $\vec{x} \mapsto \vec{x} - \vec{\alpha}$, we may assume that $f$ has nonzero discriminant at the origin, which will be convenient for us later.

Consider the polynomial
\[
	\overline{f}(\vec{x}, y, z) \coloneqq f(x_1 z, \ldots, x_n z, y) \in \F(x_1,\ldots,x_n)[y,z].
\]
One can show that the irreducible factors of $\overline{f}$ are in one-to-one correspondence with those of $f$ (see \cite[Section 3.4]{KSS15} for details), so to factor $f(\vec{x}, y)$, it suffices to factor $\overline{f}(\vec{x}, y, z)$ as a polynomial in the ring $\F(\vec{x})[y, z]$.
Given a factor $\overline{g}(\vec{x}, y, z)$ of $\overline{f}$, the corresponding factor of $f(\vec{x}, y)$ is simply $\overline{g}(\vec{x}, y, 1)$.
We can compute a description of $\overline{f}$ from a description of $f$ using a network of size $\poly(n,s,d)$ and degree $O(1)$.

In total, the reduction to bivariate factoring can be implemented by an arithmetic network of size $\poly(n,s,d)$, degree $O(1)$, and error degree $2 d^2$.
It is clear that this network can be printed by a polynomial-time Turing machine.

\paragraph{Univariate factorization}

The first step in Kaltofen's algorithm is to factor the univariate polynomial $\overline{f}(\vec{0}, y, 0)$ in $\F_q[y]$.
Because we know that $f(\vec{x}, y)$ has nonzero discriminant at the origin and $f(\vec{x}, y)$ is monic in $y$, it follows that the discriminant of $f(\vec{0}, y) = \overline{f}(\vec{0}, y, 0)$ is nonzero, which implies that $\overline{f}(\vec{0}, y, 0)$ is squarefree.
This allows us to factor $\overline{f}(\vec{0}, y, 0)$ in $\F_q[y]$ by applying the network of \Cref{thm:network algo univariate factorization}, which has size $\poly(d, \log q)$, degree $d^{O(\log(q))}$, and error degree $1$.
If we are also given a representation of $\F_{q(n)}$ as an extension of $\F_{p(n)}$, then \Cref{thm:network algo univariate factorization} instead provides us with a network of size $\poly(d, p)$ and degree $\poly(d, p)$.

This is the only part of the arithmetic network whose size depends on $\log(q)$ or $p$.
Because a univariate polynomial of degree $d$ can be computed by an arithmetic circuit of size $O(d)$, the factors produced in this step can be represented by circuits of size $O(d)$.
The remaining part of the network that implements \Cref{prop:kaltofen arith network} has size $\poly(n,s,d)$, so it follows that the circuits $C_1, \ldots, C_t$ produced for the factors of $f$ likewise have size bounded by $\poly(n,s,d)$ as claimed.

\paragraph{Hensel lifting}

We now perform Hensel lifting to lift the factorization of $\overline{f}(\vec{0}, y, 0)$ to an approximate factorization of $\overline{f}(\vec{x}, y, z)$ in $\F_q(\vec{x})[y,z]$.
Let $g_0(y) \in \F_q[y]$ be an irreducible factor of $\overline{f}(\vec{0}, y, 0)$, and let $h_0(y) \in \F_q[y]$ be a polynomial such that
\[
	\overline{f}(\vec{0}, y, 0) = g_0(y) \cdot h_0(y).
\]
We know that $\overline{f}(\vec{0}, y, 0)$ is squarefree, so $\gcd(g_0(y), h_0(y)) = 1$.
Because $g_0$ and $h_0$ are coprime, there are polynomials $a_0, b_0 \in \F_q[y]$ of small degree such that $a_0 g_0 + b_0 h_0 = 1$.
Moreover, from the coefficients of $g_0$ and $h_0$, we can compute the coefficients of $a_0$ and $b_0$ using the network of \Cref{lem:network algo bezout coefficients}, which has size and degree $\poly(d)$.

We now use Hensel lifting to lift the factorization $\overline{f}(\vec{0}, y, 0) = g_0(y) \cdot h_0(y)$ to an approximate factorization of $\overline{f}(\vec{x}, y, z)$.
We apply the network of \Cref{lem:network algo hensel lifting} with $k = 2 \log_2(d) + 2$, providing as input the description of the circuit that computes $\overline{f}(\vec{x}, y, z)$ as well as the coefficients of $g_0(y)$, $h_0(y)$, $a_0(y)$, and $b_0(y)$.
This network outputs the description of an arithmetic circuit $D$ of size $\poly(n,s,d)$ and degree $\poly(d)$.
The outputs of $D$ are the coefficients of two polynomials $g_k(y, z), h_k(y, z) \in \F[\vec{x}][y,z]$ such that
\begin{align*}
    \overline{f}(\vec{x}, y, z) &= g_k(y, z) \cdot h_k(y, z) \bmod \langle z^{4 d^2} \rangle \\
    g_k(y, 0) &= g_0(y),
\end{align*}
and $g_k(y,z)$ is monic with respect to $y$.
The network that computes $D$ has size $\poly(n,s,d)$ and degree $O(1)$.

\paragraph{Factor reconstruction}

From Hensel lifting, we have computed an approximate factorization
\[
    \overline{f}(\vec{x}, y, z) = g_k(y, z) \cdot h_k(y, z) \bmod \langle z^{4d^2} \rangle.
\]
Our goal is to use this to recover a genuine factor of $\overline{f}(\vec{x}, y, z)$.
At this point, our presentation of Kaltofen's algorithm deviates slightly from prior work.
Normally, we would use this approximate factorization to set up a system of linear equations where any nontrivial solution of the linear system will provide us with a nontrivial factor $\overline{g}(\vec{x}, y, z)$ of $\overline{f}(\vec{x}, y, z)$.
We would then recursively continue factoring $\overline{g}$ and $\overline{f}/\overline{g}$.
In the worst case, this involves repeating the entire factoring algorithm $d-1$ times in sequence, which (using \Cref{prop:error-deg-comp}) would result in an arithmetic network of degree bounded by $d^{O(d \log(d) + d \log(q))}$.
By slightly modifying this reconstruction step, we can avoid re-running the algorithm $d-1$ times in sequence, saving a factor of $d$ in the exponent of the resulting degree bound.

Because $g_k$ has individual degree at most $\max(d, 2^k) = 4 d^2$, we can write $g_k(y, z)$ as
\[
	g_k(y, z) = \sum_{i, j = 1}^{4d^2} g_{i,j}(\vec{x}) y^i z^j.
\]
Consider the system of linear equations over $\F_q(\vec{x})$ in the variables $a_{i,j}$ and $b_{i,j}$ given by
\[
	\sum_{i < d, j \le d} a_{i,j} y^i z^j = g_k(y,z) \sum_{i \le 4d^2, j \le 4d^2} b_{i,j} y^i z^j \bmod \langle z^{4 d^2} \rangle.
\]
It is a standard fact (see, e.g., \cite[Claims 3.8 and 3.10]{KSS15}) that this is system has a nontrivial solution if and only if $\overline{f}(\vec{x}, y, z)$ is reducible, and that if $(a(y,z), b(y,z))$ is a nontrivial solution, then $a(y,z)$ and $\overline{f}(\vec{x}, y, z)$ have nontrivial GCD in $\F_q(\vec{x})[y,z]$.
We will need more control over the factor that we obtain from this GCD---in particular, we want to guarantee that we obtain an irreducible factor of $\overline{f}(\vec{x}, y, z)$.

Suppose $\overline{f}(\vec{x}, y, z)$ is reducible $\F_q(\vec{x})[y,z]$.
Let $r(\vec{x}, y, z)$ be an irreducible factor of $\overline{f}(\vec{x}, y, z)$ such that $g_0(y)$ divides $r(\vec{0}, y, 0)$.
Because $\overline{f}(\vec{0}, y, 0)$ is squarefree, the polynomial $g_0(y)$ is a factor of $\overline{f}(\vec{0}, y, 0)$ of multiplicity 1, so there is in fact a unique irreducible factor $r(\vec{x}, y, z)$ of $\overline{f}(\vec{x}, y, z)$ such that $g_0(y)$ divides $r(\vec{0}, y, 0)$.
We will argue that for every nontrivial solution $(a(y,z), b(y,z))$ of the above system of linear equations, we have that $r(\vec{x}, y, z)$ divides $a(y,z)$.

We first show that there is a solution to the linear system above whose left-hand side is precisely the irreducible factor $r(\vec{x}, y, z)$.
Let $s_0(y)$ be a polynomial such that
\[
    r(\vec{0}, y, 0) = g_0(y) \cdot s_0(y).
\]
Although we do not have $r(\vec{0}, y, 0)$ nor the factorization above, for the sake of analysis, we will consider what happens when we apply Hensel lifting to the above factorization.
If we apply Hensel lifting for $k = 2 \log_2(d) + 2$ steps, we obtain an approximate factorization
\[
    r(\vec{x}, y, z) = \tilde{g}_k(y, z) \cdot s_k(y, z) \bmod \langle z^{4 d^2} \rangle.
\]
This, together with degree bounds on $\tilde{g}_k$ and $s_k$ implied by \Cref{lem:network algo hensel lifting}, implies that $(r, s_k)$ solves the linear system above if we replace $g_k$ with $\tilde{g}_k$, i.e., that $r = \tilde{g}_k s_k \bmod \langle z^{4 d^2} \rangle$.
As we will see, the pair $(r, s_k)$ also solves the original linear system of interest.
Writing $\overline{f}(\vec{x}, y, z) = r(\vec{x}, y, z) \cdot t(\vec{x}, y, z)$, this yields an approximate factorization of $\overline{f}(\vec{x}, y, z)$ as
\[
    \overline{f}(\vec{x}, y, z) = \tilde{g}_k(y, z) \cdot s_k(y, z) \cdot t(\vec{x}, y, z) \bmod \langle z^{4 d^2} \rangle.
\]
Because $\tilde{g}_k(y, z)$ is monic with respect to $y$ and satisfies $\tilde{g}_k(y, 0) = g_0(y)$, uniqueness of Hensel lifting (\cite[Lemma 3.4]{KSS15}) implies that
\[
    \tilde{g}_k(y, z) = g_k(y, z) \bmod \langle z^{4 d^2} \rangle.
\]
This lets us write
\[
    r(\vec{x}, y, z) = g_k(y, z) \cdot s_k(y, z) \bmod \langle z^{4 d^2} \rangle.
\]
Because $\overline{f}$ is monic with respect to $y$ and $r$ is a proper factor of $\overline{f}$, it must be the case that $r$ has individual degree less than $d$ with respect to $y$.
\Cref{lem:network algo hensel lifting} implies that $s_k(y,z)$ is a polynomial of degree at most $4d^2$ with respect to $y$ and $z$.
Thus $(r, s_k)$ is a solution to the linear system $a = g_k b \bmod \langle z^{4 d^2} \rangle$ as claimed.

Of course, there may be other solutions to this linear system whose left-hand side is not the polynomial $r(\vec{x}, y, z)$.
Let $(a(y,z), b(y,z))$ be one such solution.
We will show that $r(\vec{x}, y, z)$ divides $a(y, z)$.
Let 
\[
    \rho(\vec{x}, z) \coloneqq \res_y(r(\vec{x}, y, z), a(y, z))
\]
be the resultant of $r(\vec{x}, y, z)$ and $a(y, z)$ with respect to $y$.
Importantly, the resultant $\rho(\vec{x}, z)$ does not depend on the variable $y$.
From \Cref{lem:resultant linear combination}, there are polynomials $u, v \in \F_q(\vec{x})[y, z]$ such that
\[
    u(\vec{x}, y, z) r(\vec{x}, y, z) + v(\vec{x}, y, z) a(y, z) = \rho(\vec{x}, z).
\]
This lets us write
\begin{align*}
    \rho(\vec{x}, z) &= u(\vec{x}, y, z) r(\vec{x}, y, z) + v(\vec{x}, y, z) a(y, z) \bmod \langle z^{4 d^2} \rangle \\
    &= u(\vec{x}, y, z) g_k(y, z) s_k(y, z) + v(\vec{x}, y, z) g_k(y, z) b(y, z) \bmod \langle z^{4 d^2} \rangle \\
    &= g_k(y, z) \del{u(\vec{x}, y, z) s_k(y, z) + v(\vec{x}, y, z) b(y, z)} \bmod \langle z^{4 d^2} \rangle.
\end{align*}
Recall that the polynomial $g_k(y, z)$ is monic with respect to $y$.
This implies that any multiple of $g_k(y, z)$ that is nonzero modulo $\langle z^{4 d^2} \rangle$ must depend on the variable $y$.
The left-hand side $\rho(\vec{x}, z)$ above does not depend on $y$, so it must be the case that 
\[
    \rho(\vec{x}, z) = 0 \bmod \langle z^{4d^2} \rangle.
\]
The resultant $\rho(\vec{x}, z)$ is the determinant of a matrix of size $2d \times 2d$.
Each entry of this matrix is a polynomial of degree at most $d$, so we have $\deg(\rho(\vec{x}, z)) \le 2 d^2 < 4d^2$.
Because $\rho(\vec{x}, z)$ has degree less than $4 d^2$, the fact that $\rho(\vec{x}, z) = 0 \bmod \langle z^{4 d^2} \rangle$ implies that $\rho(\vec{x}, z) = 0$ identically.
This means that $r(\vec{x}, y, z)$ and $a(y, z)$ have a nontrivial common factor over $\F_q(\vec{x}, z)[y]$.
Because $r(\vec{x}, y, z)$ is an irreducible factor of $\overline{f}(\vec{x}, y, z)$ and $\overline{f}$ is monic in $y$, it must be the case that $r$ is also monic in $y$.
Gauss's Lemma (\Cref{lem:gauss}) then implies that $r$ is also irreducible in $\F_q(\vec{x}, z)[y]$.
Since $r$ is irreducible in $\F_q(\vec{x}, z)[y]$ and $r$ and $a$ have a nontrivial common factor in $\F_q(\vec{x}, z)[y]$, it must be the case that $r$ divides $a$, as claimed.

We now describe how to find a nontrivial factor of $\overline{f}(\vec{x}, y, z)$.
For every choice of $d_y, d_z \in [d]$, we solve the linear system
\[
	\sum_{i < d_y, j \le d_z} a_{i,j} y^i z^j = g_k(y,z) \sum_{i \le 4d^2, j \le 4d^2} b_{i,j} y^i z^j \bmod \langle z^{4 d^2} \rangle
\]
using the arithmetic network of \Cref{lem:network algo linear systems}, simulating the $\pit$ gates by instead evaluating an input to a $\pit$ gate on a randomly-chosen point.
The network of \Cref{lem:network algo linear systems} has size $\poly(n, d, s)$ and degree $O(1)$, and we invoke this network $O(d^2)$ times in parallel, so we can solve all of these systems with a network of size $\poly(n, d, s)$ and degree $O(1)$.
In our setting, each $\pit$ gate in the network of \Cref{lem:network algo linear systems} tests a polynomial of degree at most $\poly(d)$, and we perform at most $\poly(d)$ such tests, so the error degree of the resulting network is bounded by $\poly(d)$.

If there is no nontrivial solution to any of these linear systems, then the polynomial $\overline{f}(\vec{x}, y, z)$ is irreducible (\cite[Claim 3.8]{KSS15}).
In this case, we report that $f(\vec{x})$ is irreducible and the algorithm terminates.
On the other hand, if at least one of these systems has a nontrivial solution, let $d_y, d_z \in [d]$ be chosen such that the corresponding linear system has a solution and the sum $d_y + d_z$ is minimized, and let $(a(y,z), b(y,z))$ be a nontrivial solution to the corresponding linear system.
(In other words, we find a solution $a(y,z)$ of minimal degree.)
Since $r(\vec{x}, y, z)$ is one such solution of degree at most $D \coloneqq \deg(r)$, we know that we have a solution $a(y,z)$ of degree at most $D$.
By the analysis above, we know that $r$ divides $a$, so it must be the case that $r(\vec{x}, y, z) = a(y,z)$, i.e., that $a(y,z)$ is an irreducible factor of $\overline{f}(\vec{x}, y, z)$.
We report that $a(y,z)$ is an irreducible factor of $\overline{f}(\vec{x}, y, z)$, which in turn yields a nontrivial factor of the original polynomial $f(\vec{x})$.
This step can be implemented by an arithmetic network of size $\poly(n,s,d)$, degree $\poly(n,d)$, and error degree $\poly(d)$.

\paragraph{Complete factorization}

To obtain a complete factorization of $\overline{f}(\vec{x}, y, z)$ into irreducibles, we run the Hensel lifting and factor reconstruction steps in parallel on all the irreducible factors of $\overline{f}(\vec{0}, y, 0)$.
This produces a collection of arithmetic circuits that compute the irreducible factors of $\overline{f}(\vec{x}, y, z)$.
There may be multiple circuits that compute the same irreducible factor of $\overline{f}(\vec{x}, y, z)$ in this list.
To prune these duplicates, we evaluate all circuits on a randomly-chosen point, declaring two circuits to be duplicates if they have the same evaluation at this point.
For each unique irreducible factor, we output the lexicographically-first circuit that computes it.
These evaluations can be carried out by the network of \Cref{prop:universal-arith-network}, which has size $\poly(n,s,d)$ and degree $\poly(n, d)$.
The polynomials we are testing have degree at most $d$, so by \Cref{lem:sz}, the error degree of this procedure is bounded by $d$.
In total, this step results in a network of size $\poly(n,s,d)$ and, by \Cref{prop:concat-arith-networks}, degree $\poly(n,d)$ and error degree $\poly(d)$.

\paragraph{Computing factor multiplicities}

We have now computed descriptions of a collection of arithmetic circuits $C_1, \ldots, C_t$, where each $C_i$ computes a polynomial $g_i$ that is either irreducible or a $p^{\beta_i}$-th power of an irreducible polynomial.
The last step in the proof of \Cref{prop:kaltofen arith network} is to determine the multiplicities of each $g_i(\vec{x})$ as a factor of $f(\vec{x})$.
For each $i \in [t]$ and $j \in [d]$, we test if $g_i(\vec{x})^j$ divides $f(\vec{x})$ using the network of \Cref{lem:network algo divisibility}.
We set the multiplicity of $g_i$ as a factor of $f$ to be the largest $j$ for which $g_i(\vec{x})^j$ divides $f(\vec{x})$.
The network of \Cref{lem:network algo divisibility} is invoked $O(d^2)$ times, each of which costs size $\poly(n,s,d)$ and degree $O(1)$, so the total cost of this step is likewise $\poly(n,s,d)$ size.
Because we do not have access to $\pit$ gates, we have to simulate the $\pit$ gates in the network of \Cref{lem:network algo divisibility} by evaluating circuits on random points using \Cref{prop:universal-arith-network}.
Every polynomial that is tested in the network of \Cref{lem:network algo divisibility} has degree bounded by $2d$, so this simulation can be implemented in size $\poly(n,s,d)$, degree $\poly(n,d)$, and error degree $O(d)$.

\medskip
\noindent
This completes the proof of \Cref{prop:kaltofen arith network}.

\subsubsection{An application to root-finding}

As a direct application of \Cref{prop:kaltofen arith network}, we can solve the root-finding problem for arithmetic circuits over $\F_q$ efficiently using a randomized arithmetic network.
The root-finding problem is the following: given a polynomial $f(\vec{x}, y)$, find all polynomials $g(\vec{x})$ such that $f(\vec{x}, g(\vec{x})) = 0$.
This is directly related to factorization, as \Cref{lem:gauss} implies that such roots are precisely the factors of $f(\vec{x}, y)$ of the form $y - g(\vec{x})$.
Since \Cref{prop:kaltofen arith network} allows us to factor an arithmetic circuit, we have a solution for the root-finding problem as an immediate corollary.
We state this as a separate corollary, as this is result we ultimately use in our reconstruction algorithm in \Cref{sec:ki-reconstruction}.

\begin{cor}\label{cor:kaltofen arith network}
    Let $\F = \set{\F_{q(n)}}_{n \in \N}$ be a feasible family of finite fields, where the $n$\ts{th} field has order $q(n)$.
    There is a deterministic, polynomial-time Turing machine that receives as input $(1^n, 1^d, 1^s)$ and outputs a randomized arithmetic network satisfying the following properties.
    \begin{enumerate}
        \item
            The network receives as input the description of an arithmetic circuit $C$ of size $s$ over $\F_{q(n)}$ that computes a polynomial $f(\vec{x}, y) \in \F_q[x_1, \ldots, x_n]$ of degree at most $d$.
        \item
            The network outputs the descriptions of arithmetic circuits $C_1, \ldots, C_t$ for some $t \le d$, together with natural numbers $\beta_1, \ldots, \beta_t \in \N$ encoded in binary.
            For each $i \in [t]$, let $g_i \in \F_{q(n)}[x_1, \ldots, x_n]$ be the polynomial computed by the circuit $C_i$.
            The output of the network satisfies the following properties.
            \begin{enumerate}
                \item
                    Each circuit $C_i$ is of size $\poly(n,s,d)$.
                \item
                    For each $i \in [t]$, the polynomial $g_i$ has degree at most $d$.
                \item
                    For every polynomial $g(\vec{x})$ such that $f(\vec{x}, g(\vec{x})) = 0$, there is an index $i^\star \in [t]$ such that $g_{i^\star}(\vec{x}) = g(\vec{x})^{p^{\beta_{i^\star}}}$.
            \end{enumerate}
        \item
            The network has size $\poly(n,s,d,\log(q))$, degree $n^{O(1)} d^{O(\log(q))}$, and error degree $\poly(d)$.
            If we are additionally provided with a representation of $\F_{q(n)}$ as an extension of $\F_{p(n)}$, where $p(n)$ is the characteristic of $\F_{q(n)}$, then the size of the network becomes $\poly(n,s,d,p)$ and the degree becomes $\poly(n,d,p)$.
    \end{enumerate}
\end{cor}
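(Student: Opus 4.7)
The strategy is to reduce root-finding to arithmetic circuit factorization via Gauss's Lemma, then filter the output of Kaltofen's algorithm. By \Cref{cor:gauss}, every polynomial $g(\vec{x}) \in \F_{q(n)}[\vec{x}]$ satisfying $f(\vec{x}, g(\vec{x})) = 0$ corresponds to a unique irreducible factor $y - g(\vec{x})$ of $f$ in $\F_{q(n)}[\vec{x}, y]$. Consequently, the factor produced by \Cref{prop:kaltofen arith network} that ``absorbs'' this root will be, up to a unit in $\F_{q(n)}^*$, either $y - g(\vec{x})$ itself (when its multiplicity in $f$ is not divisible by $p$) or a $p^\beta$-th Frobenius power, which in characteristic $p$ simplifies to $y^{p^\beta} - g(\vec{x})^{p^\beta}$. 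Our job is to recognize these special factors and extract $g(\vec{x})^{p^\beta}$ from each one.

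First I would invoke \Cref{prop:kaltofen arith network} on the input circuit $C$, obtaining descriptions of circuits $\tilde{C}_1, \ldots, \tilde{C}_{t'}$ (with $t' \le d$) together with integers $\tilde{\beta}_1, \ldots, \tilde{\beta}_{t'}$, each $\tilde{C}_i$ computing a polynomial $\tilde{g}_i(\vec{x}, y)$ which is either irreducible or a $p^{\tilde{\beta}_i}$-th power of an irreducible. For each $i$, I would apply \Cref{lem:network algo interpolation} with respect to $y$ to compute descriptions of circuits for the coefficient polynomials $a_{i,j}(\vec{x})$ appearing in the expansion $\tilde{g}_i(\vec{x}, y) = \sum_{j=0}^{\deg_y(\tilde{g}_i)} a_{i,j}(\vec{x}) y^j$. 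A factor $\tilde{g}_i$ corresponds to a root precisely when (i) $\deg_y(\tilde{g}_i) = p^{\tilde{\beta}_i}$, (ii) the leading coefficient $a_{i, p^{\tilde{\beta}_i}}(\vec{x})$ is a nonzero element of $\F_{q(n)}$ (a constant polynomial), and (iii) the coefficient $a_{i,j}(\vec{x})$ is identically zero for every $0 < j < p^{\tilde{\beta}_i}$.

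Each of the identity tests in (ii) and (iii) can be implemented in a randomized arithmetic network by evaluating the appropriate coefficient circuit at a uniformly random point of $\F_{q(n)}^n$ via \Cref{prop:universal-arith-network} and feeding the result into test gates (for (ii), one additionally tests that the coefficient takes the same value at two independent random points, certifying constancy). Since each coefficient has degree at most $d$, an application of \Cref{lem:sz} shows that each test errs with error degree at most $d$. For every factor passing the tests, I would compute the constant $c_i \coloneqq a_{i, p^{\tilde{\beta}_i}}(\vec{0}) \in \F_{q(n)}^*$ by one evaluation and then output a circuit computing $G_i(\vec{x}) \coloneqq -a_{i,0}(\vec{x}) / c_i$, paired with $\beta_i \coloneqq \tilde{\beta}_i$. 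Completeness holds since every root $g(\vec{x})$ produces some $\tilde{g}_i$ of the required shape, and at that index we recover $G_i(\vec{x}) = g(\vec{x})^{p^{\beta_i}}$ as needed.

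For complexity, the postprocessing contributes only $\poly(n, s, d)$ size, $\poly(n, d)$ degree, and $\poly(d)$ error degree: there are at most $t' \le d$ factors, for each of which we perform $O(d)$ coefficient tests via \Cref{lem:network algo interpolation} and \Cref{prop:universal-arith-network}. Combined with the bounds of \Cref{prop:kaltofen arith network} using \Cref{prop:concat-arith-networks} and \Cref{prop:error-deg-comp}, we obtain size $\poly(n,s,d,\log q)$, degree $n^{O(1)} d^{O(\log q)}$, and error degree $\poly(d)$, as claimed (and the improved bounds when $\F_{q(n)}$ is presented as an extension of $\F_{p(n)}$ follow the same way). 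The main subtlety---more a bookkeeping matter than a genuine obstacle---is the normalization of Kaltofen's output factors: irreducibles in $\F_{q(n)}[\vec{x}, y]$ are only determined up to a unit in $\F_{q(n)}^*$, and the preparatory linear substitution used inside Kaltofen's algorithm may not preserve monicity of the root factors after being undone. Testing that the leading coefficient in $y$ is a \emph{nonzero constant} (rather than insisting on the value $1$) and dividing through by it cleanly addresses both issues.
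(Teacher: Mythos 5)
Your proposal follows the same high-level route as the paper's proof: apply \Cref{cor:gauss} to identify each root $g(\vec{x})$ with a linear irreducible factor $y - g(\vec{x})$ of $f$, then run \Cref{prop:kaltofen arith network} and post-process the returned factor circuits. The difference is in that post-processing. The paper is considerably terser: it simply subtracts $y^{p^{\beta_i}}$ from each returned circuit and negates the result, outputting all modified circuits without any filtering---since the corollary only demands the \emph{existence} of a correct index $i^\star$, the ``garbage'' from modifying the wrong factors is acceptable. Your proposal instead interpolates the $y$-coefficients, tests that the factor has the shape $c y^{p^{\tilde\beta_i}} + (\text{constant term in }\vec{x})$ with $c \in \F_{q(n)}^*$ constant, and then normalizes by $c$. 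You correctly flag a real subtlety that the paper quietly elides: \Cref{prop:kaltofen arith network} only promises a circuit computing \emph{some} scalar multiple of $(y - g(\vec{x}))^{p^\beta}$ (irreducible factors are determined only up to a unit, and nothing in the statement or proof of that proposition normalizes them), while the paper's ``subtract $y^{p^\beta}$ and negate'' trick tacitly assumes that scalar is $1$. If Kaltofen returns $c(y-g)^{p^\beta}$ with $c \neq 1$, the paper's modification yields $(1-c)y^{p^\beta} + c\,g^{p^\beta}$, which neither equals $g^{p^\beta}$ nor is even independent of $y$. Your explicit division by the leading coefficient handles this cleanly, at the modest cost of the extra $\poly(d)$ error degree from the randomized identity tests, which the stated bounds absorb.
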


\bproof
    By \Cref{cor:gauss}, if $C(\vec{z}, g(\vec{z})) = 0$, then the polynomial $y - g(\vec{z})$ is an irreducible factor of $C(\vec{z}, y)$.
    Using the arithmetic network of \Cref{prop:kaltofen arith network}, we can compute the standard descriptions of a set of circuits $C_1, \ldots, C_t$ and natural numbers $e_1, \ldots, e_t \in \N$ and $\beta_1, \ldots, \beta_t \in \N$ such that the circuit $C_i$ computes a $p^{\beta_i}$-th power of an irreducible factor of $C(\vec{z}, y)$.
    In particular, one of the circuits $C_i$ must compute $(y - g(\vec{z}))^{p^{\beta_i}} = y^{p^{\beta_i}} - g(\vec{z})^{p^{\beta_i}}$.
    Subtracting $y^{p^{\beta_i}}$ and negating the output yields a circuit that computes $g(\vec{z})^{p^{\beta_i}}$.
    Since we do not know which circuit $C_i$ computes the desired irreducible factor, we simultaneously modify all circuits output by Kaltofen's algorithm and output these modified circuits.
\eproof

\subsection{Uniformity-preserving depth reduction} \label{subsec:uniform depth reduction}

In this final section, we show that the depth reduction of \textcite{VSBR83} can be carried out in a uniformity-preserving way.
Specifically, we show that if $\set{C_n}_{n \in \N}$ is a $\log^c$-uniform family of arithmetic circuits that compute polynomials of low degree, then there is a family of \emph{low-depth} arithmetic circuits $\set{C'_n}_{n \in \N}$ that computes the same family of polynomials and is $\log^{c'}$-uniform for some constant $c' = c + O(1)$.
To do this, we first need a lemma that says circuits can be homogenized in a uniformity-preserving manner.

\begin{lem}[uniformity-preserving homogenization] \label{lem:uniform homogenization}
    Let $\set{C_n}$ be a $\log^c$-uniform arithmetic circuit family of size $T(n)$ that computes a family of polynomials $\set{f_n}$.
    For any polynomial-time computable $d : \N \to \N$, there is a $\log^{c+O(1)}$-uniform family of homogeneous multi-output arithmetic circuits $\set{C'_n}$ of size $T'(n) = O(T(n) d(n)^2)$ such that $C'_n$ computes the homogeneous components of the polynomial $f_n$ up to degree $d(n)$.
\end{lem}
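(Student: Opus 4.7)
\bproof[Proof Plan]
The plan is to adapt the standard gate-simulation argument already used in~\Cref{lem:network algo homogeneous components} to the $\log^c$-uniform setting, keeping careful track of how the circuit-structure and circuit-constants functions of the homogenized circuit are expressible in terms of those of $C_n$. Concretely, for each gate $v$ of $C_n$ I introduce $d(n)+1$ ``bucket'' gates $(v,0),(v,1),\ldots,(v,d)$ intended to compute the degree-$j$ homogeneous component of the polynomial at $v$. Input gates labeled by constants put that constant into $(v,0)$ and $0$ elsewhere; input gates labeled by a variable put that variable into $(v,1)$ and $0$ elsewhere; addition gates $v=u+w$ become $(v,j)=(u,j)+(w,j)$; and product gates $v=u\cdot w$ become the subcircuit $(v,j)=\sum_{k=0}^{j}(u,k)(w,j-k)$, implemented as a balanced tree of $O(d)$ sums and products of depth $O(\log d)$. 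Correctness and the size bound $T'(n)=O(T(n)\cdot d(n)^2)$ are standard.

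The substantive task is to check $\log^{c+O(1)}$-uniformity of the resulting family $\set{C'_n}$. I index gates of $C'_n$ by tuples $(v,j,\ell,s)$, where $v$ is an $O(\log T)$-bit name of a gate of $C_n$, $j\in\set{0,\ldots,d}$ is the degree bucket, and $(\ell,s)$ with $\ell=O(\log d)$ addresses a node in the product-gate subcircuit (with a distinguished value for ``top''). Thus each gate of $C'_n$ has a name of length $O(\log T+\log d)=O(\log T')$. To compute the new circuit-structure function $\Phi'_n$ on input $(i,(v_1,j_1,\ell_1,s_1),(v_2,j_2,\ell_2,s_2),(v_3,j_3,\ell_3,s_3))$, I first query the original $\Phi_n$ once on $(v_1,v_2,v_3)$ (together with the corresponding layer of $C_n$, decoded from $i$) to obtain the type and adjacency of the underlying gates in $C_n$, and then perform a constant number of equality tests and simple arithmetic on the $O(\log d)$-bit indices $(j,\ell,s)$ to check that the pattern of connections inside the bucketing-and-convolution gadget is correct. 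Because $\Phi_n$ is computable by $\P$-uniform Boolean formulas of size $(\log T)^c$, and the index arithmetic (addition, comparison, and parity of $O(\log T')$-bit numbers) can be done by $\P$-uniform formulas of size $\polylog(\log T')$, the whole of $\Phi'_n$ is a $\P$-uniform formula of size $(\log T')^{c+O(1)}$.

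For the circuit-constants function $\Psi'_n$, each input-layer gate of $C'_n$ is either (i) a variable input coming from $C_n$, (ii) a constant that is identical to a constant appearing in the input layer of $C_n$ (placed in bucket $0$), or (iii) a hard-wired $0$. Thus on input $(\vec{\Lambda},w')$ with $w'=(v,j,\ldots)$, the function $\Psi'_n$ either outputs $0$, outputs a coordinate of $\vec{\Lambda}$, or outputs $\Psi_n(\vec{\Lambda},v)$, with the choice dictated by a $\polylog(\log T')$-size Boolean formula on $w'$. Composing this routing with the $\P$-uniform arithmetic circuit for $\Psi_n$ of size $n\cdot(\log T)^c$ yields a $\P$-uniform arithmetic circuit for $\Psi'_n$ of size $n\cdot(\log T')^{c+O(1)}$ and the same degree bound, using the observation that multiplexing among a constant number of arithmetic branches based on a Boolean selector can be done with $O(1)$ additional arithmetic gates per coordinate.

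The main obstacle is simply bookkeeping: ensuring that the gadget for $(v,j)=\sum_{k=0}^{j}(u,k)(w,j-k)$ is organized uniformly enough that, given two gate-names in $C'_n$, adjacency inside the gadget can be decided by a $\polylog(\log T')$-size formula without having to ``look up'' the multiplication-structure of $C_n$ more than once. This is handled by the balanced-tree implementation above, where adjacency inside the gadget depends only on the index tuple $(j,\ell,s)$ via explicit arithmetic (halving $s$, decrementing $\ell$, and splitting $j$ into $k$ and $j-k$ at leaves), and the only dependence on $C_n$ is a single evaluation of $\Phi_n$ to check that the underlying triple $(v,u,w)$ is wired correctly in $C_n$ and that the original gate $v$ is indeed a product gate (which can be read off from $\Phi_n$'s output for the relevant layer). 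All remaining uniformity claims about the Turing machine printing $C'_n$ follow by composing the Turing machine printing $C_n$ with the simple gate-simulation described above.
\eproof
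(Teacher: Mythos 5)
Your proposal is correct and follows essentially the same route as the paper's proof: introduce $d(n)+1$ degree buckets per gate, implement the convolution at product gates as a balanced binary tree of depth $O(\log d)$, and verify uniformity by showing the new circuit-structure function can be decided from one query to the original $\Phi_n$ together with $\polylog$-size index arithmetic on the bucket/tree coordinates. You in fact go slightly further than the paper's write-up by also explicitly treating the circuit-constants function $\Psi'_n$, which the paper omits but which is required by the definition of $\log^c$-uniformity, so this is a welcome addition rather than a deviation.
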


\bproof
    We follow the standard gate-simulation argument that arithmetic circuits can be homogenized efficiently (as seen in the proof of \Cref{lem:network algo homogeneous components}), verifying that this argument can be carried out in a uniformity-preserving manner.

    Each gate $u$ in $C_n$ will correspond to a collection of $d+1$ gates $(u,0), \ldots, (u,d)$ in $C'_n$.
    The intention is that the gate $(u,a)$ in $C'_n$ computes the degree-$a$ homogeneous component of the polynomial computed by the gate $u$ in $C_n$.
    The wiring of the gates in $C'_n$ depends on the type of the gate $u$.
    \begin{itemize}
        \item 
            If $u$ is an input gate labeled by a field constant $\alpha \in \F$, then we set $(u, 0)$ to be an input gate labeled by $\alpha$.
            For $i \in \set{1,2,\ldots,d}$, we set $(u,i)$ to be an input gate labeled by zero.
        \item
            If $u$ is an input gate labeled by the variable $x_i$, we set $(u,1)$ to be an input gate labeled by $x_i$.
            For $i \in \set{0,2,3,\ldots,d}$, we set $(u,i)$ to be an input gate labeled by zero.
        \item
            If $u$ is an addition gate in $C_n$ with children $v$ and $w$, then for each $i \in \set{0,1,\ldots,d}$, we set $(u,i)$ to be an addition gate with children $(v,i)$ and $(w,i)$.
        \item
            If $u$ is a multiplication gate in $C_n$ with children $v$ and $w$, then for each $a \in \set{0,1,\ldots,d}$, we set
            \[
                (u,a) = \sum_{b = 0}^a (v,b) \times (w,a-b),
            \]
            adding extra gates as necessary to implement the sum of products above as an alternating circuit of fan-in two.
    \end{itemize}
    We set the output gates of $C'_n$ to be $(u,0), (u,1), \ldots, (u,d)$, where $u$ is the output gate of $C_n$ and $d$ is the degree of the polynomial computed by $C_n$.

    The following induction argument shows that the gate $(u,a)$ computes precisely the degree-$a$ homogeneous component of the polynomial computed by the gate $u$.
    \begin{itemize}
        \item 
            If $u$ is an input gate, then it immediately follows from the definition of the gate $(u,a)$ that $(u,a)$ computes the degree-$a$ homogeneous component of $u$.
        \item
            If $u$ an addition gate with children $v$ and $w$, then the degree-$a$ component of $u$ is the sum of the degree-$a$ components of $v$ and $w$.
            By induction, the gates $(v,a)$ and $(w,a)$ compute the degree-$a$ components of $v$ and $w$, respectively, so $(u,a)$ correctly computes the degree-$a$ component of $u$.
        \item
            If $u$ is instead a product gate with children $v$ and $w$, then the degree-$a$ component of $u$ is given by $\sum_{b=0}^a v_b w_{a-b}$, where $v_b$ is the degree-$b$ component of $v$ and likewise $w_{a-b}$ is the degree-$(a-b)$ component of $w$.
            By induction, the gates $(v,b)$ and $(w,a-b)$ correctly compute these polynomials, so $(u,a)$ correctly computes the degree-$a$ component of $u$.
    \end{itemize}
    
    Because each gate of the circuit $C'_n$ computes a homogeneous polynomial, the circuit $C'_n$ is homogeneous.
    To bound the size of $C'_n$, observe that $C'_n$ contains $d(n)+1$ copies of every gate in $C_n$, and that each copy $(u,a)$ of a gate from $C_n$ uses at most $O(d(n))$ additional gates in its implementation.
    This results in the claimed bound of $O(T(n) d(n)^2)$ on the number of gates of $C'_n$.

    It remains to bound the uniformity of the circuit family $\set{C'_n}$.
    To do this, we slightly modify the construction of $C'_n$ described above.
    Whenever $u$ is a multiplication gate in $C_n$ with children $v$ and $w$, we implement each convolution $(u,a) = \sum_{b=0}^a (v, b) \times (w, a-b)$ as a balanced full binary tree of depth $\log d$ by implementing a subcircuit that computes $(u,a)$ as
    \[
        (u, a) = \sum_{b=0}^d (v, b) \times (w, a-b),
    \]
    with the convention that $(w, a-b)$ is labeled by the constant zero whenever $a < b$.
    If $u$ is a product gate in $C_n$, we label the gates in the subcircuit computing $(u,a)$ as $(u, a, x)$, where $x \in \set{0,1}^{\le 2 \log d}$ is a bitstring of length at most $2 \log d$.
    The intended meaning is that $(u,a,x)$ is either the sum or product of $(u,a,x0)$ and $(u,a,x1)$, depending on the parity of the length of $x$.
    Building $C'_n$ in this manner allows us to more efficiently decide the gate adjacency relation.
    We can quickly decide if the gates in the $i$-th layer compute (1) $(u,a)$ where $u$ is an addition gate, (2) $(u,a)$ where $u$ is a product gate, or (3) a subgate used in the computation of $(u,a)$.

    We now describe a $\P$-uniform family of formulas $\set{\Phi'_n}$ that decides the gate adjacency relation in the new circuit family $C'_n$.
    If the original circuit $C_n$ has depth $\Delta(n)$, then $C'_n$ has depth $\Delta'(n) \coloneqq O(\Delta(n) \log d)$.
    Given $i \in [\Delta'(n)]$ and gate names $\hat{u}, \hat{v}, \hat{w} \in [T'(n)]$, we first inspect the parity of $i$ to determine the type of gates $\hat{u}, \hat{v}$, and $\hat{w}$ should be.
    If any of $\hat{u}$, $\hat{v}$, or $\hat{w}$ are not of the correct type (which we can inspect from the binary encoding of the gate names), then $\Phi'_n(i,\hat{u}, \hat{w}, \hat{v}) = 0$.
    Otherwise, we branch based on the type of gate $\hat{u}$.
    \begin{enumerate}
        \item
            Suppose $\hat{u} = (u,a)$, $\hat{v} = (v,b)$, and $\hat{w} = (w, c)$, where $u$ is an addition gate at layer $i'$ of the circuit $C_n$.
            In this case, we set $\Phi'_n(i,\hat{u},\hat{v},\hat{w}) = 1$ iff $a = b = c$ and $\Phi_n(i', u, v, w) = 1$.
        \item
            Suppose instead that $\hat{u} = (u,a,x)$ is a gate inside the multiplication tree used to compute $(u,a)$.
            If $x$ is not of maximal length, then we output 1 iff $\hat{v} = (u,a,x0)$ and $\hat{w} = (u,a,x1)$.
            Otherwise, if $x$ is of maximal length, we output 1 iff $\hat{v} = (v,b)$ and $\hat{w} = (w,a-b)$ for the correct values of $b$ and $a-b$ (determined by the bits of $x$) and if $\Phi(i', u, v, w) = 1$.
    \end{enumerate}
    It is clear that if the formulas $\set{\Phi_n}$ are $\P$-uniform, then the resulting formulas $\set{\Phi'_n}$ describing the circuit family $\set{C'_n}$ are also $\P$-uniform.
    Furthermore, if the formulas $\set{\Phi_n}$ have size $(\log T(n))^c$, then the formulas $\Phi'_n$ have size $(\log T(n))^c (\log T'(n))^{O(1)}$, so the formulas $\set{\Phi'_n}$ are $\log^{c + O(1)}$-uniform.
\eproof

With \Cref{lem:uniform homogenization}, we now proceed to show that the depth reduction of \textcite{VSBR83} preserves $\log^c$-uniformity.

\begin{proposition}[Uniformity-preserving depth reduction]\label{prop:uniform-depth-red}
    Let $\set{C_n}$ be a $\log^c$-uniform arithmetic circuit family of size $T(n)$ and degree $d(n)$ that computes a family of polynomials $\set{f_n}$. 
    Then, there is a $\log^{c+O(1)}$-uniform family of arithmetic circuits $\set{C'_n}$ of size $T'(n)=\poly(T(n), d(n))$ and depth $\Delta(n)=O(\log T(n) \cdot \log d(n))$ that computes the same family of polynomials $\set{f_n}$.
\end{proposition}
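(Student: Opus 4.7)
The plan is to follow the classical depth reduction of \textcite{VSBR83} but to track carefully that each bookkeeping step can be implemented by a $\polylog$-size Boolean formula that only invokes the original circuit-structure formula $\Phi$ a constant number of times. First, I would apply \Cref{lem:uniform homogenization} to $\{C_n\}$ with bound $d(n)$ to obtain a $\log^{c+O(1)}$-uniform homogeneous multi-output family $\{D_n\}$ of size $O(T(n)d(n)^2)$ whose outputs are the homogeneous components of $f_n$. Homogeneity is crucial: the degree of every gate of $D_n$ is well-defined and, by inspection of the construction in \Cref{lem:uniform homogenization}, can be read off from its gate name by a $\polylog(T')$-size formula. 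At the end, the desired circuit for $f_n$ is obtained by summing these components with a balanced addition tree of depth $O(\log d)$.

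Next, I would perform the VSBR reduction on $D_n$. For every product gate $u$ of $D_n$ with $\deg(u) > 1$ and every gate $v$ of $D_n$ satisfying $\deg(v) \leq \deg(u)/2$, introduce a \emph{bracket} $[v:u]$, with the convention that $[v:v]=1$ and $[v:u]=0$ if $v$ is not a descendant of $u$. The VSBR recurrence expresses $u$ (and more generally $[v:u]$) as a $\poly(T')$-term sum indexed by middle product gates $m$ in the subcircuit between $v$ and $u$ whose degree lies in $(\deg(u)/2,\deg(u)]$ while both children have degree $\leq \deg(u)/2$; the summand at $m$ is a product of brackets $[v:m]$, $[\ell:u]$, $[r:u]$, where $\ell,r$ are the children of $m$, each of which has degree strictly less than $\deg(u)/2$. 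Iterating the recurrence $O(\log d)$ times reaches the trivial base case $\deg(u)\le 1$. Implementing each bracket as a balanced binary sum tree of depth $O(\log T')$ over candidate middle gates gives a circuit of total depth $O(\log d \cdot \log T) = O(\log T \cdot \log d)$ and size $\poly(T,d)$, computing all the homogeneous components of $f_n$.

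For uniformity, name each gate of the reduced circuit by a tuple $(k,v,u,x,y)$ where $k \in \{0,1,\ldots,O(\log d)\}$ is the bracket-depth level, $(v,u)$ identifies the bracket being computed at this level, $x\in\{0,1\}^{O(\log T')}$ selects a position inside the balanced sum-tree for $[v:u]$, and $y$ is a constant-size tag specifying which of the two children (or the partial product) of the current internal node we are pointing at. The adjacency formula $\Phi'_n$ proceeds as follows: it first decodes $k,v,u,x,y$ from its input; then, using purely combinatorial manipulation of $x$ (plus arithmetic on $O(\log T)$-bit indices), it determines whether the node is an internal node of the sum-tree or a leaf that corresponds to picking a specific middle gate $m$; in the former case, adjacency is decided by manipulating bits of $x$ and $y$; in the latter, adjacency reduces to verifying (i) that $m$ satisfies the middle-gate condition, which amounts to $O(1)$ degree comparisons and $O(1)$ queries to the homogenized structure formula $\Phi_D$ (itself obtained from $\Phi$ by \Cref{lem:uniform homogenization} with only polylog overhead), and (ii) that the child bracket is of the correct form $[v:m]$, $[\ell:u]$, or $[r:u]$, where $\ell,r$ are recovered from $m$ via two more queries to $\Phi_D$. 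Since degrees are computable from gate names by polylog-size formulas and $\log T' = O(\log T + \log d) \le O(\log T)$ for $d\le\poly(T)$, the total size of $\Phi'_n$ is bounded by $\polylog(T')^{O(1)} \cdot |\Phi_n| = \log(T')^{c+O(1)}$, giving the claimed $\log^{c+O(1)}$-uniformity. The circuit-constants function is trivial here because the only constants used by the reduced circuit are those already present in $D_n$ (and the constants $0,1$), all of which are accessible via the corresponding function of $D_n$.

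The main obstacle is the one addressed in the previous paragraph: ensuring that the adjacency formula for the depth-reduced circuit is actually of size $\polylog$ in the new circuit's size. The pitfall to avoid is that a naive recursive expansion of the VSBR recurrence would blow up the adjacency formula by a factor of $\log d$ at every level of the bracket nesting, which is harmless, whereas an accidental multiplicative dependence on $T$ would be fatal. The layered gate-naming scheme above, together with the fact that each "leaf'' of the sum tree for a single bracket is decided by only $O(1)$ calls to $\Phi_D$, is precisely what prevents such a blow-up.
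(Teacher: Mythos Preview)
Your overall plan matches the paper's proof almost exactly: both first apply \Cref{lem:uniform homogenization} to obtain a $\log^{c+O(1)}$-uniform homogeneous family (so that gate degrees are readable from gate names), then run the VSBR construction via gate quotients, and finally argue $\log^{c+O(1)}$-uniformity by a gate-naming scheme in which the adjacency formula for the depth-reduced circuit performs $\polylog$-size bookkeeping plus $O(1)$ calls to the homogenized circuit's structure formula. The paper's uniformity argument is essentially your tuple-naming scheme, presented slightly differently (it first builds an unbounded-fan-in circuit, describes an auxiliary ``is-parent'' formula $\Psi_n$, then converts to fan-in two via balanced binary trees as in \Cref{lem:uniform homogenization}).

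The one technical slip is in the VSBR recurrence itself. Your product $[v{:}m]\cdot[\ell{:}u]\cdot[r{:}u]$ does not compute $[v{:}u]$: its total degree is $(\deg m-\deg v)+(\deg u-\deg \ell)+(\deg u-\deg r)=2\deg u-\deg v$, not $\deg u-\deg v$. Relatedly, for the bracket $[v{:}u]$ the correct frontier is at $m=(\deg u+\deg v)/2$, not $\deg(u)/2$, and even then one application of the frontier identity does not halve all terms. The paper (following \textcite[Section~5.3]{Saptharishi-survey}) uses the standard fix: for $[u]$ a single frontier application yields a three-term product, while for $[u{:}v]$ \emph{two} nested applications yield a five-term product $[u{:}w]\cdot[w_\ell{:}p]\cdot[p_\ell]\cdot[p_r]\cdot[w_r{:}v]$, each factor of degree at most $(\deg u-\deg v)/2$. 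This correction does not affect your uniformity analysis, since the leaf of each balanced sum tree is still decided by $O(1)$ calls to $\Phi_D$ and $O(1)$ degree comparisons; you just have a double sum (over $w$ and $p$) rather than a single one.
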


\bproof
    At a high level, the circuit family $\set{C'_n}$ will be obtained by applying the depth reduction of \textcite{VSBR83} to the circuit family $\set{C_n}$.
    The depth reduction guarantees that the circuit family $\set{C'_n}$ has size $T'(n) = \poly(T(n))$ and depth $\Delta(n) = O(\log T(n) \log d(n))$.
    We need to verify that this depth reduction can be performed in a uniformity-preserving manner.

    By applying \Cref{lem:uniform homogenization}, we obtain a $\log^{c+O(1)}$-uniform family of arithmetic circuits $\set{C_n^{\text{hom}}}$ such that $C_n^{\text{hom}}$ is a homogeneous arithmetic circuit of size $O(T(n) d(n)^2)$ that computes the homogeneous components of the polynomial $f_n$.
    We apply the depth reduction to the circuit family $\set{C_n^{\text{hom}}}$.

    We now present the depth reduction, following the construction and proof given by \textcite[\textsection 5.3]{Saptharishi-survey}.

    We first define gate quotients, which are intermediate polynomials that will be used in wiring the depth-reduced circuit.
    For a pair of gates $u$ and $v$ in $C_n^{\text{hom}}$, we define the \emph{gate quotient} $[u : v](\vec{x})$ to be the polynomial defined inductively as follows.
    \begin{enumerate}
        \item
            If $u$ and $v$ are the same game, then we define $[u : v](\vec{x}) \coloneqq 1$.
        \item 
            If $u$ and $v$ are different gates, then the definition of $[u : v](\vec{x})$ depends on the gate type of $u$.
            \begin{enumerate}
                \item
                    If $u$ is a leaf, then we define $[u : v](\vec{x}) \coloneqq 0$.
                \item
                    If $u$ is an addition gate with children $u_\ell$ and $u_r$, then we define $[u : v](\vec{x}) \coloneqq [u_\ell : v](\vec{x}) + [u_r : v](\vec{x})$.
                \item
                    If $u$ is a multiplication gate with children $u_\ell$ and $u_r$ where $\deg(u_\ell) \le \deg(u_r)$, then we define $[u : v](\vec{x}) \coloneqq [u_\ell](\vec{x}) \times [u_r : v](\vec{x})$, where $[u_\ell](\vec{x})$ is the polynomial computed by the gate $u_\ell$.%
            \end{enumerate}
    \end{enumerate}
    Because the circuit $C_n^{\text{hom}}$ is homogeneous, each gate quotient $[u:v](\vec{x})$ is a homogeneous polynomial,  %
    and if $[u : v](\vec{x}) \neq 0$, then $\deg([u:v](\vec{x})) = \deg(u) - \deg(v)$.

    For a parameter $m \in \N$, we define the \emph{frontier at degree $m$}, denoted by $\mathcal{F}_m$, to be the set of gates of degree $\ge m$ whose children have degree strictly less than $m$.
    Formally, we have
    \[
        \mathcal{F}_m \coloneqq \set{v : \deg(v) \ge m, \deg(v_\ell) < m, \deg(v_r) < m},    
    \]
    where $v_\ell$ and $v_r$ are the left and right children, respectively, of $v$.
    Note that if $v$ appears in a frontier $\mathcal{F}_m$, then because the circuit $C_n^{\text{hom}}$ is homogeneous, it must be the case that $v$ is a multiplication gate.
    By induction, one can establish the following identities.

    \begin{subclaim}[{\cite[Lemma 5.12]{Saptharishi-survey}}] \label{subclaim:frontier sum}
        Let $u$ and $v$ be gates in $C_n^{\text{hom}}$ and let $m \in \N$ be a parameter such that $\deg(u) \ge m > \deg(v)$.
        Then
        \begin{align*}
            [u](\vec{x}) &= \sum_{w \in \mathcal{F}_m} [u : w](\vec{x}) \cdot [w](\vec{x}) \\
            [u:v](\vec{x}) &= \sum_{w \in \mathcal{F}_m} [u:w](\vec{x}) \cdot [w:v](\vec{x}).
        \end{align*}
    \end{subclaim}
    For our purposes, we only need the identities of \Cref{subclaim:frontier sum} to argue about the correctness of our depth-reduced circuit.
    As our main concern is the uniformity of the depth reduction procedure, not its correctness, we encourage readers who are interested in further details about the correctness of the depth reduction to consult \textcite[Section 5.3]{Saptharishi-survey}.

    We now construct the depth reduced circuit $\set{C_n'}$ in $\log_2 d$ rounds, where in the $i$-th round we compute polynomials of the form $[u](\vec{x})$ or $[u:v](\vec{x})$ of degree $t$, where $2^{i-1} < t \le 2^i$.
    For $i = 1$, each linear polynomial of the form $[u](\vec{x})$ or $[u:v](\vec{x})$ can be computed directly by an arithmetic circuit of size $O(s)$ and depth $O(\log s)$.
    
    For $i \ge 2$, we compute these polynomials as follows.
    \begin{itemize}
        \item 
            We first describe how to compute polynomials of the form $[u](\vec{x})$ for a gate $u$.
            Let $m = \deg(u)/2$.
            Then it follows from \Cref{subclaim:frontier sum} that
            \[
                [u](\vec{x}) = \sum_{w \in \mathcal{F}_m} [u:w](\vec{x}) \cdot [w](\vec{x}) = \sum_{w \in \mathcal{F}_m} [u:w](\vec{x}) \cdot [w_\ell](\vec{x}) \cdot [w_r](\vec{x}),
            \]
            where $w_\ell$ and $w_r$ are the children of gate $w$.
            By definition of the frontier $\mathcal{F}_m$, each polynomial on the right-hand side above has degree at most $\deg(u)/2$, and so has been computed in a previous round of the depth reduction.
            Thus, we can compute $[u](\vec{x})$ by adding an additional $O(T)$ gates and $O(\log T)$ depth.

        \item
            Next, we describe how to compute a polynomial of the form $[u:v](\vec{x})$ for gates $u$ and $v$.
            Let $m = (\deg(u) + \deg(v))/2$.
            After two applications of \Cref{subclaim:frontier sum}, we obtain the identity
            \[
                [u : v](\vec{x}) = \sum_{w \in \mathcal{F}_m} \sum_{p \in \mathcal{F}(w_\ell)} [u:w](\vec{x}) \cdot [w_\ell : p](\vec{x}) \cdot [p_\ell](\vec{x}) \cdot [p_r](\vec{x}) \cdot [w_r : v](\vec{x}),
            \]
            where the gates $w_\ell$ and $w_r$ are the children of the gate $w$ with $\deg(w_\ell) \le \deg(w_r)$, the gates $p_\ell$ and $p_r$ are likewise the children of $p$, and the set $\mathcal{F}(w_\ell)$ is the frontier at degree $\deg(w_\ell)/2$.
            As a consequence of the choice of the frontiers in the above sum, each of the five terms in the innermost product are of degree at most $(\deg(u) - \deg(v))/2 = \deg([u:v])/2$. 
            By induction, each of these terms has already been computed in a previous round of the construction.
            This allows us to compute the gate quotient $[u:v](\vec{x})$ by adding $O(T^2)$ gates and $O(\log T)$ depth to the circuit.
    \end{itemize}

    In all, the preceding construction produces a circuit of size $T' = \poly(T, d)$ and depth $O(\log T \cdot \log d)$ that computes the same outputs as the circuit $C_{n}^{\text{hom}}$.
    We obtain the depth-reduced circuit for $f_n$ by summing over the gates that correspond to the outputs of $C_{n}^{\text{hom}}$, each of which computes one of the homogeneous components of $f_n$.

    It remains to describe a $\P$-uniform family of formulas $\set{\Phi_n'}$ that decides the gate adjacency relation in the depth-reduced circuit family $\set{C_n'}$.
    We describe a high-level implementation; the low-level details to properly implement each operation as a full binary tree are similar to those appearing in the proof of \Cref{lem:uniform homogenization}, so we leave these out for the sake of clarity.
    Let $\set{\Phi_n}$ be the $\P$-uniform family of formulas that decide the gate adjacency relation in the circuit family $\set{C_n^{\text{hom}}}$.
    Recall that $C_n'$ has depth $\Delta'(n) = O(\log T(n) \log d(n))$ and size $T'(n) = \poly(T(n), d(n))$.
    Given $i \in [\Delta'(n)]$ and gate names $x,y,z \in [T'(n)]$, we first inspect the parity of $i$ to determine the intended gate types of $x,y,z$.
    We can determine the types of $x$, $y$, and $z$ from their binary expansion.
    If any gate is not of the correct type, we output a 0.
    Otherwise, we branch based on the type of gate $x$.

    We first describe a formula $\Psi_n$ that takes two gates $x, y \in [T'(n)]$ as input and returns 1 if and only if $x$ is a parent of $y$.
    \begin{enumerate}
        \item
            Suppose $x$ is a gate meant to compute $[u]$ for some gate $u$ in the circuit $C_n^{\text{hom}}$.
            To check if $y$ is a child of $x$, we need to check that $y$ is of the form $[u : w]$, $[w_\ell]$ or $[w_r]$ for some gate $w \in \mathcal{F}_{\deg(u)/2}$.
            This is straightforward to implement as an OR over all such gate names.
            (Because the underlying circuit is obtained from the homogenization procedure of \Cref{lem:uniform homogenization}, we can assume without loss of generality that each gate is also labeled by its degree.
            This makes it easy to determine which gates are in the frontier $\mathcal{F}_{\deg(u)/2}$.)
        \item
            Suppose $x$ is a gate meant to compute $[u:v]$ for some gate $u$ in the circuit $C_n^{\text{hom}}$.
            To check if $y$ is a child of $x$, we need to check that $y$ is one of the five terms appearing in the sum used to compute $[u:v]$.
            As in the previous case, this is easy to implement as an OR over the appropriate gates, using the fact that each gate is labeled by its degree.
    \end{enumerate}
    Given $\Psi_n$, it is easy to construct $\Phi'_n$: define
    \[
        \Phi'_n(i,x,y,z) \coloneqq \Psi_n(i,x,y) \land \Psi_n(i,x,z).
    \]
    If the formulas $\set{\Phi_n}$ are $\P$-uniform, then the same is true for $\set{\Phi'_n}$.
    Furthermore, if the formulas $\set{\Phi_n}$ have size $(\log T(n))^c$, then the formulas $\Phi'_n$ have size $(\log T(n))^c (\log T'(n))^{O(1)}$, so the formulas $\Phi'_n$ are $\log^{c+O(1)}$-uniform as claimed.

    The preceding description of $C_n'$ and the formulas $\Phi'_n$ allow the circuit $C_n'$ to have arbitrary fan-in.
    We can replace $C'_n$ with an equivalent circuit where every gate has fan-in two and the gate types alternate by replacing each large fan-in gate in $C_n'$ with a full alternating binary tree of gates.
    The details of the wiring for this circuit and the appropriate modifications to the formula $\Phi'_n$ are the same as in \Cref{lem:uniform homogenization}.
\eproof

\section{An alternative proof for a special case of~\Cref{thm:int:main}} \label{apdx:lowchar}

In this appendix we describe what is essentially an alternative proof for a special case of~\Cref{thm:int:main}. The special case when the field's characteristic is bounded by a fixed polynomial, say $n$,\footnote{This can be generalized to $n^c$ for any fixed constant $c>1$, at the cost of increasing the constant $k$ in the hypothesis.} and when there is a polynomial-time algorithm that finds a representation of the field (i.e., a suitable irreducible polynomial). In the alternative proof, the conclusion we get differs from the one in~\Cref{thm:int:main} in two aspects: the deduced PIT algorithm runs in polynomial time (rather than in time $n^{\log^{(c)}(n)}$), but this algorithm is a Boolean computation (rather than an arithmetic network). We view these differences as relatively minor, and the main reason for presenting the alternative proof is that we find the proof itself interesting.

\paragraph{The main observation.} The proof is based on the observation that in low-characteristic fields, the hardness assumption in~\Cref{thm:int:main} actually implies hardness for Boolean circuits. Hence, we can use results from Boolean hardness-to-randomness to deduce derandomization of a Boolean class that contains $\pit$.

\paragraph{The alternative proof.}
Recall that the hardness assumption in~\Cref{thm:int:main} is determined by the complexity of the reconstruction procedure from~\Cref{thm:ki:tarhsg}. The reconstruction procedure is a network of degree larger than $q=|\F|$, and with this relatively high degree, one may hope that the network will be able to perform not only arithmetic computation, but also simulate Boolean circuits. Indeed, when such simulation is possible, the hardness hypothesis implies hardness for Boolean circuits.

Thus, we now ask if it is possible to simulate any Boolean circuit computing a function $\F^n\rightarrow\F$ (when the elements are represented as bit-strings) by an arithmetic circuit of similar complexity and degree larger than $q$. We stress that this question refers to functional computation, rather than to computing a polynomial syntactically. As far as we are aware, such simulation is not known in general, but it is known over fields of low characteristics, and it yields the following result:

\begin{proposition} [simulating Boolean computation by degree-$q$ arithmetic procedures, over fields of low characteristic]
Let $\F=\set{\F_n}_{n\in\N}$ be a field family where $\F_n$ is of characteristic at most $p(n)\le\poly(n)$, and there is a $\poly(n)$-time algorithm that maps $1^n$ to a representation of $\F_n$. Then, any $\P$-uniform Boolean $\mathcal{NC}$ circuit family of size $n^k$ computing a function $\F_n^n\rightarrow\F_n$ can be (functionally) simulated by a $\P$-uniform arithmetic circuit family of size $O(n^{k+1}\cdot p^3)$, degree $p\cdot 2^{\polylog(n)}$, and $p^{th}$ root gates.
\end{proposition}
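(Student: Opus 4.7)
The plan is to simulate the given Boolean $\mathcal{NC}$ circuit in four stages when the inputs are field elements in $\F_n \cong \F_{p^r}$ with $r = \log_p q$: (i) extract the $\F_p$-coordinates of each input; (ii) extract the binary expansion of each $\F_p$-coordinate; (iii) arithmetize the Boolean circuit gate-by-gate, replacing each $\mathrm{NAND}$ with $1-xy$; and (iv) reassemble the output bits into a field element. Stages (ii)--(iv) are standard. For stage (ii), the Lagrange indicator $\prod_{k \ne j}(c - k)/(j - k)$ takes value $1$ at $c = j \in \F_p$ and $0$ elsewhere; summing these over the $j$'s whose $b\ts{th}$ bit is $1$ recovers the $b\ts{th}$ bit of $c$, at polynomial degree $p-1$ and using $O(p^2)$ gates per coordinate. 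For stage (iii), the arithmetization $1-xy$ of $\mathrm{NAND}$ preserves Boolean values on Boolean inputs, and since the $\mathcal{NC}$ circuit has depth $\polylog(n)$, the resulting arithmetic subcircuit has size $O(n^k)$ and polynomial degree $2^{\polylog(n)}$. Stage (iv) is an $\F_p$-linear map and contributes degree $1$ with $O(\log q)$ gates.

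The essential use of $p\ts{th}$ root gates is in stage (i). Fix a basis $\set{b_0,\ldots,b_{r-1}}$ of $\F_{p^r}$ over $\F_p$ together with its trace-dual basis $\set{b_0^*, \ldots, b_{r-1}^*}$; then every $a = \sum_j c_j b_j$ with $c_j \in \F_p$ satisfies $c_j = \mathrm{Tr}(a \cdot b_j^*)$, where $\mathrm{Tr}(x) = \sum_{i=0}^{r-1} x^{p^i}$ is the absolute trace. A direct implementation of $\mathrm{Tr}$ would have polynomial degree up to $p^{r-1}$, roughly $q/p$, which is prohibitively large. The key observation is that the Frobenius $\Phi(x) = x^p$ is an automorphism of order $r$ on $\F_{p^r}$, so as a function on the field one has $\Phi^i = \Phi^{-(r-i)}$, and the inverse Frobenius $\Phi^{-1}$ is implemented by a single $p\ts{th}$ root gate. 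Since such gates count as the identity when measuring polynomial degree, each $\Phi^i$ can be realized as a subnetwork of polynomial degree $1$, and hence so can $\mathrm{Tr}$ and each of the $nr$ coordinate-extraction maps. Uniformity of this stage uses the assumed polynomial-time representation of $\F_n$ to print the basis and compute the dual basis.

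Composing the four stages yields overall polynomial degree at most $1 \cdot (p-1) \cdot 2^{\polylog(n)} \cdot 1 \le p \cdot 2^{\polylog(n)}$. Summing sizes, stage (i) contributes $O(n r^2)$ gates (plus $O(nr^2)$ $p\ts{th}$ root gates), stage (ii) contributes $O(n r p^2)$, stage (iii) contributes $O(n^k)$, and stage (iv) contributes $O(\log q)$; using $\log q \le \poly(n)$ and $p \le \poly(n)$, all four terms fit within $O(n^{k+1} p^3)$. The $\P$-uniformity of the resulting network follows from the $\P$-uniformity of the input $\mathcal{NC}$ family together with the polynomial-time computation of the field representation, dual basis, and Lagrange interpolants. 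The main obstacle is stage (i): without the ability to invoke inverse Frobenius via $p\ts{th}$ root gates, no low-degree arithmetic implementation of the trace is known over a general extension field of characteristic $p$, so the argument leans decisively on the paper's degree-accounting convention for these gates, and it is for this same reason that the analogous statement is not known in large characteristic.
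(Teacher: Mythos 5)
Your proof takes essentially the same route as the paper: trace-dual coordinates computed with $p$\ts{th} root gates (exploiting that $\Phi^i = (\Phi^{-1})^{r-i}$ as a function on $\F_{p^r}$), Lagrange bit-extraction over $\F_p$, and gate-by-gate arithmetization of the $\mathcal{NC}$ circuit, followed by a linear reassembly. One small misattribution in your closing remark: the low-characteristic restriction is actually needed for stage~(ii), not stage~(i) --- the $p$\ts{th} root gates make the trace computation low-degree for any characteristic, but the per-bit Lagrange indicator costs degree $p-1$ and $\Theta(p)$ gates, which the paper explicitly flags as the only place the bound on $p$ is used.
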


\begin{proof}[Proof sketch]
For ease of notation, we refer to the $n^{th}$ field $\F_n$ as $\F_q=\F_{p^r}$ (i.e., the field of size $p^r$). Let $C_n\colon\F_q^n\rightarrow\F_q$ be a Boolean $\mathcal{NC}$ circuit. The simulating arithmetic circuit $A$ works as follows:
\begin{enumerate}
	\item Given input elements in $\F_q$, use the procedure in~\cite[Proof of Lemma 2]{lsz82} to transform each element into its representation as an $r$-dimensional vector over $\F_p$. (This requires a trace-dual basis for the standard basis of $\F_q$, and an explicit formula for the trace-dual basis is given in~\cite[Exercise 2.40]{ld94}.) 
	
	The arithmetic complexity of this step is essentially that of computing the field trace function on each element, and computing the trace reduces to computing $r$-many $p^{th}$ roots and adding them.
	\item Transform each $\F_p$ element into its binary representation, by brute-force. This can be done using $O(p^3)$ operations per element $x$: for each $i\in[p]$ we map $x$ to $\sum_{c\in \F_p:c_i=1}\de_c(x)$, where $c_i$ is the $i^{th}$ bit in the binary representation of $c$ and $\de_c(x)=\prod_{a\in\F_p\setminus\set{c}}\frac{x-a}{c-a}$. (This is the only part in the proof using the fact that the characteristic is bounded.) 
	\item Simulate the $\mathcal{NC}$ circuit $C$ on the bits, using arithmetic $+,\times$ operations. Since there are at most $\polylog(n)$ layers of $C$ to simulate, the degree will be at most $2^{\polylog(n)}$.
\end{enumerate}

The size of the arithmetic circuit above is $O\left(n\cdot(p^3+r\cdot n\cdot\log(p))+n^k\right)$, and its degree (excluding $p^{th}$ root gates) is at most $p\cdot 2^{\polylog(n)}$. The circuit $A$ is $\P$-uniform because $C$ is $\P$-uniform, and due to our assumption that there is a $\poly(n)$-time algorithm finding a representation of the field (which means that a trace-dual basis can be efficiently hard-wired into $A$).
\end{proof}

Thus, in the special case we are considering, the hardness assumption in~\Cref{thm:int:main} (even when using the refined degree bounds in~\Cref{thm:ki:tarhsg}) implies that there is a Boolean function computable in fixed polylogarithmic depth that is hard for Boolean circuits of fixed polynomial size $n^{5k}$ and larger polylogarithmic depth on all but finitely many inputs. Specifically, the function treats each block of bits in its input as a field element, and simulates the depth-reduced circuit computing the hard polynomial $\set{C_n}$ using Boolean operations. The hard function is computable by circuits of depth $O(\log^2(n)\cdot\polylog(q))$, where the $\polylog(q)$ term comes from simulating field arithmetic; and the depth of the Boolean circuits of size $n^{5k}$ that fail to compute the hard function is determined by the degree bound $2^{\polylog(n)}$ in our hardness assumption, and thus may be an arbitrarily large polylogarithm.

Using~\cite[Theorem 1.5]{CT21}, it follows that $\text{uniform-}\mathcal{BPNC} \subseteq \text{uniform-}\mathcal{NC}$, where the uniformity here refers to logspace-uniformity. Finally,~\textcite{mrk88} proved that $\pit$ can be solved in $\text{uniform-}\mathcal{BPNC}$, and hence it can also be solved in $\text{uniform-}\mathcal{NC}\subseteq\P$.

\end{appendices}
\normalem
\printbibliography

\end{document}